\theoremstyle{plain}
\newtheorem{theorem}{Theorem}[section]
\newtheorem{lemma}{Lemma}
\theoremstyle{remark}
\newtheorem{example}{Example}
\newtheorem{condition}{Condition}
\newtheorem{proposition}{Proposition}
\newtheorem*{rmk*}{remark}
\newtheorem{remark}{Remark}
\newtheorem{corollary}{Corollary}
\newtheorem*{corollary*}{Corollary}
\def\mta{\mt_\A}
\def\mtb{\mt_\B}
\def\tss{{|\mt|}}
\newcommand{\oxw}[1]{{ \overline{ x_{w\cdot}^{#1}}}}
\def\ma{\mathcal A}
\def\beginp{\begin{pmatrix}}
\def\endp{\end{pmatrix}}
\def\tv{\tilde V}
\def\oc{1_\textup{c}}
\def\zin{z \in \{(A_w0), (A_w1)\}}
\def\znotin{z \not\in \{(A_w0), (A_w1)\}}
\def\dwab{d_w(A_wb)}
\def\huwab{\hat U_w(A_wb)}
\def\hvwab{\hat v_w(A_wb)}
\def\reg{regression}
\def\sym{uniform}
\def\aasym{non-uniform}
\def\hys{\hy_*}
\def\mds{* =\nm, \HT, \haj}
\def\hynm{\hy_\snm}
\def\hyhaj{\hy_\shaj}
\def\df{{\dg,\fisher}}
\def\dl{{\dg,\lin}}
\def\dgs{\dg =  \wls, \ag}
\def\dgssi{\dg \in\{ \ols, \wls, \ag\}}
\def\dgss{\dg = \ols, \wls, \ag }
\def\dg{\dagger}
\def\wlsss{\textsc{wls}}
\def\olsss{\textsc{ols}}
\def\cc{\textup{c}}
\def\ac{A_\cc}
\def\bc{B_\cc}
\def\uxyz{T_{xY}(z)}
\def\hx{\hat x}
\def\dmd{\diamond}
\newcommand{\bbo}[1]{1_{#1\times #1}}
\newcommand{\oew}[1]{{ \overline{ \ep_{w\cdot}^{#1}}}}
\newcommand{\oa}[1]{{ \overline{\alpha^{#1}}}}
\def\mn{\mathcal{N}}
\def\ep{\epsilon}
\def\fr{{fully-interacted aggregate regression after adjusting for the whole-plot sizes} }
\def\hy{\hat Y}
\def\begini{\begin{itemize}}
\def\endi{\end{itemize}}
\def\hyht{\byht}
\def\vari{\var_\infty}
\def\by{\bar Y}
\def\oufzw{{{\overline{U^4_{w\cdot}(z)}}}}
\def\oyfzw{{{\overline{Y^4_{w\cdot}(z)}}}}
\def\oxfw{{ \overline{\|x_{w\cdot}\|_4^4}}}
\def\fts{fitting schemes}
\def\wbs{\{(w,b): w = \ot{W}; \ b = 0, 1\}}
\def\oau{\overline{\alpha U(z)}}
\def\oaup{\overline{\alpha U(z')}}
\def\olss{\textsc{ols} }
\def\wlss{\textsc{wls} }
\def\oat{\overline{\alpha^2}}
\def\sm{\text{sm}}
\def\sumsz{\sum_{s:Z_{ws}=z}}
\def\pre{Under the $2^2$ split-plot randomization and Condition  \ref{asym}, we have}
\def\prex{Under the $2^2$  split-plot randomization and Conditions \ref{asym}--\ref{asym2}, we have}
\def\wlw{{\wls,w}}
\def\wlws{{\wls,\ws}}
\def\olw{{\ols,w}}
\def\olws{{\ols,\ws}}
\def\tlw{{\ag,w}}
\def\dpp{P}
\def\pphi{\mu}
\def\zws{Z_{\ws}}
\def\ms{\mathcal{S}}
\def\dt{D^\T}
\newcommand{\tbeta}{\tilde\beta}
\newcommand{\ws}{{ws}}
\newcommand{\uws}{U_{ws}}
\newcommand{\lz}{\Lambda_z}
\newcommand{\hsxx}{\hat S_{xx}(z)}
\newcommand{\hsxy}{\hat S_{xY(z)}}
\newcommand{\hqxx}{\hat Q_{xx}(z)}
\newcommand{\hqxy}{\hat Q_{xY(z)}}
\newcommand{\yhtg}{\hY_{\sht,\bG}}
\newcommand{\ysg}{\hY_{*,\bG}}
\newcommand{\ssg}{S_{*,\bG}}
\newcommand{\psig}{\Psi_\bG}
\newcommand{\shtg}{S_{\sht,\bG}}
\newcommand{\shajg}{S_{\shaj,\bG}}
\newcommand{\swg}{S_{w,\bG}}
\newcommand{\hvsg}{\hV_{*,\bG}}
\newcommand{\hvhtg}{\hV_{\sht,\bG}}
\newcommand{\too}{\tilde\Omega}
\newcommand{\tgw}{\tg_\ip}
\newcommand{\tgo}{\tg_\ols}
\newcommand{\tgu}{\tg_{\ag}}
\newcommand{\tgwz}{\tg_{\ip,z}}
\newcommand{\tgoz}{\tg_{\ols,z}}
\newcommand{\tguz}{\tg_{\ag,z}}
\newcommand{\yws}{Y_{ws}}
\newcommand{\byw}{\bY_w}
\newcommand{\uw}{U_w}
\newcommand{\mt}{{\mT}}
\newcommand{\hV}{\hat{ V}}
\newcommand{\ta}{{T_\A}}
\newcommand{\tb}{{T_\B}}
\newcommand{\tainv}{T_\A^{-1}}
\newcommand{\tbinv}{T_\B^{-1}}
\newcommand{\haj}{ { \textup{haj} }}
\newcommand{\HT}{ {\textup{ht}}}
\newcommand{\nm}{ {\textup{sm} }}
\newcommand{\ols}{ {\textup{ols}} }
\newcommand{\wls}{{\textup{wls}}}
\newcommand{\shaj}{{\textup{haj}}}
\newcommand{\sht}{{\textup{ht}}}
\newcommand{\ip}{{\wls}}
\def\ag{{\textup{ag}}}
\newcommand{\htf}{Horvitz--Thompson}
\newcommand{\sshaj}{S_\shaj}
\newcommand{\ssht}{S_\sht}
\newcommand{\hvhaj}{\hV_\shaj}
\newcommand{\snm}{{\textup{\scriptsize sm}}}
\newcommand{\op}{{ o_\mathbb{P}(1)}}
\newcommand{\A}{{\textsc{a}}}
\newcommand{\B}{{\textsc{b}}}
\newcommand{\AB}{{\textsc{ab}}}
\newcommand{\of}{{\ols,\fisher}}
\newcommand{\od}{{\ols,\dmd}}
\newcommand{\wf}{{\ip,\fisher}}
\newcommand{\tf}{{\ag,\fisher}}
\newcommand{\tl}{{\ag,\lin}}
\newcommand{\ol}{{\ols,\lin}}
\newcommand{\wl}{{\ip,\lin}}
\newcommand{\tauA}{\tau_\textsc{a}}
\newcommand{\tauB}{\tau_\textsc{b}}
\newcommand{\tauAB}{\tau_\textsc{ab}}
\newcommand{\HN}{H_{0\textsc{n}}}
\newcommand{\HF}{H_{0\textsc{f}}}
\newcommand{\bb}{ \gamma}
\newcommand{\bbzp}{\bb_{\ip,z}}
\newcommand{\bbzu}{\bb_{\ag,z}}
\newcommand{\tgz}{\tilde\gamma_{\ols,z}}
\newcommand{\tgzu}{\tilde\gamma_{\ag,z}}
\newcommand{\tgzs}{\tilde\gamma_{\dg,z}}
\newcommand{\tbV}{\tilde{ V}}
\newcommand{\tbO}{\tilde{ \Omega}}
\newcommand{\vlsu}{\tilde{ V}_\ag}
\newcommand{\blsx}{\tbeta_{\ols, \lin}}
\newcommand{\blsxu}{\tbeta_\tl}
\newcommand{\blsxf}{\tbeta_{\ols, \fisher}}
\newcommand{\blsxHTf}{\tbeta_{\ip, \fisher}}
\newcommand{\blsxuf}{\tbeta_{\ag,\fisher}}
\newcommand{\bls}{\tbeta_\ols}
\newcommand{\blsHT}{\tbeta_\ip}
\newcommand{\blsu}{\tbeta_\ag}
\newcommand{\yHTn}{\hY_\sht}
\newcommand{\yhajn}{\hY_\shaj}
\newcommand{\bs}{ S}
\newcommand{\tep}{ \epsilon}
\newcommand{\te}{e}
\newcommand{\he}{\hat e}
\newcommand{\fisher}{\textsc{f}}
\newcommand{\lin}{\textsc{l}}
\newcommand{\htau}{\hat{\tau}}
\newcommand{\Uxx}{ \uxx}
\newcommand{\uxx}{ T_{  x x}}
\newcommand{\Wz}{\mathcal{W}(z)}
\newcommand{\Dw}{ D_{\ag,w}}
\newcommand{\sumz}{\sum_{z \in \mT}}
\newcommand{\sumwz}{\sum_{w\in \Wz}}
\newcommand{\sumws}{\sum_{\ws \in\ms }}
\newcommand{\sumwszz}{\sum_{\ws \in \uz}}
\DeclareMathOperator{\diag}{diag}
\newcommand{\hY}{\hat Y}
\newcommand{\hU}{\hat U}
\newcommand{\hbY}{\hat{ Y}}
\newcommand{\bsx}{{ S}_{ x x}}
\newcommand{\sxxw}{\bs_{ x x,w}}
\newcommand{\sxyzw}{\bs_{ xY(z),w}}
\newcommand{\sx}{\bs_{ x x}}
\newcommand{\sxyz}{\bs_{ xY(z)}}
\newcommand{\sxypz}{\bs_{ xY'(z)}}
\newcommand{\bsw}{ S_w}
\newcommand{\swx}{ S_{ x x,w}}
\newcommand{\hbt}{\hat { \tau}}
\newcommand{\mA}{\mathcal{A}}
\newcommand{\mZ}{\mathcal{Z}}
\newcommand{\mN}{\mathcal{N}}
\newcommand{\frt}{\textsc{frt}}
\newcommand{\sumw}{\sum_{w=1}^W}
\newcommand{\sums}{\sum_{s=1}^{M_w}}
\newcommand{\tbt}{\tilde{ \tau}}
\newcommand{\bG}{{\gamma}}
\newcommand{\ynm}{\hY_\snm(z)}
\newcommand{\yhaj}{\hY_\shaj  (z)}
\newcommand{\yht}{\hY_\sht (z)}
\newcommand{\yhtn}{\hY_\sht}
\newcommand{\yHT}{\hY_\sht (z)}
\newcommand{\bynm}{\hbY_\snm}
\newcommand{\byhaj}{\hbY_\shaj}
\newcommand{\byht}{\hbY_\sht}
\newcommand{\byH}{\hbY_*}
\newcommand{\oneHT}{\hat 1_{\sht}(z)}
\newcommand{\var}{\textup{var}}
\newcommand{\sw}{S_w}
\newcommand{\pr}{{\mathbb{P}}}
\newcommand{\uz}{\mathcal{S}(z)}
\newcommand{\Zws}{Z_{ws}}
\newcommand{\Yws}{Y_{ws}}
\newcommand{\Mwb}{M_{wb}}
\newcommand{\Ywszz}{Y_{ws}(z)}
 \newcommand{\pa}[1]{\medskip\noindent\textbf{#1.}}
 \newcommand{\ot}[1]{1, \ldots,#1}
 \newcommand{\zt}[1]{0, \ldots,#1}
\newcommand{\bp}{\Pi }
\newcommand{\itemc}{\item[$\cdot$]}
\newcommand{\hbb}{\hat{ \gamma}}
\newcommand{\tg}{\tilde\gamma}
\def\T{{ \mathrm{\scriptscriptstyle T} }}
\newcommand{\bbU}{\bar{ U}}
\newcommand{\cov}{\textup{cov}}
\newcommand{\bbx}{\bar{ x}}
\newcommand{\bt}{ \tau}
\newcommand{\hbx}{\hat{ x}}
\newcommand{\hbv}{\hat{ v}}
\newcommand{\bY}{{\bar Y}}
\newcommand{\hv}{\hat v}
\newcommand{\hbV}{\hat{ V}}
\newcommand{\bbY}{\bar{Y}}
\newcommand{\Qxx}{ Q_{  x x}}
\newcommand{\qxx}{ Q_{  x x}}
\newcommand{\qxy}{ Q_{  xY}}
\newcommand{\qxyz}{ Q_{  xY(z)}}
\newcommand{\sxx}{ S_{ x x}}
\newcommand{\mT}{\mathcal{T}}
\newcommand{\mF}{\mathcal{F}}
\def\begina{\begin{eqnarray*}}
\def\enda{\end{eqnarray*}}
\def\beginy{\begin{eqnarray}}
\def\endy{\end{eqnarray}}
\def\begine{\begin{enumerate}}
\def\ende{\end{enumerate}}
\begin{document}

\begin{frontmatter}
\title{Reconciling design-based and model-based causal inferences for split-plot experiments}
\runtitle{Design- and model-based inferences for split-plot Experiments}

\begin{aug}
\author[A]{\fnms{Anqi} \snm{Zhao}\ead[label=e1]{staza@nus.edu.sg}}
\and
\author[B]{\fnms{Ding} \snm{Peng}\ead[label=e2,mark]{pengdingpku@berkeley.edu}}
\address[A]{Department of Statistics and Data Science,
National University of Singapore,
\printead{e1}}

\address[B]{Department of Statistics,
University of California, Berkeley,
\printead{e2}}
\end{aug}

\begin{abstract}
The split-plot design arose from agricultural science with experimental units, also known as the sub-plots, nested within groups known as the whole-plots. It assigns different interventions at the whole-plot and sub-plot levels, respectively, providing a convenient way to accommodate hard-to-change factors. By design, sub-plots within the same whole-plot receive the same level of the whole-plot intervention, and thereby induce a group structure on the final treatment assignments. 
A common strategy is to run an ordinary-least-squares (\olsss) regression of the outcome on the treatment indicators coupled with the robust standard errors clustered at the whole-plot level. It does not give consistent estimators for the treatment effects of interest when the whole-plot sizes vary. 
Another common strategy is to fit a linear mixed-effects model of the outcome with Normal random effects and errors. 
It is a purely model-based approach and can be sensitive to violations of  the parametric assumptions. 
In contrast, the design-based inference assumes no outcome models and relies solely on the controllable randomization mechanism determined by the physical experiment. We first extend the existing design-based inference based on the {\htf} estimator to the Hajek estimator, and establish the finite-population central limit theorem for both under split-plot randomization. We then reconcile the results with those under the model-based approach, and propose two regression strategies, namely (i) the weighted-least-squares (\wlsss)  fit of the unit-level data based on the inverse probability weighting and (ii)  the \olss fit of the aggregate data based on whole-plot total outcomes,  to reproduce the Hajek and {\htf} estimators, respectively. 
This, together with the asymptotic conservativeness of the corresponding cluster-robust covariances for estimating the true design-based covariances as we establish in the process, justifies the validity of the {\reg} estimators for design-based inference. 
In light of the flexibility of regression formulation for covariate adjustment, we further extend the theory to the case with covariates and demonstrate the efficiency gain by regression-based covariate adjustment via both asymptotic theory and simulation. Importantly, all design-based properties of the  proposed {\reg} estimators hold regardless of how well the regression equations represent the true data generating process.
\end{abstract}

\begin{keyword}[class=MSC]
\kwd[Primary ]{62K15}
\kwd{62J05}
\kwd{62G05}
\end{keyword}

\begin{keyword}
\kwd{Cluster randomization}  
\kwd{cluster-robust standard error}
\kwd{covariate adjustment}
\kwd{inverse probability weighting}
\kwd{potential outcome}
\kwd{randomization inference}
\end{keyword}

\end{frontmatter}

\section{Introduction}
%
The split-plot design originated from agricultural experiments \citep{yates, yates1937design}  
and affords a convenient way to accommodate hard-to-change factors.
It remains among the most popular designs in industrial and engineering applications \citep{jones}, and is gaining increasing  popularity in social sciences \citep[e.g.,][]{Olken2007,Chong}. It also has deep connections with causal inference with interference \citep[e.g.,][]{HH, BF, imai2020causal}. 

Split-plot randomization subjects all units within the same group to the same level of the whole-plot intervention and poses challenges to subsequent inference of the treatment effects. 
Model-based analyses often require strong assumptions on the  functional forms of the outcome models, 
for example, \citet{LZ}'s marginal model, and sometimes impose additional assumptions on the distributions of the error terms and random effects, for example, the mixed-effects model \citep[e.g.,][]{kempthorne1952design, cox2000theory, wh}.  
The design-based inference, on the other hand, assumes no outcome models and draws its  justification solely from the  randomization mechanism.
\citet{kempthorne1952design} and \citet{hinkelmann2008design} initiated the discussion on the design-based inference for split-plot designs under the assumption of additive treatment effects. Due to the complexity of the randomization distributions, they invoked additional parametric assumptions for statistical inference. 
Under the purely design-based inference framework, 
\cite{ZDMD} discussed causal inference for  split-plot designs without assuming additive treatment effects, and developed the finite-sample exact theory for {\it \sym} split-plot designs where all groups are of the same size and have an equal number of units under each level of the sub-plot intervention.
\cite{MD} extended the discussion to possibly {\aasym} variants, and considered the {\htf} estimator that guarantees unbiased inference.
Both works, however, focused on the finite-sample properties of the proposed estimators and  left their asymptotic distributions, as the theoretical basis for statistical inference, an open question.
To fill this gap, we extend the discussion to the Hajek estimator under possibly {\aasym} split-plot randomization, and derive the asymptotic distributions of both the {\htf} and Hajek  estimators thereunder based on a martingale central limit theorem 
\citep{ohlsson}. 
The result includes the sample-mean estimator under {\sym} designs as a special case, and justifies the design-based inference of possibly {\aasym} split-plot experiments based on the {\htf} and Hajek estimators, respectively. This constitutes our first contribution.

In addition, \cite{ZDMD} made a heuristic link between the design-based inference and the regression-based inference  in the context of  {\sym} split-plot designs, motivating with the idea of the {\it derived linear model} \citep{kempthorne1952design, hinkelmann2008design}. 
We extend their discussion to possibly {\aasym} split-plot designs, and propose two regression formulations to reproduce the Hajek and {\htf} estimators from least squares, respectively. In particular, we demonstrate that the Hajek estimator is numerically identical to the coefficient from the weighted-least-squares (\wlsss) fit with unit data based on the inverse probability of treatment, and the  {\htf} estimator is numerically identical to the coefficient from the ordinary-least-squares (\olsss) fit with aggregate data based on the whole-plot totals. More interestingly, we show that the associated cluster-robust  covariances \citep{LZ} are asymptotically conservative for the true design-based sampling covariances of the Hajek and {\htf} estimators, respectively. 
These results justify the corresponding 
regression-based inferences for split-plot data  from the design-based perspective. 
Although the regression procedures were originally motivated by some outcome modeling assumptions,  their design-based properties hold independent of those assumptions as long as the data arise from the split-plot design. 
The analysis as such is justified by the design of the experiment  rather than the modeling assumptions. 
This constitutes our second contribution on the unification of the  model-based and design-based inferences for split-plot experiments.

Last but not least, the regression formulation offers a flexible way to incorporate covariate information, and promises the opportunity to improve asymptotic  efficiency under complete randomization \citep{Fisher35, Lin13}. 
We extend the discussion to possibly {\aasym} split-plot randomization, and 
establish the design-based properties
of the additive and fully-interacted formulations for covariate adjustment under the unit and aggregate regressions, respectively. 
The \olss estimator based on the fully-interacted aggregate regression, as it turns out, ensures the highest asymptotic efficiency when (i) covariates are relatively homogeneous within whole-plots and (ii) we include the whole-plot size factor as an additional covariate. 
The additive formulation, on the other hand, affords an alternative when the number of whole plots is small. This constitutes our third contribution on the design-based justification of  regression-based covariate adjustment.

We start with the $2^2$ split-plot design to lay down the main ideas, and then extend the results to general factors of multiple levels. Our paper furthers the growing literature on design-based causal inference with various types of experimental data \citep[e.g.,][]{Neyman23, neyman1935statistical, kempthorne1952design, box1955permutation, wu1981robustness, ObsRosenbaum, HH, imai2009essential, Schochet10, Lin13, PostStratYu, sabbaghi2014comments,  DasFact15, MiddletonCl15, CausalImbens, ji2017randomization, DingCLT, fogarty2018mitigating, fogarty2018regression, BF, mukerjee2018using, liu2019regression, abadie2020sampling, pashley,  schochet2021design, su}.

We use the following notation for convenience. 
Let $0_{m}$ and $0_{m\times n}$ be the $m\times 1$ vector and $m \times n$ matrix of  zeros, respectively. 
Let  $1_{m}$ and $1_{m\times n}$ be the $m\times 1$ vector and $m\times n$ matrix of ones, respectively.
Let $I_{m}$ be the $m\times m$ identity matrix. 
We suppress the dimensions when they are clear from the context.
Let $\otimes$ and $\circ$ denote the Kronecker and Hadamard products of matrices, respectively.
Let $1(\cdot)$ be the indicator function.   
Let $\var_\infty$ and $\cov_\infty$ denote  the asymptotic variance and covariance, respectively. 
We use $Y_i \sim x_i$ to denote the least-squares regression of $Y_i$ on $x_i$ and focus on the associated cluster-robust covariance for inference  motivated by \cite{IA}, \cite{BF}, \cite{imai2020causal}, and \cite{su}. 
The terms ``regression'' and ``cluster-robust covariance'' refer to the numeric outputs of the least-squares fit free of any modeling assumptions; we evaluate their 
properties under the design-based framework.

%
\section{Setting}

\subsection{Motivating examples}\label{sec:ex}
%

Consider a study with two interventions or factors of interest  
and a study population nested in different groups.
The split-plot design assigns the two factors 
at the group and unit levels, respectively, providing a convenient way to accommodate hard-to-change factors. 
We give below two examples from neuroscience and economics to add intuition. 

\begin{example}\label{ex:meon}
\citet{pten} conducted a randomized experiment  on $14$ mice   to study the effects of fatty acid delivery and Pten knockdown on soma size of neurons in the brain. 
The randomization of fatty acid delivery was conducted at the mouse level and randomly assigned the mice to three levels of exposure. 
The randomization of Pten knockdown took place at the neuron level and randomly infected neurons within each mouse with an shRNA against Pten or an mCherry control. 
The outcome of interest was measured by the soma size of the neurons following the treatments. 
The number of neurons extracted from each mouse varied depending on the level of infection of each virus. 
This defines a {\aasym} split-plot experiment with fatty acid delivery and Pten knockdown as the whole-plot and sub-plot factors, respectively. 
\end{example}

\begin{example}\label{ex:jpe}
\citet{Olken2007} conducted a randomized experiment on $608$ villages in Indonesia  
to study the effects of two interventions  
on reducing corruption: 
increasing the probability
of external audits (``audits'') and
 increasing participation in accountability
meetings (``participation'').
The villages are nested in subdistricts, which typically contain between $10$ and $20$ villages. 
The randomization for audits was
clustered by subdistrict such that all study villages in a subdistrict
received audits or none did  to circumvent interference. 
The randomization for participation encouragement measures, on the other hand, was done village by village,  and randomly assigned the villages to three levels of encouragement.  
This defines a {\aasym} split-plot experiment with audits and participation  constituting the whole-plot and sub-plot factors, respectively. 
\end{example}

%
\subsection{$2^2$ split-plot randomization, potential outcomes, and causal estimands}\label{me}
%
To simplify the presentation, we start with the $2^2$ split-plot design with two binary factors of interest, $\text{A}, \text{B} \in \{0,1\}$. 
Consider a study population of $N$ units,  $\mathcal{S} = \{\ws: w=1, \dots, W; \ s= \ot{M_w}\}$, nested in $W$ groups  of possibly different sizes,  $M_w \ (w = \ot{W}; \ \sumw M_w = N)$. 
A $2^2$ split-plot design compounds a cluster randomization with a stratified randomization and assigns the treatments in two steps:
\begin{enumerate}
\item
[(I)]
the first step features a cluster randomization and assigns completely at random $W_a$ groups to receive level $a \in \{0,1\}$  of factor A  with $W_0+W_1 = W$;
\item
[(II)]
the second step then runs a stratified randomization and assigns completely at random $M_{wb}$ units in group $w$ to receive level $b\in\{0,1\}$  of factor B   with $M_{w0}+M_{w1} = M_w$  for $w=1,\ldots, W$.
\end{enumerate}
Refer to the first and second steps as the stage (I) and stage (II) randomizations, respectively, with $\{W_a, M_{wb}: a, b=0,1; \ w = \ot{W}\}$ being some prespecified, fixed integers that satisfy $W_a \geq 2$ and $M_{wb} \geq 2$. 
The final treatment received by a unit is the combination of the level of factor A  its group receives in stage (I) and the level of factor B the unit itself receives in stage (II),  indexed by  $(a, b) \in \mt = \{(0,0), (0,1),(1,0),  (1,1) \}$; we abbreviate $(a,b)$ as $(ab)$  when no confusion would arise. 
Refer to each unit as a sub-plot and each group as a whole-plot by convention of the literature on agricultural  experiments. 
Factors A and  B become the whole-plot and sub-plot factors, respectively.

Let $A_{ws} = A_w$ and $B_{ws}$ indicate the levels of factors A and B received by sub-plot $\ws$, respectively, with   $\pr(A_w =a) = W_a/W =p_a$ and $\pr(B_{ws} = b) = M_{wb}/M_w =q_{wb}$ for $a,b=0,1$. 
We suppress the subscript $s$ in  $A_{ws}$ to highlight its identicalness over all sub-plots within the same whole-plot. 
Let 
$\Zws = (A_w, B_{ws})$ indicate the final treatment for sub-plot $\ws$ with
\begina
p_{ws}(z) = \pr(\Zws = z) = \pr(A_w =a) \cdot \pr(B_{ws} = b) = p_a q_{wb} \qquad \text{for} \ \ z=(ab)\in\mt. 
\enda
Refer to $p_\ws(z)$ as the inclusion probability of sub-plot $\ws$ to receive treatment $z = (ab)$. 
It is identical for units in the same whole-plot yet varies across different whole-plots unless $q_{wb}=q_{w'b}$ for $w \neq w'$.

Let $\bar M = N/W$ be the average whole-plot size, and let $\alpha_w = M_w/\bar M$ be the {\it whole-plot size factor} with $\bar \alpha = W^{-1}\sumw \alpha_w = 1$. The sample size under treatment $z= (ab)$ equals $N_z = \sum_{w: A_w = a}M_{wb} $ and is in general stochastic unless $M_{wb}$ is identical across all $w$. 
 We call a split-plot design {\it \sym} if $M_w$ and $ \{M_{wb}: b = 0, 1\}$   are  identical across all $w  = \ot{W}$.

\begin{condition}\label{balanced}
$M_w = M$ and $M_{wb} = M_b$ for all $w=\ot{W}$ and $b = 0,1$.  
\end{condition}

Condition \ref{balanced} ensures that $p_{ws}(z) = p_a q_b  = N_z/N$ is identical for all $ws \in \ms$, with $z = (ab)$, $q_b = M_b/M$ and $N_z = Np_aq_b$. 
\citet{ZDMD} focused on the {\sym}  $2^2$ split-plot design and  established the unbiasedness of the sample-mean estimator. 
Most real-world experiments in social and biomedical sciences, however, are not {\sym} (see, for example, Examples \ref{ex:meon} and \ref{ex:jpe}).

Let $\Ywszz$ be the potential
outcome of sub-plot $\ws$ if assigned to treatment $z$. 
Let $\bar Y(z) = N^{-1} \sumws  \Ywszz$ be the finite-population average, vectorized as $\bbY = (\bY(00), \bY(01),\bY(10),  \bY(11) )^\T$.
Contrasts 
\beginy\label{taud}
\tauA &=& 2^{-1} \{\bY(11)+\bY(10)\} -  2^{-1} \{\bY(01)+\bY(00)\} ,\nonumber\\  
\tauB  &=& 2^{-1} \{\bY(11)+\bY(01)\} -  2^{-1} \{\bY(10)+\bY(00)\},\\ 
\tauAB  &=&  \bY(11) - \by(10) - \by(01) + \by(00) \nonumber
\endy
define the  standard main effects and interaction under $2^2$ factorial designs. Sometimes the interaction is also defined as $\tauAB/2$ \citep{DasFact15}; 
 the difference causes no essential change to our discussion. 
We will discuss inference of the general estimand
$$
\bt  =  G \bbY
$$ 
for arbitrary  coefficient matrix $G$. The standard main effects and interaction in \eqref{taud}  correspond to a special $G = G_0 = (g_{ \A}, g_{ \B}, g_{ \AB})^\T$ with $g_{\A} = 2^{-1} (-1,-1,1,1)^\T$,  $g_{\B}  = 2^{-1} (-1,1,-1,1)^\T$, and $g_{\AB} = (1,-1,-1,1)^\T$.

We consider the  design-based inference of  $\tau$, which conditions on the potential outcomes and views the treatment assignment as the sole source of randomness. 
The observed outcome equals $\Yws = \sumz 1(\Zws = z) \,\Ywszz$ for sub-plot $\ws$. We focus on estimators of the form $\htau = G\hy$, where $\hY$ is some estimator of $\bY$ based on $(Y_{ws}, Z_{ws})_{ws\in\ms}$  and possibly some pre-treatment covariates.
Assume that $a, b \in \{0,1\}$ index the levels of factors A and B in treatment combination $z \in\mt$ throughout unless specified otherwise.
\section{{\htf} and Hajek estimators}\label{est}
We review in this section three design-based estimators for estimating $\bY$. 
Let $\uz =\{ \ws: \Zws = z,\ \ws \in\mathcal{S} \}$ be the set of sub-plots under treatment $z$. 
Let $\Wz$ be the set of whole-plots that contain at least one observation under treatment $z$.
The restriction in the stage (I) randomization ensures that there are only two treatment levels, namely $z =   (A_w0)$ and $z= (A_w1) $, observed in whole-plot $w$.  By definition, $\Wz  =\{w:A_w=a\}$ with $|\Wz| = W_a$  for $z \in \{(a0), (a1)\}$ with level $a$ of factor A.

First, the sample-mean estimator of $\by(z)$ equals 
\begina
\ynm =  |\uz|^{-1} \sum_{\ws \in \uz}   \Yws = |\uz|^{-1}\sumws 1(Z_{\ws}=z) \, \Yws, 
\enda
averaging over all units under treatment $z$. 
It is neither unbiased nor consistent in general.

Second, the {\htf} estimator is unbiased  for $\bar Y(z)$:
\beginy\label{eq:yht}
\yHT
= N^{-1} \sum_{\ws \in \uz} p_{ws}^{-1}(z)\, \Yws =N^{-1} \sumws  \frac{ 1(Z_{ws}=z )}{ p_{ws}(z) } \, Y_{ws}(z).
\endy
Split-plot randomization ensures $p_{ws}(z) =p_aq_{wb}$ for $z = (ab)$, 
and simplifies \eqref{eq:yht} 
to 
\begina
\hat Y_\sht(z)
= 
W_a^{-1} \sumwz   \alpha_w \hY_w(z),
\enda
where $\hY_w(z) = M_{wb}^{-1} \sumsz\Yws$  is the whole-plot sample mean under treatment $z = (ab)$. 
Let $
\bY_w(z) = M_w^{-1} \sums \yws(z)$ be the whole-plot average potential outcome, as the population analog of $\hY_w(z)$. 
Let $U_w(z) = \bar M^{-1} \sums Y_{ws}(z) = \alpha_w \bar Y_w(z)$ be the scaled whole-plot total potential outcome with sample analog $
\hU_w(z) =\alpha_w \hY_w(z)
$. 
We have 
\beginy\label{hat_Y}
 \bar Y(z)  =W^{-1}\sumw \alpha_w \bY_w(z) = W^{-1}  \sumw U_w (z), \qquad \yHT  =   W_a^{-1}  \sumwz \hU_w (z) .
\endy 
This illustrates $\yHT$ as a two-stage sample-mean estimator  of $\bar Y(z)$ 
by first using $\hat U(z) = W_a^{-1}  \sumwz U_w (z)$ to estimate $\bY(z) = W^{-1}  \sumw U_w (z)$ and then using $\hU_w(z)$ to estimate the $U_w(z)$ in $\hat U(z)$ for $w \in \Wz$. 
Standard results ensure that the two steps are unbiased with regard to the whole-plot and sub-plot randomizations, respectively.

A main criticism of  the {\htf} estimator is that it is not invariant to location shifts in general \citep[e.g.,][]{Fuller, MiddletonCl15, su}.
In contrast, the Hajek estimator 
\begina
\yhaj
=     \frac{\yHT}{ \oneHT}, \qquad \text{where} \ \ \oneHT = N^{-1} \sum_{\ws \in \uz} p_{ws}^{-1}(z), 
\enda
normalizes the {\htf} estimator by the sum of the individual weights involved in its definition, and ensures location invariance by construction.
%
%
We can view $\oneHT$ as the {\htf} estimator of constant $1$ when all potential outcomes equal 1.
The Hajek estimator is thus a ratio estimator for $\bar Y(z) = \bar Y(z) / 1$ with the numerator and denominator estimated by $\yHT$ and $\oneHT$, respectively.

This gives us three estimators of $\by$, denoted by $\byH = (\hY_*(00), \hY_*(01),\hY_*(10),\hY_*(11))^\T$ for $\mds$. 
They differ in general but coincide under  {\sym} split-plot designs.

\begin{proposition}\label{prop:hY_bal}
Under Condition \ref{balanced}, we have $\hynm = \hyht = \hyhaj$ with
$$\ynm = \yHT = \yhaj = W_a^{-1} \sumwz \hY_w(z) \qquad \text{for} \ \ z =(ab)\in \mt.$$  
\end{proposition}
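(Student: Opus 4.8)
The plan is to verify the three claimed identities by direct computation, exploiting the fact that Condition \ref{balanced} forces three simplifications that I would record first as a preliminary step: (i) the whole-plot size factor is constant, $\alpha_w = M_w/\bar M = M/M = 1$ for every $w$, since $\bar M = N/W = M$; (ii) the within-whole-plot inclusion proportion is homogeneous, $q_{wb} = M_{wb}/M_w = M_b/M = q_b$, so the inclusion probability $p_{ws}(z) = p_a q_b = N_z/N$ is the same for all sub-plots; and (iii) the number of observed sub-plots under each treatment is deterministic, $|\uz| = W_a M_b = N_z$, because exactly $W_a$ whole-plots receive level $a$ and each contributes exactly $M_b$ sub-plots to treatment $z=(ab)$.

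For the Horvitz--Thompson estimator, observation (i) immediately reduces $\yHT = W_a^{-1}\sumwz \alpha_w \hY_w(z)$ to the target form $W_a^{-1}\sumwz \hY_w(z)$, as each $\alpha_w$ drops out. For the Hajek estimator, the crux is to show $\oneHT = 1$. Using (ii), every term $p_{ws}^{-1}(z)$ in the defining sum equals $N/N_z$, and by (iii) there are exactly $N_z$ such terms, so $\oneHT = N^{-1}\cdot N_z\cdot (N/N_z) = 1$; hence $\yhaj = \yHT/\oneHT = \yHT$, which already matches the target by the previous step.

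For the sample-mean estimator, I would regroup the average over sub-plots into an outer sum over whole-plots and an inner sum over sub-plots within each whole-plot, writing $\sum_{\ws\in\uz}\Yws = \sumwz \sumsz \Yws = \sumwz M_{wb}\,\hY_w(z)$. By (ii) each $M_{wb} = M_b$ factors out uniformly to give $M_b\sumwz \hY_w(z)$, and dividing by $|\uz| = W_a M_b$ from (iii) cancels the $M_b$, yielding $\ynm = W_a^{-1}\sumwz \hY_w(z)$, as required.

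The computation is essentially bookkeeping and presents no serious obstacle. The one point deserving care is the sample-mean identity: the reason the naive average over individual units coincides with the \emph{unweighted} average of whole-plot means is precisely that Condition \ref{balanced} equalizes the within-whole-plot observed counts $M_{wb}$, so that no whole-plot is implicitly over- or under-weighted. Absent this balance, the three estimators genuinely differ through the weights $\alpha_w$ and the $M_{wb}$, which is exactly the phenomenon the remainder of the paper is designed to address.
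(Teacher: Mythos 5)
Your proof is correct and complete: the three preliminary simplifications you record ($\alpha_w = 1$, $p_{ws}(z) = p_a q_b = N_z/N$ for every sub-plot, and $|\uz| = W_a M_b$ deterministic) are exactly what Condition \ref{balanced} delivers, and the three identities then follow by the same direct bookkeeping the paper itself relies on, as it states this proposition without a separate proof, treating it as immediate algebra. In particular, your observations that $\oneHT = 1$ (making the Hajek estimator collapse to the {\htf} estimator) and that equal within-whole-plot counts $M_{wb} = M_b$ are what let the unit-level average coincide with the unweighted average of whole-plot means are precisely the right points of emphasis.
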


We derive the design-based properties  of $\hys \ (\mds)$  under split-plot randomization  in the next section.

\section{Design-based properties under split-plot randomization}\label{sec:design}
\subsection{Finite-sample results for the {\htf} estimator}
Define the scaled between and within whole-plot covariances of $\{\Yws(z), \Yws(z')\}_{\ws \in \mathcal{S}}$ as
\begin{align*}
S(z,z') &=  (W-1)^{-1} \sumw \big\{\alpha_w \bY_w(z)- \bar Y(z) \big\}\big\{\alpha_w \bY_w(z')- \bar Y (z') \big\},\nonumber\\
\sw(z, z') &=  (M_w-1)^{-1} \alpha_w^2\sum_{s=1}^{M_w}  \big\{\Yws(z)- \bY_w(z) \big\}\big\{  \Yws(z')- \bY_w(z')  \big\} ,
\end{align*}
respectively \citep{MD}, 
summarized in 
$S = ( S(z, z')  )_{z,z'\in\mt}$ and $\bsw = ( \sw(z, z') )_{z,z'\in\mt}
$. 
They  measure the between and within whole-plot heterogeneity in potential outcomes after adjusting for the whole-plot sizes. 
Let 
$$H  = \diag(p_0^{-1}, p_1^{-1}) \otimes \bbo{2} - \bbo{4}, \quad  H_w =  \diag(p_0^{-1}, p_1^{-1}) \otimes 
\{ \diag(   q_{w0}^{-1}, q_{w1}^{-1} ) - \bbo{2}\}$$
be two symmetric $4\times 4$ matrices defined by the design parameters. 
 Lemma \ref{Vmat} below quantifies the sampling covariance of $\byht$ in finite samples.

\begin{lemma}\label{Vmat}
Under the $2^2$ split-plot randomization, we have
$$E(\byht) = \bbY, \qquad  \cov( \byht )  =W^{-1}(  H \circ \bs_\sht +\Psi) $$
with $S_\sht  = S$ and $\Psi = W^{-1}\sum_{w=1}^W M_w^{-1} (H_w \circ \bsw)$.
\end{lemma}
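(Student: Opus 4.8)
The plan is to exploit the nested two-stage structure of split-plot randomization and treat $\byht$ as a two-stage sample-mean estimator, using the representation $\yHT = W_a^{-1}\sumwz \hU_w(z)$ with $\hU_w(z) = \alpha_w \hY_w(z)$ from \eqref{hat_Y}. Throughout I would condition on the stage (I) assignment $\mathcal{A} = \{A_w : w = \ot W\}$, under which the remaining randomness comes only from the stage (II) randomizations, which are mutually independent across whole-plots. Both stages are completely randomized with \emph{fixed} within-stage sample sizes ($W_a$ whole-plots at level $a$; $M_{wb}$ sub-plots at level $b$ within whole-plot $w$), so no stochastic-denominator complications arise for the {\htf} estimator.

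First I would establish unbiasedness by iterated expectation. Conditional on $\mathcal{A}$, for any $w \in \Wz$ the set $\{s : B_{ws} = b\}$ is a simple random sample of the $M_w$ sub-plots, so $\hY_w(z)$ is unbiased for $\bY_w(z)$ and hence $E(\hU_w(z) \mid \mathcal{A}) = \alpha_w \bY_w(z) = U_w(z)$. Taking the stage (II) expectation gives $E(\yHT \mid \mathcal{A}) = W_a^{-1}\sumwz U_w(z)$; since $\Wz$ is itself a simple random sample of size $W_a$ from the $W$ whole-plots under stage (I), a second application of unbiasedness yields $E(\yHT) = W^{-1}\sumw U_w(z) = \bY(z)$, i.e. $E(\byht) = \bbY$.

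For the covariance I would invoke the law of total covariance with respect to $\mathcal{A}$, writing $\cov(\byht) = \cov\{E(\byht\mid\mathcal{A})\} + E\{\cov(\byht\mid\mathcal{A})\}$ and identifying the first ``between-whole-plot'' term with $W^{-1}(H\circ S)$ and the second ``within-whole-plot'' term with $W^{-1}\Psi$. For the between term, $E(\yHT\mid\mathcal{A}) = W_a^{-1}\sumwz U_w(z)$ is a sample mean of the fixed whole-plot quantities $\{U_w(z)\}$ over the stage (I) completely randomized assignment, so the covariance of two such sample means follows from standard finite-population sampling formulas: the case $a = a'$ (the two means taken over the \emph{same} randomly selected set of whole-plots) gives $W^{-1}(p_a^{-1}-1)S(z,z')$, while the case $a \neq a'$ (the two means taken over \emph{complementary} groups) gives $-W^{-1}S(z,z')$. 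These two cases are exactly the entries encoded by the $\diag(p_0^{-1},p_1^{-1})\otimes\bbo{2} - \bbo{4}$ structure of $H$, so the term assembles into $W^{-1}(H\circ S)$. For the within term, conditional independence of the stage (II) randomizations across whole-plots forces $\cov(\yHT,\hY_\sht(z')\mid\mathcal{A}) = 0$ whenever $a\neq a'$, and when $a = a'$ reduces it to $W_a^{-2}\sumwz \cov(\hU_w(z),\hU_w(z')\mid\mathcal{A})$. Each inner covariance is again a finite-population sample-mean covariance, now at the sub-plot level within whole-plot $w$: the case $b=b'$ gives $M_w^{-1}(q_{wb}^{-1}-1)\sw(z,z)$ and the case $b\neq b'$ gives $-M_w^{-1}\sw(z,z')$, where the $\alpha_w^2$ arising from $\hU_w(z)=\alpha_w\hY_w(z)$ is precisely the scaling already built into $\sw(z,z')$; these match the entries of $H_w$. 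Finally, taking the stage (I) expectation of $1(A_w=a)$ supplies a factor $p_a$ which, combined with $W_a = Wp_a$, produces the overall $W^{-1}\cdot W^{-1}\sum_w M_w^{-1}$ normalization and yields $W^{-1}\Psi$.

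The routine parts are the two finite-population covariance identities (for same-group and complementary-group sample means), which I would state once and reuse at each of the two stages. The main obstacle is the bookkeeping: aligning the four-way case analysis ($a=a'$ versus $a\neq a'$, crossed with $b=b'$ versus $b\neq b'$) with the Hadamard-product entries of $H\circ S$ and $H_w\circ\bsw$, and tracking the finite-population corrections together with the $W^{-1}$, $W_a^{-1}$, $M_w^{-1}$, $M_{wb}^{-1}$ normalizations so that the combinatorial constants collapse to exactly the claimed $p_a^{-1}$ and $q_{wb}^{-1}$ coefficients.
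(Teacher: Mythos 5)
Your proposal is correct and takes essentially the same route as the paper's own proof (Lemma S3 in the Supplementary Material): your law-of-total-covariance decomposition conditional on $\mathcal{A}$ is exactly the paper's decomposition $\hat{Y}_{\textup{ht}} = \mu + \delta$ with $\mu = E(\hat{Y}_{\textup{ht}}\mid\mathcal{A})$, the between term computed from stage (I) finite-population sampling covariances and the within term from whole-plot-level sampling covariances followed by expectation over $A_w$. The only cosmetic difference is that the paper obtains the between-whole-plot block in a single application of the cited completely-randomized-design covariance formula (Lemma S1) to the vector outcomes $U_w(a) = \left(U_w(a0), U_w(a1)\right)^{\mathrm{T}}$, whereas you assemble it entry by entry from the same-group and complementary-group identities.
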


Consider $\Psi$ as a summary of $(S_w)_{w=1}^W$ after adjusting for the whole-plot sizes. 
Lemma \ref{Vmat} decomposes the variability in $\byht$ into that due to the stage (I) randomization, namely $ W^{-1}(H \circ \ssht)$, and that due to the stage (II) randomization, namely $W^{-1}\Psi$. 
Lemma S3 in the Supplementary Material quantifies this statement rigorously. 
A direct implication is $\var( g^\T \hY_\sht ) = W^{-1} g^\T (  H \circ \bs_\sht +\Psi) g$ for arbitrary $g \in \mathbb{R}^4$. This gives a more compact matrix form of \citet[][Theorem 1]{MD}. 

Quantification of the Hajek estimator is, on the other hand, hard in finite samples in general.
We thus cast the discussion under an asymptotic framework, and establish  the asymptotic Normality of $\byht$ and $\byhaj$ under split-plot randomization  in Section \ref{sec_clt}.

\subsection{Asymptotic Normality of the {\htf} and Hajek estimators}\label{sec_clt}
To facilitate the discussion, we introduce an intermediate quantity
 $$\qquad \hY_\sht'(z) =    N^{-1}  \sumwszz p^{-1}_{ws}(z) \, \yws'(z), \qquad \text{where} \ \ Y'_{ws}(z) = \yws(z) - \bY(z),$$ 
as the {\htf} estimator defined on the centered potential outcomes $Y'_{ws}(z)$. 
The difference between the Hajek estimator and the true finite-population average equals 
\beginy
\yhaj - \bar Y(z) = \frac{\hY_\sht(z) - \oneHT\bY(z)}{\oneHT} = 
\frac{ \hY_\sht'(z)}{\oneHT}. \label{eq:haj_intuition}
\endy 

Let $\sshaj = (\sshaj(z,z') )_{z,z'\in\mt}$, where $$\sshaj(z, z') = (W-1)^{-1}  \sumw \alpha_w^2   \{ \bar Y_w(z)-\bar Y(z)\} \{ \bar Y_w(z')-\bar Y(z')\},$$
 be the scaled between whole-plot covariance matrix defined on $\{Y_{ws}'(z): z\in\mt\}_{ws\in\ms}$. 
Let $\oa{k} = W^{-1}\sumw \alpha_w^{k}$ be the $k$th moment of $(\alpha_w)_{w=1}^W$ for $k= 1,2,4$ with $\bar\alpha = W^{-1}\sumw \alpha_w= 1$.  
Let $\oyfzw  
=  M^{-1}_w \sums 
 Y^4_{ws}(z)$ be the uncentered fourth moment of $Y_{ws}(z)$ in whole-plot $w$.
We state in Condition \ref{asym} below the regularity conditions for asymptotics under split-plot randomization.   
\begin{condition} \label{asym}
As $W$ goes to infinity, for $a, b = 0,1$ and $z\in \mT$, 
\begin{enumerate}[(i)]
\item\label{alpha} $\oa{2} = O(1); \ \oa{4}  =o(W)$; 
\item\label{paqb} $p_a$ has a limit in $(0,1)$;  $\epsilon \leq \min_{w = \ot{W}} q_{wb} \leq \max_{w = \ot{W}} q_{wb} \leq 1 - \epsilon$ for some $\epsilon \in (0,1/2]$ independent of $W$;
\item\label{po_1}  $\bbY$,  $\ssht$, $\sshaj$, and $\Psi$ have finite limits;   
\item\label{po} $\max_{ w = \ot{W}}  |\alpha_w \bar Y_w(z) - \bar Y(z) |^2/W =o(1)$; 
 \item\label{po_3}  $W^{-1}\sumw \alpha^2_w  \oyfzw  =O(1); \ W^{-2}\sumw \alpha^4_w  \oyfzw  =o(1)$. 
\end{enumerate}
\end{condition}

 For notational simplicity, we will also use $p_a$, $\bY$, $\ssht$, $\sshaj$, and $\Psi$ to denote their respective limiting values when no confusion would arise. 
The exact meaning should be clear from the context.  

With $ \bar \alpha  = 1$, Condition \ref{asym}\eqref{alpha} requires the finite-population variance of the $\alpha_w$'s to be uniformly bounded and thereby protects against the possibility of superlarge whole-plots. 
It also allows for diverging fourth moment yet stipulates the growth rate to be slower than $W$. 

Condition \ref{asym}\eqref{paqb}--\eqref{po_1}, on the other hand, ensure that $\cov(\byht)$ decays at the rate of $W^{-1}$ and thereby guarantee the consistency of $\byht$ for estimating $\bY$. 
We do not need $q_{wb}$ to converge but only be uniformly bounded as long as $\Psi$ has a finite limit; see Lemma S4 in the Supplementary Material for details. 
In the neuroscience experiment in Example \ref{ex:meon}, for example, this imposes bounds on the number of neurons affected by each level of the  Pten knockdown intervention.  
Condition \ref{asym}\eqref{po_3} stipulates the bounded fourth moment condition peculiar to the split-plot randomization.
Provided Condition \ref{asym}\eqref{alpha}, it is satisfied as long as the $\oyfzw$'s are uniformly bounded for all $w$. 

 Importantly, Condition \ref{asym} requires only $W$ goes to infinity and includes both of the following asymptotic regimes as special cases: 
\begin{enumerate}[(i)]
\item   $M_w$ goes to infinity for all $w = \ot{W}$; 
\item   $\{M_w: w=\ot{W}\}$ are  uniformly bounded. 
\end{enumerate}
Recall from Lemma \ref{Vmat} that $H\circ \ssht$ and $\Psi= W^{-1}\sum_{w=1}^W M_w^{-1} (H_w \circ \bsw)$ characterize the variability in $\byht$ due to the stage (I) and stage (II) randomizations, respectively.
Regime (i) ensures that $\Psi=o(1)$, and thus the variability from the stage (I) randomization dominates that from stage (II),  as long as $(S_w)_{w=1}^W$ are uniformly bounded. 
Regime (ii), on the other hand, requires $(S_w)_{w=1}^W$ to have a stable mean to ensure 
that $\Psi$ has a finite limit. A third asymptotic regime is to have $M_w$ go to infinity for all $w = \ot{W}$ while keeping $W$ fixed \citep{LH}. 
Asymptotic Normality is lost under this regime, and we omit it from the ensuing discussion. Overall, it is crucial to have large $W$ for reliable   asymptotic approximations under our framework. This requires a large number of mice in Example \ref{ex:meon} and a large number of subdistricts in Example \ref{ex:jpe}.

\begin{remark} 
Although our theory does not impose any stochastic assumptions on the potential outcomes, we can invoke a working model to gain intuition for the requirements on $\ssht =O(1)$ and $\sshaj=O(1)$ in Condition \ref{asym}\eqref{po_1}.
Consider $N=WM$ units in $W$ equal-sized groups, $\{\ws:w = 1,\dots, W; \ s= 1, \dots, M\}$, with $\Yws(z) \sim [\mu_w, \sigma^2]$ where $\mu_w \sim [\mu_0, \sigma_0^2]$. 
This defines a classical model for characterizing  data nested in clusters.
Denote by $\pr'$ the probability measure induced by the potential outcomes generating process. 
Standard result shows that $S_\sht =O_{\pr'}(1)$ as $W$ and $M$ go to infinity, and  degenerates to $S_\sht = o_{\pr'}(1)$ if $\sigma_0=0$ and $\Yws(z) \sim [\mu_0, \sigma^2]$. 
\end{remark}

Theorem \ref{clt} below states the asymptotic Normality of $\hY_\sht$ and $\hY_\shaj$.

\begin{theorem}\label{clt}
Let 
$\Sigma_* = H \circ S_* +\Psi$ for $*=\HT, \haj$ with $ \Sigma_\sht = W\cov(\byht)$ and $ \Sigma_\shaj =W\cov(\byht')$ in finite samples. 
{\pre}
\begina
\sqrt{W} (\hY_*  -\bbY   ) \rightsquigarrow \mathcal{N}(0, \Sigma_*) \qquad \text{for} \ \ * = \HT, \haj. 
\enda
\end{theorem}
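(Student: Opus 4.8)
The plan is to prove the central limit theorem for $\byht$ first and then transfer it to $\byhaj$ by a ratio argument. Since the claimed limits are multivariate Normal, I reduce to a scalar statement through the Cram\'er--Wold device: it suffices to show $\sqrt W\, g^\T(\byht - \bbY) \rightsquigarrow \mathcal N(0,\, g^\T \Sigma_\sht g)$ for every fixed $g \in \mathbb R^4$. Using the representation $\yHT = W_a^{-1}\sumwz \alpha_w \hat Y_w(z)$ from \eqref{hat_Y}, I group the four coordinates by the whole-plot arm $a$ from which they originate and split the fluctuation according to the two randomization stages. Replacing each within-whole-plot sample mean $\hat Y_w(z)$ by its stage-(II) conditional expectation $\bar Y_w(z)$ yields a \emph{stage-(I) surrogate} whose deviation from $g^\T\bbY$ depends only on the cluster assignment $\{A_w\}$; the remainder, $g^\T\byht$ minus this surrogate, is, conditionally on stage (I), a sum over whole-plots of mutually independent, mean-zero within-whole-plot sampling errors $\alpha_w\{\hat Y_w(z)-\bar Y_w(z)\}$.

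To combine the two pieces I invoke the martingale central limit theorem of \citet{ohlsson}. I equip the probability space with the filtration that reveals the cluster assignment first and the within-whole-plot assignments second: the surrogate then contributes martingale differences indexed by whole-plots---a finite-population sample-mean fluctuation under sampling without replacement---while the remainder contributes, conditionally, a sum of independent terms. Two conditions must be checked: (a) the summed conditional variances of the martingale differences converge to $g^\T\Sigma_\sht g$, and (b) a conditional Lindeberg/Lyapunov negligibility condition. For (a) the target is already pinned down by Lemma~\ref{Vmat} via the law of total variance: the stage-(I) contribution converges to $g^\T(H\circ S)g$ and the conditional stage-(II) contribution converges in probability to $g^\T\Psi g$, mirroring the split $\cov(\byht)=W^{-1}(H\circ S+\Psi)$. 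Conditions~\ref{asym}\eqref{alpha}--\eqref{po_1} guarantee that these limits exist and that the ratios $p_a=W_a/W$ converge into $(0,1)$, and the weak law for the conditional stage-(II) variance as a function of the random partition uses Condition~\ref{asym}\eqref{po_3}.

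The main obstacle is condition (b), because the finite-population two-stage regime permits unbounded whole-plot sizes (diverging $\alpha_w$) and makes the very set of whole-plots whose errors enter the sum itself random. For the stage-(I) differences the summands are $\alpha_w\bar Y_w(z)-\bar Y(z)$, and Condition~\ref{asym}\eqref{po} forces the maximal summand to be $o(\sqrt W)$, which delivers the Lindeberg condition for the sampling-without-replacement part, while Condition~\ref{asym}\eqref{alpha} controls the second and fourth moments of the $\alpha_w$ so that no single whole-plot dominates. For the conditional stage-(II) differences $\alpha_w\{\hat Y_w(z)-\bar Y_w(z)\}$ I would run a Lyapunov argument in which the scaled within-whole-plot fourth moments enter precisely through $W^{-1}\sumw\alpha_w^2\,\oyfzw=O(1)$ and $W^{-2}\sumw\alpha_w^4\,\oyfzw=o(1)$ in Condition~\ref{asym}\eqref{po_3}, with the bounds $\epsilon\le q_{wb}\le 1-\epsilon$ from Condition~\ref{asym}\eqref{paqb} keeping the without-replacement variance factors $M_{wb}^{-1}-M_w^{-1}$ comparable across whole-plots. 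Verifying that these moment bounds survive the random thinning by the cluster assignment---so that the conditional Lindeberg condition holds along almost every realization of stage (I)---is where the entanglement of the two stages must be handled, and is the crux of the argument.

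Finally, for the Hajek estimator I would use the exact identity \eqref{eq:haj_intuition}, namely $\yhaj-\bar Y(z)=\hY_\sht'(z)/\oneHT$, where $\hY_\sht'$ is the {\htf} estimator built on the centered outcomes $Y'_{ws}(z)=Y_{ws}(z)-\bar Y(z)$. Applying the $\byht$ result just established to these centered outcomes gives $\sqrt W\,\byht'\rightsquigarrow\mathcal N(0,\Sigma_\shaj)$: centering leaves every within-whole-plot matrix $S_w$---and hence $\Psi$---untouched, while it turns the between-whole-plot matrix $S$ into $S_\shaj$, so that $H\circ S_\shaj+\Psi=\Sigma_\shaj=W\cov(\byht')$. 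Since $\oneHT$ is the {\htf} estimator of the constant $1$, its variance is $O(W^{-1})$ under the same conditions and $\oneHT\toinp 1$; Slutsky's theorem then yields $\sqrt W(\byhaj-\bbY)\rightsquigarrow\mathcal N(0,\Sigma_\shaj)$, completing the proof.
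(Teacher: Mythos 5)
Your proposal follows essentially the same route as the paper's proof: the Cram\'er--Wold reduction, the decomposition of $\byht$ into a stage-(I) surrogate plus conditionally independent within-whole-plot errors, Ohlsson's martingale CLT with the filtration revealing the cluster assignment first---so that the stage-(I) fluctuation is handled by a separate finite-population CLT and the convolution condition, while the stage-(II) errors supply the martingale differences controlled by the fourth-moment and conditional-variance conditions---and finally the identity $\byhaj - \bbY = \hat 1_\sht^{-1}\,\byht'$ with Slutsky's theorem for the Hajek estimator. The crux you flag (that the stage-(II) conditional variance concentrates despite the random thinning by the cluster assignment) is exactly what the paper verifies via its $\var(\sigma)=o(1)$ computation and fourth-moment bounds, so your plan is correct and matches the paper's argument.
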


Theorem \ref{clt} ensures the consistency of $\byht$ and $\byhaj$ for estimating $\by$, and establishes $\Sigma_\shaj$ as the asymptotic sampling  covariance of $\sqrt W (\byhaj - \bar Y)$. 
The large-sample relative efficiency between $\byht$ and $\byhaj$ then follows from the comparison of $\Sigma_\sht$ and $\Sigma_\shaj$.

\begin{corollary}\label{V_hajek}
{\pre}
\begina
W\left[ \vari\big\{\yhaj\big\} -  \vari\big\{ \yHT \big\} \right] =  
(p_a^{-1} -1)  \{ S_\shaj(z,z) - S_\sht(z,z)\}  \quad \text{for} \ \ z = (ab)\in\mt
\enda 
 with
\begin{itemize}
\itemc $S_\shaj(z,z) = S_\sht(z,z)$ if (i) $\bar Y(z) = 0$ or (ii) $\alpha_w = 1$ for all $w$; 
\itemc $0=S_\shaj(z,z)   \leq  S_\sht(z,z)$ if $\bar Y_w(z) $ is constant over all $w$; 
\itemc 
  $0= S_\sht(z,z)  \leq  S_\shaj(z,z)$ if $U_w(z) = \alpha_w\bar Y_w(z) $ is constant over all $w$.
\end{itemize}
\end{corollary}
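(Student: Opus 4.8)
The plan is to read the two asymptotic variances off Theorem~\ref{clt} and take their difference. By Theorem~\ref{clt}, $W\vari\{\hY_*(z)\}$ is the $(z,z)$ diagonal entry of $\Sigma_* = H \circ S_* + \Psi$ for $* = \sht, \shaj$, where the within-whole-plot term $\Psi$ is \emph{common} to both expansions: centering the potential outcomes by the constant $\bar Y(z)$ leaves the within-whole-plot deviations $Y_{ws}(z) - \bar Y_w(z)$ unchanged, so each $S_w$ and hence $\Psi$ is identical for the Hajek and {\htf} developments. Consequently $\Psi$ cancels in the difference, and since the Hadamard product acts entrywise,
\[
W\left[\vari\{\yhaj\} - \vari\{\yHT\}\right] = (H \circ S_\shaj)_{zz} - (H \circ S_\sht)_{zz} = H_{zz}\{S_\shaj(z,z) - S_\sht(z,z)\}.
\]
It remains to evaluate $H_{zz}$. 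From $H = \diag(p_0^{-1}, p_1^{-1}) \otimes 1_{2\times 2} - 1_{4\times 4}$, the diagonal of $\diag(p_0^{-1},p_1^{-1})\otimes 1_{2\times2}$ equals $p_a^{-1}$ in the two coordinates indexed by level $a$ of factor~A, so $H_{zz} = p_a^{-1} - 1$ for $z = (ab)$. This gives the claimed identity.

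For the three bullet points I would substitute the explicit diagonal forms
\[
S_\sht(z,z) = (W-1)^{-1}\sumw \{\alpha_w \bar Y_w(z) - \bar Y(z)\}^2, \qquad S_\shaj(z,z) = (W-1)^{-1}\sumw \alpha_w^2\{\bar Y_w(z) - \bar Y(z)\}^2,
\]
obtained by setting $z' = z$ in the definitions of $S$ and $\sshaj$, and recall that $\bar Y(z) = W^{-1}\sumw \alpha_w \bar Y_w(z)$ with $\bar\alpha = 1$ is the common centering mean. The first bullet is termwise: if $\bar Y(z) = 0$ both sums reduce to $(W-1)^{-1}\sumw \alpha_w^2 \bar Y_w(z)^2$, and if $\alpha_w = 1$ for all $w$ the summands agree identically, so $S_\shaj(z,z) = S_\sht(z,z)$ in either case. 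For the second bullet, $\bar Y_w(z) \equiv c$ forces $\bar Y(z) = c\bar\alpha = c$, whence $S_\shaj(z,z) = 0$ while $S_\sht(z,z) = (W-1)^{-1}c^2\sumw(\alpha_w - 1)^2 \geq 0$. For the third bullet, $U_w(z) = \alpha_w \bar Y_w(z) \equiv c$ forces $\bar Y(z) = c$, whence $S_\sht(z,z) = 0$ while $S_\shaj(z,z) \geq 0$ trivially.

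There is no deep obstacle here: once Theorem~\ref{clt} supplies both covariances in the shared form $H \circ S_* + \Psi$, the result is bookkeeping. The one point meriting care is keeping the two centerings distinct -- the {\htf} diagonal measures dispersion of the scaled totals $U_w(z) = \alpha_w \bar Y_w(z)$ about $\bar Y(z)$, whereas the Hajek diagonal measures the $\alpha_w^2$-weighted dispersion of the whole-plot means $\bar Y_w(z)$ about the same $\bar Y(z)$ -- since it is exactly this asymmetry that drives the sign reversals in the last two bullets.
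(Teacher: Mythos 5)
Your proposal is correct and follows essentially the same route as the paper: the identity follows from Theorem \ref{clt} because $\Psi$ is common to $\Sigma_\sht$ and $\Sigma_\shaj$ and $H(z,z) = p_a^{-1}-1$, and the bullets are then pure algebra on $S_\sht(z,z)$ and $S_\shaj(z,z)$. The only cosmetic difference is that the paper verifies the bullets via its pre-derived difference identity for $\sshaj(z,z')-\ssht(z,z')$ (equation (S1) in the supplement), whereas you expand the two definitional sums directly; the computations, including the centering $\bar Y(z)=c$ in both degenerate cases, coincide.
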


Intuitively, 
$\yhaj$ is asymptotically more efficient than $\yHT$ if the whole-plots have similar average potential outcomes; vice versa if the whole-plots have similar total potential outcomes. 
The whole-plot averages are often more homogeneous than the whole-plot totals in  realistic data generating processes.
This affords another angle for perceiving the advantage of $\yhaj$ over $\yHT$.

%
\subsection{Estimation of the sampling covariances}
%
The expressions of $\Sigma_* \ (*=\HT, \haj)$ involve unobserved potential outcomes. We need to estimate them for the Wald-type inference. 
Let 
\begin{align*}
 \hat S_\sht(z, z') &=   (W_a-1)^{-1}\sum_{w: A_w = a} \{ \alpha_w \hY_w(z) - \yHT\} \{ \alpha_w\hY_w(z') - \hY_\sht(z')\},\\
 \hat S_\shaj(z, z') &=   (W_a-1)^{-1}\sum_{w: A_w = a} \alpha_w^2 \{\hY_w(z) - \yhaj\}\{\hY_w(z') - \hY_\shaj(z')\}
\end{align*}
be the sample analogs of $S_\sht(z,z')$ and $S_\shaj(z,z')$
for $z = (ab)$ and $z'=(ab')$ that share the same level of factor A.
Split-plot randomization assigns all sub-plots within the same whole-plot to receive the same level of factor A,   and thus defies the definition of  $\hat S_\sht(z, z')$ and $\hat S_\shaj(z,z')$ for $z=(ab)$ and $z' = (a'b')$ with $a \neq a'$. 
We use
\begin{eqnarray}\label{Vs}
\hbV_*
= \left( \begin{array}{cc}  W_0^{-1} \left( \begin{array}{cc}    \hat S_*(00,00) &  \hat S_*(00,01)\\
\hat S_*(00,01)&  \hat S_*(01,01)
\end{array}\right) & 0_{2\times 2}\\
0_{2\times 2}&   W_1^{-1} \left( \begin{array}{cc}    \hat S_*(10,10) &  \hat S_*(10,11)\\
\hat S_*(10,11)&  \hat S_*(11,11)
\end{array}\right)
 \end{array} \right)
  \end{eqnarray}
to estimate the sampling covariance of  $\hY_*$ for $* = \HT, \haj$, respectively. 
\cite{MD} introduced $\hbV_\sht$, and we introduce $ \hat V_\shaj$.

\begin{theorem}\label{covEst}
{\pre}
\begina
W\hbV_* - \Sigma_* 
= S_* + \op   \qquad \text{for} \ \ * = \HT, \haj . 
\enda
\end{theorem}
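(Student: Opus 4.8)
The plan is to establish the matrix identity entrywise. Write $z=(ab)$ and $z'=(a'b')$. When $a\neq a'$ the corresponding entry of $\hbV_*$ vanishes by its block-diagonal construction, while the definitions give $(H)_{z,z'}=-1$ and $\Psi(z,z')=0$, so that $\Sigma_*(z,z')+S_*(z,z')=-S_*(z,z')+S_*(z,z')=0$ exactly; hence the off-block entries match with a zero remainder and need no probabilistic argument. It therefore suffices to treat $z=(ab)$ and $z'=(ab')$ sharing the common level $a$ of factor A and to show
\begin{equation*}
\hat S_*(z,z') = S_*(z,z') + p_a\,\Psi(z,z') + \op,
\end{equation*}
because then multiplying by the exact factor $W/W_a=p_a^{-1}$ and using $\Sigma_*(z,z')=(p_a^{-1}-1)S_*(z,z')+\Psi(z,z')$ within a diagonal block yields $W\hbV_*(z,z')=p_a^{-1}S_*(z,z')+\Psi(z,z')+\op=\Sigma_*(z,z')+S_*(z,z')+\op$, where $p_a^{-1}=O(1)$ by Condition \ref{asym}\eqref{paqb} absorbs the $\op$. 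Each scalar claim splits into a mean computation and a concentration argument.

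For the {\htf} estimator I would decompose $\hat U_w(z)=\alpha_w\hY_w(z)$ into its population value $U_w(z)$ and the stage (II) estimation error $e_w(z)=\alpha_w\{\hY_w(z)-\bar Y_w(z)\}$, and condition first on the stage (I) assignment and then on the stage (II) randomizations, which are mutually independent across whole-plots. Conditional on stage (I), the whole-plots with $A_w=a$ form a fixed set, $\hY_\sht(z)=W_a^{-1}\sum_{A_w=a}\hat U_w(z)$ is the sample mean, and $\hat S_\sht(z,z')$ is exactly the sample covariance of the pairs $\{(\hat U_w(z),\hat U_w(z')):A_w=a\}$, which are independent across $w$. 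The standard expected-sample-covariance identity for independent summands then splits $E_{\textup{II}}[\hat S_\sht(z,z')]$ into the sample covariance of the $U_w$'s plus $W_a^{-1}\sum_{A_w=a}\cov_{\textup{II}}(e_w(z),e_w(z'))$. Taking expectation over the stage (I) simple random sampling of $W_a$ whole-plots from $W$, the first piece is exactly unbiased for the finite-population covariance $S(z,z')$, while the Neyman identities for the complete randomization within whole-plot $w$ give $\cov_{\textup{II}}(e_w(z),e_w(z))=M_w^{-1}(q_{wb}^{-1}-1)S_w(z,z)$ and $\cov_{\textup{II}}(e_w((a0)),e_w((a1)))=-M_w^{-1}S_w((a0),(a1))$, whose stage (I) expectation contributes $W^{-1}\sum_w\cov_{\textup{II}}(e_w(z),e_w(z'))=p_a\Psi(z,z')$. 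This shows $\hat S_\sht(z,z')$ is exactly unbiased for $S(z,z')+p_a\Psi(z,z')$.

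The remaining, and principal, difficulty is the concentration step: bounding $\var\{\hat S_\sht(z,z')\}$ and showing it is $o(1)$. This is the variance of a sample covariance of products of estimated whole-plot quantities, so it brings in up to fourth-order moments of the potential outcomes together with fourth moments of the $\alpha_w$'s. Here I would invoke Condition \ref{asym}: part \eqref{alpha} ($\oa{2}=O(1)$, $\oa{4}=o(W)$) to rule out dominant whole-plots, the negligibility bound \eqref{po} to control the between-whole-plot contribution, and the fourth-moment conditions \eqref{po_3} to control the within-whole-plot terms arising from the stage (II) sample (co)variances. The delicate bookkeeping of combining the two nested randomizations while tracking these moment bounds is where the real work lies; the algebraic identification of the limit is routine once the variance is shown to vanish.

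For the Hajek estimator the same route applies after one extra reduction. Since $\hat S_\shaj(z,z')$ centers $\alpha_w\hY_w(z)$ at the random ratio $\hY_\shaj(z)$ rather than at $\bar Y(z)$, I would first replace $\hY_\shaj(z)$ and $\hY_\shaj(z')$ by their probability limits, using the consistency $\hY_\shaj(z)-\bar Y(z)=\op$ from Theorem \ref{clt} together with a Slutsky-type argument and the moment bounds of Condition \ref{asym} to dispose of the cross terms. The resulting oracle quantity $(W_a-1)^{-1}\sum_{A_w=a}\alpha_w^2\{\hY_w(z)-\bar Y(z)\}\{\hY_w(z')-\bar Y(z')\}$ has stage (II) mean $V_w(z)V_w(z')+\cov_{\textup{II}}(e_w(z),e_w(z'))$ with $V_w(z)=\alpha_w\{\bar Y_w(z)-\bar Y(z)\}$, so that its expectation converges to $S_\shaj(z,z')+p_a\Psi(z,z')$, now only asymptotically because of the $(W-1)/(W_a-1)$ factors, and the same concentration argument as above supplies the $\op$ remainder. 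Assembling the scalar statements over all same-$a$ pairs completes both cases.
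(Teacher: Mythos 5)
Your proposal is structurally the same proof as the paper's: reduce to entries sharing the level of factor A, compute the mean of $\hat S_*(z,z')$ by conditioning on the stage (I) assignment and applying the within-whole-plot Neyman (co)variance formulas, then argue concentration. The pieces you execute are correct — the off-block check that $\Sigma_*(z,z')+S_*(z,z')=0$ (which the paper leaves implicit), the exact unbiasedness $E\{\hat S_\sht(z,z')\}=S_\sht(z,z')+p_a\Psi(z,z')$, and the identification of the Hajek limit via the oracle quantity centered at $\bar Y(z)$. The differences from the paper are bookkeeping: the paper works with the uncentered cross-moment $\hat T_{z,z'}=W_a^{-1}\sum_{w:A_w=a}\hat U_w(z)\hat U_w(z')$ (Lemma \ref{lln}) and removes the centering $\hY_\sht(z)\hY_\sht(z')$ by the weak law (Lemma \ref{wlln}); for the Hajek case it expands around $\hY_\shaj(z)$ and invokes the identity \eqref{eq:ss_diff}, whereas you center at $\bar Y(z)$, which avoids that identity but requires $O_{\mathbb P}(1)$ control of terms like $(W_a-1)^{-1}\sum_{w:A_w=a}\alpha_w^2\{\hY_w(z)-\bar Y(z)\}$ — exactly what the paper obtains by applying Lemma \ref{lln} with one set of potential outcomes equal to one.

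The genuine gap is that the concentration step is announced but never performed, and it is the real content of the paper's argument (Lemmas \ref{bounded} and \ref{lln}). You must show that the variance of your sample covariance is $o(1)$; this is not routine, because the stage (I) indicators $1(A_w=a)$ arise from sampling without replacement and are therefore dependent across $w$. The paper's computation conditions on the $A_w$'s, expands $\cov(\hat T_{z,z'})$ over pairs $w\neq k$, uses the negative correlation of simple random sampling to discard terms, and bounds the remainder by $p_a\sum_w E\{\hat U_w^2(z)\hat U_w^2(z')\mid A_w=a\}$, which is $o(W^2)$ by Lemma \ref{bounded} using $q_{wb}\geq\epsilon$ from Condition \ref{asym}\eqref{paqb} together with the fourth-moment condition $W^{-2}\sum_w\alpha_w^4\,\oyfzw=o(1)$ from Condition \ref{asym}\eqref{po_3}. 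Your sketch names roughly the right conditions (though Condition \ref{asym}\eqref{po} is needed for the central limit theorem, not here), but until this variance bound — and its analogue for the Hajek cross terms, which you dispose of with an unproven $O_{\mathbb P}(1)$ claim — is actually carried out, what you have is a correct plan rather than a proof.
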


\citet[][Theorem 2]{MD} implied $
 E(\hbV_\sht) - \cov(\byht)  = W^{-1}  S_\sht \geq 0$ such that $\hbV_\sht$ is a conservative estimator of $\cov(\hyht)$ in finite samples, extending \citet{ZDMD} to possibly {\aasym} $2^2$ split-plot randomization.  
Theorem \ref{covEst} extends the discussion to   finite-population asymptotics, and establishes the asymptotic conservativeness of $\hV_*$ for estimating the true sampling covariance of $\hat{Y}_*$  for $* = \HT, \haj$.
This, together with Theorem \ref{clt}, justifies the Wald-type inference of $\tau = G\bY$  based on $\htau = G \hY_*$ with estimated covariance $G\hV_*G^\T$ for $* = \HT, \haj$.

%
\section{Reconciliation with  model-based inference}\label{reconcile}
%
\subsection{Overview}
Despite the nice theoretical properties of the design-based estimators, their reception among practitioners is at best lukewarm due to the dominance of the more convenient model-based counterparts. 
Can these convenient model-based estimators match their design-based counterparts and deliver inferences that are valid from the design-based perspective? The answer is affirmative with the aid of appropriate weighting schemes and cluster-robust covariances. 

Consider 
\begin{eqnarray}
\label{lm_t}
Y_{ws}  \sim  1(Z_{ws}=00) + 1(Z_{ws}=01) + 1(Z_{ws}=10) + 1(Z_{ws}=11)
\end{eqnarray}
as a standard formulation for studying $2^2$ factorial experiments. 
We propose two general strategies,  namely the {\it inverse probability weighting} and {\it aggregate  model}, to recover the Hajek and {\htf}  estimators of $\by$ directly as coefficients from \eqref{lm_t} and its variant, respectively,
and establish the appropriateness of the associated cluster-robust covariances for estimating the true sampling covariances.
The result reconciles the {\reg} estimators with their design-based counterparts free of any modeling assumptions, and ensures the validity of the resulting inferences regardless of how well the regression equations represent the true outcome generating process.

\subsection{Least-squares estimators from  unit and aggregate regressions}
We introduce in this subsection three {\fts}, denoted by ``ols'', ``wls'', and ``\ag'', respectively,  for estimating $\bY$ from least-squares regressions, and establish their respective design-based properties under split-plot randomization. 

First, the ``ols'' fitting scheme represents the dominant choice, and takes  the  \olss coefficients from \eqref{lm_t} to estimate $\bY$. Let $\bls$ denote the resulting estimator.

Next, inspired by the use of inverse probability weighting in constructing $\yHT$ and $\yhaj$,
the ``wls'' fitting scheme weights  $\Yws$ by the inverse of its realized inclusion probability, $ p_{ws}(\Zws)$, in the least-squares fit of \eqref{lm_t},  and estimates $\bY$ by the resulting \wlss coefficients. Let   $\tbeta_\wls$ denote the resulting estimator. 
 
Finally, recall $\hU_w(z) = \alpha_w\hY_w(z)$ as an intuitive estimator of the scaled whole-plot total potential outcome $U_w(z)$  for $w \in\Wz$. 
The restriction in the stage (I) randomization ensures that there are only two treatment levels, namely $z =   (A_w0)$ and $z= (A_w1) $, observed in whole-plot $w$, resulting in a total of $2W$ whole-plot level observations: $\{ \hU_w(A_wb): w = \ot{W}; \ b = 0,1\}$. 
We propose to fit 
\begin{eqnarray}
\label{lm_t_ag}
\hU_w(A_wb)\sim 1(A_wb=00) + 1(A_wb=01) + 1(A_wb=10) + 1(A_wb=11)
\end{eqnarray}
over $\wbs$ for these $2W$ observations as an aggregate analog of \eqref{lm_t}. 
 The ``\ag'' fitting scheme takes  the resulting \olss coefficients to estimate $\bY$. Let $\tbeta_\ag $ denote the resulting estimator. 
The idea of regression based on  aggregate data appeared before: 
 \citet{BF} discussed it in a two-stage experiment  for estimating treatment effects in the presence of interference; \citet{su} recommended it for analyzing the one-stage cluster-randomized experiment. 

This gives us three {\reg} estimators, $\{\tilde \beta_\dagger: \dagger = \ols, \wls, \ag\}$,  
summarized in Table \ref{tb1_Y}. As a convention, we use the tilde symbol ($\tilde{\color{white}\tau} $) to signify outputs from least-squares fits.
Proposition \ref{prop:ls} below states their numeric equivalence with $\bynm$, $\byhaj$, and $\byht$, respectively.

\begin{table}[t]
\caption{Regression estimators of $\bbY$ under the ``\textup{ols}'', ``\textup{wls}'', and ``\textup{ag}'' \fts, respectively, along with their design-based equivalents.
The design-based properties in the last two columns are with regard to general split-plot designs. 
All six estimators coincide under {\sym} split-plot designs. 
\label{tb1_Y}}
\begin{center}
\begin{tabular}{|c|cc|cc|cc|}\hline
fitting&&& regression &  design-based & &\\ 
scheme&model&weight&estimator&equivalent & unbiased  & consistent   \\\hline
$\ols$ & \multirow{2}{*}{\eqref{lm_t}} & 1 & $\bls$ & $\bynm$ &  no & no \\
 $\wls$ &  & $\{p_{ws}(\Zws)\}^{-1}$ & $\tbeta_\wls$& $\byhaj$ &  no & yes  \\\hline
 $\ag $ & \eqref{lm_t_ag}& 1 & $\blsu$ & $\byht$ &  yes & yes    \\\hline
\end{tabular}
\end{center}
\end{table}

\begin{proposition}\label{prop:ls} 
$
\tbeta_\ols =\bynm,
$
$
\tbeta_\wls =\byhaj, 
$ and $ \tbeta_\ag = \byht.$
\end{proposition}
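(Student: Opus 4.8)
The plan is to exploit the fact that all three regressions in \eqref{lm_t} and \eqref{lm_t_ag} are \emph{saturated}: the regressors are the four treatment-cell indicators, which have mutually disjoint supports and together cover every observation, so that in the cell-means parametrization the (weighted) least-squares objective separates cell by cell and each coefficient reduces to a (weighted) cell mean. Under the $2^2$ split-plot design we have $W_a \geq 2$ and $M_{wb} \geq 2$, so every cell $z \in \mt$ is nonempty in both the unit-level and the aggregate data; the design matrices therefore have full column rank and the coefficients are uniquely defined. The whole proof is then a matter of identifying these (weighted) cell means with $\bynm$, $\byhaj$, and $\byht$ via the definitions in Section \ref{est}.

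For \eqref{lm_t} with unit weights equal to $1$, I would write the objective as $\sumz \sum_{\ws \in \uz} (\Yws - \beta_z)^2$, using $\sumz 1(\Zws=z) = 1$ to rule out cross terms. Minimizing over each $\beta_z$ separately gives $\beta_z = |\uz|^{-1} \sum_{\ws\in\uz} \Yws = \ynm$, and stacking over $z$ yields $\bls = \bynm$. Replacing the weights by $p_{ws}^{-1}(\Zws)$ turns the objective into $\sumz \sum_{\ws\in\uz} p_{ws}^{-1}(z)(\Yws - \beta_z)^2$, whose cellwise minimizer is the weighted mean $\beta_z = \{\sum_{\ws\in\uz} p_{ws}^{-1}(z)\,\Yws\}/\{\sum_{\ws\in\uz} p_{ws}^{-1}(z)\}$. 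Recognizing the numerator as $N\,\yHT$ and the denominator as $N\,\oneHT$ from their definitions gives $\beta_z = \yHT/\oneHT = \yhaj$, i.e. $\tbeta_\wls = \byhaj$.

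For the aggregate regression \eqref{lm_t_ag}, the identical decoupling applies to the $2W$ observations $\{\hU_w(A_wb)\}$: the indicator $1(A_wb = z)$ selects exactly the observations with $w \in \Wz$, of which there are $|\Wz| = W_a$ for $z = (ab)$. The cellwise minimizer is therefore $\beta_z = W_a^{-1}\sumwz \hU_w(z)$, which equals $\yHT$ by the second identity in \eqref{hat_Y}, giving $\blsu = \byht$.

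There is no deep obstacle here; the statement is a direct consequence of the saturated (orthogonal, disjoint-support) structure of the three designs. The only points requiring genuine care are (i) checking that the four indicators have disjoint and exhaustive supports so the objective separates with no cross terms, and (ii) matching the weighted cell mean in the \wlsss\ step with the ratio $\yHT/\oneHT$ that defines the Hajek estimator, together with the two-stage identity \eqref{hat_Y} in the aggregate step. Both reduce to bookkeeping against the definitions in Section \ref{est}.
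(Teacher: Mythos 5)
Your proof is correct. The paper proves the same identities by writing the three estimators in matrix form, $\tbeta_\ols = (\dt D)^{-1}\dt Y$, $\tbeta_\wls = (\dt \bp D)^{-1}\dt\bp Y$, $\tbeta_\ag = (D_\ag^\T D_\ag)^{-1}D_\ag^\T U$, and then invoking the precomputed identities in \eqref{matrix0}, e.g.\ $\dt\bp D = \hat{1}_\sht$ and $\dt\bp Y = \byht$, so that $\tbeta_\wls = \hat{1}_\sht^{-1}\byht = \byhaj$. You instead decompose the (weighted) least-squares objective cell by cell, using the disjoint and exhaustive supports of the four treatment indicators, and identify each coefficient directly as a weighted cell mean; your WLS step matches that mean with $\yHT/\oneHT = \yhaj$ via the definitions, and your aggregate step uses \eqref{hat_Y}, exactly as needed. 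The two arguments encode the same computation—your separability observation is precisely why the cross-product matrices in \eqref{matrix0} are diagonal—but they are packaged differently, and each packaging buys something. The paper's matrix route builds reusable infrastructure: the same matrices $D$, $\bp$, $D_\ag$ and the identities \eqref{matrix0} are recycled almost verbatim in the proof of Theorem \ref{thm_vHats} for the cluster-robust covariances, so stating them once amortizes the work. Your route is self-contained, avoids any matrix inversion, and makes the mechanism (saturation) transparent; you also make explicit the nonemptiness/full-rank point ($W_a \geq 2$, $M_{wb} \geq 2$ ensure every cell is occupied, so the coefficients are uniquely defined), which the paper leaves implicit.
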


Proposition \ref{prop:ls} is numeric and shows the utility of inverse probability weighting and aggregate model in reproducing the Hajek and {\htf} estimators from least squares, respectively.
The correspondence between the three \fts, $\{ \ols, \wls, \ag \}$, and the three estimation schemes, $\{\nm, \haj, \HT\}$, runs through the following discussion and reconciles the model-based and design-based perspectives. 

\begin{remark}\label{rmk:alternative_weights}
Alternatively, the least-squares fit of 
$\alpha_wY_{ws}  \sim  1(Z_{ws}=00) + 1(Z_{ws}=01) + 1(Z_{ws}=10) + 1(Z_{ws}=11)$
with  weights $\alpha_w^{-1}\{p_{ws}(\Zws)\}^{-1}$ recovers the {\htf} estimator from  scaled unit-level outcomes. 
We exclude it from the discussion due to the unnaturalness in both its weighting and outcome transformation schemes. \cite{luke} reviewed an alternative \textsc{wls}  scheme in the context of stratified experiments, which corresponds to the least-squares fit of \eqref{lm_t} with weights $ N_{\Zws} /  p_{ws}(\Zws) $.
Lemma S13 in the Supplementary Material ensures that this slightly different weighting scheme leads to identical regression coefficients and cluster-robust covariance as those under the ``\wls'' fitting scheme. 
\end{remark}

A key virtue of the regression-based approach is its ability to deliver also estimators of the standard errors via the same least-squares fit. 
Of interest is how these convenient covariance estimators approximate the true sampling covariances from the design-based perspective. 
Denote by  $\tilde V_\dg $ the classic cluster-robust covariance for  $\tbeta_\dg \ (\dg = \ols, \wls, \ag  )$ from the same least-squares fit. 
Theorem \ref{thm_vHats} below shows the asymptotic equivalence of $\tilde V_\dg \ (\dgs)$  with $\hvhaj$ and  $\hbV_\sht$, respectively. 

\begin{theorem}\label{thm_vHats}
Define $\hat{1}_\sht =\diag\{\oneHT\}_{z\in\mt}$. Then 
$$
\tbV_\wls 
=
 \hat{1}^{-1}_\sht\diag\left(  \frac{W_0-1}{W_0}I_2, \frac{W_1-1}{W_1}I_2\right) \hvhaj \,  \hat{1}^{-1}_\sht,\quad 
 \tbV_\ag 
= \diag\left(  \frac{W_0-1}{W_0}I_2, \frac{W_1-1}{W_1}I_2\right)  \hbV_\sht . 
$$
Further assume Condition \ref{asym}. Then $\hat{1}_\sht = I_{4} + \op$ and thus 
$$W(\tilde V_\wls - \hV_\shaj)= \op,\qquad  W(\tilde V_\ag - \hV_\sht)= \op.$$
\end{theorem}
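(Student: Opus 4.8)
The plan is to separate the two claims: the two displayed identities are \emph{exact} and purely algebraic, while the two $\op$ statements follow from a short perturbation argument once $\hat 1_\sht = I_4 + \op$ is established. The structural fact that organizes the whole computation is that the regressors in \eqref{lm_t} and \eqref{lm_t_ag} are the four mutually exclusive treatment indicators, so each design Gram matrix is diagonal; and since the stage (I) restriction allows only the two treatments $(A_w0)$ and $(A_w1)$ to appear inside whole-plot $w$, every per-whole-plot contribution to the cluster-robust ``meat'' is supported on the $2\times2$ block indexed by the shared level $a$ of factor A. This is exactly why $\tbV_\wls$ and $\tbV_\ag$ inherit the block-diagonal form of $\hvhaj$ and $\hbV_\sht$ in \eqref{Vs}.

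For the \wls identity I would write the classic cluster-robust sandwich of \cite{LZ} (with no finite-sample correction)
\[
\tbV_\wls = (X^\T\Omega X)^{-1}\Big(\sum_{w=1}^W X_w^\T\Omega_w\hat\epsilon_w\hat\epsilon_w^\T\Omega_w X_w\Big)(X^\T\Omega X)^{-1},
\]
with $\Omega=\diag\{p_{ws}^{-1}(Z_{ws})\}$, whole-plot blocks $X_w,\Omega_w,\hat\epsilon_w$, and residuals $\hat\epsilon_{ws}=Y_{ws}-\hY_\shaj(Z_{ws})$ by Proposition \ref{prop:ls}. The diagonal bread has $z$-entry $\sum_{ws\in\uz}p_{ws}^{-1}(z)=N\,\oneHT$, so $(X^\T\Omega X)^{-1}=N^{-1}\hat 1_\sht^{-1}$. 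For the meat, the within-whole-plot weighted residual sum collapses via $p_{ws}(z)=p_aq_{wb}$, $q_{wb}=M_{wb}/M_w$, $M_w=\alpha_w\bar M$ and $p_a=W_a/W$ to $\sum_{s:Z_{ws}=z}p_{ws}^{-1}(z)\{Y_{ws}-\hY_\shaj(z)\}=(N/W_a)\,\alpha_w\{\hY_w(z)-\hY_\shaj(z)\}$; the cancellation $q_{wb}^{-1}M_{wb}=M_w$ is precisely what turns unit-level sums into whole-plot quantities. Summing outer products over $w$ with $A_w=a$ and matching the definition of $\hat S_\shaj$ identifies the $(z,z')$ entry of the level-$a$ meat block as $(N^2/W_a^2)(W_a-1)\hat S_\shaj(z,z')$; sandwiching by $N^{-1}\hat 1_\sht^{-1}$ and comparing with \eqref{Vs} gives $\tbV_\wls=\hat 1_\sht^{-1}D\,\hvhaj\,\hat 1_\sht^{-1}$ with $D=\diag\big(\frac{W_0-1}{W_0}I_2,\frac{W_1-1}{W_1}I_2\big)$, as claimed.

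The \ag identity is the same computation with the whole-plot as cluster: \eqref{lm_t_ag} has bread $\diag(W_0^{-1},W_0^{-1},W_1^{-1},W_1^{-1})$, unweighted residuals $\alpha_w\hY_w(z)-\hY_\sht(z)$, and level-$a$ meat block $(W_a-1)\hat S_\sht$, so that $\tbV_\ag=D\,\hbV_\sht$ directly. For the asymptotics I would first record the clean identity $\oneHT=W_a^{-1}\sum_{w:A_w=a}\alpha_w$ (the within-whole-plot factor is exactly $1$), so $\oneHT$ is the mean of $\alpha_w$ over a simple random sample of $W_a$ of the $W$ whole-plots; since $\bar\alpha=1$ it is unbiased for $1$ with variance $O(S_\alpha^2/W_a)$, where $S_\alpha^2=\{W/(W-1)\}(\oa{2}-1)=O(1)$ by Condition \ref{asym}\eqref{alpha} and $W_a=Wp_a\to\infty$ by \eqref{paqb}. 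Hence $\hat 1_\sht=I_4+\op$ and $\hat 1_\sht^{-1}=I_4+\op$. Writing $D=I_4+o(1)$, expanding both sandwiched products, and using $W\hvhaj=O_\mathbb{P}(1)$ and $W\hbV_\sht=O_\mathbb{P}(1)$ from Theorem \ref{covEst} (finite limits of $\Sigma_*,S_*$ under Condition \ref{asym}\eqref{po_1}), every remaining term pairs an $\op$ or $o(1)$ factor with an $O_\mathbb{P}(1)$ factor, yielding $W(\tbV_\wls-\hvhaj)=\op$ and $W(\tbV_\ag-\hbV_\sht)=\op$.

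I expect the main obstacle to be the bookkeeping in the \wls meat: tracking the three nested scalings $p_a$, $q_{wb}$, and $\alpha_w$ together with the $\hat 1_\sht$ normalization so that the factor $D$ and the two flanking $\hat 1_\sht^{-1}$ land in exactly the stated positions. In particular, the factor $D$ appears precisely because the uncorrected meat carries a $(W_a-1)$ against the $(W_a-1)^{-1}$ normalization built into $\hat S_\shaj$, and one must verify that the within-cluster residual weights reproduce $\hat S_\shaj$ rather than a Horvitz--Thompson-normalized variant. Since $\hat 1_\sht$ and $D$ are both diagonal they commute, so the ordering in the statement is immaterial; the aggregate identity and the final perturbation expansion are routine by comparison.
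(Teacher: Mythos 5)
Your proposal is correct and takes essentially the same route as the paper's proof: both derive the two displayed identities exactly by collapsing the unit-level weighted residual sums into whole-plot quantities via the cancellation $q_{wb}^{-1}M_{wb}=M_w$, identify the bread as $\hat 1_\sht$ (resp.\ $\diag(W_0^{-1},W_1^{-1})\otimes I_2$) and the level-$a$ meat block as $(W_a-1)$ times $\hat S_\shaj$ (resp.\ $\hat S_\sht$) scaled by the appropriate constants, and then deduce the asymptotic equivalence from $\hat 1_\sht = I_4+\op$, $\diag\bigl(\frac{W_0-1}{W_0}I_2,\frac{W_1-1}{W_1}I_2\bigr) = I_4+o(1)$, and $W\hV_* = O_{\mathbb{P}}(1)$. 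The only cosmetic differences are that you establish $\hat 1_\sht = I_4 + \op$ by the direct observation that $\oneHT$ equals the stage-(I) sample mean of the $\alpha_w$'s (the paper instead invokes its weak law, Lemma S4, whose proof specializes to exactly this computation), and that you make explicit the boundedness $W\hV_* = O_{\mathbb{P}}(1)$ from Theorem \ref{covEst} that the paper leaves implicit.
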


With $\hat{1}_\sht = I_4 + \op$, 
the asymptotic equivalence between the cluster-robust covariances and their design-based counterparts follows directly from the numeric correspondence, and ensures the asymptotic  conservativeness of $\tilde V_\dg$ for estimating the true sampling covariance of $\tbeta_\dg$ for $\dgs$.
This, together with  Proposition \ref{prop:ls}, justifies the Wald-type inference of $\tau = G \bY$ based on point estimator $\tbt_\dg = G\tbeta_\dg$ and estimated covariance $G\tilde V_\dg G^\T$ for $\dg = \wls, \ag $. 
Importantly, the cluster-robust covariance is necessary for valid regression-based inferences because the heteroskedasticity-robust covariance can be asymptotically anti-conservative.

\begin{remark}\label{rmk:hc2}
A number of other options exist for constructing cluster-robust covariances from linear models \citep{LZ, CR}. 
In particular, the HC2 variant of $\tbV_\ag$ recovers $\hbV_\sht$ exactly in finite samples \citep{hc2, BF,imai2020causal}.
The difference between the classic and HC2 estimators vanishes as the sample size goes to infinity. 
 We relegate the details to Section S4.2 of the Supplementary Material. 
 With small $W$, \cite{hc2}, \cite{CR}, \cite{tipton}, and \cite{mac} proposed various confidence intervals to achieve better finite-sample coverage properties. They are likely to improve the classic cluster-robust covariance and its HC2 variant under the design-based framework as well. We leave this to future research.  
 Lastly, $\hat V_\sht$ is unbiased for estimating $\cov(\hy_\sht)$ if $\alpha_w\bar Y_w(z)$ is identical over $w = \ot{W}$ for all $z \in \mt$. 
 Modification  to $\hbV_\sht$ is proposed by \citet{MD} that ensures unbiased estimation of $\cov(\hy_\sht)$ under a different additivity assumption. 
Alternative, likely less common, model specification is needed to recover this variant via least-squares fit. 
\end{remark}

%
\section{Regression-based covariate adjustment}\label{sec:ca}
%
\subsection{Background: covariate adjustment under complete randomization}\label{sec:review}
The regression formulation offers a natural way to incorporate covariates  to further improve the estimation efficiency. 
We briefly review the theory of covariate adjustment under complete randomization  to motivate our extension to  split-plot randomization. 

Consider a treatment-control experiment with two levels of intervention, $\mt=\{0,1\}$, and a study population of $N$ units with potential outcomes $\{Y_i(0), Y_i(1):  i = 1, \dots, N\}$. 
The finite-population average treatment effect equals $\tau = \by(1) - \by(0)$, where $\by(z) = N^{-1}\sum_{i=1}^N Y_i(z)$. 

Denote by $Z_i$ the treatment indicator of unit $i$ under complete randomization. 
The difference-in-means estimator  is unbiased for $\tau$, and equals the coefficient of $Z_i$ from the \olss fit of $Y_i \sim 1+Z_i$. 
Given covariates $ x_i = (x_{i1}, \ldots, x_{iJ})^\T$ for unit $i$ $(i=\ot{N})$, 
\citet{Fisher35} proposed to use  the coefficient of $Z_i$ from the \olss fit of $Y_i \sim 1 + Z_i + x_i$  to estimate $\tau$. 
\cite{Freedman08a} criticized its potential efficiency loss compared to the difference-in-means estimator. 
\citet{Lin13} proposed an improved estimator as 
the coefficient of $Z_i$ from the \olss fit of 
$ Y_i \sim  1 + Z_i +  (x_i - \bar x) + Z_i  (x_i - \bar x)$ with centered covariates and treatment-covariates interactions, and proved that it is at least as efficient as the difference-in-means and \cite{Fisher35}'s estimators asymptotically. 
We call \citet{Fisher35}'s regression the {\it additive} specification and \citet{Lin13}'s regression the {\it fully-interacted} specification hence when no confusion would arise.

We extend below their results to split-plot randomization. 
We will focus on four covariate-adjusted regressions depending on whether we use the unit or aggregate data to form the regression and whether we use the additive or fully-interacted specification for covariate adjustment. 
We will study their design-based properties and compare their efficiency gains over the unadjusted counterparts. 
Given $\htau_1$ and $\htau_2$ as two consistent and asymptotically Normally distributed estimators for $\tau$, we say $\hat\tau_1$ is asymptotically more efficient than $\hat\tau_2$, or equivalently, $\htau_1$ guarantees gains in asymptotic efficiency over $\htau_2$,  if $\cov_\infty(\htau_1) \leq \cov_\infty(\htau_2)$ for all possible values of $\{Y_{ws}(z): z \in\mt\}_{ws\in\ms}$, and the strict inequality holds for at least one set of $\{Y_{ws}(z): z \in\mt\}_{ws\in\ms}$.

%
\subsection{Additive regressions}\label{sec:fisher}
%

Let $x_{ws} = (x_{ws[1]}, \dots, x_{ws[J]})^\T$ be the $J\times 1$  covariate vector for sub-plot $\ws$.
Adding $x_{ws}$ to \eqref{lm_t} yields 
\begin{eqnarray}
\label{dlm4_f}
Y_{ws} \sim \sumz  1( \Zws = z) + x_{ws} \sim d_{ws} +x_{ws}
\end{eqnarray}
as the {\it additive unit regression} over $ \ws \in \ms $ with $d_{ws} =  ( 1(\Zws=00),1(\Zws=01),1(\Zws=10),1(\Zws=11) )^\T$. 
Let $\tbeta_\of$ and $\tbeta_\wf$ denote  the coefficient vectors of $d_\ws$ from the \olss and \wlss fits of \eqref{dlm4_f}, respectively; we use the subscript ``\fisher'' to signify \citet{Fisher35}. 

Let $\hbv_w(z) = \alpha_w \hbx_w(z)$ be the covariate analog of $\hU_w(z) = \alpha_w\hY_w(z)$ with  $\hbx_w(z) = M_{wb}^{-1}\sumsz  x_{ws}$. 
Adding $\hvwab$ to \eqref{lm_t_ag} defines 
\begin{eqnarray}
\hU_w(A_wb) 
\sim
\sum_{z\in\mt} 1(A_wb=z)+  \hvwab \sim d_w(A_wb) + \hvwab
\label{dlm6_f}
\end{eqnarray}
as the {\it additive aggregate regression} over $\wbs$ with $  d_w(z) =   (1(z=00), 1(z=01), 1(z=10), 1(z=11) )^\T$ for $z = (A_w0), (A_w1)$.  
%
%
Let $\tbeta_\tf$ denote  the coefficient vector of $\dwab $ from the \olss fit of \eqref{dlm6_f}. 
This, together with the above $\tbeta_{\ols,\fisher}$ and $\tbeta_{\wls,\fisher}$, defines three covariate-adjusted regression estimators of $\bbY$.

\begin{remark}\label{rmk:covariate}
Oftentimes we may want to include some whole-plot level attributes to both the unit and aggregate regressions; examples include the weights of the mice in Example \ref{ex:meon} and the populations of the subdistricts in Example \ref{ex:jpe}. 
The definition of $x_{ws}$ is flexible enough to accommodate both unit and whole-plot level covariates.
In particular, given $c_w$ as a whole-plot level attribute that is prognostic to the unit outcome, $Y_{ws}$, we can simply let $x_{ws[j]} = c_{w}$ for $s = \ot{M_w}$ to include it as the $j$th covariate in the unit regression. 
The definition of $\hvwab$ then ensures that it enters into the aggregate regression as $\alpha_w c_w$ for all $w$ and $b$. 
\end{remark}

We now derive the design-based properties of the above three covariate-adjusted regression estimators, $\tilde\beta_{\dagger, \fisher} \ (\dagger = \ols, \wls,\ag)$. 
To simplify the presentation, 
we center the covariates to have $\bbx = N^{-1}\sumws x_{ws} = 0_J$. 
Let $\hx_*(z) \ (* = \nm, \HT, \haj)$ be the sample-mean, {\htf}, and Hajek estimators of $\bar x$ based on units under treatment $z$.
Let $\hat x_* =  ( \hbx_*(00), \hbx_*(01), \hbx_*(10), \hbx_*(11)  )^\T$ be the $4\times J$  matrix 
of $\{\hbx_*(z)\}_{z\in\mt}$ for $* = \nm, \HT, \haj$. 
Let  $\tg_\ols$, $\tg_\wls$, and $\tg_\ag$ be the coefficient  vectors  of $x_{ws}$ or  $\hvwab$ from the respective least-squares fits of \eqref{dlm4_f} and \eqref{dlm6_f}. 
Proposition \ref{prop:ols_x_fisher} below parallels Proposition \ref{prop:ls}, and states the numeric correspondence between $\tbeta_{\dagger,\fisher} \ (\dagger = \ols, \wls,\ag)$ and $\hat Y_* \ (* = \sm, \HT, \haj)$.

\begin{proposition}\label{prop:ols_x_fisher}
$
\tbeta_\of
=\bynm-  \hbx_\snm\tgo$, 
$\tbeta_\wf 
= \byhaj -\hbx_\shaj\tgw$, 
and
$
\tbeta_\tf 
= \byht - \hbx_\sht\tgu$.
\end{proposition}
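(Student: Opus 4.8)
The plan is to recognize Proposition~\ref{prop:ols_x_fisher} as a partitioned-least-squares (Frisch--Waugh--Lovell) identity, rendered especially transparent by the one-hot structure of the treatment indicators. In each of the fits \eqref{dlm4_f} and \eqref{dlm6_f}, the regressor block $d_{ws}$ (resp.\ $\dwab$) is a partition of unity: for every observation exactly one coordinate equals $1$, indexed by $z = \Zws$ (resp.\ $z = (A_wb)$). Hence the Gram block of these indicators is diagonal in the relevant (possibly weighted) inner product, and the indicator coefficients decouple treatment by treatment. The device I would use throughout is the \emph{residualized outcome}: writing $(\tbeta_{\dg,\fisher}, \tg_\dg)$ for the coefficients of a given fit, set $r_{ws} = Y_{ws} - x_{ws}^\T \tg_\dg$ at the jointly optimal $\tg_\dg$ for the two unit regressions, and $r_w(z) = \hU_w(z) - \hbv_w(z)^\T \tg_\ag$ for the aggregate regression.

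First I would write the normal equations for the indicator block with $\tg_\dg$ held at its joint optimum. For the unit fits these read $\sum_{ws} \omega_{ws}\, d_{ws}\, (r_{ws} - d_{ws}^\T \tbeta_{\dg,\fisher}) = 0$, with weights $\omega_{ws} = 1$ for \olss and $\omega_{ws} = p_{ws}(\Zws)^{-1}$ for \wlss. These are exactly the indicator normal equations of the unadjusted fit $r_{ws} \sim d_{ws}$ of \eqref{lm_t}. By the numeric correspondence in Proposition~\ref{prop:ls}, which holds for an arbitrary outcome vector and in particular for $r_{ws}$ in place of $Y_{ws}$, the $z$-coordinate of $\tbeta_\of$ (resp.\ $\tbeta_\wf$) therefore equals the sample-mean estimator $\hY_\snm$ (resp.\ the Hajek estimator $\hY_\shaj$) evaluated on $r_{ws}$. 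The aggregate fit is identical in form: its indicator normal equations coincide with those of the unadjusted aggregate fit \eqref{lm_t_ag} run on $r_w(z)$, so by Proposition~\ref{prop:ls} the $z$-coordinate of $\tbeta_\tf$ is $\byht$ evaluated on $r_w(z)$, which by \eqref{hat_Y} equals $W_a^{-1} \sum_{w:A_w=a} r_w(z)$.

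The final step is to split these residualized estimators by linearity. Each of the sample-mean, Hajek, and {\htf} estimators is a linear functional of its outcome argument, so the estimator of $r$ separates into the estimator of $Y$ minus the estimator of $x_{ws}^\T \tg_\dg$; since $\tg_\dg$ is constant, applying the same functional coordinatewise to the $J$ covariate columns yields exactly $\hbx_\snm(z)$, $\hbx_\shaj(z)$, and $\hbx_\sht(z)$. For the aggregate scheme this uses that $\hbv_w(z) = \alpha_w \hbx_w(z)$ is the covariate analog of $\hU_w(z) = \alpha_w \hY_w(z)$, so that $W_a^{-1}\sum_{w:A_w=a} \hbv_w(z) = \hbx_\sht(z)$. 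Consequently the $z$-coordinate of each $\tbeta_{\dg,\fisher}$ is $\hY_*(z) - \hbx_*(z)^\T \tg_\dg$ for the matching $*$, and stacking over $z \in \mt$ delivers the three stated identities.

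The only point requiring genuine care is the aggregate case: one must verify that the aggregate \olss fit reproduces the {\htf} form on the residualized pseudo-outcome, which hinges jointly on $\dwab$ being one-hot over the $2W$ whole-plot observations and on $\hbv_w$ entering the regression as precisely the scaled whole-plot covariate total matching $\hU_w$. Everything else is routine bookkeeping, so I would organize the write-up around the single residualized-outcome identity rather than recomputing each scheme separately.
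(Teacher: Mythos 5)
Your proof is correct and takes essentially the same route as the paper's: both arguments fix the covariate coefficient at its joint optimum and read off the treatment-indicator block of the normal equations, which is exactly what the paper does by computing the Gram blocks via \eqref{matrix0} and \eqref{matrix0_x} and ``comparing the first row'' of $G_1(\tbeta_{\dg,\fisher}^\T,\tg_\dg^\T)^\T = G_2$. Your residualized-outcome framing, which reduces the claim to Proposition \ref{prop:ls} applied to $Y_{ws}-x_{ws}^\T\tg_\dg$ (resp.\ $\hU_w(z)-\hbv_w(z)^\T\tg_\ag$) together with linearity of the sample-mean, Hajek, and Horvitz--Thompson functionals, is a clean repackaging of that same computation rather than a different argument.
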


Recall that $\bynm = \tbeta_\ols $, $\byhaj=\tbeta_\wls $, and $\byht = \tbeta_\ag  $ from Proposition \ref{prop:ls}.
Proposition \ref{prop:ols_x_fisher} links the covariate-adjusted $\tbeta_\df$'s back to the unadjusted $\tbeta_\dg$'s, 
and establishes $\tbeta_\of$, $\tbeta_\wf$, and $\tbeta_\tf$ as the sample-mean, Hajek, and {\htf} estimators based on the covariate-adjusted outcomes  $Y_{ws} - x_{ws}^\T\tg_\dg $ for $\dgss$, respectively. 
The correspondence between the {\fts} $\{ \ols, \wls, \ag \}$  and the  design-based estimation schemes  $\{\nm, \haj, \HT\}$  is preserved in these covariate-adjusted fits as well.

The unbiasedness of the {\htf} estimator is in general lost after covariate adjustment due to the correlation between $\hbx_\sht $ and $\tgu$ in $\tbeta_\tf  = \byht - \hbx_\sht\tgu$. 
The consistency of $\hY_*$ and $\hbx_*(z)$, where $* = \HT, \haj$, for estimating $\bY$ and $\bar x$ under mild conditions, on the other hand, ensures the consistency of $\tbeta_\wf $ and $\tbeta_\tf $ so long as $\tg_\wls$ and $\tg_\ag $ have finite probability limits. We formalize the intuition in Theorem \ref{thm_fisher} below. 
The sample-mean analog $\tbeta_\of$, on the other hand, is in general neither unbiased nor  consistent  unless the design is {\sym}.  
We thus deprioritize it in the following discussion  but outline its theoretical guarantees under {\sym} designs in Section S4.1 of the Supplementary Material.

Let $\bar x_w = M_w^{-1}\sum_{s=1}^{M_w} x_{ws}$ and $\oxfw = M^{-1}_w \sums \| x_{ws} \|_4^4$ be the whole-plot average  and uncentered fourth norm of $(x_{ws})_{s=1}^{M_w}$, respectively.
Let 
$\sx$, $\swx$, $\sxyz$, and $\sxyzw$ be the scaled between and within whole-plot  covariances of $(x_{ws})_{\ws\in\ms}$ and $\{x_{ws}, Y_{ws}(z)\}_{\ws\in\ms}$, respectively. 
Let $\sxypz $ be the scaled between whole-plot  covariance of $(x_{ws})_{\ws\in\ms}$ with the centered potential outcomes $\{Y_{ws}'(z)\}_{\ws\in\ms}$; the corresponding scaled within whole-plot covariance coincides with $\sxyzw$. 
To avoid too many formulas in the main paper, we relegate the explicit forms of $\sx$, $\swx$, $\sxyz$, $\sxyzw$, and $\sxypz$ to Section S3.1 of the  Supplementary Material.

Recall $\Psi(z,z)=W^{-1} \sumw M_w^{-1} H_w(z,z)\sw(z,z)$ as the $(z,z)$th element of $\Psi$ from Lemma \ref{Vmat}. Let
$\Psi_{xx}(z,z)$  and $\Psi_{xY}(z,z)$ be the analogs after replacing $\sw(z,z)$ with $\sxxw$ and $\sxyzw$, respectively. 
Let $
\Qxx = (N-1)^{-1}\sumws x_{ws}x_{ws}^\T$ and $ \qxyz  = (N-1)^{-1}\sumws x_{ws}\yws(z)$ be the 
finite-population covariances of $(x_{ws})_{\ws\in\ms}$ and $\{x_{ws}, Y_{ws}(z)\}_{\ws\in\ms}$, respectively.

\begin{condition}\label{asym2}
As $W$ goes to infinity,  
\begin{enumerate}[(i)]
\item\label{po_xy}  $\bsx$, $\Psi_{xx}(z,z) $,  $\sxyz  $, $\sxypz $, $\Psi_{xY}(z,z)$, $Q_{xx}$, and $\qxyz$ have finite limits for all $z \in \mt$;
\item\label{po_x} $\max_{w=\ot{W}} \|\alpha_w\bbx_w\|^2_2/W=o(1)$;
\item  \label{po_x4}  
$W^{-1}\sumw \alpha^2_w   \oxfw  = O(1); \ W^{-2}\sumw \alpha^4_w    \oxfw  = o(1)$.
\end{enumerate}
\end{condition}

Condition \ref{asym2} is the analog of Condition \ref{asym}\eqref{po_1}--\eqref{po_3} for the covariates as potential outcomes unaffected by the treatment. 
The additional requirement on $\Qxx$ and $\qxyz$ ensures the convergence of $\tg_{\wls}$ under split-plot randomization.

Let  $\gamma_\dg $ be the finite probability limit of $\tg_\dg $ for $\dgs  $ under split-plot randomization and Conditions \ref{asym}--\ref{asym2}; we relegate the proof of their existence and explicit forms to Lemma S12 in the Supplementary Material.
Let $S_\wf$ and $\Sigma_\wf$ be the analogs of $\sshaj$  and  $\Sigma_\shaj$  defined on the adjusted potential outcomes $Y_{ws}(z; \gamma_\wls) = \Yws(z) - x_{ws}^\T \gamma_\wls$. 
Let  $S_\tf$  and $\Sigma_\tf$ be the analogs of $\ssht$  and $\Sigma_\sht$  defined on the adjusted potential outcomes $Y_{ws}(z; \gamma_\ag ) = \Yws(z) - x_{ws}^\T \gamma_\ag $. 
Let $\tbV_\df $ be the cluster-robust covariance of $\tbeta_\df $ for $\dgs $.

\begin{theorem}\label{thm_fisher}
{\prex}
\begina 
\sqrt{W} \big(\tbeta_\df   -\bbY  \big) \rightsquigarrow \mathcal{N}(0, \Sigma_\df  ), \qquad W  \tbV_\df  -\Sigma_\df   =S_\df + \op
\enda
with $S_\df  \geq 0$ for $\dgs$.
\end{theorem}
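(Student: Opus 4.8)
The plan is to reduce everything to the unadjusted Theorems~\ref{clt}, \ref{covEst}, and \ref{thm_vHats} evaluated on the \emph{adjusted} potential outcomes $Y_{ws}(z;\gamma_\dg) = \Yws(z) - x_{ws}^\T\gamma_\dg$, treating the error from estimating $\gamma_\dg$ as asymptotically negligible. By Proposition~\ref{prop:ols_x_fisher}, $\tbeta_\tf = \byht - \hbx_\sht\tgu$ and $\tbeta_\wf = \byhaj - \hbx_\shaj\tgw$. Writing out the $\ag$ case (the $\wls$ case is identical with $\sht$ replaced by $\shaj$),
\begin{eqnarray*}
\tbeta_\tf - \bbY = \big(\byht - \hbx_\sht\gamma_\ag - \bbY\big) - \hbx_\sht(\tgu - \gamma_\ag).
\end{eqnarray*}
Since the {\htf} operator is linear and $\bbx = 0_J$, the first bracket is exactly the {\htf} estimator of the adjusted outcomes $Y_{ws}(z;\gamma_\ag)$ centered at their finite-population average. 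First I would show the second term is $o_{\mathbb{P}}(W^{-1/2})$: Lemma~S12 supplies $\tgu - \gamma_\ag = \op$, while applying Theorem~\ref{clt} to the covariates (viewed as treatment-unaffected potential outcomes, legitimate under Condition~\ref{asym2}) gives $\hbx_\sht = O_{\mathbb{P}}(W^{-1/2})$ because $\hbx_\sht$ estimates $\bbx = 0_J$; the product is then $\op$ after scaling by $\sqrt W$, irrespective of the correlation between $\hbx_\sht$ and $\tgu$.

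Next I would verify that the adjusted outcomes inherit Condition~\ref{asym}. Because $Y_{ws}(z;\gamma_\dg)$ is a fixed linear combination of $\Yws(z)$ and $x_{ws}$, parts \eqref{po_1}, \eqref{po}, and \eqref{po_3} of Condition~\ref{asym} for the adjusted outcomes follow by combining the corresponding parts of Condition~\ref{asym} with Condition~\ref{asym2}; in particular the fourth-moment bounds combine through the triangle and Cauchy--Schwarz inequalities using $\oyfzw$ and $\oxfw$. Theorem~\ref{clt} then yields $\sqrt{W}(\byht - \hbx_\sht\gamma_\ag - \bbY) \rightsquigarrow \mathcal{N}(0,\Sigma_\tf)$ with $\Sigma_\tf$ the analog of $\Sigma_\sht$ on the adjusted outcomes, and the parallel Hajek statement gives $\Sigma_\wf$; combined with the negligibility above, this establishes the asymptotic Normality for both $\dgs$. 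The claim $S_\df \geq 0$ is then immediate, as $S_\df$ is by construction a between-whole-plot covariance matrix of the adjusted outcomes and hence positive semidefinite.

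For the covariance estimate, the plan is to transfer Theorems~\ref{thm_vHats} and \ref{covEst} to the adjusted outcomes. The cluster-robust covariance $\tbV_\df$ is built from the least-squares residuals of \eqref{dlm4_f}/\eqref{dlm6_f}, which are the within-whole-plot deviations of the \emph{estimated} adjusted outcomes $\Yws - x_{ws}^\T\tg_\dg$. Replacing $\tg_\dg$ by its limit $\gamma_\dg$ turns $\tbV_\df$ into precisely the estimator $\hV_*$ of Section~5.3 evaluated on $Y_{ws}(z;\gamma_\dg)$, so Theorem~\ref{thm_vHats} (whose numeric identities survive covariate adjustment) together with Theorem~\ref{covEst} applied to the adjusted outcomes gives $W\hV_* - \Sigma_\df = S_\df + \op$. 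It then remains to control the plug-in discrepancy $W(\tbV_\df - \hV_*)$: expanding $\hat S_\df(z,z')$ as a quadratic form in the whole-plot-level adjusted means and isolating the terms linear and quadratic in $\tg_\dg - \gamma_\dg$, each discarded term factors as an $O_{\mathbb{P}}(1)$ covariance-type aggregate (controlled by $\Qxx$, $\sx$, and the covariate fourth moments of Condition~\ref{asym2}) times a power of $\tg_\dg - \gamma_\dg = \op$, hence is $\op$.

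The main obstacle will be this last transfer step for the covariance estimator. Even though $W\hV_*$ is already $O_{\mathbb{P}}(1)$ so that a crude $\op$ rate on $\tg_\dg - \gamma_\dg$ suffices termwise, the bookkeeping is delicate: one must expand $\hat S_\df(z,z')$ fully, track the cross terms pairing $\tg_\dg - \gamma_\dg$ against the centered covariate and outcome aggregates $\alpha_w\hbx_w(z)$ and $\alpha_w\hY_w(z)$, and argue that the feasible quadratic-form coefficients converge to the oracle ones under Conditions~\ref{asym}--\ref{asym2}. Verifying that $Y_{ws}(z;\gamma_\dg)$ meets all parts of Condition~\ref{asym}, especially the two fourth-moment bounds in part~\eqref{po_3}, is the other place where care is genuinely needed.
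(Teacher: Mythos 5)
Your treatment of the asymptotic Normality claim is correct and is, in substance, the paper's own route: Proposition \ref{prop:ols_x_fisher} plus the decomposition $\tbeta_\tf - \bbY = (\byht - \hbx_\sht\gamma_\ag - \bbY) - \hbx_\sht(\tgu - \gamma_\ag)$, negligibility of the second term from $\sqrt{W}\,\hbx_\sht = O_{\mathbb{P}}(1)$ and $\tgu - \gamma_\ag = \op$, verification that the oracle-adjusted outcomes inherit Condition \ref{asym}, and an appeal to Theorem \ref{clt}; this is exactly what Lemma \ref{Cov_x} of the Supplementary Material does, and $S_\df \geq 0$ is indeed immediate. (The paper proves Theorem \ref{thm_lin} in detail and states that Theorem \ref{thm_fisher} follows by the same argument.)

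The covariance claim, however, has a genuine gap. You assert that the numeric identities of Theorem \ref{thm_vHats} ``survive covariate adjustment,'' so that replacing $\tg_\dg$ by $\gamma_\dg$ turns $\tbV_\df$ \emph{exactly} into the design-based estimator $\hV_*$ computed on $Y_{ws}(z;\gamma_\dg)$. That is false: $\tbV_\df$ is the upper-left block of the sandwich built on the \emph{augmented} design matrix $(D, X)$ (or $(D_\ag, \Lambda)$ for the aggregate fit), whose bread has nonzero off-diagonal blocks --- e.g.\ $\dt\bp X = \hat{x}_\sht \neq 0$ in finite samples --- and whose meat contains covariate cross-blocks; substituting $\gamma_\dg$ into the residuals removes none of this. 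Consequently your remaining step, which expands $\hat S_\df(z,z')$ in powers of $\tg_\dg - \gamma_\dg$, only controls the estimated-versus-oracle-coefficient discrepancy (that is again Lemma \ref{Cov_x}); it never addresses the structural discrepancy between the regression sandwich block and the design-based form. In the paper's proof of Theorem \ref{thm_lin}, which the omitted proof of Theorem \ref{thm_fisher} mirrors, this is precisely condition (ii) of \eqref{ehw_lin:1}: one first introduces the intermediate quantity $\too_\wl$ --- the $D$-block sandwich with the full-fit residuals, for which the Theorem \ref{thm_vHats} algebra does hold verbatim --- and then proves $W(\tbV_\wl - \too_\wl) = \op$ by showing the bread converges to a block-diagonal limit and the meat cross-blocks $G_1, G_2$ are $O_{\mathbb{P}}(1)$, the latter via the $\kappa_w(z)$ fourth-moment bounds resting on Condition \ref{asym2}\eqref{po_x4}. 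Without this step, your argument establishes conservativeness for an estimator that is not the one actually computed from the regression output, so the second display of the theorem remains unproven.
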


Theorem \ref{thm_fisher} states the asymptotic Normality of $\tbeta_\df  \ (\dagger = \wls, \ag)$  under split-plot randomization, and ensures the asymptotic conservativeness of $\tbV_\df $ for estimating the true sampling covariance. 
This justifies the regression-based inference of $\tau = G\bY$ from the additive regressions with point estimator $G \tbeta_\df $ and estimated covariance $G \tilde V_\df G^\T$ for $\dgs $.
Neither $\tbeta_\wf$ nor $\tbeta_\tf$, however, always guarantees  efficiency gains over their unadjusted counterparts, namely $\tbeta_\dg$ for $\dgs$,  even asymptotically.  
Similar discussion by \citet{Freedman08a} and \citet{Lin13} under complete randomization suggests including the interactions between the treatment indicators and covariates in the regression could be a possible remedy. 
We quantify its design-based properties 
in the next two subsections.

%
\subsection{Fully-interacted  regressions}
%
Modify \eqref{dlm4_f} and \eqref{dlm6_f} with full interactions between the treatment indicators and covariates: 
\beginy
Y_{ws}  &\sim&  d_{ws} + \sumz 1(\Zws=z)\, x_{ws} \label{dlm4}\\
  \huwab & \sim& d_w(A_wb) + \sum_{z \in\mt} 1(A_wb=z )\, \hvwab. \label{dlm6}
\endy
This defines the {\it fully-interacted unit} and {\it  aggregate regressions}, respectively.
Let 
$\tbeta_\ol$ and $\tbeta_\wl$ denote the coefficient vectors of $d_{ws}$ from the \olss and \wlss fits of the unit regression \eqref{dlm4}, respectively. 
Let  $\tbeta_\tl$ denote the coefficient of $\dwab $ from the \olss fit of the aggregate regression \eqref{dlm6}.
We use the subscript ``\lin'' to signify \citet{Lin13}. 
This gives us three more estimators of $\bbY$, $\{\tilde\beta_{\dagger, \lin}: \dg = \ols, \wls, \ag\}$,  summarized in Table \ref{tb2}. 
We now derive their respective design-based properties. 

\begin{table}
\caption{\label{tb2}Nine {\reg} estimators from the unadjusted specifications \eqref{lm_t}--\eqref{lm_t_ag}, the additive specifications \eqref{dlm4_f}--\eqref{dlm6_f}, and the fully-interacted specifications \eqref{dlm4}--\eqref{dlm6}, respectively, under the ``\textup{ols}'', ``\textup{wls}'', and  ``\textup{ag}'' \fts.}
\begin{center}
\renewcommand{\arraystretch}{1.2}
\begin{tabular}{|c|c|c|ccc| }\hline
fitting scheme &base model & weight & unadjusted & additive & fully-interacted  \\\hline
 $\ols$ &\multirow{2}{*}{$ Y_{ws} \sim d_{ws}$} & 1 &  $\bls$ & $\blsxf$  & $\blsx$  \\
 $\wls$ &  & $\{p_\ws(Z_\ws)\}^{-1}$ & $\blsHT$ & $\tbeta_\wf$ & $\tbeta_\wl$ \\\hline
$\ag $ &$\huwab \sim \dwab $ & 1&  $\blsu$ & $\blsxuf$  & $\blsxu$ \\\hline
\end{tabular}
\end{center}
\end{table}

Let $\tg_{\dg,z}$ be  the coefficient of $1(Z_{ws}=z) \, x_{ws}$ or $ 1(A_wb=z)\, \hvwab$ from the corresponding regression for $z \in \mt$ and $\dgssi $. 
Let $\tbeta_\dl (z)$ be the element in $\tbeta_\dl $ that corresponds to treatment $z$. 

\begin{proposition}\label{prop:ols_x} 
$
\tbeta_{\ols,\lin}(z) =\ynm- \hbx_\snm^\T (z) \tgz$, $\tbeta_{\wls,\lin}(z) =\yhaj - \hbx_\shaj^\T(z)\tgwz$,  and $  \tbeta_\tl (z)= \yHT - \hbx_\sht^\T (z) \tgzu$ for $ z\in \mt$.
\end{proposition}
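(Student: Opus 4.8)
The plan is to exploit the separable structure that full interaction imposes on the least-squares problem. Because the treatment indicators $\{1(\Zws=z)\}_{z\in\mt}$ are mutually exclusive across sub-plots, the regressor blocks in \eqref{dlm4} for distinct treatment levels have disjoint support: a sub-plot with $\Zws=z$ contributes only to the indicator $1(\Zws=z)$ in $d_{ws}$ and to the interaction block $1(\Zws=z)\,x_{ws}$. Hence the residual sum of squares splits as $\sum_{\ws\in\ms}(\cdot)^2 = \sum_{z\in\mt}\sum_{\ws\in\uz}(Y_{ws}-a_z-x_{ws}^\T\gamma_z)^2$, where $a_z$ is the coefficient of $1(\Zws=z)$ and $\gamma_z$ that of $1(\Zws=z)\,x_{ws}$. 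Minimizing over all parameters jointly is therefore equivalent to minimizing each summand separately, so $(\tbeta_\dl(z),\tgzs)$ coincides with the fit of the within-group regression $Y_{ws}\sim 1+x_{ws}$ over $\ws\in\uz$, for every $z\in\mt$. The identical argument applies to the weighted unit fit, since the positive weights $\{p_{ws}(\Zws)\}^{-1}$ respect the same block structure, and to the aggregate regression \eqref{dlm6}, where each whole-plot observation $\hU_w(A_wb)$ carries a single treatment level $A_wb$ and thus lands in exactly one block.

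First I would read off the intercept of each within-group regression directly from its normal equation. For a possibly weighted least-squares fit of a response on an intercept and $x_{ws}$ with positive weights $\{\omega_{ws}\}$, stationarity in the intercept gives $\sum \omega_{ws}(Y_{ws}-a_z-x_{ws}^\T\gamma_z)=0$, whence $a_z=\bar Y^\omega_z-(\bar x^\omega_z)^\T\gamma_z$ with $\bar Y^\omega_z$ and $\bar x^\omega_z$ the weight-$\omega$ averages of the response and covariates within group $z$. It then remains only to identify these weighted averages with the design-based estimators in the statement.

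For the \olss unit fit the weights are unity, so $\bar Y^\omega_z=\ynm$ and $\bar x^\omega_z=\hbx_\snm(z)$ by definition of the sample-mean estimators, yielding the first identity. For the \wlss unit fit the weights $\{p_{ws}(z)\}^{-1}$ turn $\bar Y^\omega_z$ and $\bar x^\omega_z$ into the inverse-probability-weighted averages, which are exactly the Hajek estimators $\yhaj$ and $\hbx_\shaj(z)$, giving the second. For the aggregate \olss fit the within-group response and regressor averages are $W_a^{-1}\sumwz \hU_w(z)$ and $W_a^{-1}\sumwz \hbv_w(z)$, which equal $\yHT$ by \eqref{hat_Y} and $\hbx_\sht(z)$ by the parallel construction of the covariate {\htf} estimator from the scaled whole-plot totals $\hbv_w(z)=\alpha_w\hbx_w(z)$ mirroring $\hU_w(z)=\alpha_w\hY_w(z)$, giving the third.

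The argument is essentially bookkeeping once the decoupling is in place; the one point that warrants care is establishing that full interaction genuinely separates the normal equations block by block, so that the coefficient of each treatment indicator is the pure within-group intercept with no cross-contamination from the other treatment levels. This is immediate from the disjoint supports of the regressor blocks for the \olss unit fit, but it must be verified for the weighted and aggregate fits as well, where one checks that the diagonal weight matrix (respectively, the passage to whole-plot totals) preserves the block-diagonal structure of the Gram matrix.
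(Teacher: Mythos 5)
Your proposal is correct and follows essentially the same route as the paper's proof: both exploit the fact that full interaction decouples \eqref{dlm4} and \eqref{dlm6} into treatment-specific regressions $Y_{ws}\sim 1+x_{ws}$ over $\ws\in\uz$ (respectively $\hU_w(z)\sim 1+\hat v_w(z)$ over $w\in\Wz$), and then read the intercept off the first row of the normal equations, identifying the resulting (weighted) within-group averages with $\hY_\snm(z)$, $\hY_\shaj(z)$, and $\hY_\sht(z)$ and their covariate analogs. The only cosmetic difference is that the paper writes out the full Gram matrices $G_1,G_2$ (which it reuses to establish the probability limits in Lemma S12), whereas you invoke only the intercept stationarity condition, which suffices for the numeric identity.
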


Proposition \ref{prop:ols_x} parallels Proposition \ref{prop:ols_x_fisher}  under the additive regressions. It links the covariate-adjusted $\tbeta_\dl $'s back to the unadjusted $\tbeta_\dg $'s from \eqref{lm_t} and \eqref{lm_t_ag}, respectively, 
and establishes $\tbeta_\ol(z)$, $\tbeta_\wl(z)$, and $\tbeta_\tl(z)$ as the sample-mean, Hajek, and {\htf} estimators based on the covariate-adjusted outcomes $Y_{ws} - x_{ws}^\T\tg_{\dg,z}$ for $\dgss$, respectively. 
A key distinction is that the adjustment is now based on treatment-specific coefficients. 

The correlation between  $\hbx_\sht(z)$ and $\tguz$ likewise leaves the covariate-adjusted {\htf} estimator, $\tbeta_\tl$,   biased in finite samples. 
The fact that $\tgzs$ has a finite probability limit, denoted by $\gamma_{\dg,z}$,  under Conditions \ref{asym}--\ref{asym2}, on the other hand, ensures the consistency of $\tbeta_{\dg,\lin}$  for $\dgs$;  see Lemma S11 in the Supplementary Material. 
Section S4.1 of the Supplementary Material further gives analogous results about $\tbeta_{\ols,\lin}$ under {\sym} designs.

Let $S_\wl$ and $\Sigma_\wl$ be the analogs of $\sshaj$ and $\Sigma_\shaj$  defined on the adjusted potential outcomes $Y_{ws}(z; \gamma_{\wls,z}) = \Yws(z) - x_{ws}^\T \bbzp$. 
Let $S_\tl$ and $\Sigma_\tl$ be the analogs of $ \ssht$ and $\Sigma_\sht$  defined on the adjusted potential outcomes $Y_{ws}(z; \bbzu) = \Yws(z) - x_{ws}^\T \bbzu$. 
Let $\tbV_\dl $ be the cluster-robust covariance of $\tbeta_\dl $ for $\dgs$.

\begin{theorem}\label{thm_lin}
{\prex}
\begina
 \sqrt{W} \big(\tbeta_\dl   -\bbY  \big) \rightsquigarrow \mathcal{N}(0, \Sigma_\dl  ), \qquad W \tbV_\dl -\Sigma_\dl =S_\dl + \op
 \enda
with $S_\dl  \geq 0$  for $\dgs  $.
\end{theorem}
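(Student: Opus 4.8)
The plan is to follow the template established for the additive case in Theorem~\ref{thm_fisher}, reducing each fully-interacted estimator to an unadjusted design-based estimator applied to treatment-specific covariate-adjusted potential outcomes, and then invoking Theorems~\ref{clt} and~\ref{covEst}. Throughout, set $*=\shaj$ when $\dg=\wls$ and $*=\sht$ when $\dg=\ag$, so that Proposition~\ref{prop:ols_x} reads $\tbeta_\dl(z) = \hY_*(z) - \hbx_*^\T(z)\,\tgzs$ for $z\in\mt$. First I would linearize: by Lemma~S11 and Conditions~\ref{asym}--\ref{asym2}, $\tgzs \toinp \gsz$ with $\gsz$ finite, while applying Theorem~\ref{clt} to the centered covariates as potential outcomes unaffected by treatment gives $\hbx_*(z) = O_{\mathbb{P}}(W^{-1/2})$ because $\bbx = 0_J$. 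Writing
$$\tbeta_\dl(z) = \hY_*(z) - \hbx_*^\T(z)\,\gsz - \hbx_*^\T(z)(\tgzs - \gsz),$$
the last term is $O_{\mathbb{P}}(W^{-1/2})\cdot o_{\mathbb{P}}(1) = o_{\mathbb{P}}(W^{-1/2})$ and is thus negligible at the $\sqrt{W}$ scale. The leading term $\hY_*(z) - \hbx_*^\T(z)\gsz$ is precisely the $*$-scheme estimator of $\bY(z)$ built on the adjusted outcomes $Y_{ws}(z;\gsz) = \Yws(z) - x_{ws}^\T\gsz$, whose finite-population average remains $\bY(z)$ since $\bbx = 0_J$.

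Second, I would check that the adjusted outcomes $\{Y_{ws}(z;\gsz)\}$ satisfy Condition~\ref{asym}, so that Theorem~\ref{clt} applies. The relevant between- and within-whole-plot covariances of the adjusted outcomes are fixed linear combinations of those for $Y$ and $x$, so their finite limits follow from Conditions~\ref{asym}\eqref{po_1} and~\ref{asym2}\eqref{po_xy} together with the finiteness of $\gsz$; the fourth-moment bounds follow by Minkowski's inequality from $\oyfzw$ and $\oxfw$ and Conditions~\ref{asym}\eqref{po_3} and~\ref{asym2}\eqref{po_x4}. Theorem~\ref{clt} then yields $\sqrt{W}(\tbeta_\dl - \bbY) \rightsquigarrow \mathcal{N}(0,\Sigma_\dl)$, where $\Sigma_\dl$ is the covariance of Theorem~\ref{clt} evaluated on the adjusted outcomes, matching the definitions of $S_\dl$ and $\Sigma_\dl$ stated before the theorem. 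The inequality $S_\dl \geq 0$ is then immediate, as $S_\dl$ is by construction a scaled between-whole-plot covariance matrix and hence positive semidefinite.

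Third, for the covariance estimator I would first establish the fully-interacted analog of the numeric identity in Theorem~\ref{thm_vHats}: because full interaction residualizes the fit treatment by treatment, the cluster-robust covariance $\tbV_\dl$ equals, up to the same $\hat 1_\sht$ and $(W_a-1)/W_a$ factors appearing there, the design-based estimator $\hV_*$ of Theorem~\ref{covEst} computed on the fitted-residual outcomes $\Yws - x_{ws}^\T\tgzs$. I would then replace the estimated $\tgzs$ by its limit $\gsz$ inside this sample covariance; the induced error is $o_{\mathbb{P}}(1)$ by $\tgzs \toinp \gsz$ and the fourth-moment controls of Conditions~\ref{asym}\eqref{po_3} and~\ref{asym2}\eqref{po_x4}. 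Applying Theorem~\ref{covEst} to the adjusted outcomes $Y_{ws}(z;\gsz)$ then delivers $W\tbV_\dl - \Sigma_\dl = S_\dl + \op$, completing the argument for both $\dg=\wls$ and $\dg=\ag$.

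The main obstacle is the third step. Extending the numeric correspondence of Theorem~\ref{thm_vHats} to the interacted regression requires careful bookkeeping of the treatment-specific coefficient vectors in the cluster-robust sandwich, and the replacement of $\tgzs$ by $\gsz$ inside the covariance must be controlled uniformly across whole-plots --- which is exactly where the fourth-moment conditions on $\oyfzw$ and $\oxfw$ peculiar to the split-plot setting enter. By contrast, the central limit theorem in the first two steps is a fairly routine reduction to Theorem~\ref{clt} once the adjusted outcomes are shown to inherit Condition~\ref{asym}.
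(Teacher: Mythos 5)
Your first two steps are exactly the paper's route: the paper packages them as Lemma \ref{Cov_x} of the Supplementary Material, which shows that for any $\hat\gamma_z = \gamma_z + \op$ the plug-in quantities $\hY_{*,\hat\gamma}$ and $\hV_{*,\hat\gamma}$ behave, after scaling by $\sqrt W$ and $W$, like their fixed-$\gamma$ versions, and then applies Theorems \ref{clt}--\ref{covEst} to the adjusted potential outcomes $Y_{ws}(z;\gamma_{\dg,z})$; your linearization $\tbeta_\dl(z) = \hY_*(z) - \hbx_*^\T(z)\gamma_{\dg,z} - \hbx_*^\T(z)(\tg_{\dg,z}-\gamma_{\dg,z})$ with the cross term of order $o_\pr(W^{-1/2})$ is the same calculation, and your verification that the adjusted outcomes inherit Condition \ref{asym} matches the paper's.

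The genuine gap is in your third step. The exact finite-sample identity you posit --- that $\tbV_\dl$ equals, up to the $\hat{1}_\sht$ and $(W_a-1)/W_a$ factors of Theorem \ref{thm_vHats}, the design-based $\hV_*$ computed on the fitted residuals --- is false, and the heuristic ``full interaction residualizes the fit treatment by treatment'' is precisely where it breaks: that statement is true for the coefficients, but the cluster-robust sandwich does not factor by treatment because every whole-plot contains sub-plots under both levels of factor B, so clusters straddle treatment cells. Concretely, writing the design matrix as $C_\lin = (D,\chi)$, the bread $C_\lin^\T\bp C_\lin$ has off-diagonal blocks $\dt\bp\chi = \diag\{\hbx_\sht^\T(z)\}_{z\in\mt}$, which are $\op$ but not zero, so the $D$-block of the inverse bread is a Schur complement rather than $\hat{1}_\sht^{-1}$; likewise the meat has cross blocks $\sumw \dt_w\bp_w\tep_w\tep_w^\T\bp_w\chi_w$ that leak into the relevant block of the full sandwich. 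The paper therefore never proves your identity for $\tbV_\dl$ itself: it introduces the intermediate sub-sandwich $\too_\wl = (\dt\bp D)^{-1}\bigl(\sumw D_w^\T\bp_w\tep_w\tep_w^\T\bp_w D_w\bigr)(\dt\bp D)^{-1}$, for which the Theorem \ref{thm_vHats} computation on adjusted outcomes does hold exactly, and then proves separately that $W(\tbV_\dl - \too_\wl) = \op$. That separate step is where the real work lies: one needs $\dt\bp\chi = \op$, $\chi^\T\bp\chi = I_\tss\otimes\qxx + \op$, and crucially that the interaction meat blocks $G_1 = W\sumw\dt_w\bp_w\tep_w\tep_w^\T\bp_w\chi_w$ and $G_2 = W\sumw\chi_w^\T\bp_w\tep_w\tep_w^\T\bp_w\chi_w$ are $O_\pr(1)$, which the paper obtains by bounding $\kappa_w(z) = M_{wb}^{-1}\sum_{s:Z_{ws}=z}x_{ws}\tep_{ws}$ via the fourth-moment Conditions \ref{asym}\eqref{po_3} and \ref{asym2}\eqref{po_x4}. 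So your instinct that the fourth-moment conditions enter at this stage is right, but they are needed to kill the discrepancy between $\tbV_\dl$ and the sub-sandwich, not to justify an identity that does not hold; your step 3(b) (swapping $\tg_{\dg,z}$ for $\gamma_{\dg,z}$ inside the sample covariance) addresses a different, and by itself insufficient, error term.
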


 Echoing the comment after Theorem \ref{thm_fisher}, Theorem \ref{thm_lin} justifies the regression-based inference of $\tau = G\bY$ from the fully-interacted regressions with point estimator $G \tbeta_\dl $ and estimated covariance $G \tilde V_\dl G^\T$ for $\dgs $.

\subsection{Guaranteed gains in asymptotic efficiency}\label{sec:eff}

A natural next question is if the inclusion of the interactions is not just as good but delivers extra gains
in  asymptotic efficiency.
The answer is affirmative when the right covariates are 
used in combination with the aggregate specification. 

Inspired by the utility of cluster-size adjustment in improving asymptotic efficiency under one-stage cluster randomization \citep{MiddletonCl15, su},  one simple extension to the aggregate regressions in \eqref{dlm6_f} and \eqref{dlm6} is to 
also include the centered whole-plot size factor, $(\alpha_w-1)$, as an additional whole-plot level covariate in addition to $\hvwab$; see Remark \ref{rmk:covariate}. 
Intuitively, $\alpha_w$ reflects the size of the whole-plot and is thus prognostic to $\hat U_w(z)$ as the outcome of the aggregate regressions. 
Let $\tbeta_{\ag, \fisher }(\alpha,v)$ and $\tbeta_{\ag, \lin }(\alpha,v)$ be the resulting \olss  coefficient vectors of $\dwab $  under the additive and fully-interacted specifications, respectively; we use the suffix ``$(\alpha,v)$'' to emphasize the components of the corresponding augmented covariates.
Proposition \ref{prop:eff} below states the asymptotic efficiency of $\tbeta_{\tl}(\alpha,v)$.

\begin{proposition}\label{prop:eff}
Under the $2^2$ split-plot randomization and Conditions \ref{asym}--\ref{asym2}, if
\beginy\label{cond:eff}
\Psi_{xx}(z,z) = o(1) \qquad \text{for all} \ \ z\in \mt,
\endy
then 
$\tbeta_{\tl}(\alpha,v)$ has the smallest asymptotic sampling covariance among 
$$
\mathcal{B} =  \{\tbeta_\wls, \, \tbeta_{\wls,\dmd }; \ \tbeta_\ag , \, \tbeta_{\ag ,\dmd },\,  \tbeta_{\ag ,\dmd }(\alpha,v):   \dmd  = \fisher, \lin  \}.
$$
\end{proposition}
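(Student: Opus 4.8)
The plan is to reduce the statement to a Loewner ordering of asymptotic covariance matrices and then dispatch it with a projection argument. Every member of $\mB$ is a consistent, asymptotically Normal estimator of $\bbY$ by Theorems \ref{thm_fisher} and \ref{thm_lin} (the $\alpha$-augmented estimators are covered by letting the covariate vector include $\alpha_w-1$; see Remark \ref{rmk:covariate}), so it suffices to show $\cov_\infty\{\tbeta_{\ag,\lin}(\alpha,v)\}\preceq \cov_\infty(\tbeta)$ for every $\tbeta\in\mB$ and every configuration of potential outcomes. Each such covariance has the form $H\circ S_\cdot+\Psi_\cdot$, where $S_\cdot$ is a between-whole-plot covariance built from suitably adjusted whole-plot quantities and $\Psi_\cdot$ is the within-whole-plot contribution. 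First I would invoke condition \eqref{cond:eff}: since $\Psi_{xx}(z,z)=o(1)$, Cauchy--Schwarz on the within-whole-plot covariances gives $\Psi_{xY}(z,z)=o(1)$, and expanding $\Psi_\cdot(z,z')=\Psi(z,z')-(\text{terms in }\Psi_{xY})+(\text{terms in }\Psi_{xx})$ for any adjustment coefficient with a finite limit shows $\Psi_\cdot=\Psi+o(1)$. Thus the within-whole-plot part is common to all of $\mB$, and the problem collapses to comparing the between-whole-plot parts $H\circ S_\cdot$.

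Two algebraic facts drive the comparison. The first is that $H\succeq0$: writing it in $2\times2$ blocks over the levels of factor A gives $H=B\otimes\bbo{2}$ with $B=\bigl(\begin{smallmatrix}p_0^{-1}-1 & -1\\ -1 & p_1^{-1}-1\end{smallmatrix}\bigr)$, and since $p_0^{-1}-1=p_1/p_0$ and $p_1^{-1}-1=p_0/p_1$ one has $\det B=0$ and $B=vv^\T\succeq0$, whence $H=(v\otimes 1_2)(v\otimes 1_2)^\T\succeq0$. The second is an orthogonal-residual identity for $S_\cdot$. For the aggregate family, $S_\cdot$ is the between-whole-plot covariance of the residuals of the whole-plot totals $U_w(z)=\alpha_w\byw(z)$ regressed, with \emph{arm-specific} coefficients equal to their population least-squares limits, on a whole-plot regressor vector $r_w$ that is \emph{common across arms} (the covariates being pre-treatment). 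Because $r_w$ is common, the population residual $e_w(z)$ is orthogonal to $r_w$ for every $z$, so replacing the optimal coefficients by any others $\delta_z$ gives $S_{\delta}(z,z')=S_\cdot(z,z')+\delta_z^\T Q\,\delta_{z'}$ with $Q=(W-1)^{-1}\sum_w r_wr_w^\T\succeq0$; hence $S_\delta-S_\cdot=M^\T QM\succeq0$ and, by the Schur product theorem with $H\succeq0$, $H\circ(S_\delta-S_\cdot)\succeq0$.

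With these in hand I would carry out the comparisons in three nested steps, all of the form ``enlarging the predictor space can only shrink the residual covariance in Loewner order.'' (i) Within the aggregate family on a fixed regressor set, the fully-interacted fit (unconstrained arm-specific coefficients) dominates the additive fit (coefficients tied across arms) and the unadjusted fit ($\delta=0$). (ii) Enlarging the regressor set from $\{\alpha_w\bar x_w\}$ to $\{\alpha_w\bar x_w,\alpha_w-1\}$ is a nested projection, so $\tbeta_{\ag,\lin}(\alpha,v)$ dominates $\tbeta_\tl$, and hence $\tbeta_\tf$ and $\tbeta_\ag$ by step (i). (iii) For the Hajek family, the decomposition $U_w(z)-\by(z)=\alpha_w\{\byw(z)-\by(z)\}+(\alpha_w-1)\by(z)$ shows the second term is absorbed exactly by the regressor $\alpha_w-1$; therefore the residual of $\tbeta_{\ag,\lin}(\alpha,v)$ equals the residual of $\alpha_w\{\byw(z)-\by(z)\}$ on $\{\alpha_w\bar x_w,\alpha_w-1\}$, while the fully-interacted Hajek residual is the residual of the same quantity on the sub-space $\{\alpha_w\bar x_w\}$ alone. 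Nested projection plus the Schur product theorem then give $\Sigma_\wl\succeq\cov_\infty\{\tbeta_{\ag,\lin}(\alpha,v)\}$, and combining with the Hajek-family ordering $\Sigma_\wl\preceq\Sigma_\wf\preceq\Sigma_\shaj$ dominates $\tbeta_\wl,\tbeta_\wf,\tbeta_\wls$ as well.

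The main obstacle is step (iii) together with the rigorous identification underlying the residual calculus. I must verify that the asymptotic between-whole-plot covariances of the aggregate and Hajek estimators are genuinely the residual covariances of $U_w(z)$ and $\alpha_w\{\byw(z)-\by(z)\}$, respectively, on common whole-plot regressors, and in particular that the sample coefficients $\guz$, computed from the $W_a$ whole-plots with $A_w=a$, converge to the full-population least-squares limits used in the orthogonality identity (which follows from the stage (I) randomization making these whole-plots a simple random sample, via Lemma S12). Equally essential is confirming that condition \eqref{cond:eff} is exactly what equalizes the within-whole-plot contributions $\Psi_\cdot$ across the aggregate and Hajek schemes; without it these differ and no clean dominance holds. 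Once the residual representations and the $\Psi$-equalization are established, the dominance is mechanical from $H\succeq0$ and the Schur product theorem.
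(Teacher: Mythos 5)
Your proposal follows the paper's own route essentially step for step: the decomposition of each asymptotic covariance into $H\circ S_\cdot+\Psi_\cdot$, the equalization $\Psi_\cdot=\Psi+o(1)$ under \eqref{cond:eff} via Cauchy--Schwarz (the paper's Lemma on $\Psi_{xY}(z,z')=o(1)$), the reduction to a comparison of the $S_\cdot$ matrices through $H\succeq 0$ and the Schur product theorem, and a whole-plot-level orthogonality argument in which the augmented regressor $c_w=(\alpha_w-1,v_w^\T)^\T$ absorbs both the covariate adjustments and the discrepancy $(\alpha_w-1)\by(z)$ between the Hajek centering $\alpha_w\by(z)$ and the Horvitz--Thompson centering $\by(z)$. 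The paper's reference residual $e_{6,w}(z)$, the population OLS residual of $U_w(z)$ on $(1,c_w)$, is exactly your projection residual on the enlarged space, and its observation that $e_{k,w}(z)-e_{6,w}(z)$ lies in the span of $c_w$ for every competitor $k$ is your orthogonal-residual identity; likewise, your requirement that $\tg_{\ag,z}(\alpha,v)$ converge to the population least-squares coefficient under \eqref{cond:eff} is how the paper closes the argument with $S_{\tl}(\alpha,v)-S_6=o(1)$.

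There is, however, one step that fails as written: to dispose of $\tbeta_\wf$ and $\tbeta_\wls$ you invoke the chain $\Sigma_\wl\preceq\Sigma_\wf\preceq\Sigma_\shaj$. No such within-family monotonicity holds, even under \eqref{cond:eff}. The probability limits $\gamma_{\wls,z}=Q_{xx}^{-1}Q_{xY(z)}$ and $\gamma_\wls=|\mT|^{-1}Q_{xx}^{-1}\sumz Q_{xY(z)}$ are \emph{unit-level} finite-population coefficients; they are not the minimizers of the between-whole-plot residual covariance, which would be built from $S_{xx}$ and $S_{xY(z)}$. These two sets of moments do not coincide even when $\Psi_{xx}(z,z)=o(1)$ (the between-plot parts of $Q_{xx}$ carry weights $\alpha_w$ rather than $\alpha_w^2$, and its within-plot part need not vanish when the $M_w$ diverge), and adjustment by a non-optimal coefficient can inflate the whole-plot-level covariance --- this is precisely why the paper remarks after Theorem \ref{thm_fisher} that the additive estimators need not improve on their unadjusted counterparts. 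The same imprecision affects your step (iii): calling the comparison of $\tbeta_\wl$ with $\tbeta_{\tl}(\alpha,v)$ a ``nested projection'' presumes the Hajek residual sits at the projection coefficient on $\{v_w\}$, whereas it sits at $\gamma_{\wls,z}$. Both defects are repaired by the identity you already state in your second paragraph: for an \emph{arbitrary} coefficient $\delta_z$, the residual differs from the projection residual on $(1,c_w)$ by an element of the span of $c_w$, so the difference of covariances is a Gram matrix and hence positive semi-definite. Applying this directly to each of $\tbeta_\wls$, $\tbeta_\wf$, $\tbeta_\wl$ (with $\delta_z=0$, $\gamma_\wls$, $\gamma_{\wls,z}$, and the Hajek centering rewritten via $U_w(z)-\alpha_w\by(z)=\{U_w(z)-\by(z)\}-(\alpha_w-1)\by(z)$) yields the needed dominance with no chain at all, which is exactly how the paper's proof handles these cases.
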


Proposition \ref{prop:eff} establishes the optimality of $\tbeta_{\tl}(\alpha,v)$ among the eight consistent {\reg} estimators   in $\mathcal{B}$, highlighting the utility of  including $(\alpha_w-1)$ as an additional covariate in the aggregate regression for ensuring additional asymptotic efficiency.
The asymptotic efficiency over the  unadjusted $\hys \ (* = \HT, \haj)$ 
then follows from the numeric identities in Proposition \ref{prop:ls}. 
Condition \eqref{cond:eff} holds if (i) $x_{ws} = \bar x_w$ or (ii) $S_{xx,w}$ is uniformly bounded while $M_w$ goes to infinity for all $w$. 
 We thus recommend using $\tbeta_{\tl}(\alpha,v)$ when the covariates are relatively homogeneous within whole-plots or when  it is reasonable to consider whole-plot level covariates only. 
The latter ensures gains in asymptotic efficiency  over the unadjusted case even if the unit-level covariates show great heterogeneity within each whole-plot.  
The discussion becomes more complicated when heterogeneous unit-level covariates enter the regression equations. 
We leave the more general theory to future work. 

This guaranteed minimum asymptotic covariance, however, should not be the basis for dismissing the additive regressions completely. 
In particular, the fully-interacted regression \eqref{dlm6} involves 
$|\mT| \times (1+J)$ parameters compared with the $(|\mT| + J)$ parameters in the additive regression \eqref{dlm6_f}, subjecting $\tbeta_{\tl}$ to possibly substantial   finite-sample variability when $J$ is large. 
We thus recommend keeping both strategies in the toolkit and making decisions on a case by case basis contingent on the nature of the design and the abundance of data.

\section{Extensions}

\subsection{Factor-based regressions: practical implementations}

The regressions so far take indicators of the treatments, namely $1(Z_{ws} = ab)$ or $1(A_wb=ab)$, as regressors.
Despite the generality of such formulations and their theoretical guarantees,
they are nevertheless not the dominant choice in practice 
when the goal is to estimate the standard factorial effects as those defined in \eqref{taud}.  {\it Factor-based}  regressions, as the more popular practice, estimate the factorial effects directly by the regression coefficients.  

With the treatment combinations of interest exhibiting a   $2^2$ 
factorial structure, the factor-based approach regresses 
the outcome on the factors themselves via specifications like 
\beginy
\label{lm_f}
\Yws \sim 1 + A_{w} + B_{ws} + A_{w}B_{ws}   \qquad (\ws \in\ms) ,
\endy
and interprets the coefficients of the non-intercept terms as the main effects and interaction, respectively. 
Let $\tbt'_{\A,0}$, $\tbt'_{\B,0}$, and $\tbt'_{\AB,0}$ be the \olss coefficients of $A_{w}$, $B_{ws}$, and $A_{ws}B_{ws}$ from \eqref{lm_f}, respectively. 
Standard least-squares theory ensures
\begin{align*} 
&&\tbt'_{\A,0} &= \hY_\sm(10) - \hY_\sm(00),\nonumber\\
&&\tbt'_{\B,0} &= \hY_\sm(01) - \hY_\sm(00),\\
&&\tbt'_{\AB,0} &=\hY_\sm(11) - \hY_\sm(10) - \hY_\sm(01) + \hY_\sm(00), 
\end{align*}
equaling the sample-mean estimators of $\tau_{\A,0} = \by(10) -\by(00)$, $\tau_{\B,0} = \by(01)-\by(00)$, and $ \tau_{\AB}$ from \eqref{taud}, respectively.  When the goal is to estimate the standard factorial effects $(\tau_\A, \tau_\B, \tau_\AB)$ as defined in \eqref{taud},
an algebraic trick is to  shift  the factor indicators by $1/2$ 
and form the regression as 
\beginy\label{dlm1_a}
Y_{ws} \sim 1 + (A_{w}-1/2) + (B_{ws} -1/2) + (A_{w}-1/2)(B_{ws}-1/2) 
\endy
over $ws \in \ms$. 
Fitting \eqref{dlm1_a} by \olss recovers the sample-mean estimators of $\tau_\A$, $\tau_\B$, and $\tau_\AB$, respectively \citep{ZDfact}.
Denote by $\tbt'_\ols$ and $\tbO'_\ols$ the resulting coefficient vector and cluster-robust covariance of the three non-intercept terms, respectively. 
As a convention, we use the combination of tilde ($\tilde{\color{white}\tau}$) and prime  ($\,'\,$) to signify outputs from factor-based regressions like \eqref{lm_f} and \eqref{dlm1_a}.

Following the intuition from the \olss fit,  
let $\tbt'_\wls $ be the coefficient vector of the non-intercept terms from \eqref{dlm1_a} under the ``wls'' fitting scheme, with $\tbO'_\wls$  
as the associated cluster-robust covariance.
Let 
\begin{eqnarray}
\label{dlm3_a} 
\hU_w(A_wb) \sim 1 + (A_w-1/2) + (b -1/2) + (A_w-1/2)(b -1/2)
\end{eqnarray} 
be the aggregate analog of \eqref{dlm1_a} over $\wbs$, 
with  $\tbt'_\ag$ and $\tbO'_\ag$ as the resulting \olss coefficient vector  
and cluster-robust covariance of the non-intercept terms under fitting scheme ``ag''.

Recall  $G_0$ as the contrast matrix corresponding to $(\tau_\A, \tau_\B, \tau_\AB)^\T = G_0\by$. 
Proposition \ref{correspondence-nox} below follows from the invariance of least squares to non-degenerate linear transformation of the regressors, and  ensures the validity of $(\tbt'_\dg, \tbO'_\dg)$ for the Wald-type inference of the standard factorial effects for $\dg = \wls, \ag$.

\begin{proposition}\label{correspondence-nox}
$\tbt'_\dg  = G_0\tbeta_\dg $ and $ \tbO'_\dg  =
 G_0  \tbV_\dg    G_0^\T$ for $\dg  =  \ols, \wls, \ag$. 
\end{proposition}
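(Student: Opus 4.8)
The plan is to derive both identities at once from the invariance of (weighted) least squares under an invertible linear reparametrization of the regressors, carried out uniformly over the three fitting schemes $\dg = \ols, \wls, \ag$. The key observation is that the factor-based regressions \eqref{dlm1_a} and \eqref{dlm3_a} use regressors that are a fixed invertible linear image of the treatment indicators in \eqref{lm_t} and \eqref{lm_t_ag}. Concretely, for each $z = (ab) \in \mt$ the factor-based regressor vector $f(z) = (1,\ a-1/2,\ b-1/2,\ (a-1/2)(b-1/2))^\T$ satisfies $f(z) = R\,d(z)$ for a single invertible $4\times4$ matrix $R$ that depends only on the $2^2$ coding, where $d(z)$ is the four-dimensional indicator vector. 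Hence the factor-based design matrix equals the indicator design matrix right-multiplied by $R^\T$, in both the unit-level fits ($\dg=\ols,\wls$) and the aggregate fit ($\dg=\ag$).

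First I would record the reparametrization identity for the coefficients. Because $\tilde X = X R^\T$ with $R$ invertible, for any weight matrix the (weighted) least-squares coefficient of the factor-based fit equals $(R^\T)^{-1}$ times the indicator coefficient $\tbeta_\dg$; this is immediate from the normal equations, since the bread $X^\T W X$ transforms by $R$-congruence and the cross term $X^\T W Y$ by left multiplication by $R$. The non-intercept coordinates of this vector are exactly $\tbt'_\dg$. Selecting them with $L = (0_3,\ I_3)$ gives $\tbt'_\dg = L(R^\T)^{-1}\tbeta_\dg$.

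Next I would treat the cluster-robust covariance by the same congruence. Since the two designs share a column space (the constant lies in both, as the four indicators sum to one, and $W_0,W_1\ge 2$ keep the aggregate indicator design full rank), the fitted values and the whole-plot residuals are literally identical across the two fits, and the whole-plots serve as the common clusters. The bread inverse and the cluster-robust meat therefore transform by $R$-congruence, so the full $4\times4$ cluster-robust covariance of the factor-based coefficient equals $(R^\T)^{-1}\tbV_\dg R^{-1}$; restricting to the non-intercept block with $L$ yields $\tbO'_\dg = L(R^\T)^{-1}\tbV_\dg R^{-1}L^\T$.

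It remains to identify the change-of-basis matrix, which is the single genuine calculation. Writing $R$ as a coordinate permutation of the Kronecker product $M_\A\otimes M_\B$ of the two single-factor coding matrices, a direct inversion shows that the three non-intercept rows of $(R^\T)^{-1}$ coincide with $G_0 = (g_\A, g_\B, g_\AB)^\T$, i.e. $L(R^\T)^{-1} = G_0$ and hence $R^{-1}L^\T = G_0^\T$. Substituting into the two displays above gives $\tbt'_\dg = G_0\tbeta_\dg$ and $\tbO'_\dg = G_0\tbV_\dg G_0^\T$ for $\dg = \ols,\wls,\ag$ simultaneously. The one point requiring care is that the cluster-robust covariance, and not merely the point estimator, is preserved under the reparametrization; this is exactly why I would stress that the residuals are unchanged, which relies on the constant vector belonging to the common column space of the two designs.
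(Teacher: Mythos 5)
Your proposal is correct and is exactly the route the paper takes: the paper disposes of Proposition \ref{correspondence-nox} by appealing to the invariance of least squares to non-degenerate linear transformations of the regressors, which is precisely the congruence argument (regressor change of basis $f(z)=R\,d(z)$ with invertible $R$, identical residuals, bread and meat transforming by $R$-congruence, and $L(R^\T)^{-1}=G_0$) that you spell out in detail. Your write-up simply makes explicit the computation the paper leaves implicit, including the correct identification of the non-intercept rows of $(R^\T)^{-1}$ with $(g_\A, g_\B, g_\AB)^\T$.
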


Specifications \eqref{dlm1_a}--\eqref{dlm3_a} thus deliver the Hajek and {\htf} estimators of the standard factorial effects, namely $\htau_\shaj =  G_0\byhaj$ and $\htau_\sht =  G_0\hyht$,  directly as regression coefficients. 
We thus recommend using \eqref{dlm1_a}--\eqref{dlm3_a} if the goal is the standard factorial effects and switching back to \eqref{lm_t}--\eqref{lm_t_ag} if otherwise.

The results on covariate adjustment are almost identical to those based on the treatment indicators.
The covariate-adjusted estimator from the \fr ensures asymptotic efficiency under   Conditions \ref{asym}--\ref{asym2} and \eqref{cond:eff}, and is thus our recommendation for estimating the standard factorial effects under the $2^2$ split-plot design. 
We relegate the details to Section 4.3 of the Supplementary Material.

\subsection{ $\ta\times \tb$ split-plot design}\label{sec:ext_tatb}
All discussion so far concerns the $2^2$ split-plot design with  the whole-plot factor and the sub-plot factor both of two levels.
We now extend the result to general split-plot designs with  factors of multiple levels.

 Consider two factors of interest, A and B, of $\ta \geq 2$ and $\tb \geq 2$ levels, respectively, indexed by $a \in \mta = \{ \zt{\ta-1}\}$ and $b \in \mtb =\{ \zt{\tb-1}\}$. 
A general $\ta\times \tb$ split-plot randomization first runs a cluster randomization at the whole-plot level and assigns completely at random $W_a$ of the $W$ whole-plots to receive level $a\in \mT_{\textsc{a}}$ of factor A 
with $\sum_{a\in \mT_\A} W_a = W$. 
It then conducts an independent randomization within each whole-plot and assigns completely at random $\Mwb$ sub-plots in whole-plot $w$ to receive level $b \in \mT_\B $ of factor B 
with $\sum_{b\in \mT_\B }\Mwb = M_w$ for $w = \ot{W}$. 
Refer to the cluster and stratified randomizations as stage (I) and stage (II) of the assignment, respectively. 
The final treatment of a sub-plot is the combination of its whole-plot factor assignment in stage (I) and its sub-plot factor assignment in stage  (II), taking values from $\mT = \{(ab): a \in \mT_{\textsc{a}}, \ b \in \mT_\B \}$. Example \ref{ex:meon} defines a $3 \times 2$ split-plot experiment, and Example \ref{ex:jpe} defines a $2 \times 3$ split-plot experiment.

All notation and results from the $2^2$ case extend to the current setting with minimal modification.
We relegate the details on the design-based inference to Section S4.4 of the Supplementary Material and focus below on the model-based inference from factor-based regressions. 

Renew $\{Y_{ws}(z):  z \in \mt\}_{\ws \in \ms}$ as the potential outcomes under the $T_\A \times T_\B$ design with finite-population averages $\{\bY(z)\}_{z \in \mt}$, vectorized as $\bar Y$. Assume 0 as the baseline levels for both $\mt_\A$ and $\mt_\B$.
We define
\beginy\label{eq:tau_tatb}
&&\tau_{ \A   a} = \tbinv  \sum_{b\in\mt_\B} \{ \bar Y(ab) -\bar Y(0b)\}, \quad\quad
 \tau_{ \B   b} =  \tainv  \sum_{a\in\mt_\A} \{ \bar Y(ab) -\bar Y(a0)\},\\
&& \tau_{ \A  a, \B  b} =  \bar Y(ab)  -\bar Y(0b) - \bar Y(a0) + \bar Y(00) \nonumber 
\endy
as the standard main effects and interactions at non-baseline levels $a = \ot{\ta-1}$  and $b=\ot{\tb-1}$, respectively, 
vectorized as
\begina
\tau = \{\tau_{\A a}, \tau_{\B b}, \tau_{\A a, \B b}: a = \ot{\ta}; \ b = \ot{\tb}\} = G_0\by. 
\enda 
The definitions reduce to $\tau_\A, \tau_\B$, and $\tau_\AB$ from \eqref{taud} when $T_\A  = T_\B=2$.

Inspired by the utility of location-shifted factors in recovering $\tau_\A$ and $\tau_\B$   directly as least-squares coefficients from \eqref{dlm1_a} and \eqref{dlm3_a}, 
consider 
 \beginy\label{os_g}
&&\quad Y_{ws}  \sim  1 + \sum_{a=1}^{\ta-1} { \oc } (A_w=a) + \sum_{b=1}^{\tb-1}{ \oc } (B_{ws}=b) + \sum_{a=1}^{\ta-1}\sum_{b=1}^{\tb-1}{\oc } (A_w = a){\oc } (B_{ws}=b),\\
&&\quad\hU_w(A_wb)  \sim 1 + \sum_{a=1}^{\ta-1} {\oc } (A_w=a) + \sum_{b'=1}^{\tb-1}{\oc } (b=b') + \sum_{a=1}^{\ta-1}\sum_{b'=1}^{\tb-1}{\oc } (A_w = a){\oc } (b=b')  \label{os_g_a}
\endy
 as two generalizations under the $\ta\times \tb$ design with 
 ${\oc } (A_w=a) = 1(A_w=a) - \tainv$, 
 ${\oc } (B_{ws}=b) = 1(B_{ws}=b) - \tbinv$, and  
 ${\oc } (b=b')= 1(b=b') -\tbinv$.

Renew $\tbt'_\dg \ (\dgss)$ as the coefficient vectors of the non-intercept terms from \eqref{os_g} and \eqref{os_g_a} under fitting schemes ``ols'', ``wls'', and ``ag'', respectively. 
Renew $\hY_*$ as the sample-mean, {\htf}, and Hajek estimators of $\bY$ for $* = \nm, \HT, \haj$. 
Proposition \ref{correspondence_g} below states their numeric correspondence paralleling Proposition \ref{correspondence-nox}.

\begin{proposition}\label{correspondence_g}
$\tbt'_\ols = G_0\bynm,$ $ \tbt'_\wls = G_0\byhaj,$ and $ \tbt'_\ag  = G_0\byht.$
\end{proposition}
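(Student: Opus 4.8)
The plan is to generalize, to arbitrary $\ta$ and $\tb$, the two ingredients behind the $2^2$ identity of Proposition \ref{correspondence-nox}, exploiting the fact that both the cell-indicator and the factor-based parametrizations are \emph{saturated}. Let $\tbeta_\dg$ denote the coefficient vector of the cell indicators $\{1(Z_{ws}=z)\}_{z\in\mt}$ (respectively $\{1(A_wb=z)\}_{z\in\mt}$ for the aggregate scheme) in the saturated regression that generalizes \eqref{lm_t} and \eqref{lm_t_ag} to the $\ta\times\tb$ design. First I would establish the $\ta\times\tb$ analog of Proposition \ref{prop:ls}, namely $\tbeta_\ols=\bynm$, $\tbeta_\wls=\byhaj$, and $\tbeta_\ag=\byht$. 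Because this regression is saturated, the (weighted) least-squares fit decouples cell by cell: the \olss coefficient for cell $z$ is the unweighted cell mean $\hY_\sm(z)$; the \wlss coefficient under weights $\{p_{ws}(Z_{ws})\}^{-1}$ is the inverse-probability-weighted cell mean, which is exactly $\hY_\shaj(z)$; and the aggregate \olss coefficient is the whole-plot average of $\hU_w$, which is $\hY_\sht(z)$. This reduction is insensitive to the number of factor levels and follows verbatim from the $2^2$ derivation (see Section S4.4).

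Next I would observe that the factor-based regressors in \eqref{os_g} and \eqref{os_g_a} span the same $\ta\tb$-dimensional space of functions of the treatment as the cell indicators. A parameter count confirms both parametrizations are saturated: the intercept, the $(\ta-1)+(\tb-1)$ main-effect terms, and the $(\ta-1)(\tb-1)$ interaction terms sum to $\ta\tb=|\mt|$. Hence there is a fixed invertible $\ta\tb\times\ta\tb$ matrix $M$, depending only on $\ta,\tb$ through the shifts $\tainv,\tbinv$, expressing the factor-based regressor vector as $M$ applied to the indicator vector. By the invariance of least squares---under any weighting, hence uniformly across all three fitting schemes---to a non-degenerate linear reparametrization of the regressors, the two fits share identical fitted values and their coefficient vectors are related by $M^{-1}$. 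In particular the full factor-based coefficient vector equals $M^{-1}\tbeta_\dg$, with $\tbeta_\dg$ the saturated cell-mean estimator identified above; the same $M^{-1}$ serves all three schemes, only the underlying $\tbeta_\dg$ differing.

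It then remains to identify the non-intercept rows of $M^{-1}$ with the contrast matrix $G_0$, and this is the main obstacle. Since the model is saturated, the fitted values equal $\hY_*(ab)$ at every cell, so I would substitute the factorial parametrization and invert. The inversion is transparent because $M$ factorizes as a Kronecker product of a factor-A transformation and a factor-B transformation: for a single factor with levels $0,\dots,\ta-1$ and regressors $1,\oc(A_w=1),\dots,\oc(A_w=\ta-1)$ where $\oc(A_w=a)=1(A_w=a)-\tainv$, solving the saturated system gives the non-intercept coefficient of $\oc(A_w=a)$ as the baseline contrast $m_a-m_0$; the shift $\tainv$ is precisely what strips the intercept to produce differences from the baseline level. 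Lifting this through the Kronecker structure, with the B-intercept component supplying the average over $\mt_\B$ normalized by $\tbinv$ (and symmetrically for B), yields the coefficient of $\oc(A_w=a)$ as $\tbinv\sum_{b\in\mt_\B}\{\hY_*(ab)-\hY_*(0b)\}$, that of $\oc(B_{ws}=b)$ as $\tainv\sum_{a\in\mt_\A}\{\hY_*(ab)-\hY_*(a0)\}$, and that of the interaction as $\hY_*(ab)-\hY_*(0b)-\hY_*(a0)+\hY_*(00)$, in exact parallel with \eqref{eq:tau_tatb}. The genuine work here is the baseline-versus-non-baseline bookkeeping for general levels, which generalizes the $1/2$-shift computation of \citet{ZDfact} behind \eqref{dlm1_a}--\eqref{dlm3_a}. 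This shows the non-intercept rows of $M^{-1}$ form $G_0$, whence $\tbt'_\dg=G_0\tbeta_\dg$, equal to $G_0\bynm$, $G_0\byhaj$, and $G_0\byht$ for $\dg=\ols,\wls,\ag$, respectively.
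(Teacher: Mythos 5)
Your proof is correct and takes essentially the same route as the paper: the paper obtains Proposition \ref{correspondence_g} from the $\ta\times\tb$ analog of Proposition \ref{prop:ls} combined with the invariance of least squares to non-degenerate linear transformations of the regressors, which is precisely your argument. The explicit Kronecker-product inversion identifying the non-intercept rows of $M^{-1}$ with $G_0$ (baseline contrasts in one factor tensored with uniform averages over the other, produced exactly by the $\tainv$ and $\tbinv$ shifts) is the bookkeeping the paper leaves implicit, and you carry it out correctly.
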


The unbiasedness and consistency results then follow from the properties of $\hY_*$ for $* = \nm, \HT, \haj$ such that $\tbt'_\wls $ and $\tbt'_\ag $ are both consistent for estimating $\tau$.
The results on the cluster-robust covariances parallel those under the $2^2$ design, and ensure asymptotically conservative estimation of the true sampling covariances. 
This justifies the validity of  regression-based inferences from \eqref{os_g}--\eqref{os_g_a} under {\fts} $\dgs $.

The results on covariate adjustment are almost identical to those under the $2^2$ case and thus omitted.
The covariate-adjusted estimator from the \fr ensures asymptotic efficiency under   the generalized version of  Conditions \ref{asym}--\ref{asym2} and \eqref{cond:eff}. It is thus our recommendation for estimating the standard factorial effects under the general $T_\A\times T_\B$ split-plot design.

\subsection{Fisher randomization test}
The Fisher randomization test targets the strong null hypothesis of no treatment effect on any unit in its original form, 
and delivers finite-sample exact $p$-values regardless of the choice of test statistic \citep{CausalImbens}. 
The theory under complete randomization further demonstrates that the Fisher randomization test with  a robustly-studentized test statistic
is finite-sample exact under the strong null hypothesis, asymptotically valid under the weak null hypothesis of 
zero average treatment effect, 
and allows for flexible covariate adjustment to secure additional power \citep{D20, ZD20FRT}. 
The theory extends naturally to split-plot randomization.

Renew $\mathcal{B} =  \{\tbeta_\wls, \tbeta_{\wls,\dmd}; \, \tbeta_\ag ,  \tbeta_{\ag ,\dmd}, \tbeta_{\ag ,\dmd}(\alpha,v):  \dmd  = \fisher, \lin  \}$ as the collection of {\reg} estimators of $\bY$ that are consistent  under the general $\ta\times\tb$ split-plot design.
We propose to use the Fisher randomization test with test statistic $
t^2(\tbeta) = 
(G \tbeta)^\T   (G\tbV  G^\T)^{-1}  G\tbeta
$
for $\tbeta\in \mathcal{B}$, where $\tbV$ is the associated cluster-robust covariance.

The resulting test is finite-sample exact for testing the strong null hypothesis and asymptotically valid for testing the weak null hypothesis under split-plot randomization  for all $\tbeta \in\mathcal{B}$. Under \eqref{cond:eff}, the test based on $t^2(\tbeta_{\ag,\lin} (\alpha,v))$ has the highest power asymptotically. By duality, we can also construct confidence regions for factorial effects by inverting a sequence of Fisher randomization tests. We relegate the details to Section S4.5 of the Supplementary Material.

\section{Numerical example}\label{sec:application}
We now apply the proposed methods to  
the $3\times 2$ neuroscience experiment from Example \ref{ex:meon}.  
Due to the space limit, we relegate the simulation studies to Section S5 of the Supplementary Material.

We use the data set from \cite{meon}, which consists of $N= 1,143$ neuron level observations nested within $W = 14$ mice. Recall fatty acid delivery (``fa'') and Pten knockdown (``pten'') as the whole-plot and sub-plot factors, respectively, with $(T_\A, T_\B) = (3, 2)$.
The treatment sizes at the whole-plot level are $(W_0, W_1, W_2) = (5, 4, 5)$.
The whole-plot sizes $(M_w)_{w=1}^W$ vary from $13$ to $152$, with the  $q_{w1}$'s ranging from $0.518$ to $0.846$.

Denote by ``fa1'' and ``fa2'' the standard main effects of fatty acid delivery, by ``pten'' the standard main effect of Pten knockdown, and by ``fa1:pten'' and ``fa2:pten'' their interactions, with definitions given by \eqref{eq:tau_tatb}. 
We apply  four  factor-based regression schemes as the combinations of two fitting schemes, namely ``wls'' for  the   unit specification and ``ag'' for   the aggregate specification, and  the presence or absence of covariate adjustment. 
We use $x_{ws} = \alpha_w =  M_w /\bar M$ as the covariate for neurons in mouse $w$, and conduct covariate adjustment using the additive specification due to the small number of whole-plots at $W=14$.

Table \ref{tb:ptn} shows the point estimators, cluster-robust standard errors, $p$-values based on large-sample approximations of the $t$-statistics, and $p$-values based on Fisher randomization tests, respectively. 
Covariate adjustment reduces the standard errors of the estimators of ``fa1'' and ``fa2'' under both the unit and aggregate specifications,  yet has no effect on the estimators of ``pten'' and the  two interactions.  
The identicalness of the unadjusted and additive regressions for estimating ``pten'' and the two interactions is no coincidence but due to the use of whole-plot level covariate, namely $x_{ws} = \alpha_w$, for covariate adjustment. 
 The resulting covariate matrix, as the concatenation of $(x_{ws})_{ws\in\ms}$, is orthogonal to the centered regressors for the sub-plot factor and interactions after accounting for the least-squares weights, leaving their estimation unaffected by the inclusion of covariates. 
Proposition S3 in the Supplementary Material offers a rigorous statement.

The $p$-values from large-sample approximations and Fisher randomization tests concur in most cases at significance level $0.05$. 
Two exceptions are the tests for ``fa2'' under the adjusted unit regression and those for  ``fa1'' under the unadjusted aggregate regression. 
This is likely due to the small number of whole-plots that leaves the asymptotic approximation dubious. 
Based on the theory, the $p$-values based on the Fisher randomization tests should be trusted more given their additional guarantee of finite-sample exactness under the strong null hypothesis. 
This is especially important in the current case given its  small $W$. 

Recall that the {\reg} estimators under the ``wls'' and ``ag'' {\fts} correspond to the Hajek and {\htf} estimators, respectively. 
The estimators and $p$-values under the two fitting schemes concur in most cases with the two exceptions being the $p$-values from large-sample approximations for ``fa1'' under the unadjusted regressions and those for ``fa2'' under the adjusted regressions. 
This is again likely due to the  small $W$. 

Overall the four regression schemes and  two types of $p$-values lead to coherent conclusions: the Pten knockdown increased the soma sizes whereas the effect of  fatty acid delivery, along with its interaction with Pten, is statistically insignificant.

  \begin{table}[ht]
\centering
\caption{\label{tb:ptn}Re-analyzing the data from \citet{meon}. 
``p.normal'' and ``p.frt'' indicate the $p$-values from large-sample approximations and Fisher randomization tests, respectively.}

(a) regression based on unit data\\
\begin{tabular}{|r|rrrr|rrrr|}
  \hline
  &\multicolumn{4}{|c|}{unadjusted} &\multicolumn{4}{|c|}{adjusted}\\\hline
  &  est  &  se  &  p.normal  & p.frt &  est  &  se  &  p.normal  &  p.frt \\ 
  \hline
fa1  & $ 8.30 $ & $ 4.57 $ & $ 0.07 $ & $ 0.14 $ & $ 6.16 $ & $ 3.60 $ & $ 0.09 $ & $ 0.22$ \\ 
  fa2  & $ 5.29 $ & $ 5.31 $ & $ 0.32 $ & $ 0.39 $ & $ 10.36 $ & $ 4.86 $ & $ 0.03 $ & $ 0.11$ \\ 
  pten  & $ 14.00 $ & $ 1.81 $ & $ 0.00 $ & $ 0.00 $ & $ 14.00 $ & $ 1.81 $ & $ 0.00 $ & $ 0.00$ \\ 
  fa1:pten  & $ 8.64 $ & $ 5.24 $ & $ 0.10 $ & $ 0.23 $ & $ 8.64 $ & $ 5.24 $ & $ 0.10 $ & $ 0.23$ \\ 
  fa2:pten  &  $-1.33 $ & $ 2.71 $ & $ 0.62 $ & $ 0.68 $ & $ $-1.33$ $ & $ 2.71 $ & $ 0.62 $ & $ 0.68$ \\ 
   \hline
\end{tabular}\\
\bigskip
 (b) regression based on aggregate data\\
\begin{tabular}{|r|rrrr|rrrr|}
  \hline
     & \multicolumn{4}{|c|}{unadjusted}  & \multicolumn{4}{|c|}{adjusted}\\\hline
  &  est  &  se  &  p.normal  & p.frt &  est  &  se  &  p.normal  &  p.frt \\ 
  \hline
fa1  & $ 9.84 $ & $ 4.22 $ & $ 0.03 $ & $ 0.08 $ & $ 5.66 $ & $ 3.69 $ & $ 0.14 $ & $ 0.21$ \\ 
  fa2  & $ 5.64 $ & $ 6.23 $ & $ 0.38 $ & $ 0.42 $ & $ 8.24 $ & $ 5.53 $ & $ 0.15 $ & $ 0.23 $\\ 
  pten  & $ 13.15 $ & $ 1.95 $ & $ 0.00 $ & $ 0.00 $ & $ 13.15 $ & $ 1.95 $ & $ 0.00 $ & $ 0.00$ \\ 
  fa1:pten  & $ 7.13 $ & $ 5.55 $ & $ 0.21 $ & $ 0.28 $ & $ 7.13 $ & $ 5.55 $ & $ 0.21 $ & $ 0.28$ \\ 
  fa2:pten  &  $-2.51 $ & $ 3.05 $ & $ 0.42 $ & $ 0.49 $ &  $-2.51 $ & $ 3.05 $ & $ 0.42 $ & $ 0.49$ \\ 
   \hline
\end{tabular}

\end{table}

\section{Discussion}\label{sec:discussion}
 
Based on the asymptotic analysis, we recommend using the \olss outputs  from the \fr  if  the sample size permits, and switching to the additive regression if otherwise. The point estimator based on the \olss coefficient is consistent for estimating the finite-population average treatment effect, with the associated cluster-robust covariance being an asymptotically conservative estimator of the true sampling covariance.
The  resulting regression-based inference is valid from the design-based perspective regardless of how well the regression equation represents the true relationship between the outcome, treatments, and covariates.

\begin{acks}[Acknowledgments]
We thank the Associate Editor, two referees, Zhichao Jiang, and Rahul Mukerjee for most insightful comments.
\end{acks}

\begin{funding}
Peng Ding was partially funded by the U.S. National Science Foundation (grant \# 1945136). 
\end{funding}

\begin{supplement}
\stitle{Supplement to ``Reconciling design-based and model-based causal inferences for split-plot experiments''}
\sdescription{ We give the proofs of the results in the main paper, and provide additional results on the special case of uniform designs, extensions, and simulation studies. 
}
\end{supplement}

\bibliographystyle{imsart-nameyear} 
\bibliography{refs_SP}       

\newpage 
\setcounter{equation}{0}
\setcounter{section}{0}
\setcounter{figure}{0}
\setcounter{example}{0}
\setcounter{proposition}{0}
\setcounter{corollary}{0}
\setcounter{theorem}{0}
\setcounter{table}{0}
\setcounter{condition}{0}
\setcounter{lemma}{0}
\setcounter{remark}{0}

\renewcommand {\theproposition} {S\arabic{proposition}}
\renewcommand {\theexample} {S\arabic{example}}
\renewcommand {\thefigure} {S\arabic{figure}}
\renewcommand {\thetable} {S\arabic{table}}
\renewcommand {\theequation} {S\arabic{equation}}
\renewcommand {\thelemma} {S\arabic{lemma}}
\renewcommand {\thesection} {S\arabic{section}}
\renewcommand {\thetheorem} {S\arabic{theorem}}
\renewcommand {\thecorollary} {S\arabic{corollary}}
\renewcommand {\thecondition} {S\arabic{condition}}
\renewcommand {\thepage} {S\arabic{page}}

\setcounter{page}{1}

\begin{center}
\bf \Large 
Supplementary Material to ``Reconciling design-based and model-based causal inferences for split-plot experiments''  
\end{center}

Section \ref{sec:design_app} gives the proofs for the design-based inference under the $2^2$ split-plot design. 

Section \ref{sec:lm_app} gives the proofs for the regression-based inference without covariate adjustment. 

Section \ref{sec:ca_app} gives the proofs for the  regression-based covariate adjustment. 

Section \ref{sec:ext_app} gives the results and proofs for the special case of {\sym} designs and extensions to the HC2 correction for the cluster-robust covariances, covariate adjustment via factor-based regression, the general $\ta\times\tb$ design, and the Fisher randomization test.

Section \ref{sec:simu} gives the results of the simulation studies.

\section{Design-based inference for the $2^2$ split-plot design}\label{sec:design_app}
\subsection{Notation and useful facts}\label{sec:design_notation}
We review in this subsection the key notation and useful facts for verifying the results on design-based inference under the $2^2$ split-plot design.
The results are stated in terms of the general $\ta\times \tb$ design to facilitate generalization. 

Let 
$ a \in \mt_\A= \{0, 1, \ldots, T_\A -1\} $ and $ b  \in \mt_\B = \{0, 1, \ldots, T_\B -1\}$ indicate the levels of factors A and B in treatment combination $z \in\mT = \mt_\A \times \mt_\B$ throughout unless specified otherwise.  The $\mt_\A$ and $\mt_\B$ reduce to $\{0,1\}$ under the $2^2$ split-plot design. Assume lexicographical order of $z$ for all vectors and matrices  when applicable. 

Let $\bar M = N/W$ be the average whole-plot size, and let $\alpha_w = M_w/\bar M$ be the whole-plot size factor for $w = \ot{W}$. 
Let 
$Z_{ws}=(A_w,B_{ws})\in\mt$ indicate the treatment received by sub-plot $\ws$, 
with $
\pr(A_w=a) =  W_a/W= p_a $, $\pr(B_{ws}=b) =  M_{wb}/M_w = q_{wb}$, and $p_{ws}(z) = \pr(Z_{ws}=z)=p_a  q_{wb}$. 
For $z=(ab)$, let 
$\Wz   = \{w:A_w=a\}$ be the set of whole-plots that contain at least one observation under treatment $z$.

Let $\bY_w(z) = M_w^{-1} \sums \yws(z)$ and $\uw(z) = \bar M^{-1} \sums \yws(z)= \alpha_w \byw(z)$
be the whole-plot average potential outcome and the scaled whole-plot total potential outcome for whole-plot $w$ under treatment $z$, respectively. We have 
\begina
\bY(z)= W^{-1}\sumw \alpha_w \bar Y_w(z) = W^{-1}\sumw U_w(z).
\enda
Let $Y'_\ws(z) = Y_\ws(z) - \bar Y(z)$ be the centered potential outcome 
with $\by'_w(z)  = M_w^{-1} \sums \yws'(z) = \by_w(z) -\by(z)$ and $\bar Y'(z) = N^{-1}\sumws \yws'(z)=0$. 

Let $\ssht = S = (S(z,z'))_{z, z'\in\mt}$ and $\sshaj = (S_\shaj(z,z'))_{z, z'\in\mt}$ be the $|\mt|\times |\mt|$  scaled between whole-plot covariance matrices of $\{Y_\ws(z):z\in\mt\}_{\ws\in\ms}$ and  $\{Y'_\ws(z):z\in\mt\}_{\ws\in\ms}$, respectively, with 
\begina
 \ssht(z,z') =S(z,z') &=&  (W-1)^{-1}\sumw \{\alpha_w Y_w(z) - \bar Y(z)\}\{\alpha_w Y_w(z') - \bar Y(z')\},\\
 \sshaj(z,z') &=& 
(W-1)^{-1}\sumw   \alpha_w^2  \{ Y_w(z) -  \bar Y(z)\}\{   Y_w(z') -   \bar Y(z')\}.
\enda
A key observation is that $\ssht(z,z')  = S(z, z')$ equals the finite-population covariance of $\{U_w(z), U_w(z')\}_{w=1}^W$. 
A useful fact is 
\beginy\label{eq:ss_diff}
\qquad \lambda_W  \{ \sshaj(z, z') -  \ssht(z, z') \}
 =\bY(z)\bar Y(z')  (  \oat +1) -  \bY(z')  \ \oau - \bY(z)  \  \oaup  
\endy
for $z, z'\in\mt$, where $\lambda_W = 1-W^{-1}$, 
 $\oa{2} = W^{-1}\sumw \alpha^2_w$, and 
$\oau = W^{-1}\sumw \alpha_w U_w(z)$.

Let $
\hY_w(z) = M_{wb}^{-1} \sumsz \Yws$ and  $\hU_w(z) = \alpha_w \hY_w(z)$
be the sample analogs of $\bY_w(z)$ and $\uw(z)$, respectively, with $\hat Y_w(z) = \hU_w(z) = 0$ for $\znotin$. 
We can decompose $\yht$ as
\begina
\yht 
&=& W_a^{-1}\sum_{w \in \Wz} \hU_w(z) \\
&=& W_a^{-1}\sum_{w \in \Wz} U_w(z)  +  W_a^{-1}\sum_{w \in \Wz} \{\hU_w(z) - U_w(z)\}\\
&=& \mu(z) + \sumw \delta_w(z) \qquad \text{for} \ \  z =(ab) \in \mt,
\enda
where 
\begina
\pphi(z) = W_a^{-1}\sum_{w \in \Wz} U_w(z), \qquad \delta_w(z) = 1(A_w=a) \cdot W_a^{-1} \{\hat U_w(z) - U_w(z)\}.
\enda 
The randomness in $\pphi(z)$ comes solely from the stage (I) randomization. 
Let $\mu = \{\mu(z)\}_{z\in\mt}$ and $\delta_w = \{\delta_w(z)\}_{z\in\mt}$ be the $|\mt|\times 1$ vectorizations of $\mu(z)$ and $\delta_w(z)$, respectively. Then
\beginy\label{eq:decomp}
\byht  =   \pphi + \delta, \qquad \text{where} \quad \delta = \sumw \delta_w. 
\endy

Let 
$\mA = \sigma(A_1, \dots, A_W)$ be the $\sigma$-algebra generated by $(A_w)_{w=1}^W$.
The independence between the stage (I) and stage (II) randomizations ensures that $(\delta_w)_{w=1}^W$ are jointly independent conditioning on $\mathcal A$.  
As a result, we have $E( \delta_w \mid \mA)=E( \delta_w \mid A_w)= 0$, 
$E(\delta \mid \mA) = 0, \ \cov(\delta \mid \mA)= \sumw \cov(\delta_w \mid \ma)$ with $ \cov(\delta_w \mid \ma)= \cov(\delta_w \mid A_w)$, and thus 
\beginy\label{eq:sigA}
\renewcommand{\arraystretch}{1.5}
\begin{array}{l}
E(\delta_w) = 0; \qquad E(\delta) = 0; \\
\cov( \delta )  = \sumw \cov( \delta_w) \qquad \text{with} \ \ \cov( \delta_w) = E\{\cov( \delta_w \mid A_w)\}; \\ 
\cov( \pphi,  \delta ) = E\{\cov( \pphi,  \delta  \mid \mA)\} + \cov\{ E( \pphi \mid \mA), E( \delta  \mid \mA)\} =  0  . 
\end{array}
\endy

Further let $U_{ws}(z) = \alpha_w Y_{ws}(z)$ be the scaled potential outcome at the unit level. 
Then  $  
U_w(z)  = M_w^{-1} \sums \uws(z)$ and $\sw(z,z') = (M_w-1)^{-1} \sums \{\uws(z) - U_w(z)\}\{\uws(z') - U_w(z')\}$ equal the finite-population mean and covariance of $\{\uws(z): z\in \mt\}_{s=1}^{M_w}$ in whole-plot $w$, respectively, with $ \hU_w(z) = M_{wb}^{-1} \sumsz \uws(z)$ as the sample mean under treatment $z$. 
Let 
\begina
H_w = \big(H_w(z,z')\big)_{z,z'\in\mt} =   \diag(p_a^{-1})_{a\in \mt_\A} \otimes \{  \diag(q_{wb}^{-1})_{b\in\mt_\B} - \bbo{T_\B}\}
\enda with $H_w(z,z') = 1(a = a') p_a^{-1} \{q_{wb}^{-1}1(z=z') -1 \} $ for $z = (ab)$ and $z'=(a'b')$. 
Then
\beginy\label{eq:covU}
\cov\{\hU_w(z), \hU_w(z')\mid A_w=a \} 
&=& M_w^{-1} \{q_{wb}^{-1}1(z=z') -1 \} S_w(z,z')\\
&=& p_a  M_w^{-1}H_w(z,z')  S_w(z,z')\nonumber
\endy
for $z = (ab)$ and $z'=(ab')$ that share the same level of factor A. Let 
$\oufzw = M_w^{-1} \sums  U^4_{ws}(z) = \alpha_w^4 \oyfzw $ be the uncentered fourth moment of $U_{ws}(z)$ in whole-plot $w$.

Lastly, 
for two $L \times 1$ vectors $u = (u_1, \dots, u_L)^\T$ and $v = (v_1, \dots, v_L)^\T$, we have 
\beginy\label{cs}
\quad  (u^\T v)^4 \leq \|u\|_2^4 \|v\|_2^4; \qquad \bar u^4 \leq (\overline{u^2})^2 \leq \overline{u^4}, \quad \text{where} \ \ \overline{u^k} = L^{-1} \sum_{l=1}^L u_l^k, 
\endy
by the Cauchy--Schwarz inequality. 
Setting $L=2$ in $\bar u^ 4 \leq \overline{u^4}$ ensures $(u_1+u_2)^4 \leq 8 u_1^4 + 8u_2^4$.

\subsection{Established lemmas}
\begin{lemma}\citep[][Theorems 3 and 5]{DingCLT}\label{Ding17}
In a completely-randomized experiment with $N$ units and $Q$ treatment groups of sizes $N_q \ (q = \ot{Q})$, 
let $ Y_i(q)$ be the $L \times 1$ vector potential outcome of unit $i$ under treatment $q$, and let $ S_{qq'} = (N-1)^{-1} \sum_{i=1}^N \{ Y_i(q) - \bbY(q)\}\{ Y_i(q') - \bbY(q')\}^\T$ be the finite-population covariance for $1\leq q,q'\leq Q$. 
Let $\bt =  \sum_{q=1}^Q  G_q \bbY(q)$ be the finite-population average treatment effect of interest, and let $\hbt =  \sum_{q=1}^Q  G_q \hbY(q)$ be the corresponding moment estimator with $\hy(q) = N_q^{-1}\sum_{i: Z_i = q} Y_i$, where $Z_i$ and $Y_i$ are the treatment indicator and observed outcome for unit $i$, respectively. 
Then
$$\cov(\hbt ) = \sum_{q=1}^Q N_q^{-1}  G_q  S_{qq}  G_q^\T - N^{-1}  S_{\bt}^2,$$
where $S_{\tau }^2$ is the finite-population covariance of $\bt_i =  \sum_{q=1}^Q  G_q Y_i(q)$ for $i = \ot{N}$. 
Further assume that as $N\to\infty$, for all $1\leq q, q' \leq Q$, (i) $ S_{qq'}$ has a finite limit, (ii) $N_q/N$ has a limit in $(0,1)$, and (iii) $\max_{i = \ot{N}} | Y_i(q)-\bbY(q)|^2 /N = o(1)$. 
Then
$
\sqrt N(\hbt - \bt )  \rightsquigarrow \mN( 0,  V) 
$
with
$ V $ denoting the limiting value of $N\cov(\htau) $.
\end{lemma}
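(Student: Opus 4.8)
The plan is to prove the two assertions in turn: first the exact finite-sample covariance identity, which is pure finite-population bookkeeping, and then the asymptotic normality, which is the substantive part.

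For the covariance formula I would work from the sampling representation. Write $T_{iq} = 1(Z_i = q)$, so that $\hat{Y}(q) = N_q^{-1}\sum_{i=1}^N T_{iq}Y_i(q)$ and $\hat{\tau} - \tau = \sum_q G_q\{\hat{Y}(q) - \bar Y(q)\}$. Under complete randomization the indicators are exchangeable with known first two moments; in particular $E(T_{iq}) = N_q/N$, and the within- and across-arm assignment covariances $\cov(T_{iq}, T_{jq})$ and $\cov(T_{iq}, T_{jq'})$ have the standard closed forms that differ according to whether $i = j$ and whether $q = q'$. Substituting these into $\cov\{\hat{Y}(q), \hat{Y}(q')\} = N_q^{-1}N_{q'}^{-1}\sum_{i,j}\cov(T_{iq}, T_{jq'})\,Y_i(q)Y_j(q')^\T$ and simplifying gives the diagonal sampling-without-replacement variance $\cov\{\hat{Y}(q)\} = (N_q^{-1} - N^{-1})S_{qq}$ and, for $q\neq q'$, the value $\cov\{\hat{Y}(q), \hat{Y}(q')\} = -N^{-1}S_{qq'}$. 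The only nonroutine reduction is the off-diagonal block, where the cross-unit sum telescopes after writing $\sum_{i\neq j}Y_i(q)Y_j(q')^\T = N^2\bar Y(q)\bar Y(q')^\T - \sum_i Y_i(q)Y_i(q')^\T$.

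Assembling $\cov(\hat{\tau}) = \sum_{q,q'}G_q\cov\{\hat{Y}(q), \hat{Y}(q')\}G_{q'}^\T$ and recognizing that $S_\tau^2 = \sum_{q,q'}G_q S_{qq'}G_{q'}^\T$ is exactly the finite-population covariance of $\tau_i = \sum_q G_q Y_i(q)$, the diagonal contributes $\sum_q N_q^{-1}G_q S_{qq}G_q^\T$ while the $-N^{-1}$ terms combine across all $(q,q')$ to produce $-N^{-1}S_\tau^2$, yielding the claimed identity. For the central limit theorem I would first reduce the vector claim to a scalar one by the Cramér--Wold device: it suffices to establish $\sqrt{N}\,c^\T(\hat{\tau} - \tau)\rightsquigarrow \mathcal{N}(0, c^\T V c)$ for every fixed $c$. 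Setting $W_i(q) = (c^\T G_q)Y_i(q)$ as scalar pseudo-potential-outcomes, one has $c^\T\hat{\tau} = \sum_q N_q^{-1}\sum_{i: Z_i = q}W_i(q)$, a sum over arms of simple-random-sample means drawn from a common population of $N$ units. Conditions (i) and (ii) ensure the limit $c^\T V c$ of $N\cov(c^\T\hat\tau)$ is finite and that each $N_q/N$ stays in $(0,1)$, while condition (iii) transfers to the pseudo-outcomes through $|W_i(q) - \bar W(q)| \le \|c^\T G_q\|\,|Y_i(q) - \bar Y(q)|$, supplying the negligibility that a Lindeberg-type hypothesis requires.

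The hard part is precisely this scalar finite-population CLT. Because a single random permutation sends every unit to exactly one arm, the arm-specific sample means are negatively dependent and no i.i.d. central limit theorem applies directly. The standard resolution is to represent the design as sampling without replacement and to invoke a combinatorial/permutation central limit theorem in the spirit of Hoeffding and Hájek, whose moment hypothesis is exactly regularity condition (iii); this is the content imported from \citet{DingCLT}. Identifying the limiting covariance $V$ with $\lim_{N\to\infty} N\cov(\hat{\tau})$ from the first part then closes the argument.
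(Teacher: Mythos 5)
This lemma is not proved in the paper at all: it is imported verbatim as an established result, cited to Theorems 3 and 5 of \citet{DingCLT}, and used as a black box (e.g., in the proofs of Lemma S3 and Theorem 3.1). So there is no ``paper's own proof'' to compare against; the relevant benchmark is the proof in the cited reference, and your reconstruction is essentially that standard argument and is correct. The covariance identity follows exactly as you describe --- computing the first two moments of the assignment indicators $T_{iq}=1(Z_i=q)$, obtaining $\cov\{\hat Y(q)\}=(N_q^{-1}-N^{-1})S_{qq}$ and $\cov\{\hat Y(q),\hat Y(q')\}=-N^{-1}S_{qq'}$ for $q\neq q'$, and then noting that the $-N^{-1}$ terms assemble into $-N^{-1}S_\tau^2$ because $S_\tau^2=\sum_{q,q'}G_qS_{qq'}G_{q'}^\T$. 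The asymptotic normality step is also the standard route: Cram\'er--Wold reduction to scalar pseudo-outcomes $W_i(q)=c^\T G_qY_i(q)$, verification that condition (iii) transfers to the $W_i(q)$, and then an appeal to a finite-population combinatorial CLT of Hoeffding--H\'ajek type, since the linear combination $\sum_q N_q^{-1}\sum_{i:Z_i=q}W_i(q)$ is a permutation statistic. The only blemish is a mild circularity in your last paragraph: you invoke \citet{DingCLT} for precisely the finite-population CLT step, yet that is the very result being proved; in a self-contained write-up the citation at that point should go to the classical combinatorial CLTs (Hoeffding, H\'ajek), which you do name, rather than back to the lemma's own source.
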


\begin{lemma}\citep[][Theorem A.1]{ohlsson} \label{os}
For $W = 1, 2, 3, \dots$,  let $\{\xi_{W,w}: w = \ot{W}\}$ be a martingale difference sequence relative to the filtration $\{\mF_{W,w}: w = \zt{W}\}$, and let $X_W$ be an $\mF_{W,0}$-measurable random variable. 
Set 
$\xi_W = \sumw \xi_{W,w}$. 
Suppose that the following three conditions are fulfilled as $W \to \infty$.
\begin{enumerate}[(i)]
\item\label{os.1} $\sumw E(\xi_{W,w}^4) = o(1).$
\item\label{os.2} For some sequence of non-negative real numbers $\{\beta_W: W = 1, 2, 3, \dots\}$ with $\sup_{W\geq 1} \beta_W <\infty$, we have $ E\big[ \{ \sumw E(\xi_{W,w}^2\mid \mF_{W,w-1}) - \beta_W^2 \}^2\big] = o(1).$
\item\label{os.3} 
$\mathcal{L}(X_W) * \mN(0, \beta_W^2) \rightsquigarrow \mathcal{L}_0$ for some probability distribution $\mathcal{L}_0$, where $*$ denotes convolution. 
\end{enumerate} 
Then $\mathcal{L}(X_W + \xi_W)  \rightsquigarrow \mathcal{L}_0$ as $W\to \infty.$
\end{lemma}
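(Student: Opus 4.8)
The plan is to prove the lemma by the method of characteristic functions, following the factorization idea of McLeish. Fix a real number $t$. By L\'evy's continuity theorem it suffices to show that $E[\exp\{it(X_W+\xi_W)\}]$ converges to $\varphi_0(t)$, the characteristic function of $\mathcal{L}_0$. Condition \eqref{os.3} says precisely that $E[e^{itX_W}]\,e^{-t^2\beta_W^2/2}\to\varphi_0(t)$, so the entire problem reduces to proving
\[
\Delta_W := E\big[e^{itX_W}\,e^{it\xi_W}\big] - E\big[e^{itX_W}\big]\,e^{-t^2\beta_W^2/2} \longrightarrow 0 .
\]

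The engine of the argument is the auxiliary product $T_W=\prod_{w=1}^W(1+it\xi_{W,w})$. Because $\{\xi_{W,w}\}$ is a martingale difference sequence, $E(1+it\xi_{W,w}\mid\mF_{W,w-1})=1$, and since $X_W$ is $\mF_{W,0}$-measurable the variable $e^{itX_W}\prod_{j<w}(1+it\xi_{W,j})$ is $\mF_{W,w-1}$-measurable. Peeling the factors of $T_W$ off one at a time, from $w=W$ down to $w=1$, therefore yields the exact identity $E[e^{itX_W}T_W]=E[e^{itX_W}]$. Substituting this for the second term of $\Delta_W$ gives $\Delta_W=E\big[e^{itX_W}\big(e^{it\xi_W}-T_W e^{-t^2\beta_W^2/2}\big)\big]$, whence $|\Delta_W|\le E\big|e^{it\xi_W}-T_W e^{-t^2\beta_W^2/2}\big|$ since $|e^{itX_W}|\le 1$.

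Next I would use the elementary expansion $\log(1+iy)=iy+y^2/2+r(y)$ with $|r(y)|\le|y|^3$ to factor $e^{it\xi_W}=T_W\exp\big(-\tfrac{t^2}{2}V_W+\rho_W\big)$, where $V_W=\sum_{w=1}^W\xi_{W,w}^2$ is the quadratic variation and $\rho_W=\sum_{w=1}^W r(t\xi_{W,w})$ is the accumulated cubic remainder. This turns the bound into
\[
\big|\Delta_W\big|\le E\Big[|T_W|\,\big|\exp\!\big(-\tfrac{t^2}{2}V_W+\rho_W\big)-e^{-t^2\beta_W^2/2}\big|\Big].
\]
The bracketed factor tends to $0$ in probability: condition \eqref{os.1} forces $V_W-\sum_w E(\xi_{W,w}^2\mid\mF_{W,w-1})\to0$ in $L^2$ (its variance is at most $\sum_w E[\xi_{W,w}^4]$), which together with \eqref{os.2} gives $V_W\to\beta_W^2$ in $L^1$; the same condition gives $\max_w|\xi_{W,w}|\to0$ in probability, so $|\rho_W|\le|t|^3\max_w|\xi_{W,w}|\cdot V_W\to0$ in probability. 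The factor is moreover bounded, since $V_W\ge0$ makes $|\exp(-\tfrac{t^2}{2}V_W+\rho_W)|\le e^{\operatorname{Re}\rho_W}$, which is uniformly close to $1$ in probability.

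The main obstacle is passing this convergence through the expectation against $|T_W|$, for which the pointwise control above is not by itself enough: one needs uniform integrability of $\{T_W\}$. Here condition \eqref{os.1} is exactly what is required, as it yields $E[\max_w\xi_{W,w}^2]\le(\sum_w E[\xi_{W,w}^4])^{1/2}\to0$; a truncation argument (splitting each $\xi_{W,w}$ at a fixed level $\delta$, bounding the discarded mass by the fourth-moment condition, and using $|1+it\xi|^2\le\exp(t^2\xi^2)$ on the truncated part) then shows $\{T_W\}$ is uniformly integrable with $\sup_W E|T_W|<\infty$. With uniform integrability in hand, a generalized dominated-convergence (Vitali) argument drives the right-hand side above to $0$, so $\Delta_W\to0$ and hence $\mathcal{L}(X_W+\xi_W)\rightsquigarrow\mathcal{L}_0$. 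I expect this uniform-integrability step, together with the bookkeeping that relates the conditional sum of variances $\sum_w E(\xi_{W,w}^2\mid\mF_{W,w-1})$ to the unconditional $V_W$, to absorb essentially all of the genuine work.
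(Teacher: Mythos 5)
First, a point of reference: the paper does not prove this lemma at all --- it is imported verbatim from the cited Ohlsson paper (Theorem A.1) and listed under ``Established lemmas'' in the Supplementary Material, so there is no in-paper proof to compare against. The relevant benchmark is the McLeish-style characteristic-function argument underlying Ohlsson's result, which your proposal correctly reconstructs in outline. Your treatment of the one ingredient that distinguishes this statement from the plain martingale CLT --- the $\mF_{W,0}$-measurable $X_W$ --- is exactly right: since $e^{itX_W}\prod_{j<w}(1+it\xi_{W,j})$ is $\mF_{W,w-1}$-measurable, the peeling identity $E[e^{itX_W}T_W]=E[e^{itX_W}]$ holds (granted integrability), and condition (iii) supplies the limit of $E[e^{itX_W}]e^{-t^2\beta_W^2/2}$. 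Your moment bookkeeping is also correct: $V_W-\sum_{w=1}^W E(\xi_{W,w}^2\mid\mF_{W,w-1})$ is itself a sum of martingale differences with second moment at most $\sum_{w=1}^W E(\xi_{W,w}^4)$, and $E[\max_w \xi_{W,w}^2]\le \{\sum_{w=1}^W E(\xi_{W,w}^4)\}^{1/2}$.

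However, the uniform-integrability step --- which you rightly identify as carrying the real weight --- does not work as sketched, and this is a genuine gap. Splitting each $\xi_{W,w}$ at a fixed level $\delta$ and using $|1+it\xi|^2\le e^{t^2\xi^2}$ only yields $|T_W|\le\exp\{\tfrac{t^2}{2}\sum_{w}\xi_{W,w}^2\,1(|\xi_{W,w}|\le\delta)\}$ times the large-term factors, and conditions (i)--(ii) control $V_W=\sum_w \xi_{W,w}^2$ only in $L^1$, not almost surely; $e^{cV_W}$ with $V_W$ merely $L^1$-bounded need not be integrable, let alone uniformly integrable. Worse, the ``exact'' peeling identity is asserted for the raw product $T_W$, whose integrability is not available in the first place: a product of $W$ factors each in $L^4$ need not be in $L^1$ once $W\geq 5$. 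The standard repair, used by McLeish and in the cited source, is an \emph{adapted stopping-time} truncation: replace $\xi_{W,w}$ by $\hat\xi_{W,w}=\xi_{W,w}\,1\{\sum_{j<w}\xi_{W,j}^2\le K\}$, whose indicator is $\mF_{W,w-1}$-measurable, so the martingale-difference property is preserved --- unlike your fixed-level truncation of individual terms, which destroys that property and would force a re-centering (itself controlled by condition (i)). Combined with the event $\max_w|t\xi_{W,w}|\le 1/2$, whose complement has probability $o(1)$ by condition (i), this caps $\sum_w\hat\xi_{W,w}^2\le K+O(\delta^2)$ and makes the truncated product uniformly bounded, hence trivially uniformly integrable; conditions (i)--(ii) and $\sup_W\beta_W<\infty$ then show the truncated and untruncated sums coincide with probability tending to one as $K\to\infty$. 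Note also that your remainder bound $|r(y)|\le|y|^3$ is valid only for small $|y|$ and must likewise be invoked on that same high-probability event. With these substitutions your argument closes; as written, the load-bearing uniform-integrability claim is unsupported.
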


\subsection{New lemmas}

We give in this subsection the key lemmas for verifying the results on design-based inference under the $2^2$ split-plot design.
The lemmas and their proofs extend to the general $\ta\times \tb$ design with minimal modification.

\medskip

\textbf{Decomposition of $\cov(\byht)$ in finite samples.}
Lemma \ref{adam} below separates the parts in $  \cov(\byht)$ that are due to $ \pphi$ and $\delta$ from \eqref{eq:decomp}, respectively. The decomposition furnishes an alternative proof of Lemma \ref{Vmat} relative to \citet{MD}. 
\begin{lemma}\label{adam}
Under the $2^2$ split-plot randomization, we have
$$ \cov(\byht) = \cov( \pphi) +\cov(\delta)$$
with $\cov( \pphi)  = W^{-1} ( H \circ  \ssht)$ and $\cov(\delta)  = \sumw \cov(\delta_w) =  W^{-1}\Psi$, where $\cov( \delta_w) = W^{-2} M_w^{-1}  ( H_w \circ  S_w)$. 
\end{lemma}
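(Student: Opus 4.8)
The plan is to exploit the two-stage decomposition $\byht = \pphi + \delta$ recorded in \eqref{eq:decomp}, where $\pphi$ carries the stage (I) randomness and $\delta = \sum_w \delta_w$ the stage (II) randomness. The additivity claim $\cov(\byht) = \cov(\pphi) + \cov(\delta)$ is then immediate from bilinearity once the cross term is killed: expanding $\cov(\pphi+\delta)$ and invoking the identity $\cov(\pphi,\delta)=0$ already established in \eqref{eq:sigA} (which holds because $\pphi$ is $\mA$-measurable and $E(\delta\mid\mA)=0$) leaves exactly $\cov(\pphi)+\cov(\delta)$. So the substance lies in the two separate covariance computations, one per stage.

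For $\cov(\pphi)$, the key observation is that $\pphi(ab) = W_a^{-1}\sum_{w:A_w=a} U_w(ab)$ is precisely a whole-plot-level group mean generated by the stage (I) complete randomization of the $W$ whole-plots into two groups of sizes $W_0,W_1$. The plan is to treat the whole-plots as $N=W$ units and the two levels of factor A as $Q=2$ treatment groups, assign the two-dimensional whole-plot potential outcome $(U_w(a0),U_w(a1))^\T$ to level $a$, and apply Lemma \ref{Ding17} with selection matrices $G_0 = (I_2, 0_{2\times2})^\T$ and $G_1 = (0_{2\times2}, I_2)^\T$ so that $\pphi = G_0\hat{\bar Y}(0)+G_1\hat{\bar Y}(1)$. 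Writing $S_{aa}$ for the $2\times2$ finite-population covariance of $(U_w(a0),U_w(a1))^\T$, Lemma \ref{Ding17} gives $\cov(\pphi)=W_0^{-1}G_0 S_{00}G_0^\T + W_1^{-1}G_1 S_{11}G_1^\T - W^{-1}\ssht$, since the contrast vector $(U_w(z))_{z\in\mt}$ has finite-population covariance $\ssht$. The first two terms assemble into the block-diagonal matrix $[\diag(W_0^{-1},W_1^{-1})\otimes\bbo{2}]\circ\ssht$; substituting $W_a^{-1}=W^{-1}p_a^{-1}$ and using $\bbo{4}\circ\ssht=\ssht$ then collapses the whole expression to $W^{-1}\{[\diag(p_0^{-1},p_1^{-1})\otimes\bbo{2}-\bbo{4}]\circ\ssht\}=W^{-1}(H\circ\ssht)$.

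For $\cov(\delta)$, the plan is to use $\cov(\delta)=\sum_w\cov(\delta_w)$ with $\cov(\delta_w)=E\{\cov(\delta_w\mid A_w)\}$ from \eqref{eq:sigA}, reducing everything to a single whole-plot. Conditioning on $A_w=a$, the only nonzero entries of $\delta_w$ are $z=(a0),(a1)$, each equal to $W_a^{-1}\{\hU_w(ab)-U_w(ab)\}$ and driven purely by the stage (II) stratified draw inside whole-plot $w$; its conditional covariance is supplied by \eqref{eq:covU}, giving $\cov\{\delta_w(ab),\delta_w(ab')\mid A_w=a\}=W_a^{-2}p_a M_w^{-1}H_w(ab,ab')S_w(ab,ab')$. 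Averaging over $A_w$ multiplies this by $P(A_w=a)=p_a$, and since $W_a^{-2}=W^{-2}p_a^{-2}$ the net prefactor $p_a^2 W_a^{-2}$ collapses to $W^{-2}$, leaving $W^{-2}M_w^{-1}H_w(z,z')S_w(z,z')$. Entries with $z,z'$ under different levels of factor A vanish on both sides (consistent with $H_w$ being block-diagonal in the level of A), so $\cov(\delta_w)=W^{-2}M_w^{-1}(H_w\circ S_w)$; summing over $w$ and matching the definition $\Psi=W^{-1}\sum_w M_w^{-1}(H_w\circ S_w)$ from Lemma \ref{Vmat} gives $\cov(\delta)=W^{-1}\Psi$.

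The main obstacle I anticipate is bookkeeping rather than conceptual: translating Lemma \ref{Ding17}, which is phrased for a single contrast, into the Kronecker--Hadamard form of $H$, and keeping the $W_a$-versus-$p_a$ prefactors and the indicator-induced block structure straight throughout the $\delta_w$ step. I would pay particular attention to the two off-block ($a\neq a'$) cases, which land differently in the two computations: for $\pphi$ this is exactly the $-\bbo{4}$ (equivalently $-W^{-1}\ssht$) cross-group term produced by Lemma \ref{Ding17}, whereas for $\delta_w$ it vanishes identically because $\delta_w(ab)\delta_w(a'b')=0$ whenever $a\neq a'$.
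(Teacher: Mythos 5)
Your proposal is correct and follows essentially the same route as the paper's own proof: the additivity via $\cov(\pphi,\delta)=0$ from \eqref{eq:sigA}, the application of Lemma \ref{Ding17} to the whole-plot population with selection matrices $(I_2, 0_{2\times2})^\T$ and $(0_{2\times2}, I_2)^\T$ for $\cov(\pphi)$, and the conditional covariance \eqref{eq:covU} combined with the law of total covariance (with the $p_a^2 W_a^{-2} = W^{-2}$ cancellation and $H_w(0)+H_w(1)=H_w$) for $\cov(\delta_w)$. The bookkeeping you flag, including the off-block behavior in the two computations, is handled exactly as in the paper.
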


\begin{proof}[Proof of Lemma \ref{adam}]
The identities $\cov(\byht) = \cov( \pphi) +\cov(\delta)$ and  $
 \cov( \delta ) = \sumw \cov( \delta_w)$ follow from \eqref{eq:sigA}. 
We verify below the analytic forms of $\cov( \pphi)$ and $\cov(\delta_w)$, respectively. 

For the analytic form of $\cov( \pphi)$, 
define $ U_w(a) = (U_w(a0), U_w(a1) )^\T$
as the vector potential outcome of whole-plot $w$ under $A_w=a \in \{0,1\}$. 
The finite-population mean and covariance of $\{U_w(a)\}_{w=1}^W$ equal $\bbU(a) = W^{-1}\sumw U_w(a) 
=  (\bar Y(a0), \bar Y(a1)  )^\T$  
and 
\begina
 \ssht(a) =  (W-1)^{-1}\sumw \{ U_w(a) - \bbU(a)\}^2 = \left(\begin{array}{cc} \ssht(a0,a0) & \ssht(a0,a1)\\ \ssht(a1,a0) & \ssht(a1,a1)\end{array}\right) \quad(a=0,1),
\enda
respectively. Direct comparison shows that $ \pphi$ equals the sample analog of 
$ \by = (\bar U(0)^\T, \bar U(1)^\T)^\T = \left( I_2,  0_{2\times 2}\right)^\T \bbU(0) +  \left( 0_{2\times 2},  I_2\right)^\T \bbU(1) $
with regard to the stage (I) randomization. 
It then follows from Lemma \ref{Ding17} that 
\begin{align*}
\cov( \pphi)
\, & = \, W_0^{-1}  \left(\begin{array}{c}
 I_2\\ 0_{2\times 2}\end{array}\right)  \ssht(0) \left( I_2,  0_{2\times 2}\right) + W_{1}^{-1}  \left(\begin{array}{c}
 0_{2\times 2}\\ I_2\end{array}\right)  \ssht(1)  \left( 0_{2\times 2},  I_2\right) - W^{-1}  \ssht\\
\, & = \, W^{-1} ( H \circ  \ssht).
\end{align*}
%
%
%
%

For the analytic form of $\cov( \delta_w)$, 
it follows from the definition of $\delta_w$ and \eqref{eq:covU} that 
\beginy\label{eq:deltaw}
 \cov(
 \delta_w \mid A_w=a) = W_a^{-2} p_aM_w^{-1}  \{ H_w(a) \circ  S_w\} = p_a^{-1} W^{-2} M_w^{-1}  \{ H_w(a) \circ  S_w\} 
\endy
with
$ 
H_w(0) =  \diag(p_0^{-1}, 0) \otimes \{ \diag(q_{w0}^{-1}, q_{w1}^{-1} ) - \bbo{2} \}$ and $
H_w(1) =  \diag(0, p_1^{-1}) \otimes \{ \diag(q_{w0}^{-1}, q_{w1}^{-1} ) - \bbo{2}  \}$
corresponding to the upper-left and lower-right $2\times 2$ block matrices of $H_w$, respectively. 
It then follows from \eqref{eq:sigA} and  $ H_w(0) +  H_w(1) =  H_w$ that 
\begina
\cov( \delta_w)= E\{\cov( \delta_w\mid A_w)\} = \sum_{a=0,1} \pr(A_w=a)\cdot \cov( \delta_w\mid A_w = a) =  W^{-2}M_w^{-1}  ( H_w \circ  S_w).
\enda 
\end{proof}
%
\textbf{The weak law of large numbers.}
%
We give in this part the weak law of large numbers for quantifying the probability limits of $\hys$ and $\hat V_* \ (* = \sht, \shaj)$ under some weaker conditions than Condition \ref{asym}, summarized in Condition \ref{asym0_S} below.

 \begin{condition}\label{asym0_S} 
As $W$ goes to infinity, for all $a,b = 0,1$, and $z\in \mt$, 
\begin{enumerate}[(i)]
\item \label{qb} $p_a$ has a limit in $(0,1)$; $\epsilon \leq \min_{w = \ot{W}} q_{wb} \leq \max_{w = \ot{W}} q_{wb} \leq 1 - \epsilon$ for some $\epsilon \in (0,1/2]$ independent of $W$; 
\item\label{asym0}  $\bar Y$ has a finite limit; $S = O(1)$ and $\Psi = O(1)$; 
\item\label{mm}$W^{-2}\sumw \alpha^4_w   \oyfzw   = o(1)$.
 \end{enumerate}
\end{condition}

Condition \ref{asym} ensures that $\{Y_{ws}(z): z\in\mt\}_{ws\in\mathcal S}$, $\{Y_{ws}'(z): z\in\mt\}_{ws\in\mathcal S}$, and the finite population of all ones, namely $\{Y_{ws}(z) = 1: z\in\mT\}_{ws\in\mathcal S}$,  all satisfy Condition \ref{asym0_S}\eqref{asym0}--\eqref{mm}.

\begin{lemma}\label{wlln}
Assume  split-plot randomization. Then
\begin{enumerate}[(i)]
\item\label{wlln_Y} $\byht  -  \bar Y = \op$ 
provided Condition \ref{asym0_S}\eqref{qb}--\eqref{asym0}; 
\item\label{wlln_1} 
$
\hat 1_\sht = \diag\{ \oneHT\}_{z\in\mt} = I_{|\mt|} + \op
$
provided Condition \ref{asym0_S}\eqref{qb} and $\oat = O(1)$.
\end{enumerate}
 \end{lemma}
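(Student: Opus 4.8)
The plan is to derive both statements from the finite-sample first and second moments of $\byht$ supplied by Lemma \ref{Vmat}, combined with a Chebyshev argument, and then to obtain part (ii) as a special case of part (i) applied to an auxiliary all-ones population.

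For part (i), I would start from $E(\byht) = \bbY$ and $\cov(\byht) = W^{-1}(H\circ\ssht + \Psi)$ (Lemma \ref{Vmat}). Under Condition \ref{asym0_S}\eqref{qb} the design parameters $p_a^{-1}$ and $q_{wb}^{-1}$ are uniformly bounded, so the entries of $H$ and of each $H_w$ are $O(1)$; under Condition \ref{asym0_S}\eqref{asym0} we have $\ssht = S = O(1)$ and $\Psi = O(1)$. Hence $\cov(\byht) = W^{-1}O(1) = o(1)$, and in particular $\var(\yht) = o(1)$ for each $z\in\mt$. Combining the vanishing variance with unbiasedness, Chebyshev's inequality gives $\yht - \bY(z)\toinp 0$ for each of the finitely many $z\in\mt$, which is exactly $\byht - \bbY = \op$.

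For part (ii), the key observation is that $\oneHT$ is precisely the {\htf} estimator $\yht$ evaluated on the auxiliary finite population in which every potential outcome equals one, namely $Y_{ws}(z) = 1$ for all $ws\in\ms$ and $z\in\mt$; for this population $\bY(z) = 1$. I would therefore verify that this all-ones population satisfies Condition \ref{asym0_S}\eqref{qb}--\eqref{asym0}, so that part (i) immediately yields $\oneHT - 1\toinp 0$ for each $z$ and hence $\hat{1}_\sht = I_{|\mt|} + \op$. Condition \eqref{qb} concerns only the design and is unchanged. For Condition \eqref{asym0}: the limit $\bY(z) = 1$ is finite; since $U_{ws}(z) = \alpha_w Y_{ws}(z) = \alpha_w$ is constant across $s$ within each whole-plot, every within-whole-plot covariance $S_w$ vanishes and thus $\Psi = 0 = O(1)$; and the between-whole-plot matrix reduces to $\ssht(z,z') = (W-1)^{-1}\sumw (\alpha_w - 1)^2 = \lambda_W^{-1}(\oat - 1)$, the finite-population variance of the $\alpha_w$'s, which is $O(1)$ exactly when $\oat = O(1)$. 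This matches the hypothesis of part (ii).

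No step poses a genuine technical obstacle: once the $W^{-1}$ decay of the covariance is in hand, the argument is a routine Chebyshev bound. The only point requiring a moment of thought is the reinterpretation of $\oneHT$ as the {\htf} estimator on the all-ones population, together with the verification that the second-moment conditions there collapse to the single requirement $\oat = O(1)$, because the within-whole-plot variation disappears and the between-whole-plot variation is exactly the dispersion of the whole-plot size factors.
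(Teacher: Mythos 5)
Your proposal is correct and follows essentially the same route as the paper: part (i) via the exact moments $E(\byht)=\bbY$ and $\cov(\byht)=W^{-1}(H\circ\ssht+\Psi)=o(1)$ from Lemma \ref{Vmat} plus a Markov/Chebyshev bound, and part (ii) by applying part (i) to the all-ones finite population. The only difference is that you spell out the verification that this auxiliary population satisfies the hypotheses (within-plot covariances vanish so $\Psi=0$, and $\ssht$ reduces to the dispersion of the $\alpha_w$'s, hence the requirement $\oat=O(1)$), a check the paper leaves implicit.
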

 
 \begin{proof}[Proof of Lemma \ref{wlln}]
 Standard result ensures $ E(\byht) =\bar Y $. 
 Lemma \ref{Vmat} ensures 
 $\cov(\byht)  =W^{-1}( H \circ \ssht  + \Psi) = o(1)$
 under Condition \ref{asym0_S}\eqref{qb}--\eqref{asym0}.
 The result for $\byht$ then follows from Markov's inequality. 
The result for $\hat 1_\sht$ follows from 
applying statement \eqref{wlln_Y} to the finite population of all ones.  
\end{proof}

\begin{lemma}\label{bounded}
Assume split-plot randomization and Condition \ref{asym0_S}\eqref{qb} and \eqref{mm}. Then
\begin{enumerate}[(i)]
\item\label{bounded.b} $W^{-2}\sumw E \{ \hat U^2_w(z)   \hat U^2_w(z')   \mid A_w = a \} =o(1) \ $ for $z=(ab)$ and $z'=(ab')$; 
\item\label{bounded.c} 
$W^2 \sumw  E(\|\delta_w \|^4_2 \mid A_w = a) = o(1); \quad W^2 \sumw  E(\|\delta_w \|^4_2) = o(1)$.
\end{enumerate}
\end{lemma}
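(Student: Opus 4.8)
The plan is to reduce both parts to a single fourth-moment bound on the whole-plot sample mean $\hU_w(z)$ under the stage (II) randomization, and then combine it with the Cauchy--Schwarz inequality and Condition \ref{asym0_S}\eqref{mm}. The cornerstone is the estimate
$$
E\{\hU^4_w(z) \mid A_w = a\} \leq \oufzw \qquad \text{for} \ \ z = (ab).
$$
To establish it I would first observe that, conditional on $A_w = a$, the index set $\{s: Z_{ws} = z\}$ is a simple random sample of size $M_{wb}$ from $\{1,\dots,M_w\}$, so that $\hU_w(z)$ is the sample mean of $\{U_{ws}(z)\}_{s=1}^{M_w}$. Jensen's inequality for the convex map $t\mapsto t^4$ then gives $\hU^4_w(z) \leq M_{wb}^{-1}\sumsz U_{ws}^4(z)$ pointwise, and taking the conditional expectation (each unit enters the sample with probability $q_{wb} = M_{wb}/M_w$) collapses the right-hand side exactly to $\oufzw$.

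For part \eqref{bounded.b}, I would bound the mixed moment $E\{\hU^2_w(z)\hU^2_w(z') \mid A_w = a\}$ by $\{E(\hU^4_w(z)\mid A_w=a)\,E(\hU^4_w(z')\mid A_w=a)\}^{1/2}$ via Cauchy--Schwarz, apply the cornerstone bound to each factor, and use $\sqrt{uv}\leq (u+v)/2$ to reach $\tfrac12\{\oufzw + \overline{U^4_{w\cdot}(z')}\}$. Multiplying by $W^{-2}$, summing over $w$, and recalling $\oufzw = \alpha_w^4\oyfzw$ reduces the claim to $W^{-2}\sumw \alpha_w^4\oyfzw = o(1)$, which is precisely Condition \ref{asym0_S}\eqref{mm} (applied to both $z$ and $z'$).

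For part \eqref{bounded.c}, I would exploit that, given $A_w = a$, only the two coordinates $z\in\{(a0),(a1)\}$ of $\delta_w$ are nonzero, so $\|\delta_w\|_2^4 = W_a^{-4}(d_0^2 + d_1^2)^2$ with $d_b = \hU_w(ab) - U_w(ab)$. Expanding the square and taking conditional expectations, I would bound each $E(d_b^4 \mid A_w = a)$ using the inequality $(u_1+u_2)^4 \leq 8u_1^4 + 8u_2^4$ from \eqref{cs} together with the cornerstone bound and the companion estimate $U_w^4(ab) \leq \overline{U^4_{w\cdot}(ab)}$ (again Jensen), and control the cross term $E(d_0^2 d_1^2 \mid A_w = a)$ by Cauchy--Schwarz and AM--GM. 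Collecting terms yields $E(\|\delta_w\|_2^4 \mid A_w = a) \lesssim W_a^{-4}\{\overline{U^4_{w\cdot}(a0)} + \overline{U^4_{w\cdot}(a1)}\}$. The conditional claim follows because the external factor combines as $W^2 W_a^{-4} = p_a^{-4}W^{-2}$, with $p_a^{-4}$ bounded by Condition \ref{asym0_S}\eqref{qb}, leaving $W^{-2}\sumw \overline{U^4_{w\cdot}(z)} = o(1)$; the unconditional statement then follows by averaging over $a$ with weights $p_a = \pr(A_w=a)$.

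The only step requiring genuine care is the power counting in part \eqref{bounded.c}: the factor $W_a^{-4}$ arising from $\delta_w = 1(A_w=a)W_a^{-1}\{\hU_w(z)-U_w(z)\}$ must combine with the external $W^2$ to reproduce exactly the $W^{-2}\sumw \alpha_w^4\oyfzw$ scaling of Condition \ref{asym0_S}\eqref{mm}. Everything else is a routine chain of Jensen, Cauchy--Schwarz, and AM--GM inequalities; the substantive ingredient is the cornerstone fourth-moment bound, after which both parts reduce to bookkeeping.
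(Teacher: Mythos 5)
Your proposal is correct and follows essentially the same route as the paper: both reduce everything to fourth-moment control of $\hat U_w(z)$ via Jensen/Cauchy--Schwarz/AM--GM, use the same decomposition $\|\delta_w\|_2^4 = W_a^{-4}(d_0^2+d_1^2)^2$ with the power counting $W^2W_a^{-4}=p_a^{-4}W^{-2}$, and close with Condition \ref{asym0_S}\eqref{mm}. The only (harmless) difference is your cornerstone step, which computes the conditional expectation exactly through the sampling probability $q_{wb}$ to get $E\{\hat U_w^4(z)\mid A_w=a\}\leq \oufzw$, whereas the paper uses the cruder pointwise bound $\hat U_w^4(z)\leq q_{wb}^{-1}\,\oufzw\leq \epsilon^{-1}\,\oufzw$, so your constant is $1$ instead of $\epsilon^{-1}$.
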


\begin{proof}[Proof of Lemma \ref{bounded}]
For statement \eqref{bounded.b}, 
it follows from 
$
\hat U_w(z)  =  M_{w b}^{-1}\sumsz   U_{ws}(z)$, \eqref{cs}, and $q_{wb} \geq \ep$ by Condition \ref{asym0_S}\eqref{qb}
that 
\begina
\hat U_w^4(z) \leq M_{w b}^{-1} \sumsz U^4_{ws}(z) \leq M_{w b}^{-1} \sums  U^4_{ws}(z) = q_{wb}^{-1} \ \oufzw \leq   \epsilon^{-1} \ \oufzw ,
\enda
recalling $ \oufzw = M_w^{-1} \sums  U^4_{ws}(z)$.  
This ensures 
$$
 W^{-2}\sumw E \{  \hat U^4_w(z)  \mid A_w = a \}  \leq  W^{-2}\epsilon^{-1}\sumw     \oufzw  = o(1)
$$
by Condition \ref{asym0_S}\eqref{mm}. 
The result then follows from $\hat U^2_w(z) \hat U^2_w(z')  \leq 2^{-1}   \{ \hat U^4_w(z) +   \hat U^4_w(z')  \}$.

For statement \eqref{bounded.c},  it suffices to verify the first equality, namely
\beginy\label{eq:d4}
W^2\sumw E (  \|\delta_w\|^4_2 \mid A_w=a ) 
 = o(1).
 \endy
 The second equality then follows from the law of total expectation. 
 
To verify \eqref{eq:d4},  note that $ \|\delta_w\|^2_2 = W_a^{-2} [\{  \hat U_w(a1) - U_w(a1)\}^2 + \{  \hat U_w(a0) - U_w(a0)\}^2 ]$ for  $w$ with $A_w=a$.
This ensures
\begina
 \|\delta_w\|^4_2 =  (\|\delta_w\|^2_2  ) ^2 
 \leq 2W_a^{-4} \{  \hat U_w(a1) - U_w(a1) \}^4 +2W_a^{-4}  \{  \hat U_w(a0) - U_w(a0) \}^4
\enda
for $w$ with $A_w = a$ by the Cauchy--Schwarz inequality and hence
\begina
E( \|\delta_w\|^4_2 \mid A_w=a )   
& \leq& 2W_a^{-4} E \big[ \{  \hat U_w(a1) - U_w(a1) \}^4\mid A_w=a \big] \\
&&+2W_a^{-4} E \big[  \{  \hat U_w(a0) - U_w(a0) \}^4 \mid A_w=a \big]. 
\enda  
A sufficient condition for \eqref{eq:d4} is thus 
\beginy\label{eq:U4}
W^{-2}\sumw E\big[\{  \hat U_w(z) - U_w(z)\}^4  \mid A_w = a\big] = o(1) \qquad \text{for}\ \ z = (ab).
\endy
With $$
\sumw E \big[\{  \hat U_w(z) - U_w(z)\}^4  \mid A_w = a \big] \leq 8\sumw E \{ \hat U^4_w(z)   \mid A_w = a \} +  8\sumw U_w^4(z) 
$$
by 
 \eqref{cs},
\eqref{eq:U4} is guaranteed by statement \eqref{bounded.b}
and 
$
W^{-2}\sumw U_w^4(z) 
\leq W^{-2}\sumw      \oufzw    = o(1)
$
by \eqref{cs} and  Condition  \ref{asym0_S}\eqref{mm}.

\end{proof}

Let 
$$\hat T_{z,z'} =W_a^{-1}\sum_{w: A_w = a}  \hU_w(z)\hU_w(z')$$
for $z=(ab)$ and $z'=(ab')$, as the sample analog of $$\overline{U(z)U(z')} = W^{-1} \sumw U_w(z) U_w(z').$$
 Let 
$\Psi(z,z') = W^{-1} \sumw M_w^{-1}  H_w(z,z') S_w(z,z')$ be the $(z,z')$th element 
of $\Psi$. 
Lemma \ref{lln} below states the convergence of $\hat T_{z,z'}$ to its expectation, affording the basis for computing the probability limits of $\hV_\sht$ and $\hV_\shaj$.
\begin{lemma}\label{lln}
Assume split-plot randomization and Condition \ref{asym0_S}. 
Then
\begina
\hat T_{z,z'} - E(\hat T_{z,z'}) = \op \qquad \text{for} \ \ z=(ab), \ z'=(ab')
\enda
 with $E(\hat T_{z,z'}) =\overline{U(z)U(z')} +  p_a\Psi(z,z') = (1-W^{-1})\ssht(z,z') + \bY(z)\bY(z') + p_a \Psi(z,z')$.
\end{lemma}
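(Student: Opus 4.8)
The plan is to split the statement into an exact mean computation and a weak law of large numbers, handling both by conditioning on the stage (I) $\sigma$-algebra $\mA$ and using that, given $\mA$, the within-whole-plot sample means $\{\huw(\cdot)\}_{w=1}^W$ are mutually independent, since the stage (II) randomizations run independently across whole-plots (cf.\ \eqref{eq:sigA}).

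For the mean I would write $\hat T_{z,z'}=W_a^{-1}\sumw 1(A_w=a)\,\huw(z)\huw(z')$, using the convention $\huw(z)=0$ when $\znotin$. Given $A_w=a$, unbiasedness of the stage (II) sample mean gives $E\{\huw(z)\mid A_w=a\}=U_w(z)$, while \eqref{eq:covU} supplies the conditional covariance, so that
$$E\{\huw(z)\huw(z')\mid A_w=a\}=p_a M_w^{-1}H_w(z,z')S_w(z,z')+U_w(z)U_w(z'),$$
which I abbreviate $g_w$. Because $\pr(A_w=a)=p_a$ and $W_a^{-1}p_a=W^{-1}$, taking the expectation term-by-term gives $E(\hat T_{z,z'})=W^{-1}\sumw g_w$, whose two pieces are exactly $\overline{U(z)U(z')}$ and $p_a\Psi(z,z')$. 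The second displayed identity is then a one-line rearrangement: expanding the finite-population covariance $\ssht(z,z')=S(z,z')$ of $\{U_w(z),U_w(z')\}_{w=1}^W$ yields $(W-1)S(z,z')=W\,\overline{U(z)U(z')}-W\bY(z)\bY(z')$, i.e.\ $\overline{U(z)U(z')}=(1-W^{-1})S(z,z')+\bY(z)\bY(z')$.

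For the convergence I would establish $\var(\hat T_{z,z'})=o(1)$ and conclude by Chebyshev's inequality, via the decomposition $\var(\hat T_{z,z'})=E\{\var(\hat T_{z,z'}\mid\mA)\}+\var\{E(\hat T_{z,z'}\mid\mA)\}$. Conditional independence turns the first conditional variance into a sum over whole-plots, so bounding each conditional variance by its second moment gives
$$E\{\var(\hat T_{z,z'}\mid\mA)\}\le p_a^{-1}\,W^{-2}\sumw E\{\huw^2(z)\huw^2(z')\mid A_w=a\},$$
which is $o(1)$ since $p_a^{-1}=O(1)$ by Condition \ref{asym0_S}\eqref{qb} and the sum is $o(1)$ by Lemma \ref{bounded}\eqref{bounded.b}. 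For the second piece, $E(\hat T_{z,z'}\mid\mA)=W_a^{-1}\sum_{w:A_w=a}g_w$ is a simple random sample mean of the fixed numbers $\{g_w\}$, with variance $(1-p_a)W_a^{-1}S_g^2$ for $S_g^2$ the finite-population variance of $\{g_w\}$; the Jensen bound $g_w^2\le E\{\huw^2(z)\huw^2(z')\mid A_w=a\}$ again dominates it by a constant multiple of $W^{-2}\sumw E\{\huw^2(z)\huw^2(z')\mid A_w=a\}=o(1)$. Hence $\var(\hat T_{z,z'})=o(1)$ and $\hat T_{z,z'}-E(\hat T_{z,z'})=\op$.

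The substantive input is Lemma \ref{bounded}\eqref{bounded.b}, which already packages the fourth-moment requirement of Condition \ref{asym0_S}\eqref{mm} into the estimate $W^{-2}\sumw E\{\huw^2(z)\huw^2(z')\mid A_w=a\}=o(1)$; granting that, the remainder is two-stage variance bookkeeping. The main obstacle I anticipate is therefore organizational rather than analytic: ensuring that the cross terms in $\var(\hat T_{z,z'}\mid\mA)$ vanish by conditional independence, and that the residual stage (I) fluctuation is correctly identified as a sample-mean variance decaying like $W^{-1}$. Routing both the conditional-variance bound and the Jensen bound for $g_w^2$ through the single second-moment estimate lets me avoid any delicate treatment of the within-whole-plot covariances $S_w(z,z')$.
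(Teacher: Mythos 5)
Your proof is correct, and it rests on the same two pillars as the paper's: the conditional mean and covariance of $\huw(z)\huw(z')$ given $A_w=a$ (via \eqref{eq:covU}) for the exact expectation, and Lemma \ref{bounded}\eqref{bounded.b} plus a second-moment (Chebyshev/Markov) argument for the concentration. Where you genuinely differ is in the variance bookkeeping. The paper writes $\hat T_{z,z'} = W_a^{-1}\sumw X_w$ with $X_w = 1(A_w=a)\,\huw(z)\huw(z')$ and computes $\cov(X_w)$ and every cross term $\cov(X_w,X_k)$ explicitly; this requires the joint inclusion probability $\pr(A_w=A_k=a) = p_a(W_a-1)/(W-1)$, produces negatively correlated cross terms that are then dropped, and leaves the bound
$W_a^{2}\,\cov(\hat T_{z,z'}) \leq_\infty p_a\sumw E\{\huw^2(z)\huw^2(z')\mid A_w=a\}$.
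You instead apply the law of total variance conditioning on $\mA$ once: conditional independence of the stage (II) randomizations across whole-plots (as in \eqref{eq:sigA}) reduces $E\{\var(\hat T_{z,z'}\mid\mA)\}$ to a sum of conditional second moments, and the finite-population simple-random-sampling variance formula together with Jensen's inequality $g_w^2\le E\{\huw^2(z)\huw^2(z')\mid A_w=a\}$ controls $\var\{E(\hat T_{z,z'}\mid\mA)\}$, so both pieces are routed through the single estimate of Lemma \ref{bounded}\eqref{bounded.b}. The two routes yield bounds of the same order (yours is larger by a constant factor, immaterial for $o(1)$); yours avoids the explicit pairwise covariance computation and the joint inclusion probability entirely, while the paper's gives sharper intermediate expressions for $\cov(X_w,X_k)$. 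Both arguments are valid under Condition \ref{asym0_S}, so this is a legitimate, somewhat cleaner reorganization of the same underlying two-stage variance argument.
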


\begin{proof}[Proof of Lemma \ref{lln}]
By Markov's inequality,  it suffices to verify the expression of $E(\hat T_{z,z'})$ and $\cov(\hat T_{z,z'}) = o(1)$. 
Let $X_w = 1(A_w = a) \, \hU_w(z)\hU_w(z')$ to write $\hat T_{z,z'}  = W_a^{-1}\sumw   X_w$. 
Let $\mu_w = E(X_w \mid A_w=a) =E\{\hU_w(z)\hU_w(z') \mid A_w=a\}$.

\pa{Expression of $E(\hat T_{z,z'})$} First, it follows from \eqref{eq:covU} that  \begina
\mu_w 
&=&  \cov\{\hU_w(z), \hU_w(z')\mid A_w=a \} + E\{\hU_w(z)\mid A_w=a\} \, E\{\hU_w(z')\mid A_w=a \}\nonumber \\
&=&  p_a  M_w^{-1}H_w(z,z') S_w(z,z') + U_w(z) U_w(z'). 
 \enda
 This, together with 
\beginy\label{ss_exa}
E(X_w) 
= E\{E(X_w \mid A_w) \} = p_a E(X_w \mid A_w=a) = p_a \mu_w, 
 \endy
ensures
\begina
E(\hat T_{z,z'}) &=& W_a^{-1}\sumw E(X_w)  =W ^{-1}\sumw \mu_w =  W^{-1} \sumw U_w(z) U_w(z') + p_a  \Psi(z,z')\\
&=& (1-W^{-1})   \ssht(z,z') + \bY(z) \bY(z')+ p_a  \Psi(z,z'); 
\enda
the last equality follows from 
 $W^{-1} \sumw U_w(z) U_w(z')  = (1-W^{-1})   \ssht(z,z') + \bY(z) \bY(z')$. 

\pa{Limit of $\cov(\hat T_{z,z'})$} 
By \eqref{ss_exa}, 
\begina
E\{ \cov(X_w \mid A_w ) \}  & =& p_a  \cov(X_w \mid A_w =a) = p_a    E(X_w^2  \mid A_w =a) - p_a \mu_w^2,\\
 \cov\{ E(X_w \mid A_w) \}  & =&  E\big[ \{ E(X_w \mid A_w)\}^2 \big] - \{E(X_w)\}^2 \\
 & =&  p_a   \{E(X_w \mid A_w=a) \}^2 - p_a^2 \mu_w^2 = p_a  \mu_w^2  - p_a^2 \mu_w^2,\\
 E \{ E(X_w \mid A_w) \cdot E(X_k \mid A_k) \}  & =& \pr(A_w=A_k=a) \cdot E(X_w \mid A_w=a) \cdot E(X_k \mid A_k=a)\\
   & =&  p_a   \frac{W_a-1}{W-1}  \mu_w \mu_k , \\
 E(X_w) E(X_k)  & =&  p_a^2  \mu_w \mu_k. 
 \enda
This ensures
\begin{align*}
 \cov(X_w) \, & = \,  E\{ \cov(X_w \mid A_w ) \}  + \cov\{ E(X_w \mid A_w) \}\nonumber \\
 \, & = \,  p_a  E(X_w^2   \mid A_w =a)  - p_a^2  \mu_w^2,\\
 \cov(X_w, X_k) 
 \, & = \,  \cov \{ E(X_w \mid A_w), E(X_k \mid A_k) \} +  E \{ \cov(X_w,X_k \mid A_w, A_k )  \} \nonumber\\
 \, & = \,  \cov \{ E(X_w \mid A_w), E(X_k \mid A_k) \}\nonumber\\
\, & = \,  E \{ E(X_w \mid A_w) \cdot E(X_k \mid A_k)\}  - E(X_w) E(X_k)\nonumber\\
\, & = \,  -p_0p_1(W-1)^{-1}\mu_w\mu_k \qquad (w \neq k). 
  \end{align*}
Thus, 
\begina
 W_a^{2} \cov(\hat T_{z,z'})  &=& \sum_{w, k} \cov(X_w, X_k)= 
\sumw \cov(X_w) +  \sum_{w\neq k} \cov(X_w, X_k) \\
&=&  p_a\sumw   E(X_w^2   \mid A_w =a)  - p_a^2 \sumw  \mu_w^2 - \frac{p_0p_1}{W-1}\sum_{w\neq k}\mu_w\mu_k\\
 & =& p_a\sumw   E(X_w^2   \mid A_w =a)  - p_a^2 \sumw  \mu_w^2 + \frac{p_0p_1}{W-1}\sumw \mu^2_w - \frac{p_0p_1}{W-1} \sum_{w,k}\mu_w\mu_k\\
 &\leq &p_a\sumw   E(X_w^2   \mid A_w =a) - \left(p_a^2 - \frac{p_0p_1}{W-1}\right) \sumw \mu^2_w
\\
 &\leq_\infty&  p_a\sumw   E(X_w^2   \mid A_w =a) =p_a\sumw E \{\hU^2_w(z)\hU^2_w(z')\mid A_w =a \} , 
\enda
where $\leq_\infty$ indicates less or equal to as $W\to\infty$. 
The result then follows from Lemma \ref{bounded}\eqref{bounded.b}. 
\end{proof}


\subsection{Proof of the main results}
\begin{proof}[Proof of Theorem \ref{clt}.]
%
We verify below the results for $\byht$ and $\byhaj$, respectively. 

\medskip

\noindent\textbf{Asymptotic Normality of $\byht$.} From \eqref{eq:decomp}, $\sqrt W( \byht -\bbY )=\sqrt W \left( \pphi  - \bbY +\delta \right)$.  
The Cramer--Wold device ensures that $\sqrt W( \byht -\bbY ) \rightsquigarrow \mN(0,\Sigma_\sht)$ as long as 
\beginy\label{eq:cw}
  \eta^\T \sqrt W \left( \pphi  - \bbY +\delta \right)  \rightsquigarrow \mN(0,\eta^\T\Sigma_\sht\eta)
\endy
 for arbitrary non-random $4\times 1$ unit vector $ \eta$.

Let $X =\eta^\T \sqrt W ( \pphi  - \bbY)$, $ \xi_w  =\eta^\T  \sqrt W \delta_w$, and $\xi  = \sumw \xi_w  = \sqrt W \eta^\T \delta $ to write the left-hand side of \eqref{eq:cw} as $X + \xi$. 
Let $\mF_{W,0} =\mathcal{A}$ be the $\sigma$-algebra generated by $(A_w)_{w=1}^W$, and let $\mF_{W,w}$ be the $\sigma$-algebra generated by $(A_w)_{w=1}^W$ and $\{(B_{vs})_{s=1}^{M_v}: v = \ot{w}\}$ for $w = \ot{W}$. 
Then $\mF_{W,0} \subset \mF_{W,1}\subset \dots \subset \mF_{W,W}$ such that $\{\mF_{W,w}:  w = \zt{W}\}$ is a filtration. 
Intuitively, $\mF_{W,0}$ contains all the information on the stage (I) cluster randomization, whereas $\mF_{W,w}$ contains all the  information on the stage (I) cluster randomization plus the subset of stage (II) stratified randomization in the first $w$ whole-plots,  $v = 1, \dots, w$.
We verify below the sufficient condition \eqref{eq:cw} by checking that $( \xi_w )_{w=1}^W$ and $X$  
satisfy the three conditions of Lemma \ref{os} with $\beta_W^2 = \eta^\T\Psi  \eta$ with regard to  filtration $\{\mF_{W,w}: w = \zt{W}\}$.
Technically,  $X = X_W $, $\xi_w = \xi_{W,w}$, and $\Psi = \Psi_W$ all depend on $W$. We suppress the $W$ in the subscripts when no confusion would arise.

For Lemma \ref{os} condition \eqref{os.1}, \eqref{cs} ensures 
\beginy\label{xi4}
\xi_w ^4 =  W^2(\eta^\T\delta_w )^4 \leq W^2 \|  \eta\|^4_2 \cdot  \|\delta_w \|^4_2 =
W^2  \|\delta_w \|^4_2.
\endy
The result then  follows from $
\sumw E(\xi_w ^4)\leq  W^2 \sumw  E(\|\delta_w \|^4_2) = o(1)$
by Lemma \ref{bounded}\eqref{bounded.c}.

For Lemma \ref{os} condition \eqref{os.2}, let $
\sigma = \var (\xi \mid \mF_{W,0})$ and $\sigma_w  =\var(\xi_w \mid \mF_{W,0})=\var(\xi_w \mid A_w)$ with $\sigma = \sumw \sigma_w$. 
It follows from \eqref{eq:sigA} and Lemma \ref{adam} that $  E(\xi \mid \mF_{W,0}) = 0$ and 
$ \var(\xi ) =\eta^\T \Psi  \eta  =  \beta_W^2$. 
This, together with $E(\xi_w ^2\mid \mF_{W,w-1} ) = E(\xi_w ^2\mid \mF_{W,0}) =\sigma_w$, ensures
\begina
\sumw E(\xi_w ^2\mid \mF_{W,w-1} ) = \sigma, \quad 
\beta_W^2  =  \var(\xi ) = E\{\var(\xi  \mid \mF_{W,0})\}  + \var\{ E(\xi \mid \mF_{W,0})\} = E(\sigma)
\enda
such that 
$ E\big[ \{ \sumw E(\xi_w ^2\mid \mF_{W,w-1} ) - \beta_W^2 \}^2\big] 
=E\big[\{\sigma - E(\sigma) \}^2\big]
= \var(\sigma)
$.
Lemma \ref{os} condition \eqref{os.2} is thus equivalent to 
\beginy\label{eq:cond2}
 \var(\sigma) = o(1).
 \endy
To verify \eqref{eq:cond2}, 
view $\sigma_w   =\var(\xi_w \mid A_w)$ 
as the observed value of 
\beginy\label{eq:sig_a}
\sigma_w(a) = \var ( \xi_w  \mid A_w =a )= W \eta^\T \cov\left( \delta_w  \mid A_w =a \right)\eta
\endy
with mean $\bar\sigma(a) = W^{-1}\sumw \sigma_w(a)$, 
variance $S^2_{\sigma(a)} 
= (W-1)^{-1}\sumw \{\sigma_w(a) - \bar\sigma(a)\}^2$, and sample mean $\hat\sigma(a) = W_a^{-1}\sum_{w:A_w=a} \sigma_w(a)$ for $ a=0,1$. 
Standard result ensures $
 \var \{\hat\sigma(a)\}  =W^{-1} p_a^{-1}(1-p_a) S^2_{\sigma(a)}$
such that, with $\sigma = W_0 \hat \sigma(0) + W_1 \hat \sigma(1)$, we have
\begina
\var ( \sigma  ) &=& \var \{  W_0 \hat \sigma(0) + W_1 \hat \sigma(1) \}\nonumber\\
 &\leq &   2\var \{  W_0 \hat \sigma(0) \} + 2 \var \{ W_1 \hat \sigma(1) \}
 = 2 W p_0p_1 (S^2_{\sigma(0)} + S^2_{\sigma(1)}).
\enda
Sufficient condition \eqref{eq:cond2} thus holds as long as $W S^2_{\sigma(a)} = o(1)$ for $a = 0,1$. 
With  $
(W-1)S^2_{\sigma(a)} = \sumw \{\sigma_w(a)\}^2 -W\{ \bar\sigma(a)\}^2$, this is in turn guaranteed by
\beginy\label{eq:ss_ss}
\sumw \{\sigma_w(a)\}^2 =  o(1), \qquad W\{ \bar\sigma(a)\}^2 =  o(1)\qquad(a=0,1).
\endy
We verify below the two sufficient conditions in \eqref{eq:ss_ss}. 

First, 
 \eqref{eq:sig_a} and $E ( \xi _w  \mid A_w =a )=0$ together ensure  $\sigma_w(a) = E ( \xi^2_w  \mid A_w =a )$ and hence
\begina
\{\sigma_w(a)\}^2 =  \{ E ( \xi^2_w  \mid A_w =a ) \}^2 \leq E ( \xi^4_w  \mid A_w =a ) \leq W^2 E ( \|\delta_w \|^4_2\mid A_w =a )
\enda
by Jensen's inequality and \eqref{xi4}. The first equality in \eqref{eq:ss_ss} then follows from 
\begina
\sumw \{\sigma_w(a)\}^2 \leq W^2 \sumw E ( \|\delta_w \|^4_2\mid A_w =a ) = o(1) 
\enda
by Lemma \ref{bounded}\eqref{bounded.c}. 
The second equality in \eqref{eq:ss_ss} follows from \eqref{eq:sig_a} and \eqref{eq:deltaw} as
\begina
\bar\sigma(a) 
= \eta^\T \sumw\cov ( \delta_w  \mid A_w =a ) \eta
=
\eta^\T \left[ p_a^{-1} W^{-2}\sumw M_w^{-1}  \{ H_w(a) \circ  S_w\} \right]\eta = O(W^{-1})  
\enda
by $\Psi = O(1)$. This verifies \eqref{eq:ss_ss} and hence Lemma \ref{os} condition (ii).

Lemma \ref{os} condition \eqref{os.3} then follows from Lemmas \ref{Ding17} and \ref{adam} which ensure $\sqrt W ( \pphi  - \bbY) \rightsquigarrow \mN( 0, H\circ \ssht)$ under Condition \ref{asym}.
The convolution of $\mathcal{L}(X )$ with $ \mN( 0, \eta^\T\Psi  \eta)$ thus  converges in distribution to 
$\mN(0,\eta^\T\Sigma_\sht\eta)$ by the convergence of the characteristic function.

This verifies that $( \xi_w )_{w=1}^W$ and $X$  
satisfy the three conditions in Lemma \ref{os}. The sufficient condition \eqref{eq:cw} then follows from Lemma \ref{os} and ensures the result for $\byht$.

\pa{Asymptotic Normality of $\byhaj$}
Recall from \eqref{eq:haj_intuition} in the main paper that 
$
\byhaj - \bbY = \hat{ 1}_\sht^{-1} \, \byht'
$
with $\hat{ 1}_\sht = \diag\{\oneHT \}_{z\in\mt}$ and  $\byht'$ as the vectorization of $\{\hY_\sht'(z)\}_{z\in \mt}$.
The asymptotic Normality of $\byht$ extends to $\byht'$ as $\sqrt W( \byht' -\bbY ) \rightsquigarrow \mN(0,\Sigma_\shaj)$. 
The result for $\byhaj$ then follows from Slutsky's theorem with $\hat{ 1}_\sht^{-1} =  I_{|\mt|} + \op$ by Lemma \ref{wlln}. 
\end{proof}

\begin{proof}[Proof of Corollary \ref{V_hajek}]
By \eqref{eq:ss_diff}, we have
\begina
(W-1)  \{\sshaj(z,z) -\ssht(z,z) \} = \{\bar Y(z)\}^2\left(\sumw \alpha_w^2 + W\right) - 2 \bar Y(z) \sumw \alpha_w^2 \bar Y_w(z).
\enda
When $\bar Y_w(z) = c$ for all $w = \ot{W}$, we have $\bar Y(z) = W^{-1}\sumw \alpha_w 
\bar Y_w(z) = c$  such that
$
 (W-1)  \{\sshaj(z,z) -\ssht(z,z) \} = c^2  (  W - \sumw \alpha^2_w  )   \leq 0$;  the equality holds if and only if $a_w=1$ for all $w$ or $c=0$. 

When $U_w(z) = c$ for all $w = \ot{W}$, we have $\bar Y(z) = W^{-1}\sumw U_w(z) = c$ such that 
$
(W-1)  \{\sshaj(z,z) -\ssht(z,z) \} = c^2 \sumw (\alpha_w-1)^2 \geq 0$; the equality holds if and only if $a_w=1$ for all $w$ or $c=0$.  
\end{proof}

\begin{proof}[Proof of Theorem \ref{covEst}]
Let $\hat V_*(z,z')$ be the $(z,z')$th element of $\hV_*$ for $* = \HT, \haj$.
Assume  $z=(ab)$ and $z'=(ab')$ with the same level of factor A throughout the proof. 

\pa{Result on $\hbV_\sht$} 
Direct algebra shows that $\hat V_\sht(z,z') = W_a^{-1} \hat S_\sht(z,z') = (W_a-1)^{-1}  \{ \hat T_{z,z'} -  \yHT \hY_\sht(z') \}$ for $z=(ab)$ and $z'=(ab')$.
It then follows from Lemmas \ref{wlln} and \ref{lln} that
\begina
W  \hat V_\sht(z,z') 
&=& p_a^{-1} \{ E(\hat T_{z,z'})  -   \bY(z) \bY(z')  \}+ \op \\
&=&  p_a^{-1}  \{ \ssht(z,z')  + p_a  \Psi(z,z') \}+ \op\\
&=&  \Sigma_\sht(z,z') + \ssht(z,z')+ \op,
 \enda
where $\Sigma_\sht(z,z')=   (p_a^{-1}-1) \ssht(z,z') + \Psi(z,z')$ is the $(z,z')$th element of $\Sigma_\sht$. This verifies the probability limit of $W\hbV_\sht$.

\pa{Result on $\hbV_\shaj$} Direct algebra shows that
\beginy\label{eq:limit_ss}
\quad \quad \quad  (W_a-1)\hat V_\shaj(z,z') 
&=&  \hat T_{z,z'}  +\hy_\shaj(z) \hY_\shaj(z') \left( W_a^{-1}\sum_{w:A_w=a}\alpha_w^2\right)\\
&&- \byhaj(z) \left\{  W_a^{-1}\sum_{w:A_w=a}\alpha_w\hU_w(z')\right\}\nonumber\\
&&-   \byhaj(z') \left\{ W_a^{-1}\sum_{w:A_w=a}\alpha_w\hU_w(z)\right\} \quad \text{for $z=(ab), \ z'=(ab')$.}\nonumber
\endy
We compute below the probability limit of the right-hand side of \eqref{eq:limit_ss}. 

First, apply Lemma \ref{lln} to $\{Y_{ws}(z), Y_{ws}(z')\}_{ws\in\mathcal S}$ with $Y_{ws}(z') = 1$ for all $\ws\in\mathcal{S}$ to see
\begina
W_a^{-1}\sum_{w:A_w=a}\alpha_w\hU_w(z) - \oau = \op.
\enda
This, together with $\oau = O(1)$
by 
$
 |\oau - \bY(z) | =  W^{-1}|\sumw  \alpha_w \{ U_w(z) - \bY(z)\}| \leq 
  \oat  \, \ssht (z,z),
$ ensures
\beginy\label{eq:ss_1}
\byhaj(z')   \left\{W_a^{-1}\sum_{w:A_w=a}\alpha_w\hU_w(z)\right\} -  \bY(z') \, \oau  = \op.
\endy

Likewise for $
W_a^{-1}\sum_{w:A_w=a}\alpha_w^2 - \oat = \op$ by letting $Y_{ws}(z) = Y_{ws}(z')=1$ for all $\ws \in \ms$  in 
Lemma \ref{lln}. 
This, together with $\oat =O(1)$ by Condition \ref{asym}\eqref{alpha}, ensures
\beginy\label{eq:ss_2}
\byhaj (z) \hY_\shaj(z') \left(W_a^{-1}\sum_{w:A_w=a}\alpha_w^2\right) - \bY(z) \bY(z') \, \oat= \op.
\endy
Plug \eqref{eq:ss_1}--\eqref{eq:ss_2} and the probability limit of $\hat T_{z,z'}$ from Lemma \ref{lln} in \eqref{eq:limit_ss} to see
\begina
 (W_a-1)\hat V_\shaj(z,z') 
&=&   \{ \ssht(z,z') + \bY(z)\bY(z') + p_a\Psi(z,z') \}  +\bY(z) \bY(z') \, \oat\nonumber\\
&& -\bY(z)\, \oaup -  \bY(z') \, \oau + \op \\
&=&  p_a\Psi(z,z')+ (1-W^{-1}) S_\haj(z,z') + W^{-1}\ssht(z,z') +  \op
\enda
by \eqref{eq:ss_diff}. 
This ensures $ W \hat V_\shaj(z,z') = p_a^{-1}S_\haj(z,z') + \Psi(z,z')+\op$, and the result follows from $p_a^{-1} S_\shaj(z,z')  = H(z,z')\sshaj(z,z') + \sshaj(z,z')$.
\end{proof}

%
\section{Reconciliation with model-based inference}\label{sec:lm_app}
%
\subsection{Notation and useful facts}\label{sec:notation_lm1}
Assume $\pi_{ws}= N^{-1}\{p_{ws}(\Zws)\}^{-1}$  as the weight for sub-plot $ws$ under fitting scheme ``\wls''. 
It differs from the original weight $\{p_{ws}(\Zws)\}^{-1}$ by a constant factor of $N^{-1}$ and thus does not affect the result of the \wlss fit. 
We have
\beginy\label{eq:ht_pi}
\yHTn(z) = \sumwszz \pi_{ws} \Yws, \qquad \oneHT = \sumwszz \pi_{ws}
\endy
 with 
\beginy\label{eq:piws}
\pi_{ws}= N^{-1}p_a^{-1} q_{wb}^{-1} = \alpha_w W_a^{-1} M_{wb}^{-1}
\endy
for sub-plots under treatment $z=(ab)$.

Recall $d_{ws} = ( 1(\Zws=00),1(\Zws=01),1(\Zws=10),1(\Zws=11) )^\T$ and $\dwab  =  ( 1(A_wb=00),1(A_wb=01),1(A_wb=10),1(A_wb=11) )^\T$ as the regressor vectors in  regressions \eqref{lm_t} and \eqref{lm_t_ag}. Let $\tbeta_\dg (z)$ be the $z$th element in $\tbeta_\dg $ 
corresponding to $1(Z_{ws} = z)$ or $1(A_wb = z)$ for $\dgss$, respectively.
Let
\begina
Y \sim D, \qquad U \sim D_\ag
\enda
be the matrix forms of \eqref{lm_t} and \eqref{lm_t_ag}, respectively, 
where 
 $Y$ and $U$ are the vectorizations of $\{\yws: \ws \in\mathcal{S}\}$ and 
$ \{\hU_w(A_wb): w = \ot{W}; \ b = 0, 1\}$, and $ D$ and $ D_\ag $ are the matrices with rows $\{ d_{ws}: \ws  \in \mathcal{S}\}$ and $\{ \dwab : w = \ot{W}; \ b = 0,1\}$, respectively. 
Let $\bp = \diag(\pi_{ws})_{ws\in\ms}$ be the corresponding weighting matrix under fitting scheme ``wls''. 
Assume lexicographical orders of $ws$ and $(w, b)$  
 throughout unless specified otherwise.  

Let $Y_w= ( Y_{w1}, \dots,  Y_{wM_w})^\T$, $D_w=  ( d_{w1}, \dots,  d_{wM_w})^\T$, $\Pi_w = \diag(\pi_{ws})_{s=1}^{M_w}$,  
$U_w = (\hU_w(A_w0), \hU_w(A_w1) )^\T$,  and
$\Dw = ( d_w(A_w0),  d_w(A_w1) )^\T$
 be the parts in $Y$, $D$, $\Pi$, $U$, and $D_{\ag }$ corresponding to whole-plot $w$, respectively. 
The cluster-robust covariances equal
\beginy\label{v_pi}
&&\tbV_\wls= ( \dt \bp D)^{-1} \left(\sumw  D_w^\T \bp_w \te_\wlw  \te_\wlw ^\T\bp_w  D_w\right)( \dt \bp D)^{-1},\\
&&\vlsu =( \dt _\ag  D_\ag )^{-1}\left( \sumw \Dw^\T\te_\tlw  \te_\tlw ^\T  \Dw \right)( \dt _\ag  D_\ag )^{-1},\label{v_u}
\endy
where $\te_\wlw  = (\te_{\wls,w1} , \dots, \te_{\wls,wM_w} )^\T$   and $\te_\tlw  =  (\te_\tlw (A_w0), \te_\tlw (A_w1) )^\T$ are the  residuals in whole-plot $w$ from \eqref{lm_t} and \eqref{lm_t_ag}, respectively.

Let $  \dpp = \diag(p_0, p_1)\otimes  I_2$ and $  Q_w = I_2 \otimes \diag(q_{w0}, q_{w1})$ 
with $\diag\{ p_{ws}(z)\}_{z\in\mt} = PQ_w$ for all $ws$. 
Let $R =   \diag(r_z)_{z\in\mt}$, where $r_z = N_z/N$. 
Some useful facts are
\beginy\label{matrix0}
\renewcommand{\arraystretch}{1.2}
\begin{array}{lll}
 N^{-1} \dt  D =  R,  & \quad\dt  \bp D  = \hat{ 1}_\sht,  &\quad W^{-1}  D_\ag ^\T  D_\ag  = \dpp,\\
N^{-1}  \dt   Y =  R  \hbY_\snm,  & \quad\dt \bp  Y = \hbY_\sht,  &\quad W^{-1}  D_\ag ^\T  U = \dpp   \hbY_\sht; \\
D_w^\T Y_w =  M_w Q_w \hY_w, &\quad
 D_w^\T\Pi_w Y_w  =  W^{-1} P^{-1} (\alpha_w\hY_w), & \quad
   D_\tlw ^\T U_w = \hat U_w,
\end{array} 
\endy
where $\hY_w$ and $\hU_w$ are the $4\times 1$ vectors of $\{\hY_w(z)\}_{z\in\mt}$ and $\{\hU_w(z)\}_{z\in\mt}$ in lexicographical order of $z$, respectively, with $\hY_w(z) = \hU_w(z) = 0$ for $z \not\in \{(A_w0), (A_w1)\}$  by definition. 
The proof of \eqref{matrix0} follows from direct algebra and is thus omitted.

\subsection{Proof of the main results}

\begin{proof}[Proof of Proposition \ref{prop:ls}]
The result follows from $\tbeta_\ols = (\dt  D)^{-1} \dt  Y $, $\tbeta_\wls = ( \dt  \bp D)^{-1} \dt\bp Y$,  $\tbeta_\ag  = (  D_\ag ^\T  D_\ag )^{-1} D_\ag ^\T U$, and \eqref{matrix0}.
\end{proof}

\begin{proof}[Proof of Theorem \ref{thm_vHats}]
We verify below the numeric expressions of $\tbV_\wls$ and $\vlsu$ in finite samples. The asymptotic equivalence then follows from Lemma \ref{wlln}. 

\medskip

\noindent\textbf{Numeric expression of $\tbV_\wls$.}
Proposition \ref{prop:ls} ensures $\te_\wlws  =  \Yws - \tbeta_\wls(Z_{ws}) = \Yws- \yhajn(\zws)$. 
Let  $\he_\wlw =  (     \he_\wlw(00),     \he_\wlw(01),    \he_\wlw(10),     \he_\wlw(11)   )^\T$ with 
\begina
\he_\wlw(z)  = M_{wb}^{-1}\sum_{s: Z_{ws}=z}\te_\wlws =
\left\{\begin{array}{cl}
 \hY_w(z) - \yhaj &\quad \text{for $\zin$,}\\
 0 & \quad \text{for $\znotin$.} 
\end{array}\right.
\enda
Set $Y_w = \te_\wlw  $ in \eqref{matrix0} 
to see 
$
 D_w^\T \Pi_w \te_\wlw   =   W^{-1} P^{-1} (\alpha_w\he_\wlw)$.
The ``meat'' part of \eqref{v_pi} thus equals
\begina
\sumw  D_w^\T \bp_w \te_\wlw   \te_\wlw  ^\T\bp_w  D_w
= W^{-2} P^{-1}\left(\sumw \alpha^2_w\he_\wlw\he_\wlw^\T\right) P^{-1}
= 
\big( \Omega_\wls(z,z') \big)_{z, z'\in \mt},
 \enda
 where 
 \begina
\Omega_\wls(z,z') &=& W_a^{-1}   W_{a'}^{-1}  \sumw  \alpha_w^2 \he_\wlw(z) \he_\wlw(z') \\
&=& \left\{
\begin{array}{cl} W^{-2}_a (W_a-1) \hat S_\shaj(z,z')  & \ \ \text{for $ z=(ab)$ and $z'=(a'b')$ with $a = a'$},\\
0  & \ \ \text{for $z=(ab)$ and $z' = (a'b')$ with $a\neq a'$.}
\end{array}\right.
\enda
The result for $\tbV_\wls$ then follows from $ \dt \bp D = \hat{ 1}_\sht$ by \eqref{matrix0}. 

\medskip

\noindent\textbf{Numeric expression of $\vlsu$.}
Proposition \ref{prop:ls} ensures $\te_\tlw (z) = \hU_w(z) - \tbeta_\ag (z) = \hU_w(z) - \yht $ for $\zin$. 
Set $U_w = \te_\tlw $ in \eqref{matrix0} 
to see $\Dw^\T \te_\tlw  = \he_\tlw$
with  $\he_\tlw =  (     e_\tlw(00),   e_\tlw(01),   e_\tlw(10),    e_\tlw(11)    )^\T$, where $e_\tlw(z) = 0$ for $\znotin$.   
The ``meat'' part of \eqref{v_u} thus equals
\beginy
 \sumw  D_\tlw ^\T  \te_\tlw \te_\tlw ^\T   D_\tlw  = \sumw \he_\tlw\he_\tlw^\T =  \big(    \Omega_\ag (z,z')  \big)_{z, z'\in \mt}, \label{eq:deed.ag}
 \endy
 where 
 \begina
 \Omega_\ag (z,z') =\sumw \te_\tlw (z) \, \te_\tlw (z') = \left\{  
 \begin{array}{cl}
 (W_a-1) \hat S_\sht(z, z') &\text{for $z=(ab)$ and $z'=(a'b')$ with $a = a'$,}\\
 0  &\text{for $z=(ab)$ and $z' = (a'b')$ with $a\neq a'$.}
 \end{array}\right.
 \enda
The result for $\vlsu$ then follows from $( \dt _\ag  D_\ag )^{-1} = \diag(W_0^{-1}, W_1^{-1})\otimes  I_2$
by \eqref{matrix0}. 
\end{proof}

\section{Regression-based covariate adjustment}\label{sec:ca_app}
\subsection{Notation and lemmas}
Inherit all  notation from Section \ref{sec:notation_lm1}. In addition, 
recall $\bar x_w = M_w^{-1}\sum_{s=1}^{M_w} x_{ws}$ as the whole-plot average covariate vector with $W^{-1}\sumw \alpha_w\bar x_w = \bar x = 0_J$. 
Let $v_{ws} =\alpha_w x_{ws}$, $v_w = \alpha_w\bar x_w$, and $U'_w(z) = \alpha_w\bar Y'_w(z) = \alpha_w\bY_w(z) - \alpha_w\bY(z)$ be the analogs of $U_{ws}(z)= \alpha_w Y_{ws}(z)$ and $U_w(z) = \alpha_w \bar Y_w(z)$ defined on the covariates and centered potential outcomes, respectively. 
We have 
\begina
\sx &=&  (W-1)^{-1}\sumw \alpha^2_w \bar x_w \bar x_w^\T =   (W-1)^{-1} \sumw v_wv_w^\T,\\
\swx &=& (M_w-1)^{-1} \alpha_w^2 \sum_{s=1}^{M_w} (x_{ws}-\bar x_w) (x_{ws}-\bar x_w)^\T =
(M_w-1)^{-1}   \sum_{s=1}^{M_w}  (  v_{ws} -v_w ) ( v_{ws} - v_w )^\T, \nonumber\\
\sxyz  
&=&   S_{Y(z)x}^\T = (W-1)^{-1}\sumw \alpha_w \bar x_w \{\alpha_w \bar Y_w(z) - \bar Y(z)\}
= (W-1)^{-1}\sumw v_w U_w(z), \\
\sxypz  &=&  S_{Y'(z)x}^\T = (W-1)^{-1}\sumw \alpha_w \bar x_w \{ \alpha_w \bar Y'_w(z) \} = 
  (W-1)^{-1}\sumw v_w U'_w(z),\nonumber\\
 \sxyzw &=&   S_{Y(z)x,w}^\T =S_{xY'(z),w} = S_{Y'(z)x,w}^\T = (M_w-1)^{-1} \alpha_w^2 \sum_{s=1}^{M_w} (x_{ws}-\bar x_w) \{Y_{ws}(z)-\bar Y_w(z)\}\\
&=& 
 (M_w-1)^{-1}  \sum_{s=1}^{M_w} ( v_{ws} -v_w )\{ U_{ws}(z)- U_w(z)\} .
\enda
Let  
\begina 
{  \hat S_{xx}(z )} = W_a^{-1} \sumwz \hbv_w(z)\hbv^\T_w(z), \qquad \hat S_{xY(z)} = W_a^{-1} \sumwz \hbv_w(z)\hU_w(z )
\enda
be the sample analogs of $\sxx$ and  $\sxyz $ based on sub-plots under treatment $z=(ab)$.
%

Recall $\qxx   = (N-1)^{-1}\sumws x_{ws}  x_{ws}^\T$ and $\qxyz    =  (N-1)^{-1}\sumws x_{ws}  Y_{ws}(z)$ as the finite-population covariances of $(x_{ws})_{\ws\in\ms}$ and $\{x_{ws}, Y_{ws}(z)\}_{\ws\in\ms}$, respectively. 
Let 
\begina
\hqxx = \lambda_N^{-1}\sumwszz\pi_{ws} x_{ws}  x_{ws}^\T, \qquad \hqxy =\lambda_N^{-1} \sumwszz\pi_{ws} x_{ws} Y_{ws}, 
\enda 
where $\lambda_N = 1-N^{-1}$, be their respective {\htf} estimators based on sub-plots under treatment $z$.

Recall $\Psi(z,z)=W^{-1} \sumw M_w^{-1} H_w(z,z)\sw(z,z)$ as the $(z,z)$th element of $\Psi$ from Lemma \ref{Vmat} with $H_w(z,z) = p_a^{-1}(q_{wb}^{-1}-1)$ for $z = (ab)$. Then 
\begina
  \Psi_{xx}(z,z) = W^{-1}\sumw M_w^{-1} H_w(z,z) \sxxw, \quad \Psi_{xY}(z,z) =  W^{-1} \sumw M_w^{-1} H_w(z,z) \sxyzw
\enda
by replacing $\sw(z,z)$ with $\sxxw$ and $\sxyzw $, respectively.
Let 
\begina
\uxx(z) =  \sx +   p_a  \Psi_{xx}(z,z),\qquad \uxyz = \sxyz + p_a  \Psi_{xY}(z,z).
\enda

\begin{lemma}\label{Axx}{\prex}
\begine[(i)]
\item\label{item:axx1}  $\hat x_\sht(z) =\sumwszz \pi_{ws} x_{ws} = \op$; 
\item\label{item:axx2} $\hqxx - \qxx= \op$, $\ \hqxy - \qxyz  =\op$; $\ \hat S_{xx}(z)  - \Uxx  (z) = \op$, $\ \hat S_{xY(z)} - \uxyz   = \op$  
\ende
 for $z\in\mt$. 
\end{lemma}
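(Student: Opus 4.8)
The plan is to treat the covariates $(x_{ws})_{ws\in\ms}$ as treatment-invariant potential outcomes and to prove each statement as a weak law of large numbers, combining (asymptotic) unbiasedness with a vanishing variance through Markov's inequality; I would argue entrywise, since coordinatewise convergence in probability is equivalent to convergence of the vector or matrix. Statement \eqref{item:axx1} is the cleanest: by \eqref{eq:ht_pi}, $\hbx_\sht(z) = \sumwszz \pi_{ws} x_{ws}$ is exactly the {\htf} estimator $\yHTn(z)$ evaluated on the covariate ``outcomes'' $x_{ws}$, so I would apply Lemma \ref{wlln}\eqref{wlln_Y} coordinatewise. Its hypotheses, Condition \ref{asym0_S}\eqref{qb}--\eqref{asym0} read for the covariates, follow from Conditions \ref{asym}\eqref{paqb} and \ref{asym2}\eqref{po_xy}, yielding $\hbx_\sht(z) - \bar x = \op$; the centering $\bar x = 0_J$ then gives the claim.

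For the unit-level estimators $\hqxx$ and $\hqxy$ I would first record the exact unbiasedness $E(\hqxx) = \qxx$ and $E(\hqxy) = \qxyz$, both of which reduce to the single identity $E\{1(\Zws=z)\,\pi_{ws}\} = N^{-1}$ together with $\Yws = \Yws(z)$ on $\uz$, the prefactor $\lambda_N^{-1}$ reconciling $N^{-1}$ with $(N-1)^{-1}$. A point worth flagging is that I would \emph{not} invoke the full law of large numbers in Lemma \ref{wlln} here, since its fourth-moment hypothesis applied to the quadratic outcome $x_{ws}x_{ws}^\T$ would demand eighth moments of the covariates. Because only consistency is needed, it suffices to bound $\var(\hqxx)$ and $\var(\hqxy)$ directly: both share the two-stage form of $\yHTn(z) = W_a^{-1}\sum_{w:A_w=a}\alpha_w\hY_w(z)$ with scalar potential outcome $x_{ws[i]}x_{ws[j]}$ or $x_{ws[i]}\Yws(z)$, so the variance decomposition of Lemma \ref{adam} gives a stage (I) term of order $W^{-1}$ and a stage (II) term carrying a further $W^{-1}$. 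Bounding the second moments by the Cauchy--Schwarz inequalities in \eqref{cs}, namely $(x_{ws[i]}x_{ws[j]})^2 \leq \|x_{ws}\|_4^4$ and $(x_{ws[i]}\Yws(z))^2 \leq \tfrac12(\|x_{ws}\|_4^4 + \Yws^4(z))$, these are controlled by $W^{-1}\sumw\alpha_w^2\,\oxfw = O(1)$ and $W^{-1}\sumw\alpha_w^2\,\oyfzw = O(1)$ from Conditions \ref{asym2}\eqref{po_x4} and \ref{asym}\eqref{po_3}, so both variances are $O(W^{-1}) = o(1)$.

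For the aggregate estimators the plan is simpler, because $\hsxx = W_a^{-1}\sumwz \hvwab\,\hvwab^\T$ and $\hsxy = W_a^{-1}\sumwz \hvwab\,\hU_w(z)$ have exactly the form of $\hat T_{z,z'}$ in Lemma \ref{lln}, with the \emph{linear} aggregate $\hvwab = \alpha_w\hbx_w(z)$ playing the role of $\hU_w(z)$; crucially its relevant fourth moment is $\alpha_w^4\,\oxfw$, only a fourth moment of $x$, so Lemma \ref{lln} applies verbatim under Condition \ref{asym2}. Reading off its conclusion entrywise for the covariate product and the covariate--outcome product gives
$$
E(\hsxx) = (1-W^{-1})\sx + \bar v\,\bar v^\T + p_a\Psi_{xx}(z,z), \qquad E(\hsxy) = (1-W^{-1})\sxyz + \bar v\,\bar Y(z) + p_a\Psi_{xY}(z,z),
$$
where the mean terms vanish because $\bar v = W^{-1}\sumw\alpha_w\bar x_w = \bar x = 0_J$; letting $W\to\infty$ identifies these limits with $\Uxx(z) = \sx + p_a\Psi_{xx}(z,z)$ and $\uxyz = \sxyz + p_a\Psi_{xY}(z,z)$, so $\hsxx - \Uxx(z) = \op$ and $\hsxy - \uxyz = \op$.

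The main obstacle I anticipate is bookkeeping the moment reductions for the mixed products: Lemmas \ref{bounded} and \ref{lln} are phrased for the single potential-outcome family $Y$, so to reuse them I must verify that the derived quantities satisfy the bounded fourth-moment requirement of Condition \ref{asym0_S}\eqref{mm}, and for $\hqxx$, $\hqxy$ I must resist the temptation to route through the full weak law and instead settle for the weaker second-moment bound. These reductions are exactly where the Cauchy--Schwarz inequalities \eqref{cs} splice together the separate fourth-moment controls on $x$ and $Y$; once they are in place, the between and within whole-plot accounting is routine.
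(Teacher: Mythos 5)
Your proposal is correct and, apart from packaging, it follows the paper's own route: statement (i) is obtained by applying Lemma \ref{wlln}\eqref{wlln_Y} componentwise to the covariates (whose required conditions follow from Conditions \ref{asym}--\ref{asym2}, plus $\bar x = 0_J$), and the aggregate statements are obtained by applying Lemma \ref{lln} with $x_{ws}$ inserted as a treatment-invariant potential outcome, reading off the stated expectation and killing the mean terms via $\bar x = 0_J$ --- exactly as the paper does.

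The one place you present as a departure --- refusing to route $\hqxx$ and $\hqxy$ through Lemma \ref{wlln} on the grounds that ``its fourth-moment hypothesis applied to the quadratic outcome $x_{ws}x_{ws}^\T$ would demand eighth moments of the covariates'' --- rests on a misreading of that lemma's hypotheses. Lemma \ref{wlln}\eqref{wlln_Y} assumes only Condition \ref{asym0_S}\eqref{qb}--\eqref{asym0}, i.e., a finite limit for the population mean together with $S = O(1)$ and $\Psi = O(1)$ for the population being averaged; the fourth-moment clause, Condition \ref{asym0_S}\eqref{mm}, is a hypothesis of Lemma \ref{lln} (needed there to control second moments of the \emph{products} $\hU_w(z)\hU_w(z')$), not of Lemma \ref{wlln}. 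Accordingly, the paper does prove $\hqxx - \qxx = \op$ by applying Lemma \ref{wlln} to the derived population $\sigma_{ws} = x_{ws}^2$ (and analogously $x_{ws}Y_{ws}(z)$ for $\hqxy$), and what it must then verify --- $\bar\sigma$ bounded, $S_{\sigma\sigma} = O(1)$, $\Psi_{\sigma\sigma} = O(1)$, via the Cauchy--Schwarz bounds in \eqref{cs} and Conditions \ref{asym}\eqref{po_3} and \ref{asym2}\eqref{po_x4} --- is word-for-word the second-moment computation in your ``direct'' unbiasedness-plus-variance-plus-Markov argument. So your workaround is valid (your exact-unbiasedness observation and the $O(W^{-1})$ variance bound both check out), but it is the same computation inlined rather than cited, and the obstacle it is designed to avoid does not exist: only fourth moments of $x$ ever enter, under either phrasing.
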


\begin{proof}[Proof of Lemma \ref{Axx}]
The result on $\hat x_\sht(z)$ follows from applying Lemma \ref{wlln}\eqref{wlln_Y}  component-wise. We verify below statement \eqref{item:axx2} for scalar covariate $x_{ws} \in \mathbb R$ to simplify the presentation. 

For the result on $\hqxx$, let $\sigma_{ws} = x_{ws}^2$ to write  $Q_{xx} = (N-1)^{-1}\sumws \sigma_{\ws}$. 
Lemma \ref{wlln} ensures that $\hat Q_{xx}(z) - Q_{xx} = \op$ for all $z\in\mt$ as long as 
Condition \ref{asym0_S}\eqref{asym0} holds for the finite population of $\{Y_{ws}(z) =\sigma_{ws}: z\in\mt\}_{\ws \in\ms }$. 
To verify this, let $\bar\sigma_{w} = M_w^{-1}\sums \sigma_{\ws}$ with 
$
 \bar\sigma^2_{w}  \leq  M_w^{-1} \sums \sigma^2_{ws} = M_w^{-1} \sums x_{ws}^4
$
  by \eqref{cs}.
This ensures
\beginy\label{eq:cs_2}
W^{-1}\sumw \alpha^2_w\bar\sigma^2_{w} 
\leq W^{-1}\sumw \alpha^2_w \left( M_w^{-1} \sums x^4_{ws}\right)  = O(1)
\endy
by Condition \ref{asym2}\eqref{po_x4}. 
Condition \ref{asym0_S}\eqref{asym0}  is thus satisfied with 
\begin{enumerate}[$\cdot$]
\item $\bar \sigma = N^{-1}\sumws \sigma_{\ws} = \lambda_N Q_{xx}$  having a finite limit by Condition \ref{asym2}\eqref{po_xy};
\item $S_{\sigma\sigma}= (W-1)^{-1}\sumw \left(\alpha_w\bar\sigma_{w} -\bar \sigma\right)^2 \leq (W-1)^{-1}\sumw \alpha^2_w\bar\sigma^2_{w} = O(1)$ by \eqref{eq:cs_2};    
\item $
\Psi_{\sigma\sigma} = W^{-1}\sumw M_w^{-1} \{H_w \circ (S_{\sigma\sigma,w}\bbo{4})  \} =  W^{-1}\sumw M_w^{-1}S_{\sigma\sigma,w}H_w = O(1)$
given 
\begina
S_{\sigma\sigma,w}=\frac{1}{M_w-1}\alpha^2_w  \left( \sums  \sigma^2_{ws} -M_w \bar\sigma^2_{w} \right) \leq \frac{1 }{M_w-1} \alpha^2_w \left(\sums  \sigma_{ws}\right)^2 = \frac{M_w^2}{ M_w-1} \alpha_w^2 \bar\sigma^2_{w} 
\enda and \eqref{eq:cs_2}.
\end{enumerate}
This verifies  
$\hqxx - \qxx= \op$. 
The proof for $\hqxy - \qxy= \op$ is almost identical  by verifying Condition \ref{asym0_S}\eqref{asym0} for $\{x_{ws}\Yws(z): z\in\mt\}_{\ws\in\ms}$ and thus omitted. 

The  result on $\hat S_{xx}(z)$ follows by applying Lemma \ref{lln} to the finite population with $Y_{ws}(z) = Y_{ws}(z') = x_{ws}$; the corresponding Condition \ref{asym0_S} is ensured by Condition \ref{asym2}. 
Likewise for the result on $\hat S_{xY(z)}$ to follow from letting $Y_{ws}(z') = x_{ws}$ in  Lemma \ref{lln}. 
\end{proof}

For a set of $J\times 1$ vectors $\gamma = (\gamma_z)_{z\in\mt}$, let $Y_{ws}(z; \bb_z) = \Yws(z) -  x_{ws}^\T \bb_z$ be the adjusted potential outcome based on $\bb_z$,
 and let 
$S_{*,\gamma}$,  $\Sigma_{*,\gamma}$, $\swg$, $\psig$, $\ysg = \{\ysg(z)\}_{z\in\mt}$, $\hat U_{w, \gamma}(z) = \alpha_w \hat Y_{w,\gamma}(z)$, and $\hvsg$
be the analogs of $\bs_*$, $\Sigma_*$, $\bs_w$,   $\Psi$, $\hy_* = \{\hy_*(z)\}_{z\in\mt}$, $\hat U_w(z) = \alpha_w \hat Y_w(z)$, and $\hbV_*$
based on $\{Y_{ws}(z; \bb_z): z\in\mT\}_{\ws\in\ms}$ for $*=\HT, \haj$, respectively. We have $\psig = W^{-1}\sumw M_w^{-1}  H_w \circ \swg$ and $
 \Sigma_{*,\gamma} =  H \circ S_{*,\gamma} +\psig$.
With a slight repetition, let 
\beginy\label{eq:adjusted}
\hys(z; \gamma) = \ysg(z), \qquad \hys(\gamma) = \ysg, \qquad \hat U_{w}(z; \gamma)  = \hat U_{w, \gamma}(z). 
\endy 

\begin{lemma}\label{Cov_x}
Assume split-plot randomization and Conditions \ref{asym}--\ref{asym2}.  For $\hat\bG = (\hbb_z)_{z\in \mT}$ with $\hbb_z = \bb_z+\op$, where $\gamma_z$ is some fixed $J\times 1$ vector, we have
$
\sqrt W (\hY_{*,\hat\bG} - \by) \rightsquigarrow \mN( 0,\Sigma_{*,\bG})$ and $W\hbV_{*,\hat\bG} -  \Sigma_{*,\gamma} = \ssg +\op
$ for $* = \HT, \haj$.
\end{lemma}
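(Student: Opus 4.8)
The plan is to reduce the claim to the unadjusted results of Theorems~\ref{clt} and~\ref{covEst}, applied to the oracle-adjusted potential outcomes $Y_{ws}(z;\gamma_z) = \Yws(z) - x_{ws}^\T\gamma_z$ formed with the fixed limits $\gamma_z$, and then to absorb the estimation error $\hg_z - \gamma_z = \op$ through a Slutsky-type argument. First I would treat $\{Y_{ws}(z;\gamma_z): z\in\mT\}_{\ws\in\ms}$ as a bona fide population of potential outcomes and check that it obeys Condition~\ref{asym}. Since each adjusted outcome is a fixed linear combination of $\Yws(z)$ and $x_{ws}$, the associated population moments expand into moments of the outcomes, the covariates, and their cross terms; for example the between-whole-plot covariance satisfies $S_{\sht,\bG}(z,z') = \ssht(z,z') - \sxyz^\T\gamma_{z'} - \gamma_z^\T\bs_{xY(z')} + \gamma_z^\T\sx\gamma_{z'}$ (using $\bbx = 0_J$), and $\psig$ decomposes analogously through $\Psi$, $\Psi_{xx}$, and $\Psi_{xY}$. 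Each summand has a finite limit under Conditions~\ref{asym}--\ref{asym2}, and the fourth-moment requirements of Condition~\ref{asym}\eqref{alpha} and~\eqref{po_3} for the adjusted population follow from the conditions on $\oyfzw$ and $\oxfw$ together with the elementary bound $(u_1+u_2)^4 \le 8u_1^4 + 8u_2^4$ in~\eqref{cs}. Theorem~\ref{clt} then yields $\sqrt W(\ysg - \by)\rightsquigarrow \mN(0,\Sigma_{*,\bG})$ and Theorem~\ref{covEst} yields $W\hvsg - \Sigma_{*,\bG} = \ssg + \op$ for the oracle estimators.

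Next I would pass from the oracle to the plug-in estimator via the linearity of the {\htf} and Hajek estimators in the outcome: in both cases the multiplier attached to each $\Yws$ does not involve the outcome values, so $\hY_{*,\hat\bG}(z) = \hY_*(z) - \hat x_*(z)^\T\hg_z$ and hence $\hY_{*,\hat\bG}(z) - \ysg(z) = -\hat x_*(z)^\T(\hg_z - \gamma_z)$, the covariate analog of Proposition~\ref{prop:ols_x_fisher}. Because $\bbx = 0_J$, applying Theorem~\ref{clt} to the covariates regarded as treatment-invariant potential outcomes (whose regularity is guaranteed by Condition~\ref{asym2}) shows $\sqrt W\,\hat x_*(z) = O_{\mathbb{P}}(1)$. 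Together with the hypothesis $\hg_z - \gamma_z = \op$, this gives $\sqrt W\{\hY_{*,\hat\bG}(z) - \ysg(z)\} = O_{\mathbb{P}}(1)\cdot\op = \op$ for every $z$, so that $\sqrt W(\hY_{*,\hat\bG} - \ysg) = \op$. Slutsky's theorem then transfers the oracle limit to $\hY_{*,\hat\bG}$ and establishes the asymptotic Normality claim.

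For the covariance estimator I would expand the sample between-whole-plot covariances $\hat S_{*,\hat\bG}(z,z')$ entering $\hbV_{*,\hat\bG}$ as quadratic forms in $\hg_z$; subtracting the oracle entries $\hat S_{*,\bG}(z,z')$ leaves a sum of terms, each a product of a sample second-moment matrix --- namely $\hsxx$, $\hsxy$, or their Hajek and {\htf} analogs --- with a factor $\hg_z - \gamma_z$ or $\hg_{z'} - \gamma_{z'}$. Lemma~\ref{Axx}, together with Lemma~\ref{lln}, guarantees these sample matrices converge in probability to finite limits and are thus $O_{\mathbb{P}}(1)$, while the coefficient differences are $\op$; hence $W(\hbV_{*,\hat\bG} - \hvsg) = \op$. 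Adding the oracle conclusion $W\hvsg - \Sigma_{*,\bG} = \ssg + \op$ from the first step delivers $W\hbV_{*,\hat\bG} - \Sigma_{*,\bG} = \ssg + \op$, as required.

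The main obstacle is the bookkeeping in the first step: confirming that the oracle-adjusted population satisfies every part of Condition~\ref{asym}, especially the $\max$ condition~\eqref{po} and the two fourth-moment conditions, requires propagating the covariate hypotheses of Condition~\ref{asym2} through the linear combination uniformly in $w$, and this is exactly where the Cauchy--Schwarz inequalities of~\eqref{cs} do the real work. Once those conditions are secured, the probabilistic content of the remaining steps is routine, resting only on the rate $\sqrt W\,\hat x_*(z) = O_{\mathbb{P}}(1)$ and the second-moment convergences supplied by Lemma~\ref{Axx}.
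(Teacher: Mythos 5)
Your proposal matches the paper's own proof essentially step for step: both reduce to Theorems \ref{clt}--\ref{covEst} applied to the oracle-adjusted population $\{Y_{ws}(z;\gamma_z)\}$ after checking it inherits Condition \ref{asym} from Conditions \ref{asym}--\ref{asym2}, both transfer the limit to the plug-in estimator via $\hY_{*,\hat\bG}(z)-\hY_{*,\gamma}(z) = -\hat x_*^\T(z)(\hat\gamma_z-\gamma_z)$ with $\sqrt W\,\hat x_*(z)=O_{\mathbb{P}}(1)$ and Slutsky, and both control $W(\hbV_{*,\hat\bG}-\hV_{*,\gamma})$ by expanding the sample covariances as quadratic forms whose sample second-moment factors are $O_{\mathbb{P}}(1)$ by Lemmas \ref{lln} and \ref{Axx}. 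The argument is correct and no genuinely different route is taken.
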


\begin{proof}[Proof of Lemma \ref{Cov_x}]
We verify below the result for  $*=\HT$.
The proof for $*=\haj$ is almost identical and  thus omitted. 

First, the $(z,z')$th elements of $\shtg$, $\swg$, and $\shajg$ equal
\beginy\nonumber
\ssht(z,z' ;\bG) &=&  \ssht(z,z') - \bb_z^\T S_{xY(z')} -   S_{Y(z)x}  \bb_{z'} + \bb_z^\T \sxx \bb_{z'},\\ 
\sw(z,z';\bG ) &=&  S_w(z,z') - \bb_z^\T S_{xY(z'),w}   -   S_{Y(z)x,w}  \bb_{z'} + \bb_z^\T \sxxw \bb_{z'},\label{eq:swzz}\\
\sshaj(z,z';\bG) &=&  \sshaj(z,z') - \bb_z^\T S_{xY'(z')} -  S_{Y'(z)x} \bb_{z'} + \bb_z^\T \sxx \bb_{z'},\nonumber
\endy
respectively, by direct algebra. 
Conditions \ref{asym}--\ref{asym2} together imply that 
Condition \ref{asym} holds for the finite population of $\{Y_{ws}(z; \bb_z): z\in\mt\}_{\ws \in \ms}$ with fixed $\bG = (\bb_z)_{ z\in \mT}$.
This ensures
\beginy\label{eq:cov1}
 \sqrt W (\yhtg - \bbY) \rightsquigarrow \mN( 0,\Sigma_{\sht,\bG}), \qquad W\hvhtg - \Sigma_{\sht,\gamma}= \shtg +\op
 \endy
by Theorems \ref{clt}--\ref{covEst}.

\pa{Result on $ \hY_{\sht,\hat\bG} $} By Slutsky's theorem, 
\beginy\label{eq:yy}
\sqrt W \{ \yHTn(z; \hbb_z) -\yHTn(z; \bb_z) \} =  - (\hbb_z-\bb_z)^\T   \sqrt W  \hbx_\sht(z) = \op
\endy
given $\hbb_z - \bb_z = \op$ and the asymptotic Normality of $\sqrt W  \hbx_\sht(z)$ by Theorem \ref{clt}.
This ensures $\sqrt W (\hY_{\sht,\hat\bG} - \hY_{\sht,\gamma}) = \op$. 
The result follows from \eqref{eq:cov1} and Slutsky's theorem.

\pa{Result on $W\hbV_{\sht,\hat\bG}$} By \eqref{eq:cov1}, it suffices to verify $W(\hbV_{\sht,\hat\bG} -  \hV_{\sht,\gamma}) = \op$. 
This is in turn guaranteed by 
\beginy\label{eq:ss}
\hat S_\sht(z, z'; \hat\bG) -  \hat S_\sht(z, z'; \bG)   = \op,
\endy where
\begina
&&\hat S_\sht(z, z'; \hat\bG)=(W_a-1)^{-1} \left\{\sum_{w:A_w=a}  \hU_w(z; \hbb_z)\hU_w(z';\hbb_{z'}) -  W_a \, \yHTn(z; \hbb_z)\yHTn(z'; \hbb_{z'})\right\},\\
&& \hat S_\sht(z, z'; \bG)=(W_a-1)^{-1} \left\{ \sum_{w:A_w=a} \hU_w(z; \bb_z)\hU_w(z';\bb_{z'}) -   W_a \, \yHTn(z; \bb_z)\yHTn(z'; \bb_{z'}) \right\},
\enda
for all $z=(ab)$ and $z'=(ab')$ with the same level of factor A. 
Given $ \yHTn(z; \hbb_z) - \yHTn(z; \bb_z) = \op$ by \eqref{eq:yy}, \eqref{eq:ss} holds as long as the difference between the first terms satisfies
\beginy\label{eq:ssss_2}
&&\quad \quad \Delta = W_a^{-1} \left\{ \sum_{w:A_w=a}  \hU_w(z; \hbb_z)\hU_w(z';\hbb_{z'}) -   \sum_{w:A_w=a} \hU_w(z; \bb_z)\hU_w(z';\bb_{z'})\right\} = \op.
\endy

To this end, let $\Delta_w(z) = (\hbb_z-\bb_z)^\T \hbv_w(z)$ to write $\hU_w(z; \hbb_z) = \hU_w(z; \bb_z)-\Delta_w(z)$. Then
$$
 \hat U_w(z; \hbb_z)\hU_w(z';\hbb_{z'}) 
- \hat U_w(z; \bb_z)\hat U_w(z'; \bb_{z'})
 = \Delta_w(z)\Delta_w(z')-\Delta_w(z)\hat U_w(z'; \bb_{z'}) - \Delta_w(z')\hat U_w(z; \bb_z)
 $$
such that
\begina
\Delta &=& W_a^{-1}   \sum_{w:A_w=a} \left\{\Delta_w(z)\Delta_w(z')-\Delta_w(z)\hat U_w(z'; \bb_{z'}) - \Delta_w(z')\hat U_w(z; \bb_z)\right\}\\
&=&  (\hbb_z-\bb_z)^\T   \left\{ W_a^{-1} \sum_{w:A_w=a}   \hbv_w(z)\hbv^\T_w(z')\right\}  (\hbb_{z'}-\bb_{z'})\\
 && - (\hbb_z-\bb_z)^\T  \left\{  W_a ^{-1} \sum_{w:A_w=a}  \hbv_w(z)\hU_w(z'; \bb_{z'})\right\} -(\hbb_{z'}-\bb_{z'})^\T \left\{ W_a ^{-1} \sum_{w:A_w=a}  \hbv_w(z')\hU_w(z; \bb_z)\right\}\\
&=&  \op;
\enda
the last equality follows from 
$
W_a^{-1} \sum_{w:A_w=a} \hbv_w(z)\hbv^\T_w(z') = O_{\mathbb P}(1)$ and 
\begina
W_a^{-1} \sum_{w:A_w=a} \hbv_w(z)\hU_w(z'; \bb_{z'}) 
=  W_a^{-1}\sum_{w:A_w=a} \hv_w(z) \hU_w(z') -  \left\{W_a^{-1}\sum_{w:A_w=a}\hbv_w(z) \hbv^\T_w(z')\right\}\bb_{z'}=O_{\mathbb P}(1)
\enda by Lemmas \ref{lln}--\ref{Axx}. This verifies \eqref{eq:ssss_2} and hence the result. 
\end{proof}

\begin{lemma}\citep[][Theorem 3.1]{schur}\label{lem:schur}
If $G_1$ and $G_2$ are positive semi-definite, then their Hadamard product $G_1\circ G_2$ is also positive semi-definite. 
\end{lemma}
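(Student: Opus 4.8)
The plan is to prove the Schur product theorem directly from the definition of positive semi-definiteness, by exploiting the spectral decomposition of one factor to express the relevant quadratic form as a nonnegative combination of quadratic forms in the other factor. Throughout I treat $G_1$ and $G_2$ as real symmetric $n\times n$ matrices.

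First I would diagonalize $G_2$. Since $G_2$ is positive semi-definite, write its spectral decomposition as $G_2 = \sum_{k=1}^n \mu_k w_k w_k^\T$ with eigenvalues $\mu_k \geq 0$ and orthonormal eigenvectors $w_k$, and set $v_k = \sqrt{\mu_k}\, w_k$ so that $G_2 = \sum_{k=1}^n v_k v_k^\T$. Entrywise this reads $(G_2)_{ij} = \sum_{k=1}^n (v_k)_i (v_k)_j$, whence for each pair $(i,j)$,
\[
(G_1 \circ G_2)_{ij} = (G_1)_{ij}\sum_{k=1}^n (v_k)_i (v_k)_j.
\]

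Next I would fix an arbitrary $x \in \mathbb{R}^n$ and compute the quadratic form. Introducing the diagonal matrices $D_k = \diag(v_k)$, so that $(D_k x)_i = x_i (v_k)_i$, the double sum rearranges as
\[
x^\T (G_1\circ G_2) x = \sum_{k=1}^n \sum_{i,j} \big(x_i (v_k)_i\big)\, (G_1)_{ij}\,\big(x_j (v_k)_j\big) = \sum_{k=1}^n (D_k x)^\T G_1 (D_k x).
\]
Each summand is nonnegative because $G_1$ is positive semi-definite, so the whole sum is nonnegative; since $x$ was arbitrary, $G_1\circ G_2$ is positive semi-definite. An alternative route would realize $G_1\circ G_2$ as the principal submatrix of the Kronecker product $G_1\otimes G_2$ indexed by the diagonal pairs $(i,i)$, and invoke the facts that a Kronecker product of positive semi-definite matrices is positive semi-definite and that principal submatrices inherit positive semi-definiteness.

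There is no substantive obstacle here: the result is classical and the argument is only a few lines. The sole points requiring care are bookkeeping, namely absorbing the eigenvalues of $G_2$ into the vectors $v_k$ so the decomposition is a clean sum of rank-one terms, and tracking the substitution $(D_k x)_i = x_i (v_k)_i$ that turns the rearranged double sum into a genuine quadratic form $(D_k x)^\T G_1 (D_k x)$ in $G_1$. Since the statement is quoted from \citep{schur}, one may equally cite it directly rather than reproduce the proof.
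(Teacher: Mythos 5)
Your proof is correct. The paper does not actually prove this lemma — it states it as an established result, citing Theorem 3.1 of the referenced work — so there is no internal argument to compare against; your spectral-decomposition argument (writing $G_2 = \sum_k v_k v_k^\T$ with $v_k = \sqrt{\mu_k}\,w_k$ and reducing $x^\T (G_1\circ G_2)x$ to the nonnegative sum $\sum_k (D_k x)^\T G_1 (D_k x)$ with $D_k = \diag(v_k)$) is the standard proof of this classical fact, and the Kronecker-product alternative you sketch is equally valid, as you note, consistent with the paper's own choice to simply cite the result.
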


Lastly, let
$$\Psi_{xx}(z,z') =  W^{-1} \sumw M_w^{-1} H_w(z,z') \sxxw, \quad \Psi_{xY}(z,z') =  W^{-1} \sumw M_w^{-1} H_w(z,z') S_{xY(z'),w}  $$
be  extensions of $\Psi_{xx}(z,z)$ and $\Psi_{xY}(z,z)$ to $z, z'\in\mt$. 
\begin{lemma}\label{lem:eff} Under Conditions \ref{asym}--\ref{asym2} and \eqref{cond:eff}, we have
\begina
\Psi_{xx}(z,z') = o(1), \qquad \Psi_{xY}(z,z') = o(1) \qquad (z, z'\in\mt). 
\enda 
\end{lemma}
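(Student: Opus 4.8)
The plan is to reduce the whole statement to a single scalar fact---that the nonnegative matrix $T := W^{-1}\sum_w M_w^{-1}\sxxw$ is $o(1)$---and then read off the $\Psi_{xx}$ and $\Psi_{xY}$ conclusions by elementary bounds. First I would dispose of the pairs that differ in factor A: when $z=(ab)$ and $z'=(a'b')$ have $a\neq a'$, the coefficient $H_w(z,z')$ vanishes for every $w$, so $\Psi_{xx}(z,z')$ and $\Psi_{xY}(z,z')$ are identically zero. It thus suffices to treat pairs sharing the same level of factor A.

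The crux is showing $T=o(1)$. I would exploit the uniform positivity afforded by Condition \ref{asym}\eqref{paqb}: since $p_a^{-1}\ge 1$ and $q_{wb}\le 1-\epsilon$, the diagonal coefficient $H_w(z,z)=p_a^{-1}(q_{wb}^{-1}-1)$ is bounded below by $c:=\epsilon/(1-\epsilon)>0$ uniformly in $w$ and $W$. As each $M_w^{-1}\sxxw$ is positive semidefinite, subtracting $c$ from every coefficient leaves a sum of positive semidefinite matrices, so $\Psi_{xx}(z,z)\succeq cT\succeq 0$ in the Loewner order, for any fixed $z$. Condition \eqref{cond:eff} gives $\Psi_{xx}(z,z)=o(1)$, and monotonicity of the operator norm on positive semidefinite matrices then squeezes $\|T\|\le c^{-1}\|\Psi_{xx}(z,z)\|\to 0$, whence $T=o(1)$ entrywise. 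For $z=(ab)$ and $z'=(ab')$ with $b\neq b'$ one has $H_w(z,z')=-p_a^{-1}$, so $\Psi_{xx}(z,z')=-p_a^{-1}T=o(1)$, using $p_a^{-1}=O(1)$ from the positive limit of $p_a$.

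For $\Psi_{xY}$ I would argue componentwise, taking a scalar covariate without loss of generality as in the proof of Lemma \ref{Axx}, via the within-whole-plot Cauchy--Schwarz inequality $|S_{xY(z'),w}|^2\le \sxxw\,\sw(z',z')$. In the diagonal case $z=z'$, applying Cauchy--Schwarz over $w$ with the nonnegative weights $M_w^{-1}H_w(z,z)$ gives $|\Psi_{xY}(z,z)|\le \{\Psi_{xx}(z,z)\}^{1/2}\{\Psi(z,z)\}^{1/2}$; the first factor is $o(1)$ by \eqref{cond:eff} and the second is $O(1)$ because $\Psi$ has a finite limit under Condition \ref{asym}\eqref{po_1}, so the product is $o(1)$. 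In the off-diagonal case $b\neq b'$, where $|H_w(z,z')|=p_a^{-1}$, the same inequality yields $|\Psi_{xY}(z,z')|\le p_a^{-1}\{T\}^{1/2}\{W^{-1}\sum_w M_w^{-1}\sw(z',z')\}^{1/2}$; here $T=o(1)$ from the previous step and the last factor is $O(1)$ because $W^{-1}\sum_w M_w^{-1}\sw(z',z')\le c^{-1}\Psi(z',z')=O(1)$, again by the lower bound on $H_w(z',z')$.

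The only genuinely nontrivial step is establishing $T=o(1)$: it is what converts the single hypothesis \eqref{cond:eff} into control of the whole family of off-diagonal objects, and it rests on combining two ingredients correctly---the uniform lower bound on the $H_w$-coefficients coming from the boundedness of $q_{wb}$ in Condition \ref{asym}\eqref{paqb}, and the positive semidefiniteness of the within-whole-plot dispersions $\sxxw$, which together license the Loewner squeeze. Once $T=o(1)$ is in hand, the remaining Cauchy--Schwarz manipulations are routine, the only bookkeeping being to feed in $\Psi=O(1)$ to bound the within-whole-plot variances of the scaled potential outcomes.
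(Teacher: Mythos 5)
Your proposal is correct and takes essentially the same route as the paper: both proofs convert the uniform lower bound on $H_w(z,z)$ (from Condition \ref{asym}\eqref{paqb}) together with \eqref{cond:eff} into $W^{-1}\sumw M_w^{-1}\sxxw = o(1)$, and then control $\Psi_{xY}$ by the within-whole-plot Cauchy--Schwarz inequality combined with Cauchy--Schwarz over $w$ and the bound $\Psi = O(1)$ from Condition \ref{asym}\eqref{po_1}. Your refinements---arguing in the Loewner order instead of reducing to a scalar covariate, dispatching the $a \neq a'$ pairs as identically zero, and using the exact identity $\Psi_{xx}(z,z') = -p_a^{-1}\, W^{-1}\sumw M_w^{-1}\sxxw$ for off-diagonal pairs sharing the same level of factor A---are cosmetic.
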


\begin{proof}[Proof of  Lemma \ref{lem:eff}]
We verify below the result for scalar covariate $x_{ws} \in \mathbb R$ to simplify the presentation. 
Let $u_0 > 0$ and $l_0 > 0$ be some uniform upper and lower bounds of $|H_w(z,z')|$ for all $z,z'\in\mt$ and $w = \ot{W}$ under Condition \ref{asym}\eqref{paqb}, both independent of $W$. That is, $l_0 \leq |H_w(z,z')|\leq u_0$  for all $z,z'\in\mt$ and $w = \ot{W}$.

First, with $H_w(z,z) > 0$ and $l_0^{-1}H_w(z,z)  \geq 1$,  Condition \eqref{cond:eff} ensures 
\beginy\label{eq:bound}
&& \quad\quad  W^{-1} \sumw M_w^{-1}  \swx   \leq   W^{-1} \sumw M_w^{-1} \{l_0^{-1}H_w(z,z)  \} \swx  =   l_0 ^{-1}\Psi_{xx}(z,z)= o(1).
\endy 
The result for $\Psi_{xx}(z,z')$ then follows from $H_w(z,z') = O(1)$. 
 
Second,  the Cauchy--Schwarz inequality ensures $
|\sxyzw| \leq  \swx ^{1/2} \{S_w(z,z) \}^{1/2}$.
This, together with  $|H_w(z,z')|\leq u_0$, suggests
\begina
|\Psi_{xY}(z,z')|
&\leq&   W^{-1} \sumw M_w^{-1} \left|H_w(z,z')\right| \left|  S_{xY(z'),w}   \right|\\
&\leq& u_0 \, W^{-1} \sumw M_w^{-1}      \swx ^{1/2} \{S_w(z',z')\}^{1/2}\\
&\leq& u_0 \, W^{-1} \sumw \left(M_w^{-1} \swx\right)^{1/2}\left\{M_w^{-1}   S_w(z',z')\right\}^{1/2} \\
&\leq& u_0   \left( W^{-1} \sumw  M_w^{-1}  \swx \right)^{1/2} \left\{ W^{-1} \sumw M_w^{-1}  S_w(z',z')  \right\}^{1/2}\\
&=&o(1);
\enda
the last equality follows from \eqref{eq:bound} and the fact that 
$W^{-1} \sumw M_w^{-1}  S_w(z',z') \leq 
l_0 ^{-1}\Psi(z',z') = O(1)$  by similar reasoning. 
\end{proof}

%

\subsection{Results under the fully-interacted regressions}
We verify in this part the results under the fully-interacted regressions \eqref{dlm4} and \eqref{dlm6}.

\begin{lemma}\label{bb_lin}
{\prex}  
$\bb_{\wls,z} = \qxx^{-1} \qxyz$ and $\bb_{\ag ,z} = \Uxx  ^{-1}(z)\uxyz$ for $z\in\mt$.  
\end{lemma}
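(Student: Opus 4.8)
The plan is to identify the deterministic population slopes $\bb_{\wls,z}$ and $\bb_{\ag,z}$ — the coefficients on the covariate--treatment interactions $1(\Zws=z)\,x_{ws}$ and $1(A_wb=z)\,\hvwab$ to which $\tgwz$ and $\tguz$ converge — by solving the limiting normal equations of the fully-interacted regressions \eqref{dlm4} and \eqref{dlm6}, rather than by re-deriving any convergence statement. The first and central observation is purely algebraic: each regression carries a separate intercept block ($d_{ws}$, resp. $\dwab$) and a separate covariate slope for every $z\in\mt$, so its design matrix is block-diagonal across treatment levels. By the Frisch--Waugh--Lovell decoupling already exploited for Proposition \ref{prop:ols_x}, $\tgwz$ equals the slope of the single treatment-specific weighted regression of $Y_{ws}$ on $(1,x_{ws})$ over $\{ws:\Zws=z\}$ with weights $\pi_{ws}$, and $\tguz$ equals the slope of the treatment-specific ordinary regression of $\huwab$ on $(1,\hvwab)$ over $\Wz$. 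I would record this reduction directly from the block-diagonal normal equations.

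A block-inversion / Schur-complement step then writes each slope as a ratio of centered second moments. Using the identities in \eqref{matrix0} together with $\sum_{ws:\Zws=z}\pi_{ws}=\oneHT$, $\sum_{ws:\Zws=z}\pi_{ws}Y_{ws}=\yht$, and $\sum_{ws:\Zws=z}\pi_{ws}x_{ws}=\hbx_\sht(z)$, the \wls\ slope becomes $(\lambda_N\hqxx-\oneHT^{-1}\hbx_\sht(z)\hbx_\sht(z)^\T)^{-1}(\lambda_N\hqxy-\oneHT^{-1}\hbx_\sht(z)\yht)$, while the ``\ag'' slope becomes $(\hsxx-\hbx_\sht(z)\hbx_\sht(z)^\T)^{-1}(\hsxy-\hbx_\sht(z)\yht)$, where I have used that $W_a^{-1}\sum_{w\in\Wz}\hvwab$ and $W_a^{-1}\sum_{w\in\Wz}\huwab$ equal $\hbx_\sht(z)$ and $\yht$, respectively.

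The deterministic identity is then read off from the limiting forms of these centered moments. Because the covariates are centered, $\bar x=0_J$ forces $\hbx_\sht(z)=\op$ (Lemma \ref{Axx}) and $\oneHT=1+\op$ (Lemma \ref{wlln}), so every centering correction is asymptotically negligible and the centered moments collapse to their uncentered counterparts. Lemma \ref{Axx} supplies the moment limits $\hqxx\toinp\qxx$, $\hqxy\toinp\qxyz$ for the unit-level fit and $\hsxx\toinp\Uxx(z)$, $\hsxy\toinp\uxyz$ for the whole-plot-level fit, with $\lambda_N\to1$. Hence the treatment-specific limiting normal equations reduce to the deterministic systems $\qxx\,\bb_{\wls,z}=\qxyz$ and $\Uxx(z)\,\bb_{\ag,z}=\uxyz$, whose unique solutions are exactly $\bb_{\wls,z}=\qxx^{-1}\qxyz$ and $\bb_{\ag,z}=\Uxx^{-1}(z)\uxyz$.

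The main obstacle, and the reason the two sides of each identity look different at first glance, is reconciling the centered moments that a regression-with-intercept generates against the uncentered moments $\qxx,\qxyz,\Uxx(z),\uxyz$ on the right; this is precisely where centering the covariates is indispensable, since it is what renders $\hbx_\sht(z)$ negligible and eliminates the Schur-complement cross terms. A secondary subtlety is that the two fits aggregate at different levels: the \wls\ limit uses the unit-level finite-population moments $\qxx,\qxyz$, whereas the ``\ag'' limit uses the whole-plot-level scaled moments $\Uxx(z)=\sxx+p_a\Psi_{xx}(z,z)$ and $\uxyz=\sxyz+p_a\Psi_{xY}(z,z)$, the gap being exactly the within-whole-plot contributions carried by $\Psi_{xx}$ and $\Psi_{xY}$.
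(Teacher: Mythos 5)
Your proposal is correct and matches the paper's own proof in all essentials: both reduce the fully-interacted fits to treatment-specific regressions of $Y_{ws}$ on $(1, x_{ws})$ and of $\hat U_w(z)$ on $(1, \hat v_w(z))$, write the normal equations using the moment identities, and invoke Lemmas \ref{wlln} and \ref{Axx} (with $\hbx_\sht(z)=\op$, $\oneHT = 1+\op$, and the stated moment limits) to obtain the deterministic systems $\qxx\,\bb_{\wls,z}=\qxyz$ and $\Uxx(z)\,\bb_{\ag,z}=\uxyz$. The only cosmetic difference is that you pass through the explicit Schur-complement slope formula before taking limits, whereas the paper takes probability limits of the full normal-equation matrices $G_1$ and $G_2$ and then inverts; these are interchangeable.
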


\begin{proof}[Proof of Proposition \ref{prop:ols_x}  and Lemma \ref{bb_lin}]
We verify below the results for the unit and aggregate regressions, respectively. 

\pa{Unit regression}
For $\dg = \ols, \wls$, the inclusion of full interactions ensures that  $\tilde\beta_{\dg,\lin}(z)$ and $\tilde\gamma_{\dg,z}$ from \eqref{dlm4} 
equal the coefficients of $1$ and $x_{ws}$ from the treatment-specific regression 
\beginy\label{lm:unit_z}
Y_{ws} \sim 1+ x_{ws}  \qquad \text{over} \ \ \ws  \in \uz, 
\endy
respectively, under their respective fitting schemes.
Let $ Y_z$ and $ X_z$ be the concatenations of $Y_{ws}$ and $ x_{ws}$ over $\ws  \in \uz$, respectively.
The matrix form of \eqref{lm:unit_z} equals $Y_z \sim 1_{N_z} + X_z$ for $z \in \mt$. 

That $\tbeta_\ol(z) = \hy_\sm(z) - \hx_\sm^\T(z)\tg_{\ols,z}$  follows from standard results.

Let $\Pi_z = \diag(\pi_{ws})_{\ws \in\uz}$ be the weighting matrix under fitting scheme ``wls''.
The first-order condition of \wlss ensures
$
 G_1  (\tbeta_\wl(z), \tgwz^\T )^\T = G_2$, where  
\begina
G_1 &=& ( 1_{N_z},  X_z)^\T \bp_z ( 1_{N_z},  X_z) =  \left(
\begin{array}{cc}
 1_{N_z}^\T \bp_z  1_{N_z} &  1_{N_z}^\T\bp_z  X_z\\
 X_z^\T \bp_z  1_{N_z} & X_z^\T \bp_z  X_z
\end{array}
\right) 
= \left(
\begin{array}{cc}
\oneHT &\hbx^\T_\sht(z)\\
\hbx_\sht(z) & \lambda_N\hqxx
\end{array}
\right),
\\
G_2 &=& ( 1_{N_z},  X_z)^\T \bp_z  Y_z 
=
\left(
\begin{array}{c}
 1_{N_z}^\T \bp_z  Y_z \\
 X_z^\T \bp_z  Y_z
\end{array}
\right)
= 
\left(
\begin{array}{c}
\hY_\sht(z) \\
\lambda_N \hqxy
\end{array}
\right)
\enda
by \eqref{eq:ht_pi} and $1_{N_z}^\T\bp_z  X_z = \hbx^\T_\sht(z)$. 
Compare the first row  to see
$\oneHT \tbeta_\wl(z) +\hbx^\T_\sht(z) \tgwz = \hY_\sht(z)$ and hence the numeric result for $\tbeta_\wl(z)$. 
The  probability limit then follows from
$(\tbeta_\wl(z), \tgwz^\T )^\T = G_1 ^{-1}G_2$ with
$G_1 = \diag(1, Q_{xx}) + \op$ and $G_2 = (\by(z), Q_{xY(z)}^\T)^\T + \op$ by  Lemmas \ref{wlln} and \ref{Axx}.

\pa{Aggregate regression}
The inclusion of full interactions ensures that $\tilde\beta_{\ag,\lin}(z)$ and $\tguz$ from \eqref{dlm6} equal the \textsc{ols} coefficients of $1$ and $\hat v_w(z)$ from the treatment-specific regression
\beginy\label{lm:ag_z}
\hat U_w (z) \sim 1+ \hat v_w(z)  \qquad\ \text{over} \ \  w \in \Wz,
\endy
respectively. Let $ U_z$ be the vectorization of the $W_a $ observations, namely $\{\hat U_w(z): w \in \mathcal{W}(z)\}$, under treatment $z = (ab)$, and  let  $ \lz$  be the concatenation of the corresponding $\hbv_w(z)$'s.
The matrix form of \eqref{lm:ag_z}  equals $U_z \sim 1_{W_a}+ \lz$ for $z\in\mt$. 
The first-order condition of \olss ensures 
$G_1 (\tbeta_\tl(z),  \tgzu^\T )^\T = G_2$, where
\begina
G_1 &=& W_a^{-1}( 1_{W_a},  \lz )^\T  ( 1_{W_a},  \lz ) 
=
W_a^{-1}\left(
\begin{array}{cc}
 1^\T_{W_a}   1_{W_a} &  1^\T_{W_a}  \lz \\
 \lz ^\T   1_{W_a} & \lz ^\T   \lz
\end{array}
\right) =
 \left(
\begin{array}{cc}
 1& \hbx_\sht^\T(z)\\
 \hbx_\sht(z) &\hsxx
\end{array}
\right)
,\\
G_2 &=&   W_a^{-1}( 1_{W_a},  \lz )^\T   U_z =  W_a^{-1}\left(
\begin{array}{c}
 1_{W_a}^\T   U_z \\
 \lz ^\T   U_z
\end{array}
\right)
 = 
\left(
\begin{array}{c}
\hY_\sht(z) \\
\hsxy
\end{array}
\right). 
\enda
Compare the first row to see
$
\tbeta_\tl (z)+ \hbx_\sht^\T(z)  \tgzu =  \yHT$ and hence the numeric result. 
The  probability limit then  follows from
$ (\tbeta_\tl(z),  \tgzu^\T)^\T = G_1 ^{-1}G_2$ with
$G_1 =\diag\{1, T_{xx}(z)\} + \op$ and $G_2 = (\by(z), T_{xY(z)}^\T)^\T + \op$ by  Lemmas \ref{wlln} and \ref{Axx}. 
\end{proof}

We next verify the asymptotic Normality of $\tbeta_\dl$ and the asymptotic conservativeness of $\tv_\dl$ for $\dg = \wls, \ag$ in Theorem \ref{thm_lin}, respectively.

\begin{proof}[Proof of Theorem \ref{thm_lin}, Part I for $\tbeta_\dl$]
For $\dg = \wls, \ag$, let $ \tilde\gamma_\dl = (\tilde\gamma_{\dg,z})_{z\in\mt}$   with $\tilde\gamma_{\dg,z} = \gamma_{\dg,z} + \op$ by Lemma \ref{bb_lin}.  
Recall the definitions of $\hys(z; \gamma)$ and $\hys(\gamma)$ from \eqref{eq:adjusted}. 
Proposition \ref{prop:ols_x} ensures
\beginy\label{eq:tbeta_wls}
\tbeta_\wl(z) = \yhajn(z; \tg_{\wls,z}), \qquad \tbeta_\wl = \yhajn(\tg_\wl), \qquad \tbeta_\tl = \yhtn(\tg_\tl), 
\endy 
respectively. 
The results on $\tbeta_\dl \ (\dg = \wls, \ag)$ then follow from Lemma \ref{Cov_x}. 
\end{proof}

\begin{proof}[Proof of Theorem \ref{thm_lin}, Part II for $\tv_\dl$]
We verify below the result for $\tilde V_\wl$ from the unit regression \eqref{dlm4}. The proof for $\tilde V_\tl$ from \eqref{dlm6} is almost identical and  thus omitted.

Let $ \chi$   be the concatenation  of $\chi_{ws} =  d_{ws} \otimes  x_{ws} $
over $\ws \in\mathcal{S}$. 
The design matrix of \eqref{dlm4}  equals
$C_\lin = (  D,  \chi)$.
Let $ C_{\lin,w} = ( D_w,  \chi_w)$ be the sub-matrix of $C_\lin$ corresponding to whole-plot $w$. 
Let $\tep_{w} = (\tep_{w1}, \dots, \tep_{wM_w})^\T$, where 
\beginy\label{eq:epws}
\tep_{ws} =  Y_{ws} - \tbeta_\wl(Z_{ws}) -  x^\T_\ws \tg_{\wls,Z_{ws}}
\endy is the residual from the \textsc{wls} fit of \eqref{dlm4}. 
Then $W \tbV_\wl $ equals the upper-left $4\times 4$ matrix of 
\beginy\label{eq:tv}
 ( C_\lin^\T\bp C_\lin)^{-1} \left(W \sumw  C_{\lin,w}^\T \bp_w \tep_w  \tep_{ w}^\T\bp_w  C_{\lin,w} \right) ( C_\lin^\T\bp  C_\lin)^{-1}.
\endy
Let
\begina
\too_\wl  = ( \dt \bp D)^{-1}\left(\sumw  D_w^\T \bp_w\tep_w \tep_w ^\T \bp_w  D_w\right)( \dt \bp D)^{-1} 
\enda
be an intermediate quantity. The result on $\tbV_\wl $ holds as long as 
\beginy\label{ehw_lin:1} 
\quad \quad \textup{(i)} \ \ W \too_\wl -  \Sigma_\wl = S_\wl+\op \quad \text{and} \quad 
\textup{(ii)} \ \ W(\tbV_\wl - \too_\wl) = \op.
 \endy
We verify below these two conditions one by one.

\pa{Condition \eqref{ehw_lin:1}(i)} 
Direct comparison shows that $\too_\wl$ is an analog of $\tbV_\wls$ from \eqref{v_pi},  with the unadjusted $\te_{\wls,w} = (\te_{\wls, ws})_{s=1}^{M_w}$ replaced by $\tep_{w} = (\tep_{ws})_{s=1}^{M_w}$.
By \eqref{eq:tbeta_wls} and \eqref{eq:epws},  
\begina
\tep_{ws} =  
(Y_{ws} -  x^\T_\ws \tg_{\wls,Z_{ws}}) - \yhajn(Z_{ws}; \tg_{\wls,Z_{ws}})
\enda
 is essentially the analog of $\te_{\wls,ws} = Y_{ws} - \yhajn(Z_{ws})$ defined on the adjusted potential outcomes $Y_{ws}(z; \tg_{\wls,z}) = Y_{ws}(z)-  x^\T_\ws \tg_{\wls,z} $. 
This, together with Theorem \ref{thm_vHats}, ensures
\beginy\label{eq:omega_v}
\too_\wl 
= \hat{ 1}^{-1}_\sht\diag\left(  \frac{W_0-1}{W_0} I_2, \frac{W_1-1}{W_1} I_2\right) \hbV_\shaj (\tilde\bG_\wl) \, \hat{ 1}^{-1}_\sht, 
 \endy
where $\hbV_\shaj (\tilde\bG_\wl)$ denotes the value of $  \hV_{\shaj,\gamma}$ at $\gamma = \tg_\wl$. 
With $\tg_{\wls,z}-\bb_{\wls,z} = \op$ by Lemma \ref{bb_lin}, Lemma \ref{Cov_x} ensures 
$ W \hbV_\shaj (\tilde\bG_\wl)  - \Sigma_ \wl   = S_\wl + \op$
by the definitions of $ S_\wl$ and $\Sigma_\wl  $. 
This, together with \eqref{eq:omega_v}, ensures condition \eqref{ehw_lin:1}(i).

\pa{Condition \eqref{ehw_lin:1}(ii)}
Let 
 $G_1 = W \sumw \dt _w \bp_w \tep_w \tep_w^\T\bp_w   \chi_w$ and $G_2 = W \sumw \chi^\T_w \bp_w \tep_w \tep_w^\T\bp_w   \chi_w$. 
The  ``meat'' and ``bread'' parts of the sandwich covariance \eqref{eq:tv} satisfy
\beginy\label{eq:ehw_lin}
W \sumw  C_{\lin,w}^\T \bp_w \tep_w  \tep_{ w}^\T\bp_w  C_{\lin,w}
&=& 
W \sumw\left( \begin{array}{c} \dt _w\\ \chi^\T_w\end{array}\right) \bp_w \tep_w \tep_w^\T\bp_w  
( D_w,  \chi_w)\nonumber\\
&=& 
W  
\beginp
\sumw \dt _w \bp_w \tep_w \tep_w^\T\bp_w   D_w&\ \ \sumw \dt _w \bp_w \tep_w \tep_w^\T\bp_w   \chi_w\\ \sumw \chi^\T_w \bp_w \tep_w \tep_w^\T\bp_w   D_w &\ \  \sumw \chi^\T_w \bp_w \tep_w \tep_w^\T\bp_w   \chi_w\endp\nonumber\\
&=& 
\beginp
( \dt \bp D) (W\too_\wl )  ( \dt \bp D) & \ \ G_1 \\ G_1^\T &\ \ G_2 \endp,\\
 C_\lin^\T\bp  C_\lin = \left( \begin{array}{c}  \dt  \\  \chi^\T \end{array} \right)\bp ( D,  \chi)
&=&
 \left( 
 \begin{array}{cc}  \dt \bp  D & \dt \bp \chi \\
\chi^\T\bp   D & \chi^\T\bp \chi
\end{array}
\right)=  \diag( I_\tss ,  I_\tss\otimes \qxx) + \op, \nonumber
\endy 
respectively. The last equality follows from 
$ \dt \bp D =  I_\tss + \op$ by \eqref{matrix0} and Lemma \ref{wlln}, and 
\begina
&&\dt \bp \chi  
= \sumws\pi_{ws}  d_{ws} \chi_{ws}^\T = \diag\left(\sumwszz \pi_{ws} x_{ws}^\T \right)_{z\in\mt} =  \diag\{\hx_\sht^\T(z)\}_{z\in\mt}= \op, \\ &&\chi^\T\bp \chi = \sumws \pi_{ws}  \chi_{ws}  \chi_{ws}^\T = \diag \left(\sumwszz \pi_{ws}  x_{ws} x_{ws}^\T \right)_{z\in\mt} =  I_\tss  \otimes \qxx + \op
\enda
by 
Lemma \ref{Axx}.
This, together with \eqref{eq:ehw_lin}, ensures that  \eqref{ehw_lin:1}(ii) holds as long as 
\begina
G_k = \big(G_k(z,z') \big)_{z,z'\in\mt} = O_\pr(1) \qquad \text{for} \ \ k = 1, 2. 
\enda
We verify below $G_2= O_\pr(1)$ for scalar covariate $x_{ws} \in \mathbb R$ for notational simplicity.
The proof for $G_1 = O_\pr(1)$ is almost identical and thus omitted. 

First, recall the expression of $\pi_{ws}$ from \eqref{eq:piws}. 
Direct algebra shows that 
\begina
\chi_w^\T \Pi_w \tep_w  
&=& \sums \pi_{ws}\chi_{ws} \tep_{ws} =\sums \pi_{ws}  (d_{ws}\otimes x_{ws}) (1\otimes \tep_{ws}) 
=\sums \pi_{ws}   d_{ws}  \otimes(x_{ws} \tep_{ws})  \\
&=&  \alpha_w\left( W_0^{-1}  \kappa_w (00), \,  W_0^{-1} \kappa_w (01), \,  
 W_1^{-1}   \kappa_w (10), \, W_1^{-1}  \kappa_w (11) \right)^\T,
\enda
where  $\kappa_w (z) = M_{wb}^{-1}\sumsz x_{ws}\tep_{ws}$ with $\kappa_w(z)=0$ if $\znotin$.  
This ensures \begina
G_2(z,z') =
\left\{ \begin{array}{cl}
p_a^{-2} W^{-1} \sum_{w: A_w=a} \alpha_w^2 \kappa_w (z)\kappa_w (z') & \ \  \text{if $z=(ab)$ and $z'=(a'b')$ with $a = a'$,} \\
0  & \ \  \text{if $z=(ab)$ and $z'=(a'b')$ with $a\neq a'$},
 \end{array}\right.
  \enda 
with
\begina
p_a^2 |G_2(z,z')|  \leq  \frac{1}{W} \sumw \alpha_w^2 |\kappa_w (z)\kappa_w (z')|  
 \leq \frac{1}{2W} \sumw \alpha_w^2  \{\kappa_w (z) \}^2+ \frac{1}{2W} \sumw \alpha_w^2\{\kappa_w (z') \}^2
\enda
for $z = (ab)$ and $z'=(ab')$. It thus suffices to verify $ 
 W^{-1} \sumw \alpha_w^2  \{\kappa_w (z) \}^2 
  = O_\pr(1)$.

To this end, 
let $\oxw{k} = M_{w}^{-1} \sums  x_{ws}^k$ and $\oew{k} = M_{w}^{-1} \sums  \epsilon_{ws}^k$ for $k=2,4$. 
Then
\begina
|\kappa_w (z)|\leq M_{wb}^{-1}\sums  |x_{ws}\tep_{ws}|  
 \leq   2^{-1} M_{wb}^{-1}\sums (x_{ws}^2 + \tep_{ws}^2)
=  2^{-1}q_{wb}^{-1} \left( \oxw{2}  +   \oew{2} \right)
\enda
such that, with $q_{wb}\geq \epsilon$ by Condition \ref{asym}, 
\begina
 \{\kappa_w (z) \}^2 
\leq   4^{-1}\epsilon^{-2}\left( \oxw{2}  +   \oew{2}\right)^2
\leq  2^{-1}\epsilon^{-2}  \left\{\left( \oxw{2}\right)^2   +   \left ( \oew{2}\right)^2 \right\}
\leq  2^{-1}\epsilon^{-2}  \left( \oxw{4}  +   \oew{4}  \right),
\enda
where the  last inequality follows from  $ \left( \oxw{2}\right)^2 \leq \oxw{4}  $ and $ \left ( \oew{2}\right)^2 \leq \oew{4}$ by \eqref{cs}. 
This ensures 
$$
 W^{-1} \sumw \alpha_w^2  \{\kappa_w (z) \}^2 
 \leq  2^{-1}\epsilon^{-2}   W^{-1} \sumw \alpha_w^2 \left( \oxw{4}   +  \oew{4} \right) = O_\pr(1);
$$
the last equality follows from $W^{-1} \sumw \alpha_w^2 \oxw{4} = O_\pr(1) $ by Condition \ref{asym2}\eqref{po_x4} and 
$ W^{-1} \sumw \alpha_w^2 \oew{4}   = O_\pr(1)$
by 
$
\tep_{ws}^4 \leq 27   Y_{ws} ^4 +27 \{\tbeta_\wl(Z_{ws})\}^4 +27 (x_{ws}^\T\tg_{\wls,Z_{ws}})^4$ from \eqref{eq:epws} and \eqref{cs}.

\end{proof}

\subsection{Results under the additive regressions}

\begin{lemma}\label{bb_fisher}
{\prex} 
$$
\bb_\wls =  { |\mT|^{-1}}  \qxx^{-1} \sumz \qxyz,\qquad \bb_\ag  =  \left\{\sum_{z=(ab)\in\mt} p_a  \Uxx  (z)\right\}^{-1}\left\{\sum_{z=(ab)\in\mt} p_a   \uxyz \right\}.
$$  
Further assume Condition \ref{balanced}. Then $\tilde\gamma_\ols = \gamma_\ols+ \op$ with $\gamma_\ols = Q_{xx}^{-1}\sumz r_z Q_{xY(z)}$, recalling $r_z = N_z/N$. 
\end{lemma}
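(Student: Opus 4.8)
The plan is to follow the same normal-equation strategy used for the fully-interacted case in Lemma \ref{bb_lin}, the only structural difference being that the additive regressions \eqref{dlm4_f} and \eqref{dlm6_f} impose a single shared slope $\gamma$ across all treatment cells rather than cell-specific slopes. Consequently the first-order conditions no longer decouple cell by cell, and the coefficient on the covariate must be extracted by partialling out the treatment indicators (the Frisch--Waugh--Lovell reduction), equivalently by eliminating the treatment-indicator block from the normal equations. After this reduction each $\tg_\dg$ becomes a ratio whose numerator and denominator are sums over $z\in\mT$ of within-cell demeaned cross-products of the covariates and the outcome; their probability limits are supplied by Lemma \ref{Axx} (convergence of $\hsxx$, $\hsxy$, $\hqxx$, $\hqxy$, and $\hat x_\sht(z) = \op$) together with Lemma \ref{wlln} (for $\oneHT = 1 + \op$ and $W_a/W \to p_a$).

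For the WLS unit regression, partialling out the dummies $d_{ws}$ under the weights $\pi_{ws}$ replaces each $x_{ws}$ and $Y_{ws}$ by its deviation from the within-cell weighted mean, which equals $\hat x_\sht(z)/\oneHT$ and $\yHT/\oneHT$ respectively. The denominator then equals $\sumz\{\lambda_N\hqxx - \oneHT^{-1}\hat x_\sht(z)\hat x_\sht(z)^\T\}$ and the numerator $\sumz\{\lambda_N\hqxy - \oneHT^{-1}\hat x_\sht(z)\yHT\}$. Since $\hat x_\sht(z) = \op$ while $\oneHT$ and $\yHT$ are $O_\pr(1)$, the correction terms vanish; with $\lambda_N\to 1$, $\hqxx\to\qxx$, and $\hqxy\to\qxyz$ this gives denominator $\to |\mT|\,\qxx$ and numerator $\to \sumz\qxyz$, hence $\bb_\wls = |\mT|^{-1}\qxx^{-1}\sumz\qxyz$.

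For the OLS aggregate regression the argument is identical on the $2W$ whole-plot observations, now unweighted: partialling out $\dwab$ subtracts the within-cell averages $\hat x_\sht(z) = W_a^{-1}\sum_{w\in\Wz}\hat v_w(z)$ and $\yHT$. Normalizing by $W$, the denominator is $\sumz (W_a/W)\{\hsxx - \hat x_\sht(z)\hat x_\sht(z)^\T\}$ and the numerator $\sumz (W_a/W)\{\hsxy - \hat x_\sht(z)\yHT\}$; using $W_a/W\to p_a$, $\hsxx\to\Uxx(z)$, $\hsxy\to\uxyz$, and $\hat x_\sht(z)=\op$ yields $\bb_\ag = \{\sum_{z=(ab)} p_a\Uxx(z)\}^{-1}\{\sum_{z=(ab)} p_a\uxyz\}$.

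The one genuinely new ingredient is the OLS unit regression under Condition \ref{balanced}. Here the weights $\pi_{ws} = N_z^{-1}$ are constant within each cell, so the unweighted within-cell demeaning produces the sample-mean cross-products, and each such cross-product equals $\lambda_N$ times the corresponding Horvitz--Thompson estimator; Proposition \ref{prop:hY_bal} then lets the HT-based limits of Lemma \ref{Axx} transfer to the sample-mean quantities. Normalizing by $N$ and using $N_z/N = r_z$, the denominator converges to $\sumz r_z\,\qxx = \qxx$ and the numerator to $\sumz r_z\,\qxyz$ (the cross terms again vanishing as $\hat x_\sm(z)=\op$), giving $\gamma_\ols = \qxx^{-1}\sumz r_z\,\qxyz$. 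The main obstacle throughout is precisely this decoupling step: because the shared slope couples all cells, one must verify that the correction terms produced by eliminating the treatment-indicator block are $\op$, which hinges on the covariates being centered so that $\hat x_\sht(z)$ and $\hat x_\sm(z)$ are themselves negligible.
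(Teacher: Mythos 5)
Your proposal is correct and is in substance the same as the paper's own proof: both reduce everything to the least-squares normal equations and obtain the probability limits from Lemma \ref{Axx} (convergence of $\hqxx$, $\hqxy$, $\hsxx$, $\hsxy$, and $\hat x_\sht(z)=\op$) together with Lemma \ref{wlln}, with the centering of the covariates being exactly what makes the cross terms between the treatment-indicator block and the covariate block vanish. The only difference is organizational rather than substantive: you eliminate the indicator block exactly in finite samples via Frisch--Waugh--Lovell and then pass to the limit of the resulting ratio, whereas the paper passes to the limit of the full system $(\tbeta^\T,\tg^\T)^\T = G_1^{-1}G_2$ and uses that $\hat x_\sht(z)=\op$ renders the limiting $G_1$ block-diagonal --- the two routes are the same Gaussian elimination performed in a different order.
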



\begin{proof}[Proof of Proposition \ref{prop:ols_x_fisher} and Lemma \ref{bb_fisher}] 
We verify below the numeric results in Proposition \ref{prop:ols_x_fisher} and the asymptotic results in Lemma \ref{bb_fisher} together. Let $ X$ and $ \Lambda$ be the concatenations of $\{x_{ws}:\ws \in \mathcal{S}\}$ and $\{\hvwab: w= \ot{W}; \ b = 0,1\}$, respectively.
The design matrices of  \eqref{dlm4_f} and \eqref{dlm6_f} equal 
$C_\fisher = (  D,  X)$ and $ C_{\ag ,\fisher} = ( D_\ag ,  \Lambda)$, respectively. 
Recall that $ R =  \diag(r_z)_{z\in\mt}$ with $r_z =N_z/N$ and $\hat x_* =  ( \hbx_*(00), \hbx_*(01), \hbx_*(10), \hbx_*(11) )^\T$ for $* = \sm, \HT, \haj$. 
Direct algebra shows that
\beginy\label{matrix0_x}
N^{-1}  \dt   X =  R  \hx_\snm, \qquad \dt \bp  X = \hx_\sht, \qquad  W^{-1}  D_\ag ^\T  U = \dpp   \hx_\sht
\endy
analogous to \eqref{matrix0}. 
Lemma \ref{Axx} further ensures
\beginy\label{matrix0_xy}
X^\T\bp X  = 
|\mt| \, \qxx + \op,&\quad &
X^\T\bp Y = 
\sumz \qxyz + \op,\\
W^{-1}  \Lambda^\T \Lambda = 
\sum_{z=(ab)\in\mt} p_a  \Uxx  (z) + \op, &\quad&
W^{-1} \Lambda^\T  U
= \sum_{z=(ab)\in\mt} p_a  \uxyz  + \op. \nonumber
\endy  

\pa{Results on $(\blsxf, \tg_\ols)$}  
The first-order condition of \olss ensures $
G_1 (\tbeta_{\ols,\fisher}^\T, \tg_\ols^\T)^\T  = G_2$,  
where 
\begina 
G_1 =  C_\fisher^\T  C_\fisher = 
 \left( 
 \begin{array}{cc}  \dt   D & \dt  X \\
 X^\T   D &  X^\T X
\end{array}
\right) =\left( 
 \begin{array}{cc} N R & N R    \hbx_\snm \\
N \hbx_\snm^\T R      &   X^\T X
\end{array}
\right),
\quad 
 G_2  =C_\fisher^\T  Y =  \left( \begin{array}{c}  \dt  Y \\  X^\T Y \end{array} \right)
=  \left( \begin{array}{c} N R\hbY_\snm\\  X^\T Y \end{array} \right).
\enda
by  \eqref{matrix0} and \eqref{matrix0_x}. 
The numeric result follows by comparing the first row.
The probability limit follows  from 
$(\tbeta_\of^\T, \tg_\ols^\T)^\T =  (N^{-1}G_1)^{-1} (N^{-1} G_2)$, where $N^{-1}G_1 = \diag(R,  \qxx)+\op$ and $N^{-1}G_2 =( (R\by)^\T, \sumz r_z\qxyz^\T)^\T + \op$  under Condition \ref{balanced} by 
$\hat x_\sm =\hat x_\sht= \op$  and 
\begina
&& N^{-1}X^\T  X = N^{-1}\sumws    x_{ws}  x_{ws}^\T = \sumz r_z \left(N_z^{-1} \sumwszz   x_{ws} x_{ws}^\T\right) =\qxx + \op, \\
&&N^{-1}X^\T Y = N^{-1}\sumws    x_{ws}  Y_{ws} = \sumz r_z \left\{ N_z^{-1} \sumwszz   x_{ws} Y_{ws}(z) \right\} = \sumz r_z \qxyz + \op
\enda
from Lemma \ref{Axx}.

\pa{Results on $(\blsxHTf, \tg_\wls)$} 
The first-order condition of \wlss ensures $
 G_1 (\tbeta_\wf^\T, \tg_\wls^\T)^\T  =  G_2$, 
where 
\begina
G_1 = C_\fisher^\T\bp  C_\fisher 
 =  \left( 
 \begin{array}{cc}  \dt \Pi  D & \dt \Pi X \\
 X^\T \Pi  D &  X^\T\Pi X
\end{array}
\right) 
=
 \left( 
 \begin{array}{cc} \hat 1_\sht &  \hx_\sht \\
  \hx_\sht^\T &  X^\T\bp X
\end{array}
\right), \quad 
G_2 = C_\fisher^\T\bp  Y 
 =
  \left( \begin{array}{c}  \dt \Pi  Y \\  X^\T \Pi Y \end{array} \right)
 =  \left( 
 \begin{array}{c} \byht \\
 X^\T\bp   Y
\end{array}
\right) 
\enda
by  \eqref{matrix0} and \eqref{matrix0_x}. 
The numeric result follows by comparing the first row. 
The probability limit follows from  $(\tbeta_\wf^\T, \tg_\wls^\T)^\T =  G_1^{-1}  G_2$, where $G_1 = \diag(I_{|\mt|}, |\mt| \, Q_{xx})+\op$ and $G_2 = (\by^\T, \sumz \qxyz^\T)^\T + \op$ by $\hx_\sht = \op$, $\byht=\by+\op$, and \eqref{matrix0_xy}.  
 
\pa{Results on $(\blsxuf, \tg_\ag)$} The first-order condition of \olss ensures $
G_1 (\tbeta_{\ag ,\fisher}^\T, \tg_\ag^\T )^\T  =  G_2$, 
where 
\begina
G_1 = W^{-1} C_{\ag ,\fisher}^\T   C_{\ag ,\fisher} 
=
 \left( 
 \begin{array}{cc}   P  & P\hx_\sht \\
(P\hx_\sht)^\T  &  \Lambda^\T \Lambda
\end{array}
\right),
 \qquad 
G_2 = W^{-1}  C_{\ag ,\fisher}^\T   U = \left( 
 \begin{array}{c}  P\byht \\
  \Lambda^\T  U
\end{array}
\right) 
\enda
by \eqref{matrix0} and \eqref{matrix0_x}. 
The numeric result follows by comparing the first row. The probability limit follows from $(\tbeta_{\ag ,\fisher}^\T, \tg_\ag^\T )^\T  = G_1^{-1} G_2$, where 
\begina
G_1 = \beginp
P & \\
&  \sum_{z=(ab)\in\mt} p_a  \Uxx  (z) 
\endp +\op, \qquad 
G_2  = \beginp
P\by\\
\sum_{z=(ab)\in\mt} p_a  \uxyz\}
\endp + \op
\enda by $\hx_\sht = \op$, $\byht=\by+\op$, and \eqref{matrix0_xy}. 
\end{proof}

\begin{proof}[Proof of Theorem \ref{thm_fisher}]
The proof is similar to that of Theorem \ref{thm_lin} and omitted. 
\end{proof}

\subsection{Guaranteed gains in asymptotic efficiency}\label{sec:eff_app}
Let $\hat c_w(A_wb) = (\alpha_w-1,  \hat v_w^\T(A_wb) )^\T$ be the augmented whole-plot level covariate vector corresponding to $\tilde\beta_{\ag,\dmd}(\alpha,v) \ (\dmd = \fisher, \lin)$.
Let $\tilde\gamma_{\ag}(\alpha,v)$ and $\tilde\gamma_{\ag,z}(\alpha,v)$ be the analogs of $\tilde\gamma_{\ag}$ and $\tilde\gamma_{\ag,z}$, respectively, based on $\hat c_w(A_wb)$. 
The underlying augmented unit-level covariate vector equals $(1-\alpha_w^{-1}, x_{ws}^\T)^\T$ for unit $ws$, and satisfies Condition \ref{asym2} and \eqref{cond:eff} as long as $(x_{ws})_{ws\in\ms}$ satisfies Condition \ref{asym2} and \eqref{cond:eff}. 
All results on $\{\tilde\beta_{\ag,\dmd}, \tilde\gamma_{\ag}, \tilde\gamma_{\ag,z}\}$ so far thus extend to   $\{\tilde\beta_{\ag,\dmd}(\alpha,v), \tilde\gamma_{\ag}(\alpha,v), \tilde\gamma_{\ag,z}(\alpha,v)\}$ as well. 

Recall $\gamma_\dg $ and $\gamma_{\dg ,z}$ as the probability limits of $\tilde \gamma_\dg $ and $\tilde \gamma_{\dg ,z}$, respectively, for $\dg = \wls, \ag$.
Let
$\Psi_{\dg,\fisher}$ and $\Psi_{\dg,\lin}$, 
be the value of $\Psi_\gamma$ when $\gamma_z = \gamma_\dg,  \gamma_{\dg,z}$, respectively, 
Let 
$\gamma_{\ag }(\alpha,v)$ and $\gamma_{\ag ,z}(\alpha,v)$ be  the probability limits of $\tilde\gamma_{\ag}(\alpha,v)$ and $\tilde\gamma_{\ag,z}(\alpha,v)$, respectively, with $\{S_{\ag,\dmd} (\alpha,v), \Psi_{\ag,\dmd}(\alpha,v), \Sigma_{\ag,\dmd}(\alpha,v)\}$ as the corresponding analogs of $\{S_{\ag,\dmd}, \Psi_{\ag,\dmd}, \Sigma_{\ag,\dmd}\}$ for $\dmd = \fisher, \lin$. They are essentially the analogs of $\{\ssht, \Psi, \Sigma\}$ defined on the  
 adjusted potential outcomes $\Yws(z) - {(1-\alpha_w^{-1},x_{ws}^\T)} \gamma_{\ag }(\alpha,v)$ and $\Yws(z) - {(1-\alpha_w^{-1},x_{ws}^\T)} \gamma_{\ag,z}(\alpha,v)$, respectively. 

\begin{proof}[Proof of Proposition \ref{prop:eff}]
For two sequences of  symmetric matrices $(A_N)_{N=1}^\infty$ and $(B_N)_{N=1}^\infty$, write $A_N \leq_\infty B_N$ if the limiting value of $(B_N-A_N)$ is positive semi-definite. 
The result is equivalent to 
\begina
\begin{array}{ll}
\Sigma_{\tl}(\alpha,v) \leq_\infty \Sigma_\sht, \,  
\Sigma_\shaj,  \,   \Sigma_{\wls,\dmd}, \,
\Sigma_{\ag,\dmd}, \,  \Sigma_{\ag ,\fisher}(\alpha,v) \qquad \text{for} \ \ {\dmd  = \fisher, \lin}
\end{array}
\enda
with $\Sigma_\sht = H\circ \ssht +\Psi, \ \Sigma_\shaj = H\circ \sshaj +\Psi, \ \Sigma_{\wls,\dmd} = H\circ S_{\wls,\dmd}  +\Psi_{\wls,\dmd}$, and 
\begina
\Sigma_{\ag,\dmd} = H\circ S_{\ag,\dmd}  +\Psi_{\ag,\dmd}, \qquad 
\Sigma_{\ag,\dmd}(\alpha,v) = H\circ S_{\ag,\dmd}(\alpha,v) +\Psi_{\ag,\dmd}(\alpha,v).
\enda
Because $ H  \geq  0$, Lemma \ref{lem:schur}
ensures that it suffices to verify
\begin{enumerate}[(i)]
\item\label{sd:2} $\Psi_{\wls,\dmd} - \Psi = o(1), \ \Psi_{\ag,\dmd} - \Psi = o(1), \ \Psi_{\ag,\dmd}(\alpha,v) - \Psi = o(1)$; 
\item\label{sd:1} $S_{\tl}(\alpha,v) \leq_\infty S_\sht, \, S_\shaj, \, S_{\wls,\dmd}, \,  S_{\ag,\dmd}, \,  S_{\tf}(\alpha,v)$
\end{enumerate}
 for $\dmd  = \fisher, \lin$ under Conditions \ref{asym}--\ref{asym2} and \eqref{cond:eff}. 
We verify below \eqref{sd:2} and \eqref{sd:1} respectively.

\pa{Sufficient condition \eqref{sd:2}} 
Let $\Psi(z,z'; \bG)$ be the $(z,z')$th element of $\Psi_\gamma$. For arbitrary fixed $\gamma = (\gamma_z)_{z\in\mt}$, it follows  from \eqref{eq:swzz} and Lemma \ref{lem:eff} that 
\begina
\Psi(z,z'; \bG)
&=& W^{-1}\sum_{w=1}^W M_w^{-1} H_w(z,z')  \sw(z,z';\bG )\\
&=&  W^{-1}\sum_{w=1}^W M_w^{-1} H_w(z,z')  \{ S_w(z,z') - \bb_z^\T S_{xY(z'),w}   -    \bb_{z'}^\T \sxyzw  + \bb_z^\T \sxxw \bb_{z'}  \}\\
&=&  \Psi(z,z') - \bb_z^\T\Psi_{xY}(z,z') -  \bb_{z'}^\T\Psi_{xY}(z',z) +  \bb_z^\T \Psi_{xx}(z,z') \bb_{z'}\\
&=&  \Psi(z,z')  + o(1) 
\enda
 such that $\Psi_\gamma = \Psi + o(1)$. 
This verifies the result for $\Psi_{\dg,\dmd} \ (\dg = \wls, \ag; \ \dmd = \fisher, \lin)$.

The result for $\Psi_{\ag, \dmd}(\alpha,v)$ then follows from the same line of reasoning as that for $\Psi_{\ag, \dmd}$ given Condition \eqref{cond:eff} also holds for the augmented unit-level covariates $(1-\alpha_w^{-1}, x_{ws})$. 

\pa{Sufficient condition \eqref{sd:1}} Let 
\begina
\begin{array}{ll}
e_{1,w}(z)= U_w(z) - \bar Y(z), &\quad e_{2,w}(z) =U_w(z) - \alpha_w \bar Y(z),\\
e_{3,w}(z) =U_w(z) - \bar Y(z) -  v_w^\T \theta_z, & \quad e_{4,w}(z) = U_w(z) - \alpha_w\bar Y(z) - v_w^\T \theta_z,\\
e_{5,w}(z) = U_w(z) - \bar Y(z) -  c_w^\T \theta_z', 
\end{array}
\enda
where $v_w = \alpha_w \bar x_w$ and $c_w = (\alpha_w - 1, v_w^\T)^\T$ are the population analogs of $\hvwab = \alpha_w \hat x_w(A_wb)$ and $\hat c_w(A_wb)$, respectively, and $\theta_z \in \mathbb R^J$ and $\theta_z' \in \mathbb R^{J+1}$ are arbitrary  vectors.
Let $\gamma_{c,z}$ be  the coefficient of $c_w$  from the \olss fit of $U_w(z)$ on $(1, c_w)$ over $w= \ot{W}$ with 
\begina
e_{6,w}(z) = U_w(z) - \bar Y(z) -  c_w^\T \gamma_{c,z} 
\enda as the corresponding residual. 
Let $S_k =  ( S_k(z,z')  )_{z, z'\in\mt}$ be the finite-population covariance matrix of $\{e_{k,w}(z): z\in\mt\}_{w=1}^W$ for $k = \ot{6}$ with $
S_k(z,z') = (W-1)^{-1}\sumw e_{k,w}(z) e_{k,w}(z') = (W-1)^{-1} \{e_k(z)\}^\T e_k(z)$, where  $e_k(z) = ( e_{k,1}(z), \dots, e_{k,W}(z) )^\T$.  

Note that $e_{k, w}(z)-e_{6, w}(z)$ is a linear combination of $c_w$ for all $k =\ot{5}$. 
The theory of least squares ensures  $\{e_k(z) - e_6(z)\}^\T e_6(z') = 0$ for all $z, z'\in\mt$ and $k =\ot{5}$.
Let  $e_{k-6}(z) = e_k(z) - e_6(z)$ be a shorthand for $k = \ot{5}$. 
Then $e_k(z) =  e_6(z) + e_{k-6}(z)$ with 
\begina
S_k(z,z') &=&  (W-1)^{-1}  \{ e_6(z) + e_{k-6}(z) \}^\T  \{e_6(z') + e_{k-6}(z')\}\\
&=&  S_6(z,z') + (W-1)^{-1} \{e_{k-6}(z)\}^\T\{e_{k-6}(z')\},\\
S_k - S_6 &=&   (W-1)^{-1}\left( \begin{array}{c}e_{k-6}(00)\\e_{k-6}(01)\\e_{k-6}(10)\\e_{k-6}(11)\end{array}\right)\big(  e_{k-6}(00), e_{k-6}(01), e_{k-6}(10), e_{k-6}(11)\big) \geq 0. 
\enda
This, together with 
\begini
\itemc $S_1 = \ssht$; $S_2 = \sshaj$; 
\itemc $S_3 = S_{\ag,\fisher}$ and $S_{\ag,\lin}$ for $\theta_z = \gamma_\ag $ and $\gamma_{\ag ,z}$, respectively; 
\itemc $S_4 = S_{\wls,\fisher}$ and $S_{\wls,\lin}$ for $\theta_z =\gamma_\wls$ and $\gamma_{\wls,z}$, respectively;  
\itemc$S_5 = S_{\ag,\fisher}(\alpha,v)$ and $S_{\ag,\lin}(\alpha,v)$ for $\theta_z' = \gamma_\ag (\alpha,v)$ and $\gamma_{\ag ,z}(\alpha,v)$, respectively,
\endi
ensures
\beginy\label{eq:sd_S}
S_6 \leq   \ssht, \     \sshaj, \  S_{\wls,\dmd},\   S_{\ag,\dmd} , \   S_{\ag,\dmd}(\alpha,v) \qquad \text{for} \ \ \dmd  = \fisher, \lin.
\endy

In addition,
let $S_{xx}(\alpha,v)$, $S_{xY(z)}(\alpha,v)$, $\Psi_{xx}(z,z; \alpha,v)$, and $\Psi_{xY}(z,z; \alpha,v)$ be the analogs of $S_{xx}$, $S_{xY(z)}$, $\Psi_{xx}(z,z)$, and $\Psi_{xY}(z,z)$ under the augmented covariates $(1-\alpha_w^{-1},x_{ws}^\T)$, respectively. 
Then
\begina
S_{xx}(\alpha,v) = (W-1)^{-1}\sumw c_w c_w^\T, \qquad S_{xY(z)}(\alpha,v) = (W-1)^{-1}\sumw c_w U_w(z)
\enda
with $\Psi_{xx}(z,z; \alpha,v)  = o(1)$ and $\Psi_{xY}(z,z; \alpha,v) = o(1)$ by applying  Lemma \ref{lem:eff} to the augmented covariates.  Lemma \ref{bb_lin} ensures $\tg_{\ag ,z}(\alpha,v) = \{S_{xx}(\alpha,v)\}^{-1} S_{xY(z)}(\alpha,v) + \op$  and hence 
\begina
\gamma_{\ag ,z}(\alpha,v) = \{S_{xx}(\alpha,v)\}^{-1} S_{xY(z)}(\alpha,v) + o(1).
\enda
This, together with $\gamma_{c,z} =(\sumw c_w c_w^\T)^{-1}\{\sumw c_w U_w(z) \} = \{S_{xx}(\alpha,v)\}^{-1} S_{xY(z)}(\alpha,v) $ by standard \olss results, ensures $\gamma_{\ag ,z}(\alpha,v) - \gamma_{c,z}  = o(1)$ and hence $ 
S_\tl(\alpha,v) - S_6 = o(1)$. Condition \eqref{sd:1} then follows from \eqref{eq:sd_S}. 

\end{proof}

\section{Special case and extensions}\label{sec:ext_app}
\subsection{Uniform designs}\label{sec:balanced}
We outline in this subsection the unification of the three fitting schemes under Condition \ref{balanced}. 
The results clarify the theoretical guarantees by the sample-mean estimator and the corresponding ``ols'' fitting scheme under {\sym} designs.

Let $S$, $\Sigma$, and $\hat V$ be the common values of   
$S_*$,  $\Sigma_*$, and $\hat V_*$  for $* = \HT,\haj$ under Condition \ref{balanced}, respectively. Let $\hY$ be the common value of $\hys \ (* = \sm, \HT, \haj)$ under Condition \ref{balanced} by Proposition \ref{prop:hY_bal}. 
Corollary \ref{corr:clt_bal} below justifies the Wald-type inference of $\tau$ based on $(\hY, \hat V)$. 

\begin{corollary}\label{corr:clt_bal}
Assume the $2^2$ split-plot randomization and Conditions \ref{balanced}--\ref{asym}. Then 
$
\sqrt{W}  (\hY  -\bbY  ) \rightsquigarrow \mathcal{N}(0, \Sigma)$ and $W   \hbV -  \Sigma   = S  + \op$ 
with $S \geq 0$. 
\end{corollary}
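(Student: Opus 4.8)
The plan is to obtain Corollary \ref{corr:clt_bal} as a direct specialization of Theorems \ref{clt} and \ref{covEst}, the only real content being to check that, under the extra Condition \ref{balanced}, every $*$-indexed object collapses to a single common value. First I would record the key algebraic simplification: Condition \ref{balanced} forces $\bar M = N/W = M$, so the whole-plot size factor satisfies $\alpha_w = M_w/\bar M = 1$ for all $w = \ot{W}$. This immediately makes Condition \ref{asym}(i) automatic (with $\oa{2} = \oa{4} = 1$), and $\alpha_w \equiv 1$ is what drives all subsequent coincidences.

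Next I would verify that the design-based objects are independent of the estimation scheme. Proposition \ref{prop:hY_bal} already supplies $\bynm = \byht = \byhaj =: \hY$. Substituting $\alpha_w = 1$ into the defining formulas shows $\ssht(z,z')$ and $\sshaj(z,z')$ both reduce to $(W-1)^{-1}\sumw \{\bY_w(z) - \bar Y(z)\}\{\bY_w(z') - \bar Y(z')\}$, so $\ssht = \sshaj =: S$; since $H$ and $\Psi$ do not depend on the scheme, $\Sigma_\sht = H\circ S + \Psi = \Sigma_\shaj =: \Sigma$. The same substitution, combined now with the identity $\byht = \byhaj$, collapses the sample analogs $\hat S_\sht(z,z')$ and $\hat S_\shaj(z,z')$ to the common expression $(W_a-1)^{-1}\sum_{w:A_w=a}\{\hY_w(z)-\hY(z)\}\{\hY_w(z')-\hY(z')\}$, whence $\hbV_\sht = \hbV_\shaj =: \hbV$.

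With these identifications both assertions are then immediate. Applying Theorem \ref{clt} with $* = \HT$ gives $\sqrt{W}(\hY - \bbY) \rightsquigarrow \mathcal{N}(0, \Sigma)$, and applying Theorem \ref{covEst} with the same $*$ gives $W\hbV - \Sigma = S + \op$; taking $* = \haj$ instead yields the identical limits by the previous paragraph, so the common values are unambiguous. Finally, $S \geq 0$ because $S = \ssht$ is, by construction, the finite-population covariance matrix of the centered vectors $(U_w(z))_{z\in\mt}$ across $w = \ot{W}$ (here $U_w(z) = \alpha_w \bY_w(z)$), and any Gram-type matrix $(W-1)^{-1}\sumw v_w v_w^\T$ of centered vectors is positive semi-definite.

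I do not expect a genuine obstacle, since this is a corollary rather than a new argument; the only point requiring care is the bookkeeping that confirms each $*$-indexed quantity is truly scheme-free once $\alpha_w \equiv 1$. The mildly delicate step is checking the coincidence at the level of the estimators $\hat S_*$ rather than only their population targets, as this is where $\alpha_w = 1$ and $\byht = \byhaj$ must be invoked simultaneously.
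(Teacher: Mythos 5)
Your proposal is correct and coincides with the paper's own (largely implicit) argument: the paper states Corollary \ref{corr:clt_bal} with no separate proof, having already defined $S$, $\Sigma$, $\hat V$, and $\hY$ as the common values of the scheme-indexed quantities under Condition \ref{balanced}, so the corollary is precisely the specialization of Theorems \ref{clt} and \ref{covEst} that you spell out. Your bookkeeping — $\alpha_w \equiv 1$ collapsing $\ssht = \sshaj$ and, together with Proposition \ref{prop:hY_bal}, collapsing $\hat S_\sht = \hat S_\shaj$ at the estimator level, plus the Gram-matrix argument for $S \geq 0$ — correctly supplies exactly the details the paper leaves implicit.
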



Let $\tbV_\ols$, $\tbV_{\ols,\fisher}$, and $\tbV_{\ols,\lin}$ be the  cluster-robust covariances of $\tbeta_\ols$, $\tbeta_\of$, and $\tbeta_\ol$ from the \olss fits of the unit regressions \eqref{lm_t}, \eqref{dlm4_f}, and \eqref{dlm4}, respectively. 
Recall $\tg_\ols$ as the coefficient  vector of $x_\ws$ from the \olss fit of \eqref{dlm4_f}. 
Recall $\gamma_\ols$ as the probability limit of $\tg_\ols$  under Conditions \ref{balanced}--\ref{asym2} by Lemma \ref{bb_fisher}. 
Let $S_\of$  and $\Sigma_\of$ be the analogs of $S$ and $\Sigma$ defined on the adjusted potential outcomes $\Yws(z; \gamma_\ols) = \Yws(z) - x_{ws}^\T\gamma_\ols$, respectively.

\begin{proposition}\label{prop:ols} 
Under the $2^2$ split-plot experiment and Condition \ref{balanced}, we have
\begina
\quad \quad\quad  \tbeta_\dg = \hY, \quad \quad
\tbV_\dg 
 =\diag\left(  \frac{W_0-1}{W_0}I_2, \frac{W_1-1}{W_1}I_2\right)\hbV \qquad(  \dg = \ols, \wls, \ag),
 \enda 
and $\{\tbeta_\ol, (\tgoz)_{z\in\mt}\} = \{\tbeta_\wl, (\tgwz)_{z\in\mt}\}$. 
Further assume Conditions \ref{asym}--\ref{asym2}. Then 
\begina
\sqrt{W}  (\tbeta_\od  -\bbY ) \rightsquigarrow \mathcal{N}(0, \Sigma_\od ), \qquad  W   \tbV_\od-\Sigma_\od =S_\od+ \op
\enda
for $\dmd = \fisher, \lin$ 
with $(S_{\ols, \lin}, \Sigma_{\ols,\lin}) = (S_{\wls, \lin}, \Sigma_{\wls,\lin})$. 
 \end{proposition}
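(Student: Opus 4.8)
The plan is to dispose first of the three numeric identities, which use only Condition \ref{balanced}, and then to establish the two limit theorems under the additional Conditions \ref{asym}--\ref{asym2}, handling $\dmd=\lin$ and $\dmd=\fisher$ separately. For the identities I would start from Proposition \ref{prop:ls} and Proposition \ref{prop:hY_bal}: the former gives $\tbeta_\ols=\bynm$, $\tbeta_\wls=\byhaj$, $\tbeta_\ag=\byht$, and the latter collapses these to the common value, yielding $\tbeta_\dg=\hY$. I would then record what balance buys: $\alpha_w=1$ for every $w$, the inverse-probability weights $\pi_{ws}=N_{Z_{ws}}^{-1}$ are constant within each treatment group, each $\oneHT=1$ so $\hat{1}_\sht=I_4$ exactly, and $\hat S_\sht=\hat S_\shaj$ (since $\alpha_w=1$ and $\byht=\byhaj$), whence $\hvhaj=\hbV_\sht=\hbV$. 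For $\dg=\wls,\ag$ the covariance identity now drops straight out of Theorem \ref{thm_vHats} after substituting $\hat{1}_\sht=I_4$ and $\hvhaj=\hbV_\sht=\hbV$.

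For $\dg=\ols$ I would exploit that the unadjusted regression \eqref{lm_t} is saturated in the treatment indicators, so it decouples into treatment-group-specific fits; because the weights are constant within each group, the constants cancel in both the bread $(\dt\bp D)^{-1}$ and the meat of the cluster-robust sandwich, so $\tbeta_\ols=\tbeta_\wls$ and $\tbV_\ols=\tbV_\wls$, giving $\tbV_\ols=\diag(\frac{W_0-1}{W_0}I_2,\frac{W_1-1}{W_1}I_2)\hbV$. The same decoupling-and-cancellation argument applied to the fully-interacted unit regression \eqref{dlm4} (also saturated in the treatment indicators) yields $\{\tbeta_\ol,(\tgoz)_{z\in\mt}\}=\{\tbeta_\wl,(\tgwz)_{z\in\mt}\}$ together with the by-product $\tbV_\ol=\tbV_\wl$.

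The case $\dmd=\lin$ is then immediate: the numeric identities $\tbeta_\ol=\tbeta_\wl$ and $\tbV_\ol=\tbV_\wl$ let me import the conclusions of Theorem \ref{thm_lin} verbatim. Since $\tgoz=\tgwz$ for every $z$, their probability limits agree, $\gamma_{\ols,z}=\gamma_{\wls,z}$, so the adjusted potential outcomes defining $S_{\ols,\lin},\Sigma_{\ols,\lin}$ and $S_{\wls,\lin},\Sigma_{\wls,\lin}$ coincide, which gives the stated equality of the limiting objects.

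The additive case $\dmd=\fisher$ is the genuinely new piece, because the shared covariate slope prevents the additive regression from decoupling, so $\tbeta_\of\neq\tbeta_\wf$ even under balance. For the point estimator I would use Proposition \ref{prop:ols_x_fisher} and balance (covariates being fixed, the coincidence in Proposition \ref{prop:hY_bal} applies to them) to write $\tbeta_\of=\bynm-\hbx_\snm\tgo=\hY-\hbx\tgo$, recognize this as $\hY_{\shaj,\tgo}$ with the shared slope $\tgo$, note $\tgo=\gamma_\ols+\op$ by Lemma \ref{bb_fisher}, and invoke Lemma \ref{Cov_x} to get $\sqrt W(\tbeta_\of-\bbY)\rightsquigarrow\mathcal N(0,\Sigma_\of)$, using that the {\htf} and Hajek limiting covariances coincide under balance so $\Sigma_{\shaj,\gamma_\ols}=\Sigma_\of$. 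For the cluster-robust covariance I would replay the sandwich computation of Theorem \ref{thm_lin}, Part II (and the analogous proof of Theorem \ref{thm_fisher}), now with unit weights and $\alpha_w=1$: form the residuals $\tep_{ws}=Y_{ws}-\tbeta_\of(Z_{ws})-x_{ws}^\T\tgo$, split the sandwich into the $d_{ws}$-block and the covariate-coupling blocks, show via Lemmas \ref{wlln}, \ref{lln}, and \ref{Axx} that the coupling blocks are $O_\pr(1)$ (hence negligible after the $W^{-1}$-order normalization) and that the $d_{ws}$-block converges, using $\tgo=\gamma_\ols+\op$, to $\Sigma_\of+S_\of$, with $\hat{1}_\sht=I_4$ fixing the constant; positive semidefiniteness of $S_\of$ (and of $S_{\ols,\lin}=S_{\wls,\lin}$) follows as in Theorems \ref{thm_fisher}--\ref{thm_lin}, each being a scaled between-whole-plot covariance of adjusted outcomes. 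The main obstacle is exactly this last computation: unlike the unadjusted and Lin fits, the additive OLS fit does not reduce to a WLS fit, so I cannot borrow a numeric identity and must verify directly that the estimated shared slope contributes only lower-order terms while the leading block reproduces the design-based limit $\Sigma_\of+S_\of$; everything else is bookkeeping on results already in hand.
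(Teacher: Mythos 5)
Your proposal is correct, and its skeleton coincides with the paper's proof: the point-estimator identities come from Propositions \ref{prop:hY_bal} and \ref{prop:ls}; the covariance identities for $\dg = \wls, \ag$ come from Theorem \ref{thm_vHats} after noting that Condition \ref{balanced} forces $\hat 1_\sht = I_4$ and $\hvhaj = \hbV_\sht = \hbV$ exactly; the Fisher case is handled by writing $\tbeta_\of = \hY - \hbx\,\tgo$, recognizing it as a Hajek estimator on outcomes adjusted by the common slope $\tgo$, and invoking Lemma \ref{bb_fisher} together with Lemma \ref{Cov_x}; and the Lin case rides on the ols/wls equivalence. Where you genuinely depart from the paper is the unadjusted identity $\tbV_\ols =\diag\bigl(  \frac{W_0-1}{W_0}I_2, \frac{W_1-1}{W_1}I_2\bigr)\hbV$: the paper proves this by an explicit sandwich computation (evaluating $D_w^\T \te_\olw$, assembling $\Omega_\ols(z,z')$, and inverting $\dt D = \diag(N_z)_{z\in\mt}$), whereas you deduce it from the exact identity $\tbV_\ols = \tbV_\wls$, arguing that the saturated regression \eqref{lm_t} decouples across treatment groups and that both the coefficients and the cluster-robust sandwich are invariant to rescaling weights that are constant within treatment groups. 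This argument is sound --- it is precisely the content of the paper's own Lemma \ref{lem:wls}, whose hypothesis (a scaling factor depending on $Z_{ws}$ only, here $\rho_{ws} = N_{Z_{ws}}$) holds under Condition \ref{balanced} --- and it buys two things: it replaces the paper's page of matrix algebra with a one-line invariance argument, and, applied to the fully-interacted regression \eqref{dlm4} (which also decouples), it yields the exact finite-sample identity $\tbV_\ol = \tbV_\wl$, so the $\dmd = \lin$ covariance conclusion is imported from Theorem \ref{thm_lin} verbatim rather than by replaying its sandwich argument, which is what the paper does. Your handling of the additive case --- observing that the shared covariate slope prevents decoupling, so $\tbeta_\of$ must go through Lemma \ref{Cov_x} and $\tbV_\of$ through a replay of the Theorem \ref{thm_lin}, Part II computation --- is exactly the paper's route, so no gap remains there either.
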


\begin{proof}[Proof of Proposition \ref{prop:ols}]
We verify below the numeric and asymptotic results, respectively.

\pa{Numeric results}
Condition \ref{balanced} ensures $\alpha_w = 1$ and  $q_{wb} = M_b/M =q_b$ for all $w$ such that 
\beginy\label{eq:bp}
\quad \quad N = WM, \qquad N_z = W_a M_b, \qquad  p_{ws}(z) = p_a q_b  = N_z/N, \qquad \pi_{ws} = N_{Z_{ws}}^{-1}
\endy
for all $\ws \in \ms$ and  $z=(ab)\in\mt$.
The numeric equivalence between $\tbeta_\dg$  follows from Propositions \ref{prop:hY_bal} and \ref{prop:ls}. 
That between $\{\tbeta_\ol, (\tgoz)_{z\in\mt}\} = \{\tbeta_\wl, (\tgwz)_{z\in\mt}\}$  follows from the equivalence between the ``ols'' and ``wls'' {\fts} under the treatment-specific regression  \eqref{lm:unit_z} with  $\pi_{ws} = N_{Z_{ws}}^{-1}$ being constant for all units under the same treatment. 
The results on $\tbV_* \ (* = \wls, \ag)$ follow from Theorem \ref{thm_vHats}. We verify below the result on $\tbV_\ols$. 

First, 
\beginy\label{eq:v_ols}
\tbV_\ols = ( \dt   D)^{-1} \left(\sumw  D_w^\T   \te_\olw  \te_\olw ^\T   D_w\right)( \dt    D)^{-1},
\endy
where $ \te_\olw  = (\te_{\ols,w1} , \dots, \te_{\ols,wM_w } )^\T$ with $\te_{\ols, \ws}  = \Yws - \tbeta_\ols(Z_{ws}) = \Yws - \hat Y(Z_{ws})$. 
Let  $\he_\olw(z)  = M_b  ^{-1}\sum_{s: Z_{ws}=z}\te_\olws$, with  $\he_\olw(z) =  \hY_w(z) - \hY(z)$ for $\zin$ and $\he_\olw(z) = 0$  for  $\znotin$. 
Set $Y_w = \te_\olw$ in \eqref{matrix0} 
to see $
 D_w^\T \te_\olw =   M Q \he_\olw$, where  $\he_\olw = (   \he_\olw(00),    \he_\olw(01),   \he_\olw(10),   \he_\olw(11)   )^\T$ and  $Q= I_2\otimes\diag(q_{0},q_{1})$ is the common value of $ Q_w$ over all $w$ under Condition \ref{balanced}. This ensures
\beginy
\sumw  D_w^\T   \te_\olw \te_\olw^\T   D_w =  M^2  Q \left(\sumw  \he_\olw \he_\olw^\T\right)Q  = \big(\Omega_\ols(z,z')\big)_{z,z'\in\mt}  \label{eq:deed}
\endy
with 
\begina
\Omega_\ols(z,z') &=& M_b   M_{b'} \sumw \he_\olw(z) \he_\olw(z')\\& =&\left\{
\begin{array}{cl} M_b   M_{b'} (W_a-1)\hat S(z,z') &\ \  \text{for $ z=(ab)$ and $z'=(a'b')$ with $a = a'$},\\
0  & \ \  \text{for $z=(ab)$ and $z' = (a'b')$ with $a\neq a'$.}
\end{array}\right.
\enda
The numeric result on $\tilde V_\ols$ then follows from \eqref{eq:v_ols}, \eqref{eq:deed}, and 
$ \dt   D  = \diag(W_0M_0, W_0M_1, W_1M_0, W_1M_1)$ by \eqref{matrix0} and \eqref{eq:bp}. 

\pa{Asymptotic results}
Let $\hx$ and $\hy_\gamma$ be the common values of $\hx_* \ (* = \sm, \HT, \haj)$ and $\hy_{*,\gamma} \ (* = \HT, \haj)$ under Condition \ref{balanced}, respectively. 
Proposition \ref{prop:ols_x_fisher} ensures $\tbeta_\of = \hy - \hx \tilde\gamma_\ols = \hy_\gamma$ for $\gamma = (\gamma_z)_{z\in\mt}$ where $\gamma_z = \tilde\gamma_\ols$ for all $z \in \mt$. 
The asymptotic Normality of $\tbeta_\of$ then follows from Lemma \ref{Cov_x} with $\tilde\gamma_\ols = \gamma_\ols+\op$ by Lemma \ref{bb_fisher}. 
The asymptotic Normality of $\tbeta_\ol$ follows from $\tbeta_\ol = \tbeta_\wl$ as we just showed. 
The asymptotic conservativeness of $\tilde V_{\ols, \dmd} \ (\dmd = \fisher, \lin)$ follows from the same reasoning as that for $\tilde V_\wl$ in the proof of Theorem \ref{thm_lin}.
\end{proof}

%
\subsection{HC2 correction for the cluster-robust covariance estimators}
%
The classic cluster-robust covariances recover $\hbV_\sht$ and $\hbV_\shaj$ only asymptotically by Theorem \ref{thm_vHats}. 
We verify in this subsection the exact recovery of $\hbV_\sht$ by the HC2 correction in finite samples \citep{CR}. 
Let 
 \begina
&&\tilde V_{\ols2}=( \dt   D)^{-1} \left\{\sumw  D_w^\T ( I -  P_\olw )^{-1/2}  \te_\olw  \te_\olw ^\T ( I -  P_\olw )^{-1/2} D_w\right\}( \dt    D)^{-1},\\
&&\tilde V_{\ag2} = ( \dt _\ag  D_\ag )^{-1} \left\{\sumw \Dw^\T( I -  P_\tlw )^{-1/2} \te_\tlw  \te_\tlw ^\T  ( I-  P_\tlw )^{-1/2}  \Dw\right\}( \dt _\ag  D_\ag )^{-1}
\enda
be the HC2 variants of $\tbV_\ols$ and $\vlsu$, respectively, with $ P_\olw  =  D_w( \dt  D)^{-1} D_w^\T$ and $ P_\tlw  = \Dw( \dt _\ag  D_\ag )^{-1}\Dw^\T$ for $w = \ot{W}$.

\begin{theorem}\label{vHats_prime}
$\tilde V_{\ag2}   = \hbV_\sht$. Under Condition \ref{balanced}, 
$\tilde V_{\ols2}
 = \tilde V_{\ag2}   =  \hbV
$. 
\end{theorem}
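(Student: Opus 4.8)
The plan is to exploit the fact that the rows of the aggregate design matrix $D_\ag$, and within each whole-plot the rows of the unit design matrix $D_w$, are standard basis indicators, so that the hat matrices $P_\tlw$ and $P_\olw$ admit explicit closed forms. The HC2 correction then collapses to a scalar rescaling of each whole-plot's contribution, and both identities follow by feeding these scalars into the sandwich expressions already evaluated for the uncorrected estimators in the proofs of Theorem \ref{thm_vHats} and Proposition \ref{prop:ols}.

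For the first identity, $\tilde V_{\ag2} = \hbV_\sht$, I would start from $(\dt_\ag D_\ag)^{-1} = \diag(W_0^{-1}, W_0^{-1}, W_1^{-1}, W_1^{-1})$, which follows from $W^{-1} D_\ag^\T D_\ag = \dpp$ in \eqref{matrix0}. For a whole-plot with $A_w = a$, the two rows of $\Dw$ are the orthogonal indicators $d_w(A_w0)$ and $d_w(A_w1)$, so $P_\tlw = \Dw (\dt_\ag D_\ag)^{-1}\Dw^\T = W_a^{-1} I_2$ and hence $(I_2 - P_\tlw)^{-1/2} = \{W_a/(W_a-1)\}^{1/2} I_2$. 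Since this scalar is common to all whole-plots sharing level $a$, inserting it into the meat of $\vlsu$ in \eqref{v_u} multiplies the corresponding block by $W_a/(W_a-1)$; as the uncorrected meat-block equals $(W_a-1)\hat S_\sht(z,z')$ by \eqref{eq:deed.ag}, the corrected meat-block equals $W_a \hat S_\sht(z,z')$. Sandwiching by $(\dt_\ag D_\ag)^{-1}$ then produces the $(z,z')$-entry $W_a^{-1}\hat S_\sht(z,z')$, which is precisely $\hbV_\sht$ in \eqref{Vs}. This argument does not use Condition \ref{balanced}, and the varying $\alpha_w$ plays no role, since $P_\tlw$ depends on $D_\ag$ alone.

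For the second identity I would specialize to Condition \ref{balanced}, where $\dt D = \diag(N_z)_{z\in\mt}$ with $N_{(ab)} = W_a M_b$ by \eqref{eq:bp}. Here $P_\olw$ is block-diagonal, carrying a block $(W_a M_b)^{-1}\bbo{M_b}$ on the $M_b$ units of whole-plot $w$ that share treatment $(ab)$. Each block $I_{M_b} - (W_a M_b)^{-1}\bbo{M_b}$ has eigenvalue $(W_a-1)/W_a$ along the all-ones vector $1_{M_b}$ and eigenvalue $1$ on its orthogonal complement. The crucial point is that $D_w^\T$ reads only the within-cell sums, i.e.\ the $1_{M_b}$-components, which gives the exact matrix identity $D_w^\T (I - P_\olw)^{-1/2} = \{W_a/(W_a-1)\}^{1/2} D_w^\T$. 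Substituting this into \eqref{eq:deed} turns the uncorrected meat-block $M_b M_{b'}(W_a-1)\hat S(z,z')$ into $W_a M_b M_{b'}\hat S(z,z')$, and sandwiching by $(\dt D)^{-1}$ yields $W_a^{-1}\hat S(z,z') = \hbV(z,z')$. Together with $\tilde V_{\ag2} = \hbV_\sht = \hbV$ under Condition \ref{balanced}, where $\hat S_\sht = \hat S$ and all three estimators coincide by Proposition \ref{prop:hY_bal}, this closes the chain $\tilde V_{\ols2} = \tilde V_{\ag2} = \hbV$.

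The main obstacle is the unit-level reduction: one must verify that the nondiagonal hat matrix $P_\olw$ does not interfere with the bookkeeping. This is resolved by the identity $D_w^\T (I - P_\olw)^{-1/2} = \{W_a/(W_a-1)\}^{1/2} D_w^\T$, whose verification reduces to observing that each row of $D_w^\T$ is a cell-wise all-ones vector, that these lie in the eigenspace of $I - P_\olw$ with eigenvalue $(W_a-1)/W_a$, and that $D_w^\T$ annihilates the orthogonal complement on which $I - P_\olw$ acts as the identity. Once this is established, both claims are routine scalar manipulations against the meat-and-bread decompositions already recorded for $\tbV_\ols$ and $\vlsu$.
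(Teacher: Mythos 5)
Your proposal is correct and follows essentially the same route as the paper's proof: the explicit form of the hat matrices ($P_\tlw = W_a^{-1}I_2$ for the aggregate fit, and the block form of $P_\olw$ with the non-zero columns of $D_w$ as eigenvectors, giving $(I-P_\olw)^{-1/2}D_w = \{W_a/(W_a-1)\}^{1/2}D_w$), so that the HC2 correction collapses to the scalar factor $W_a/(W_a-1)$ per whole-plot. The only cosmetic difference is that the paper concludes by invoking the already-established relations $\tbV_\ag = \diag\bigl(\tfrac{W_0-1}{W_0}I_2, \tfrac{W_1-1}{W_1}I_2\bigr)\hbV_\sht$ (Theorem \ref{thm_vHats}) and the analogous identity for $\tbV_\ols$ (Proposition \ref{prop:ols}), whereas you re-trace the meat computations \eqref{eq:deed.ag} and \eqref{eq:deed} directly; the content is the same.
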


\begin{proof}[Proof of Theorem \ref{vHats_prime}]
We verify below the results on $\tilde V_{\ag2} $ and $\tilde V_{\ols2}$, respectively. 

\pa{Proof of $\tilde V_{\ag2}   = \hbV_\sht$} It follows from $
 D_\ag ^\T  D_\ag  = \diag(W_0, W_1)\otimes  I_2$
by \eqref{matrix0} and $D_\tlw  = 1(A_w=0) \cdot ( I_2,  0_{2\times 2}) + 1(A_w=1) \cdot ( 0_{2\times 2},  I_2)$ by definition 
that $ P_\tlw  = D_\tlw ( D_\ag ^\T D_\ag )^{-1} D_\tlw ^\T= W_{A_w}^{-1}  I_2$.
This ensures
\begina
&&  \sumw \Dw^\T( I -  P_\tlw )^{-1/2} \te_\tlw  \te_\tlw ^\T  ( I-  P_\tlw )^{-1/2}  \Dw\\
&=&  \sumw (1-W_{A_w}^{-1})^{-1}  \Dw^\T\te_\tlw  \te_\tlw ^\T   \Dw 
=\sum_{a=0,1} \frac{W_a}{W_a-1} \left( \sum_{w:A_w=a} D_\tlw ^\T \te_\tlw \te_\tlw ^\T D_\tlw  \right)
 \enda
such that 
\begina\tilde V_{\ag2}  = \left\{\diag\left( \frac{W_0}{W_0-1}, \frac{W_1}{W_1-1} \right)\otimes  I_2\right\} \tbV_\ag 
\enda
by the form of $\sum_{w:A_w=a} D_\tlw ^\T \te_\tlw \te_\tlw ^\T D_\tlw $ from \eqref{eq:deed.ag}.
The result then follows from 
Theorem \ref{thm_vHats}.

\pa{Proof of $\tilde V_{\ols2}
 = \hbV
$ under Condition \ref{balanced}} 
Recall $(M, M_0, M_1)$ as the common value of $(M_w, M_{w0}, M_{w1})$ for all $w$ under Condition \ref{balanced}. 
Assume without loss of generality that the first $M_0$ sub-plots in whole-plot $w$ receive level $0$ of factor B. 
Then $
 D_w =  ( D_{w,00},  D_{w,01},  D_{w,10},  D_{w,11})$, 
where $ D_{w,z} = ( 1(Z_{w1}=z), \dots, 1(Z_{wM_w}=z) )^\T$ with
\begina
\renewcommand{\arraystretch}{1.4}
\begin{array}{llll}
 D_{w,00} = ( 1_{M_0}^\T,   0^\T_{M_1})^\T,  &\ \ D_{w,01} = ( 0_{M_0}^\T,  1^\T_{M_1})^\T, &\ \ D_{w,10}= D_{w,11} =  0_M & \ \ \text{if $ A_w=0 $;}\\
D_{w,00}= D_{w,01} =  0_M,   &\ \  D_{w,10} = ( 1_{M_0}^\T,  0^\T_{M_1})^\T, &\ \   D_{w,11} = ( 0_{M_0}^\T,  1^\T_{M_1})^\T &\ \ \text{if $A_w=1$.}
\end{array}
\enda
This, together with $ \dt  D = \diag(N_z)_{z\in\mt}$ where  $N_z = W_aM_b$, ensures
\begina
 P_\olw  =  D_w( \dt   D)^{-1} D_w^\T =  \sumz N_z^{-1}  D_{w,z} D_{w,z}^\T = W_{A_w}^{-1}\left(
\begin{array}{cc} M_0^{-1} \bbo{M_0}&  0 \\ 0  & M_1^{-1} \bbo{M_1}\end{array}\right).
\enda

Note that the two non-zero columns of $ D_w$, namely 
$( 1^\T_{M_0},  0^\T_{M_1})^\T$ and $( 0^\T_{M_0},  1^\T_{M_1})^\T$, are both eigen-vectors of  $ P_\olw $, corresponding to the same eigen-value $W_{A_w}^{-1}$. 
They are thus also the eigen-vectors of  $( I- P_\olw )^{-1/2}$, corresponding to the same eigen-value $(1-W_{A_w}^{-1})^{-1/2}$, such that
$
(  I -  P_\olw )^{-1/2} D_w =(1-W_{A_w}^{-1})^{-1/2} D_w
$ \citep[][Section C.3]{imai2020causal}. This ensures   
\begina
\sumw  D_w^\T ( I -  P_\olw )^{-1/2}  \te_\olw  \te_\olw ^\T ( I -  P_\olw )^{-1/2} D_w=\sum_{a=0,1} \frac{W_a}{W_a-1} \left(\sum_{w:A_w=a} D_w^\T \te_\olw \te_\olw ^\T D_w\right) 
\enda
and hence
$$\tilde V_{\ols2} = \left\{\diag\left( \frac{W_0}{W_0-1}, \frac{W_1}{W_1-1} \right)\otimes  I_2\right\}   \tbV_\ols$$
by the form of $\sum_{w:A_w=a} D_w^\T \te_\olw \te_\olw ^\T D_w$ from \eqref{eq:deed}.
The result for $\tilde V_{\ols2}$ then follows from Proposition \ref{prop:ols}.

\end{proof}

\subsection{Covariate adjustment via factor-based regressions}\label{sec:ca_factor}
We give in this subsection the details on covariate adjustment under factor-based regressions. 

Let
$f_{ws} = ( A_{w}-1/2 ,   B_{ws} -1/2 ,  (A_{w}-1/2)(B_{ws}-1/2) )^\T$ 
and 
$ f_{w}(A_wb) = ( A_{w}-1/2,  b -1/2,  (A_{w}-1/2)(b-1/2) )^\T$ 
be the vectors of the non-intercept regressors of \eqref{dlm1_a} and \eqref{dlm3_a}, respectively. 
The additive and fully-interacted extensions of \eqref{dlm1_a} and \eqref{dlm3_a} equal
\beginy\label{lm_f_unit}
Y_{ws}   &\sim&
 1+ f_{ws} + x_{ws},\\
\label{lm_f_ag} \hU_w(A_wb)  &\sim& 1+ f_w(A_wb)  + \hvwab, \\
\label{lm_l_unit}
Y_{ws}  &\sim & 1 + f_{ws} + x_{ws} +  f_{ws}  \otimes x_{ws},\\
\label{lm_l_ag}
\hU_w(A_wb) 
&\sim& 1+ f_w(A_wb) + \hvwab+ f_w(A_wb) \otimes  \hvwab,
\endy
respectively. 
Let $\tbt'_{\dg,\fisher}$ and $\tbt'_{\dg,\lin}$ be the coefficient vectors of $f_{ws}$ and $f_w(A_wb)$ from \eqref{lm_f_unit}--\eqref{lm_l_ag} under {\fts} $ \dg = \ols,\wls, \ag$, respectively, with cluster-robust covariances $\tilde\Omega'_{\dg, \dmd }$ for $(\dg,\dmd ) \in \{\ols, \wls, \ag\}\times \{\fisher,\lin\}$.  
Let $\tbt'_{\ag,\dmd }(\alpha, v)$ and $\tilde\Omega'_{\ag,\dmd }(\alpha, v)$ be the variants of $\tbt'_{\ag,\dmd }$ and $\tilde\Omega'_{\ag,\dmd }$ after including the centered whole-plot size factor, $(a_w-1)$, as an additional whole-plot level covariate in addition to $\hvwab$. 
Proposition \ref{correspondence} below follows from the invariance of least squares to non-degenerate transformation of the regressors and,  together with Theorems \ref{thm_fisher}--\ref{thm_lin} and Proposition \ref{prop:eff},  ensures the optimality of $\tbt'_{\ag,\lin}(\alpha, v)$ for estimating the standard factorial effects $(\tau_\A, \tau_\B, \tau_\AB)^\T = G_0\by$ among $
 \{\tbt'_\wls, \tbt'_{\wls,\dmd }; \, \tbt'_\ag ,  \tbt'_{\ag ,\dmd }, \tbt'_{\ag ,\dmd }(\alpha,v):  \dmd  = \fisher, \lin \}$.  

\begin{proposition}\label{correspondence}
For $\dg = \ols,\wls,\ag $  and $\dmd  = \fisher, \lin$, we have 
\begina
\renewcommand{\arraystretch}{1.1}
\begin{array}{ll}
 \tbt'_{\dg,\dmd} = G_0\tbeta_{\dg,\dmd}, &\qquad \tbO'_{\dg,\dmd} =
G_0  \tbV_{\dg,\dmd}  G_0^\T;\\
\tbt'_{\ag,\dmd}(\alpha, v) = G_0\tbeta_{\ag,\dmd}(\alpha, v), &\qquad \tbO'_{\ag,\dmd}(\alpha, v) =
G_0  \tbV_{\ag,\dmd}(\alpha, v)G_0^\T  .
\end{array}
\enda
\end{proposition}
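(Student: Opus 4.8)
The plan is to derive Proposition \ref{correspondence} from the invariance of (weighted) least squares under a nonsingular reparametrization of the regressors, thereby reducing the covariate-adjusted statement to the unadjusted identity already recorded in Proposition \ref{correspondence-nox}. First I would note that the factor-based and indicator-based treatment regressors span the same column space: since both $(1, f_{ws}^\T)$ and $d_{ws}^\T$ are deterministic functions of $\Zws \in \mt$ ranging over four linearly independent values, there is a fixed nonsingular $4\times 4$ matrix $L$ such that the factor design matrix assembled from the rows $(1, f_{ws}^\T)$ equals $D L$, where $D$ stacks the rows $d_{ws}^\T$; the same $L$ relates the aggregate factor design to the aggregate indicator design assembled from $\dwab$. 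The non-intercept rows of $L^{-1}$ reproduce exactly the contrast matrix $G_0$, which is precisely the content of the unadjusted Proposition \ref{correspondence-nox}; I would invoke this identification rather than re-derive it.

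Next I would exhibit, for each pair, the single nonsingular matrix $M$ with (factor design) $=$ (indicator design) $\cdot M$. For the additive regressions \eqref{lm_f_unit} and \eqref{lm_f_ag} the covariate block $x_{ws}$ (respectively $\hvwab$) is literally shared, so $M = \diag(L, I_J)$ is block diagonal, acting as $L$ on the treatment block and as the identity on the covariate block. For the fully-interacted regressions I would rewrite the stacked regressor of \eqref{dlm4} as the Kronecker product $d_{ws}\otimes (1, x_{ws}^\T)^\T$ and that of \eqref{lm_l_unit} as $(1, f_{ws}^\T)^\T\otimes (1, x_{ws}^\T)^\T = (L^\T\otimes I_{J+1})\{d_{ws}\otimes (1, x_{ws}^\T)^\T\}$, so that $M = L\otimes I_{J+1}$, again trivial on the covariate directions. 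The aggregate regressions \eqref{lm_f_ag} and \eqref{lm_l_ag} are handled identically with $\huwab$, $f_w(A_wb)$, $\hvwab$ in place of $\Yws$, $f_{ws}$, $x_{ws}$.

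The core step is then the standard invariance lemma: replacing a (weighted) least-squares design matrix $C$ by $CM$ leaves all fitted values, and hence all residuals, unchanged, premultiplies the coefficient vector by $M^{-1}$, and conjugates any cluster-robust covariance assembled from the same residuals to $M^{-1}\tbV M^{-\T}$. Applying this with the $M$ above and extracting the sub-block of coefficients (and of the covariance) corresponding to the non-intercept factor regressors $f_{ws}$ or $f_w(A_wb)$, the covariate directions contribute only the identity factor, and the surviving action on the treatment block is exactly the map realized by the non-intercept rows of $L^{-1}$, namely $G_0$. This gives $\tbt'_{\dg,\dmd} = G_0\tbeta_{\dg,\dmd}$ and $\tbO'_{\dg,\dmd} = G_0\tbV_{\dg,\dmd}G_0^\T$ for all $\dg \in \{\ols, \wls, \ag\}$ and $\dmd \in \{\fisher, \lin\}$. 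The $(\alpha,v)$ variants follow verbatim upon replacing the whole-plot covariate $\hvwab$ by the augmented vector $\hat c_w(A_wb)$, since the augmentation lives entirely in the untouched covariate block.

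I expect the main obstacle to be the fully-interacted case: verifying that the stacked treatment-and-interaction regressors genuinely factor as the claimed Kronecker products under a consistent lexicographic ordering of the $|\mt|(1+J)$ columns, and confirming that extracting the factor sub-block of $(L\otimes I_{J+1})^{-1}$ applied to $\tbeta_{\dg,\lin}$ collapses, after absorbing the identity on the covariate directions, to precisely $G_0\tbeta_{\dg,\lin}$ and not some other linear combination. Once the ordering is pinned down this is routine bookkeeping, but it is the step where an indexing error would most easily slip in; everything else is a direct application of reparametrization invariance together with the already-proved unadjusted correspondence.
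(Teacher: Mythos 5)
Your proposal is correct and takes essentially the same approach as the paper: the paper proves this proposition in one line by appealing to the invariance of least squares under non-degenerate linear transformation of the regressors, which is exactly the argument you carry out, with the block-diagonal matrix $\diag(L, I_J)$ for the additive case and the Kronecker factorization $L \otimes I_{J+1}$ for the fully-interacted case making that invariance explicit. Your identification of the non-intercept rows of $L^{-1}$ with $G_0$ (and the observation that the $(\alpha,v)$ variants only enlarge the untouched covariate block) fills in precisely the bookkeeping the paper leaves implicit.
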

 

Recall $\tbt_\dagger'$ as the coefficient vectors of the non-intercept terms from the unadjusted factor-based regressions \eqref{dlm1_a} and \eqref{dlm3_a}, respectively, for estimating $(\tauA, \tauB, \tauAB)$.  
Let $(\tbt'_{\dagger, \textsc{b}}, \tbt'_{\dagger, \textsc{ab}})$ and $(\tbt'_{\dagger, \fisher, \B}, \tbt'_{\dagger, \fisher, \AB})$ be the elements of $\tbt'_{\dagger}$  and $\tbt_{\dagger, \fisher}'$ corresponding to $(\tauB, \tauAB)$, respectively. 
Proposition \ref{prop:no change} below states the invariance of $(\tbt_{\dagger, \textsc{b}}', \tbt_{\dagger, \textsc{ab}}')$ to additive covariate adjustment when the $x_{ws}$'s are identical within each whole-plot for $\dg = \wls, \ag$.
The result extends to the sub-plot effects and interactions under the general $T_\A\times T_\B$ design with minimal modification of the notation. 
This illustrates a key difference between the additive and fully-interacted adjustments under factor-based specifications; see Section \ref{sec:simu} for examples from simulation studies. 

\begin{proposition}\label{prop:no change}
For $\dg = \wls, \ag$, $(\tbt'_{\dagger, \textsc{b}}, \tbt'_{\dagger, \textsc{ab}}) = (\tbt'_{\dagger, \fisher, \B}, \tbt'_{\dagger, \fisher, \AB})$ if $x_{ws} = x_{w}$ for all $ws \in \ms$. 
\end{proposition}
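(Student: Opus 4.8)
The plan is to reduce the statement to an orthogonality fact and then invoke the Frisch--Waugh--Lovell (FWL) theorem. Write the non-intercept regressors of the base factor-based fit \eqref{dlm1_a} (resp.\ the aggregate fit \eqref{dlm3_a}) as $f_{ws}$ (resp.\ $f_w(A_wb)$), and split the full regressor set into a \emph{whole-plot block} $\{1,\,A_w-1/2\}$ and a \emph{sub-plot block} $\{B_{ws}-1/2,\,(A_w-1/2)(B_{ws}-1/2)\}$. The components of $\tbt'_{\dg}$ and $\tbt'_{\dg,\fisher}$ that must be compared, namely those estimating $\tauB$ and $\tauAB$, are precisely the coefficients attached to the sub-plot block. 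I would equip the unit space with the weighted inner product $\langle u,v\rangle=\sum_{ws\in\ms}\pi_{ws}u_{ws}v_{ws}$ for $\dg=\wls$ (with $\pi_{ws}$ as in \eqref{eq:piws}) and the ordinary inner product $\langle u,v\rangle=\sum_{w,b}u_{wb}v_{wb}$ over the $2W$ aggregate records for $\dg=\ag$.

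First I would establish the key within-whole-plot balance of the sub-plot factor. For $\dg=\wls$, \eqref{eq:piws} gives $\pi_{ws}=\alpha_wW_a^{-1}M_{wb}^{-1}$, so the total weight carried by the $B=0$ sub-plots of whole-plot $w$ equals $M_{w0}\cdot\alpha_wW_a^{-1}M_{w0}^{-1}=\alpha_wW_a^{-1}$, which coincides with the total weight $\alpha_wW_a^{-1}$ on its $B=1$ sub-plots. Hence $\sum_{s}\pi_{ws}(B_{ws}-1/2)=0$ for every $w$, and since $A_w$ is constant within a whole-plot, $\sum_s\pi_{ws}(A_w-1/2)(B_{ws}-1/2)=0$ as well. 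For $\dg=\ag$ the same balance is immediate, because each whole-plot contributes exactly one $b=0$ and one $b=1$ record, so that the within-whole-plot sum of $b-1/2$ vanishes.

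Next I would convert this balance into the two orthogonalities the FWL argument needs. Summing the identities above against $1$ and against $A_w-1/2$ (both whole-plot-level, hence pulled outside the inner sum over $s$ or $b$) shows that the whole-plot block is weighted-orthogonal to the sub-plot block, so the Gram matrix of the full base regressor set is block diagonal. Moreover, when $x_{ws}=x_w$ the covariate is whole-plot-level: for $\dg=\wls$ it is literally $x_w$, while for $\dg=\ag$ one checks $\hat x_w(A_wb)=x_w$ and hence $\hvwab=\alpha_w x_w$ is constant in $b$; in either case the same within-whole-plot balance yields $\langle x,\,B_{\cdot}-1/2\rangle=\langle x,\,(A_\cdot-1/2)(B_\cdot-1/2)\rangle=0$. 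Thus the augmented regressor set $\{1,\,A_w-1/2,\,x\}$ is weighted-orthogonal to the sub-plot block.

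Finally I would apply FWL to both the additive fit and its unadjusted base fit. By FWL the sub-plot-block coefficients in any fit containing that block equal those obtained by regressing the outcome on the sub-plot block after residualizing it, in the relevant inner product, against all remaining regressors. Because the sub-plot block is weighted-orthogonal to $\{1,\,A_w-1/2\}$ and to $x$, this residualization is the identity in both the unadjusted models \eqref{dlm1_a}/\eqref{dlm3_a} and the additive models \eqref{lm_f_unit}/\eqref{lm_f_ag}; consequently the coefficients on $B_{ws}-1/2$ and $(A_w-1/2)(B_{ws}-1/2)$, i.e.\ the estimators of $\tauB$ and $\tauAB$, agree between the two models, which is exactly the assertion. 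The only genuine work lies in Step~2, the weight bookkeeping showing that inverse-probability weighting equalizes the $B=0$ and $B=1$ mass within each whole-plot; everything downstream is the standard block-orthogonality/FWL reduction, and the extension to the general $T_\A\times T_\B$ design replaces the two-point balance of $b-1/2$ by the corresponding balance of the centered sub-plot indicators.
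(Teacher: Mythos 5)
Your proof is correct and is essentially the paper's own argument: both rest on the same key identity $\sum_{s}\pi_{ws}(B_{ws}-1/2)=0$ within every whole-plot (trivially holding for the aggregate records as well), which makes the Gram matrix block diagonal between the whole-plot-level columns $\{1,\,A_w-1/2,\,x\}$ and the sub-plot-level columns $\{B_{ws}-1/2,\,(A_w-1/2)(B_{ws}-1/2)\}$. The paper phrases the conclusion via explicit block-matrix algebra, $(\tbt'_{\wls,\B},\tbt'_{\wls,\AB})^\T=(C_2^\T\Pi C_2)^{-1}C_2^\T\Pi Y$ in both fits, while you invoke Frisch--Waugh--Lovell, but with orthogonality in hand these are the same step.
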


\begin{proof}[Proof of Proposition \ref{prop:no change}]

Let $A_\cc$ and $B_\cc$ be the $N\times 1$ vectors of the centered factor indicators $(A_{ws}-1/2)_{\ws\in\ms}$ and $(B_{ws}-1/2)_{\ws\in\ms}$, respectively, where $A_{ws}=A_w$. 
Let $X = (x_{11}, \dots, x_{W,M_W})^\T$ with $x_{ws} =  x_{w}$. 
The design matrices of the unadjusted and additive factor-based unit regressions \eqref{dlm1_a} and \eqref{lm_f_unit} equal 
\begina
C = (1_N, \ac, \bc, A_\cc\circ B_\cc), \qquad 
C_\fisher = (1_N, \ac, X, \bc, A_\cc\circ B_\cc),
\enda
respectively, after a reshuffle of the column order in $C_\fisher$. 
Recall $\Pi = \diag(\pi_{ws})_{ws\in\ms}$ as the weighting matrix under the ``wls'' fitting scheme. 
Direct algebra shows that $C_1 = (1_N, \ac)$, $C_{\fisher, 1} = (1_N, \ac, X)$, and $C_{2} = (\bc, \ac\circ\bc)$ satisfy
\begina
C_1^\T \Pi C_2 = 0, \qquad C_{\fisher,1}^\T \Pi C_2 = 0
\enda
due to $\sums \pi_{ws}(B_{ws}-1/2) = 0$ for all $w$ following from \eqref{eq:piws}.
This ensures
\begina
C^\T \Pi C= 
\beginp 
C_{1}^\T \Pi C_{1} &0\\
0& C_{2}^\T \Pi C_{2}
\endp, 
\qquad 
C_\fisher^\T \Pi C_\fisher = 
\beginp 
C_{\fisher, 1}^\T \Pi C_{\fisher, 1} &0\\
0& C_{2}^\T \Pi C_{2}
\endp,
\enda
implying $(\tbt'_{\wls, \textsc{b}}, \tbt'_{\wls, \textsc{ab}})^\T =( C_{2}^\T \Pi C_{2})^{-1}C_{2}^\T \Pi Y =(\tbt'_{\wls,\fisher, \textsc{b}}, \tbt'_{\wls,\fisher, \textsc{ab}})^\T$. 

The proof for $\dagger = \ag$ follows from identical reasoning and is  thus omitted. 
\end{proof}
 
\subsection{General $T_\text{A}\times T_{\text{B}}$ split-plot design}\label{sec:ext_tatb_app}
Renew $S_\sht$, $S_\shaj$, and $ S_w$ as the   scaled between and within whole-plot covariance matrices for $Y_{ws}(z)$ and $Y'_{ws}(z)$ under the general $\ta\times\tb$ split-plot design. 
Renew
$H  = \diag(p_a^{-1})_{a\in \mt_\A}\otimes \bbo{T_\B} - \bbo{|\mt|}$ and 
$ H_w = \diag(p_a^{-1})_{a\in \mt_\A} \otimes \{  \diag(q_{wb}^{-1})_{b\in\mt_\B} - \bbo{T_\B}\}$ with 
 $p_a = W_a/W$  and $q_{wb} = M_{wb}/M_w$. 
Corollary \ref{clt_g} below extends Lemma \ref{Vmat} and Theorem \ref{clt} to the general $\ta\times \tb$ split-plot design.  The proof is identical to that of the $2^2$ case and thus omitted.  

\begin{corollary}\label{clt_g}
Lemma \ref{Vmat} and Theorem \ref{clt} hold under the $\ta\times \tb$ split-plot randomization and a generalized version of Condition  \ref{asym} for $a\in \mt_\textsc{a}$, $b \in \mt_\textsc{b}$, and $z = (ab)\in\mt$. 
\end{corollary}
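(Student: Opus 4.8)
The plan is to exploit the fact that every ingredient behind Lemma \ref{Vmat} and Theorem \ref{clt} was deliberately set up in Section \ref{sec:design_notation} for the general $\ta\times\tb$ layout, so that the entire argument transcribes once the two-level factor A is replaced by a $\ta$-level factor and the generalized Condition \ref{asym} is assumed (which, as in the $2^2$ case, implies the generalized Condition \ref{asym0_S} for the relevant finite populations). First I would re-establish the finite-sample decomposition $\byht = \pphi + \delta$ of \eqref{eq:decomp}; this is unchanged because it uses only that stage (I) assigns each whole-plot a single level of factor A and stage (II) runs an independent stratified randomization within each whole-plot. The conditional independence and mean-zero relations of \eqref{eq:sigA} then hold verbatim, giving $\cov(\byht) = \cov(\pphi) + \cov(\delta)$.

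For the analytic forms I would invoke the general $\ta\times\tb$ version of Lemma \ref{adam}. The between-whole-plot term $\cov(\pphi)$ is obtained by applying Lemma \ref{Ding17} to the stage (I) completely randomized assignment of the $W$ whole-plots into the $\ta$ groups indexed by $\mta$ (rather than two), viewing $\pphi$ as the sample analog of the finite-population mean $\bbY$ with whole-plot vector potential outcome $U_w = \{U_w(z)\}_{z\in\mt}$; the resulting sum over $a\in\mta$ of $W_a^{-1}$ times the corresponding within-level covariance blocks, minus $W^{-1}\ssht$, collapses to $W^{-1}(H\circ\ssht)$ for the general $H = \diag(p_a^{-1})_{a\in\mta}\otimes\bbo{\tb} - \bbo{|\mt|}$. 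The within-whole-plot term $\cov(\delta_w)$ comes from the conditional covariance \eqref{eq:covU}, which already holds for arbitrary $\tb$ through the completely randomized covariance formula; summing over the $\ta$ levels of $A_w$ via \eqref{eq:sigA} reproduces $\Psi = W^{-1}\sum_w M_w^{-1}(H_w\circ S_w)$ with the general $H_w$. This establishes Lemma \ref{Vmat}.

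For Theorem \ref{clt} I would keep the martingale strategy intact: by the Cramér--Wold device it suffices to show $\eta^\T\sqrt W(\pphi - \bbY + \delta)\rightsquigarrow\mN(0,\eta^\T\Sigma_\sht\eta)$ for each unit vector $\eta$, writing the left side as $X + \xi$ with $X = \eta^\T\sqrt W(\pphi-\bbY)$ and $\xi = \sum_w \eta^\T\sqrt W\delta_w$, and verifying the three hypotheses of Lemma \ref{os} relative to the same filtration $\{\mF_{W,w}\}$ with $\beta_W^2 = \eta^\T\Psi\eta$. Hypothesis (i) and the variance identity $\var(\xi)=\eta^\T\Psi\eta$ used in (ii) follow from the general versions of Lemmas \ref{bounded} and \ref{adam}, while the stage (I) limit $\sqrt W(\pphi-\bbY)\rightsquigarrow\mN(0,H\circ\ssht)$ required for (iii) is once more Lemma \ref{Ding17} applied with $\ta$ groups, and convolution with $\mN(0,\eta^\T\Psi\eta)$ gives the claimed limit. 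The Hajek assertion then follows exactly as in \eqref{eq:haj_intuition}, since $\byhaj - \bbY = \hat 1_\sht^{-1}\,\byht'$ and $\hat 1_\sht = I_{|\mt|} + \op$ by Lemma \ref{wlln}, so Slutsky's theorem transports the limit of $\byht'$ to $\byhaj$.

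The only place needing genuine bookkeeping is hypothesis (ii) of Lemma \ref{os}, the control of $\var(\sigma)$ for $\sigma = \var(\xi\mid\mA)$. In the $2^2$ proof this rested on the split $\sigma = W_0\hat\sigma(0) + W_1\hat\sigma(1)$ over the two levels of factor A; here it becomes $\sigma = \sum_{a\in\mta} W_a\hat\sigma(a)$. I would bound $\var(\sigma)\le \ta\sum_{a\in\mta}\var\{W_a\hat\sigma(a)\}$ by Cauchy--Schwarz over the fixed finite index set $\mta$, use $\var\{W_a\hat\sigma(a)\} = O(W)\,S^2_{\sigma(a)}$ from the completely randomized sample-mean variance, and thereby reduce the requirement to $W S^2_{\sigma(a)} = o(1)$ for each $a\in\mta$, which is precisely what the general Lemma \ref{bounded} supplies. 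Because $\ta$ and $\tb$ are fixed constants, passing from two to $\ta$ levels multiplies the relevant sums by a bounded factor and introduces no new asymptotic regime, so the remainder of the argument is a direct transcription of the $2^2$ case and the corollary follows.
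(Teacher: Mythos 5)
Your proposal is correct and takes essentially the same approach as the paper: the paper omits this proof entirely, stating that it "is identical to that of the $2^2$ case," having deliberately set up the notation and supporting lemmas of the Supplementary Material in general $\ta\times\tb$ form precisely so that the argument transcribes. Your write-up—including the only step needing real care, the Cauchy--Schwarz bound over the $\ta$ levels of factor A when verifying condition (ii) of the martingale CLT—is exactly the "minimal modification" the paper has in mind.
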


The unit regression \eqref{lm_t} extends to the 
$T_\text{A}\times T_{\text{B}}$ split-plot design as 
\beginy\label{eq:lm_n_tatb}
Y_{ws} \sim \sumz 1(Z_{ws} = z). 
\endy
Lemma \ref{lem:wls} below states a numeric result on the invariance of least-squares fits of \eqref{eq:lm_n_tatb} to treatment-specific scaling of the fitting weights. 
The result ensures that the theory we derived under the ``\wls'' fitting scheme with inverse probability weighting extends to a much larger class of fitting weights with no need of modification. See Remark \ref{rmk:alternative_weights} in the main text for an example. 
We state the lemma in terms of general multi-armed experiment to highlight its generality.  
The proof follows from direct algebra and is thus omitted. 

\begin{lemma}\label{lem:wls}
For a general experiment with treatment levels $\mt = \{1,\ldots, Q\}$ and units $i=1,\ldots, N$, 
let $Y_i$ denote the observed outcome, $Z_i \in \mt$ denote the treatment assignment, $w_i > 0$ denote an arbitrary weight, and $\rho_i > 0$ denote a scaling factor that is a function of $Z_i$ only. 
The least-squares fits of $Y_i \sim \sumz 1(Z_i = z)$ with weights $(w_i)_{i=1}^N$ and $(w_i\rho_i)_{i=1}^N$ yield identical coefficients, heteroskedasticity-robust covariance, and cluster-robust covariance by arbitrary clustering rule. 
\end{lemma}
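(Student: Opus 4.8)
The plan is to exploit the fact that $Y_i \sim \sumz 1(Z_i=z)$ is a \emph{saturated} regression whose design matrix $D$, with rows $d_i = \{1(Z_i=z)\}_{z\in\mt}^\T$, consists only of the mutually exclusive treatment-group indicators. Writing $W=\diag(w_i)_{i=1}^N$ for the weight matrix and letting $\rho(z)$ denote the common value of $\rho_i$ over units with $Z_i=z$ --- well defined because $\rho_i$ is a function of $Z_i$ only --- I would introduce the group-level diagonal matrix $\Phi_\rho=\diag\{\rho(z)\}_{z\in\mt}$. The first step is to record the two bookkeeping identities that drive everything: because each indicator $d_i$ picks out a single coordinate on which $\rho_i$ is constant,
\[
D^\T (W\diag(\rho_i)) D = \Phi_\rho\,(D^\T W D),\qquad D^\T (W\diag(\rho_i))\, Y = \Phi_\rho\,(D^\T W Y),
\]
and moreover $D^\T W D = \diag\{\sum_{i:Z_i=z} w_i\}_{z\in\mt}$ is itself diagonal. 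Substituting these into $\hat\beta=(D^\T W D)^{-1}D^\T W Y$ shows that the extra $\Phi_\rho$ in the numerator is cancelled exactly by the $\Phi_\rho^{-1}$ coming from the inverted normal matrix, so the rescaled and original WLS coefficients coincide; hence the fitted values $d_i^\T\hat\beta$ and the residuals $\hat\epsilon_i$ are identical under the two weightings.

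With identical residuals in hand, the second step treats the two sandwich covariances, which share the ``bread'' $(D^\T W D)^{-1}$; under rescaling this becomes $(D^\T W D)^{-1}\Phi_\rho^{-1}$ by the first identity above. For the heteroskedasticity-robust ``meat'' $\sum_i w_i^2\hat\epsilon_i^2 d_i d_i^\T$, rescaling replaces $w_i$ by $w_i\rho(Z_i)$, and since $d_i d_i^\T$ is the elementary matrix selecting coordinate $Z_i$, the meat becomes $\Phi_\rho\big(\sum_i w_i^2\hat\epsilon_i^2 d_i d_i^\T\big)\Phi_\rho$. For the cluster-robust ``meat'' $\sum_g s_g s_g^\T$ with cluster scores $s_g=\sum_{i\in\mi_g} w_i\hat\epsilon_i d_i$, the $z$-th coordinate of each rescaled score is $\rho(z)$ times the original, so $s_g\mapsto\Phi_\rho s_g$ and the meat becomes $\Phi_\rho\big(\sum_g s_g s_g^\T\big)\Phi_\rho$. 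In either case the sandwich takes the form
\[
(D^\T W D)^{-1}\Phi_\rho^{-1}\cdot \Phi_\rho\,(\text{meat}_0)\,\Phi_\rho\cdot \Phi_\rho^{-1}(D^\T W D)^{-1},
\]
and because $\Phi_\rho$ is diagonal with $\Phi_\rho^{-1}\Phi_\rho=I$, the four scaling factors annihilate pairwise, returning the unscaled covariance exactly.

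There is no genuine analytic difficulty here; the lemma is an algebraic identity, and the only thing that truly requires care is the two-sided structure of the sandwich --- verifying that the rescaling enters the meat with one factor of $\Phi_\rho$ on each side and the bread with one factor of $\Phi_\rho^{-1}$ on each side, so that the cancellation is exact rather than off by a power of $\Phi_\rho$. The crucial structural input making this work is that $\rho_i$ depends on $Z_i$ alone, which is precisely what lets the unit-level scaling factor through the group-level diagonal $\Phi_\rho$ and commute with the (also diagonal) normal matrix. As a by-product I would note that the unit leverages $h_i = w_i\, d_i^\T(D^\T W D)^{-1} d_i = w_i/\sum_{j:Z_j=z} w_j$ are themselves invariant to the rescaling, so the same cancellation extends verbatim to the finite-sample HC2/HC3-type corrections should one wish to state the result for those variants as well.
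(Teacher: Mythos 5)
Your proof is correct and is precisely the ``direct algebra'' that the paper invokes when it omits the proof of this lemma: the key identities $D^\T W\diag(\rho_i)D=\Phi_\rho(D^\T WD)$ and $D^\T W\diag(\rho_i)Y=\Phi_\rho(D^\T WY)$, the resulting invariance of the coefficients and residuals, and the two-sided cancellation $\Phi_\rho^{-1}\Phi_\rho=I$ in both the heteroskedasticity-robust and cluster-robust sandwiches all check out, with the diagonality of $D^\T WD$ (saturated, mutually exclusive indicators) and the dependence of $\rho_i$ on $Z_i$ alone doing exactly the work you identify. The leverage-invariance remark is also correct and consistent with how the paper handles HC2 elsewhere.
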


\subsection{Fisher randomization test}\label{sec:frt}
 We give in this subsection the details on the Fisher randomization test under split-plot randomization. 
Assume a general $\ta\times \tb$ split-plot experiment with 
potential outcomes  $\{Y_{ws}(z): z \in \mt\}_{ \ws \in \ms}$. 
The  weak null hypothesis concerns $$\HN: G\bbY = 0$$ for some full-row-rank contrast matrix $G$ with rows orthogonal to $1_{|\mt|}$.
Given observed data $Z = (Z_\ws)_{\ws\in\ms}$, $Y = (Y_{ws})_{\ws\in\ms}$, $X = (x_\ws)_{\ws\in\ms}$, and $\mZ$ as the set of all possible values that $Z$ can take under the split-plot randomization restriction,
we can pretend to be testing
\begina
\HF: Y_{ws}(z) = Y_{ws} \qquad  \text{for all} \ \ \text{$\ws \in \ms$ and  $z\in\mt$}
\enda
as a strong null hypothesis that is compatible with $\HN$,  
and compute the $p$-value as 
$$p_\frt = |\mZ|^{-1}\sum_{Z' \in \mZ}1\big\{ t(Z',   Y, X ) \geq t(Z,Y,X)\big\}$$
for some arbitrary test statistic $t(Z, Y, X)$. 
Of interest is the operating characteristics of $p_\frt$ when only $\HN$ holds. 

Renew $\mathcal{B} =  \{\tbeta_\wls,  \tbeta_{\wls,\dmd};\, \tbeta_\ag ,  \tbeta_{\ag ,\dmd}, \tbeta_{\ag ,\dmd}(\alpha,v):  \dmd  = \fisher, \lin  \}$ as the collection of {\reg} estimators of $\bY$ that are consistent  under the general $\ta\times\tb$ split-plot design.
Let 
$
t^2(\tbeta) = 
(G \tbeta)^\T   (G\tbV  G^\T)^{-1}  G\tbeta
$ 
be the robustly studentized test statistic based on $\tbeta\in \mathcal{B}$, 
with $\tbV$ as the corresponding cluster-robust covariance.
The robust studentization ensures that the resulting $p_\frt$ controls the type one error rates asymptotically in the sense of 
\begina
\lim_{W\to\infty} \pr(  p_\frt \leq \alpha   ) \leq \alpha \qquad \text{for all} \ \ \alpha \in (0,1)
\enda
 for all $\tbeta \in \mathcal{B}$ under Conditions \ref{asym}--\ref{asym2}. 
 The Fisher randomization test with $t^2(\tbeta)$ is therefore finite-sample exact for testing the strong null hypothesis and asymptotically valid for testing the weak null hypothesis under split-plot randomization  for all $\tbeta \in\mathcal{B}$. 
The duality between confidence interval and hypothesis testing further ensures that the test based on $t^2(\tbeta_{\ag,\lin} (\alpha,v))$ has the highest power asymptotically when the generalized version of \eqref{cond:eff} holds. 
The same guarantees also extend to $\tbeta \in  \{\tbeta_\ols,\ \tbeta_\of,\ \tbeta_\ol  \}$ under Condition \ref{balanced}.

\section{Simulation}\label{sec:simu}
Define a {\it regression scheme}  as the combination of model specification  and fitting scheme.
We illustrate in this section the validity and efficiency of thirteen regression schemes for estimating the standard factorial effects $\tauA$, $\tauB$, and $\tauAB$ under the $2^2$ split-plot design, summarized in Table \ref{tb:simu}.

\begin{table}[t]\caption{\label{tb:simu} Thirteen regression schemes based on the  unadjusted, additive (``\fisher''), and fully-interacted (``\lin'') factor-based specifications. 
Recall 
$f_{ws} = ( A_{w}-1/2 ,  B_{ws} -1/2 ,  (A_{w}-1/2)(B_{ws}-1/2) )^\T$
and 
$ f_{w}(A_wb) = (  A_{w}-1/2 ,  b -1/2 ,  (A_{w}-1/2)(b-1/2) )^\T$
as the vectors of the non-intercept regressors from \eqref{dlm1_a} and \eqref{dlm3_a}, respectively. 
Prefixes ``ols'', ``wls'', and ``ag'' indicate the {\fts}.  
Suffixes ``x.\fisher'' and ``x.\lin'' denote the additive and fully-interacted specifications for covariate adjustment, respectively. 
For the aggregate regressions, we use ``x'', ``m'', and ``xm'' to indicate three choices of covariate combinations: (i) uses the scaled whole-plot total covariates $ \hv_w(A_wb) = \alpha_w \hat x_w(A_wb)$ (``x''), (ii) uses the whole-plot size factor $\alpha_w$ (``m''), and (iii) uses both (``xm''). }
\begin{center}
\renewcommand{\arraystretch}{1.1}
\begin{tabular}{|l|c|l|}\hline
 \multicolumn{1}{|c|}{regression}& fitting &   \\
 \multicolumn{1}{|c|}{scheme} & scheme & \multicolumn{1}{c|}{model specification}    \\\hline
ols & \multirow{3}{*}{ols}& $1+f_\ws$ \\
ols.x.\fisher && $1+f_\ws + x_\ws$\\ 
ols.x.\lin && $1+f_\ws + x_\ws + f_\ws\otimes  x_\ws$\\\hline
wls & \multirow{3}{*}{wls}& $1+f_\ws$ \\
wls.x.\fisher && $1+f_\ws + x_\ws $\\ 
wls.x.\lin && $1+f_\ws + x_\ws + f_\ws \otimes  x_\ws$\\\hline
ag  & \multirow{7}{*}{ag}& $1+f_w(A_wb)$\\
ag.x.\fisher && $1+f_w(A_wb) + \hat v_w(A_wb) $\\
ag.x.\lin&& $1+f_w(A_wb) + \hat v_w(A_wb) + f_w(A_wb) \otimes \hat v_w(A_wb)$\\
ag.m.\fisher && $1+f_w(A_wb) + (\alpha_w-1)   $\\
ag.m.\lin && $1+f_w(A_wb) + (\alpha_w-1) + f_w(A_wb)  (\alpha_w-1) $\\
ag.xm.\fisher && $1+f_w(A_wb) + (a_w-1) + \hat v_w(A_wb)  $\\
ag.xm.\lin && $1+f_w(A_wb) + (a_w-1) + \hat v_w(A_wb) + f_w(A_wb) (\alpha_w-1) + f_w(A_wb) \otimes  \hat v_w(A_wb)$\\\hline
\end{tabular}
\end{center}
\end{table}
Consider a $2^2$ split-plot experiment with a study population nested in $W=300$ whole-plots. 
We set $(W_0, W_1) =(0.7W,0.3W)$ and generate $(M_{w0}, M_{w1}, M_w)_{w=1}^W$ as $
M_{w0} = \max(2, \zeta_{w0})$, $ M_{w1} = \max(2, \zeta_{w1})$, and $M_w  = M_{w0} + M_{w1}$, respectively, 
with the $\zeta_{w0}$'s being i.i.d.\ $\textrm{Poisson}(5)$ and the $\zeta_{w1}$'s being i.i.d.\  $\textrm{Poisson}(3)$. 
For each $w = \ot{W}$, 
we draw a scalar group-level covariate $x_{w}$ from $\mn(0.2,0.5)$, and set $x_{ws} =   x_w$ for $s = \ot{M_w}$. 
The potential outcomes are then generated as  
\begina
\begin{array}{ll}
Y_{ws}(00) = \theta_w +0.5 + 2 x_{ws}^2 + \ep_{ws},&\qquad
Y_{ws}(01) = -0.5 \theta_w +1 +  x_{ws}^2 + \ep_{ws},\\
Y_{ws}(10) = 0.5 \theta_w +1 -  x_{ws}^2 + \ep_{ws}, &\qquad
Y_{ws}(11) = \theta_w + 2 + 2 x_{ws}^2 + \ep_{ws} 
\end{array}
\enda 
for $\ws \in \ms$, 
where the $\theta_w$'s are independent $\mn(2M_w/M_{\max}, 0.2)$ with $M_{\max} =\max_{w=\ot{W}} M_w$  and the $\ep_{ws}$'s are i.i.d.\ $\textrm{Uniform}(-1,1)$. 
Fix $\{Y_\ws(z), x_\ws: z\in\mt  \}_{\ws\in\ms}$ in simulation. 
We draw a random permutation of $W_1$ $1$'s and $W_0$ $0$'s to assign factor A at the whole-plot level, and then, for each $w = \ot{W}$, draw a random permutation of $M_{w1}$ $1$'s and $M_{w0}$ $0$'s to assign factor B in whole plot $w$.  

The procedure is repeated $2,000$ times, with the biases (``bias''), true standard deviations (``sd''), average cluster-robust standard errors (``ese''), and coverage rates of the $95\%$ confidence intervals based on the cluster-robust standard errors (``coverage'') for all three standard effects summarized in Figure \ref{fig:simu_xw}. 
We separate the results into ``unadjusted vs.\ additive regressions'' and ``unadjusted vs.\ fully-interacted regressions'' for ease of display.

Figure \ref{fig:simu_xw}(a) shows the comparison between the unadjusted and additive regressions.
The first row illustrates the biases in the \olss estimators under {\aasym} split-plot designs. 
The second row illustrates the efficiency gain by covariate adjustment for estimating the whole-plot factor effect. 
The covariate-adjusted regressions ``ols.x'', ``wls.x'', ``ag.m'', and ``ag.mx''  yield less variable estimators than their unadjusted counterparts under all three {\fts}.
The comparison between ``ag'', ``ag.m'', ``ag.x'', and ``ag.mx'' under the ``ag'' fitting scheme further highlights the importance of whole-plot size adjustment for improving efficiency. 
The results for the sub-plot factor effect and interaction, on the other hand, remain unchanged under the ``wls'' and ``ag'' fitting schemes as Proposition \ref{prop:no change} suggests.

The third row shows the average cluster-robust standard errors for estimating the true standard deviations. Compare it with the second row to see the conservativeness that is coherent with Theorems \ref{covEst}--\ref{thm_fisher}. 
The last row illustrates the validity of the regression-based Wald-type inference.
The overall conservativeness is, again, coherent with Theorems \ref{covEst}--\ref{thm_fisher}.  

Figure \ref{fig:simu_xw}(b) shows the comparison between the unadjusted and fully-interacted regressions.
In addition to the same observations as in Figure \ref{fig:simu_xw}(a), 
it also illustrates the efficiency gain by covariate adjustment for the sub-plot factor effect and interaction as well. 
The ``a.mx.\lin'' regression scheme, as the theory suggests, secures the highest efficiency overall.

Lastly, Figure \ref{fig:simu_xws} gives the results when the covariates are not constant within each whole-plot. 
Inherit all settings from above except that we now generate $x_{ws}$ as
$
x_{ws} = x_w + \ep_{ws}
$, where the $\ep_{ws}$'s are i.i.d.\ $\mathcal{N}(0, 0.5)$.
The gain in efficiency by covariate adjustment for estimating the sub-plot factor effect and interaction now surfaces under the additive regressions as well. 

\begin{figure}[ht]\caption{\label{fig:simu_xw}Comparison of the least-squares estimators with $x_{ws} = x_w$. }
\centering
(a) Comparison between the unadjusted and additive regressions. We suppress the suffix ``.\fisher'' in the names of the additive regressions to save some space. 
\includegraphics[width = \textwidth]{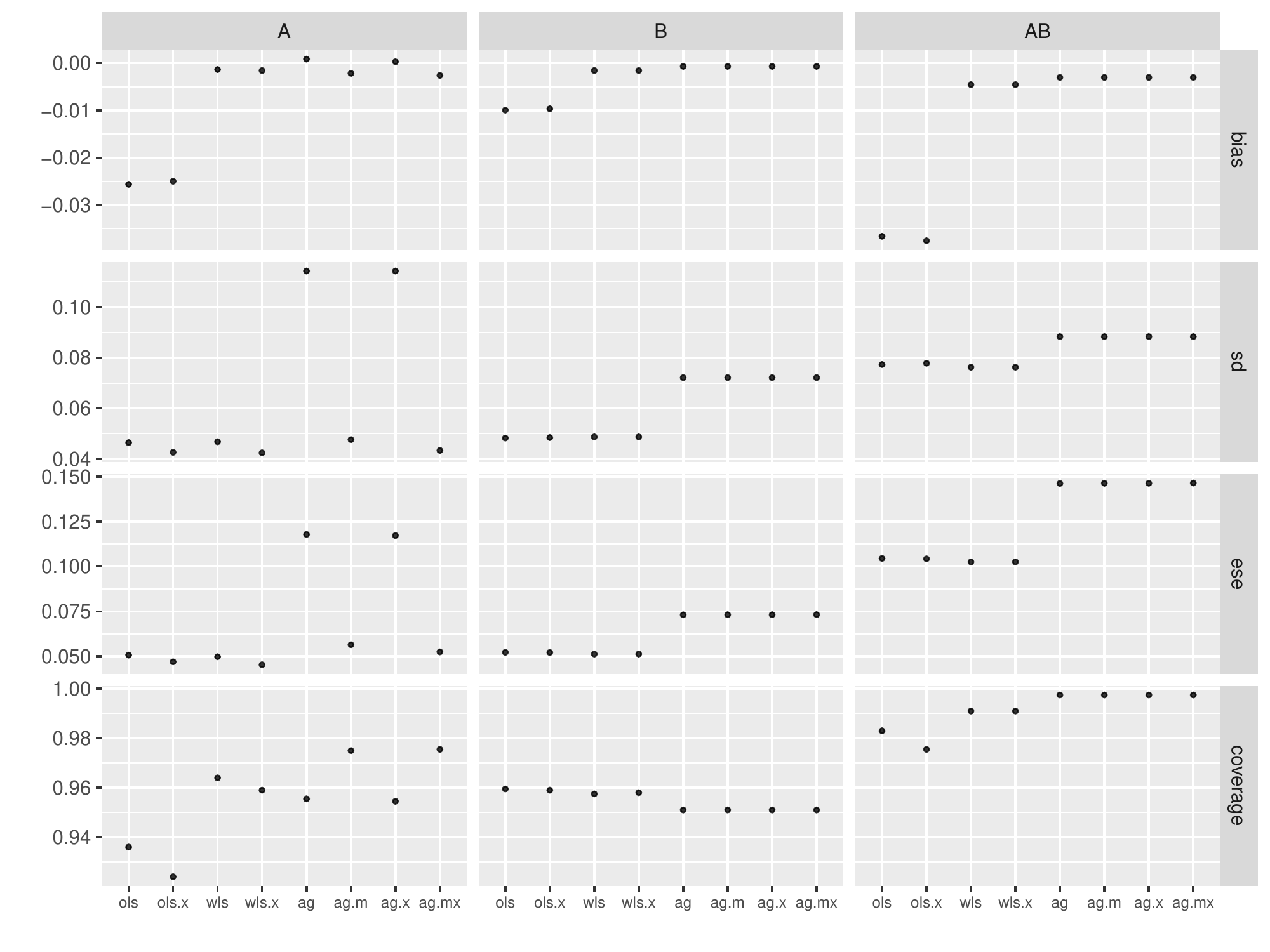}

(b) Comparison between the unadjusted and fully-interacted regressions. We suppress the suffix ``.\lin'' in the names of the fully-interacted regressions to save some space. 

\includegraphics[width =\textwidth]{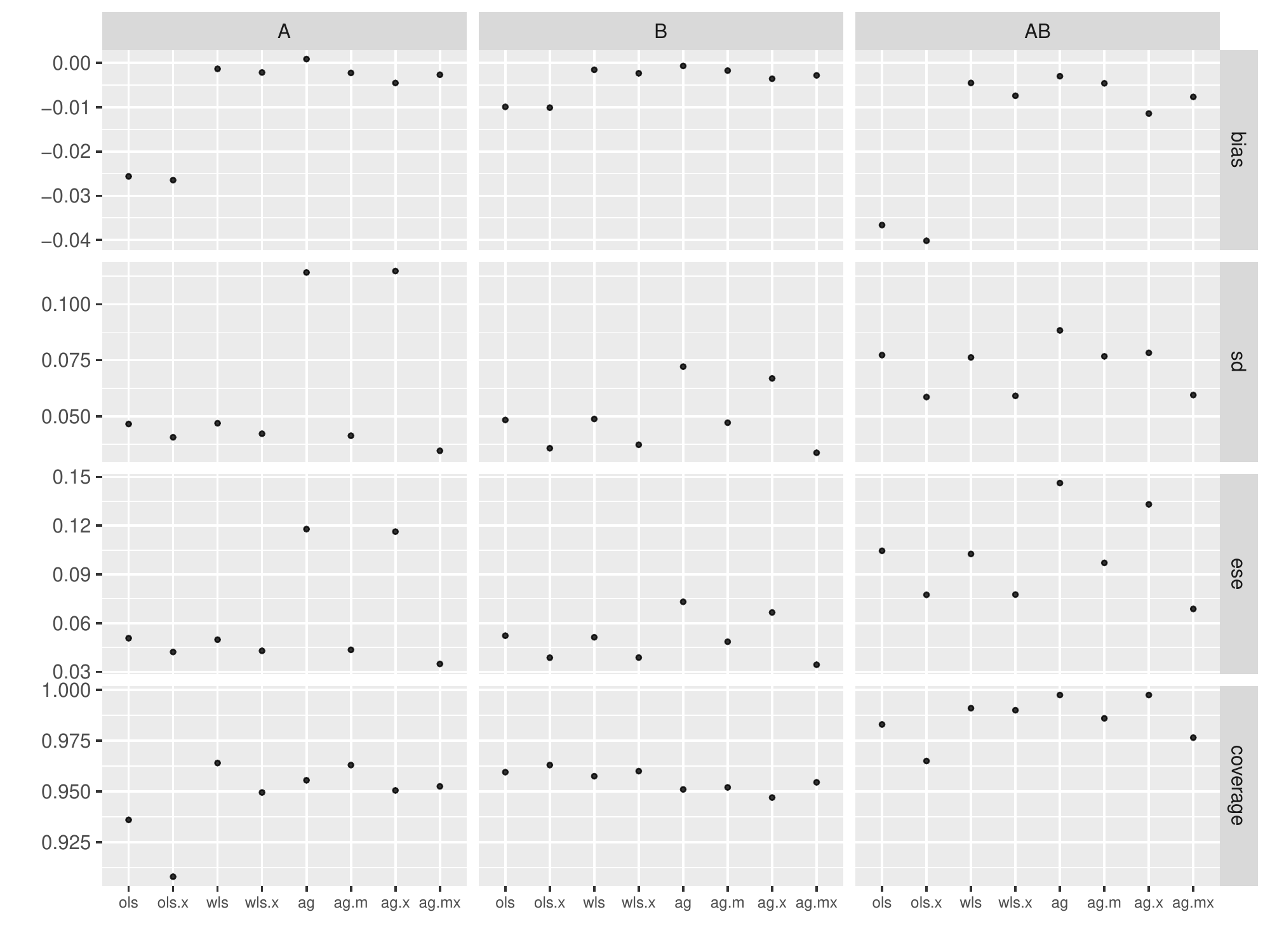}
\end{figure}

\begin{figure}[ht]\caption{\label{fig:simu_xws}Comparison of the least-squares estimators with varying $x_{ws}$ within each whole-plot. }
\centering
(a) Comparison between the unadjusted and additive regressions. We suppress the suffix ``.\fisher'' in the names of the additive regressions to save some space. 
\includegraphics[width = \textwidth]{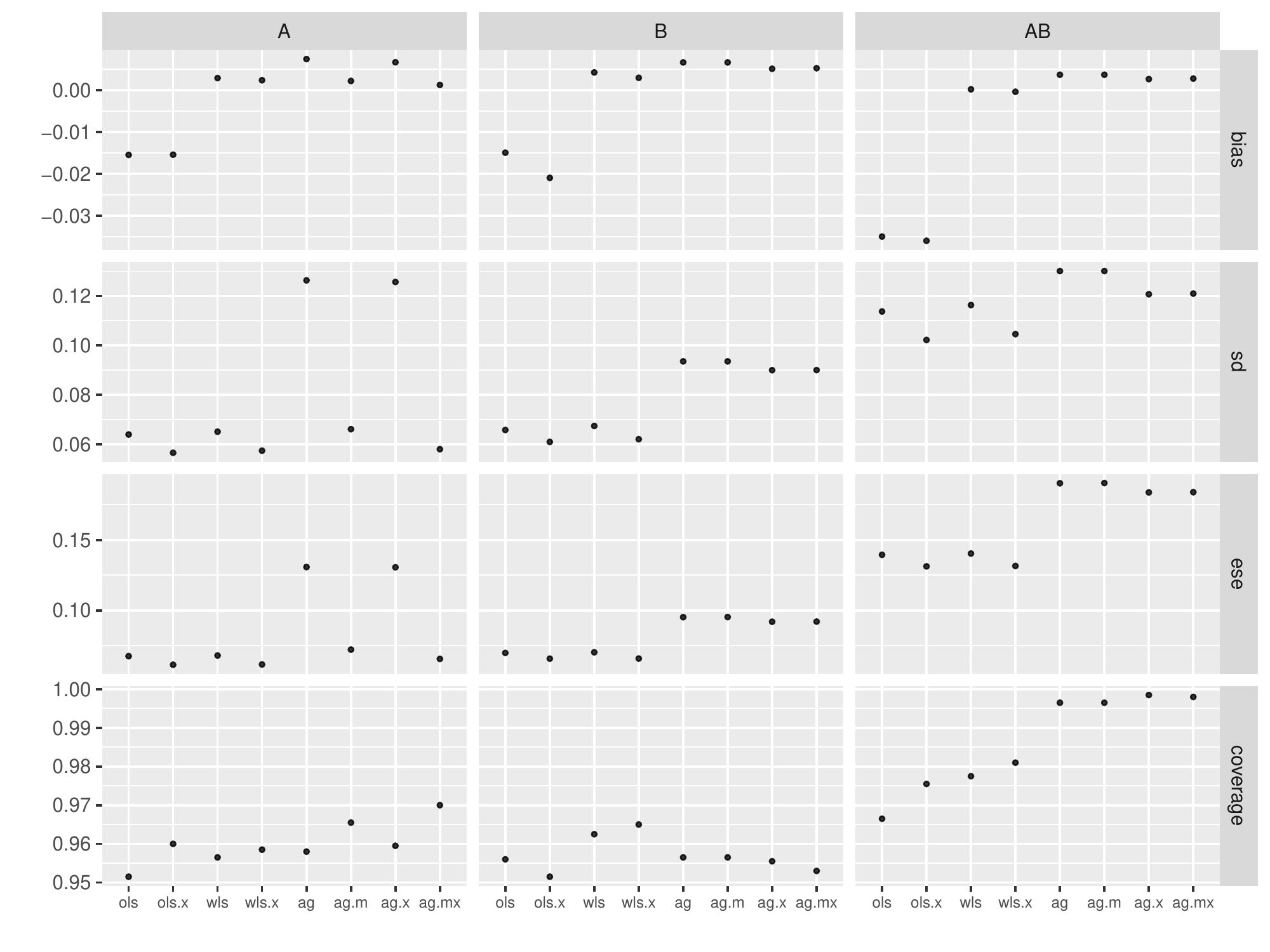}

(b) Comparison between the unadjusted and fully-interacted regressions. We suppress the suffix ``.\lin'' in the names of the fully-interacted regressions to save some space. 

\includegraphics[width =\textwidth]{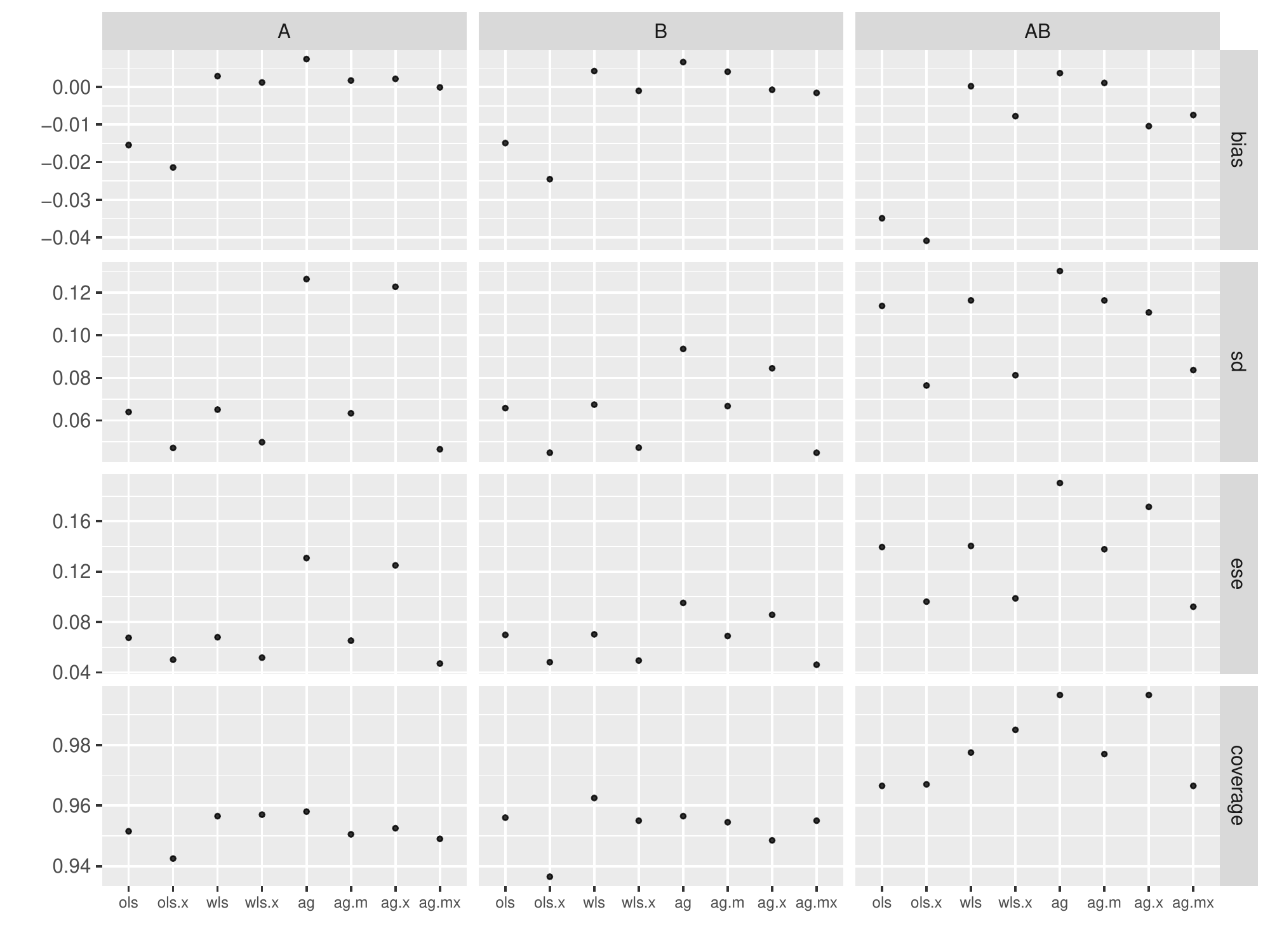}
\end{figure}

\end{document}